\documentclass[a4paper, 11pt]{article} %mögliche Schriftgrößen: 10pt, 11pt, 12pt, ansonsten {extarticle} für 8pt, 9pt, 10pt, 11pt, 12pt, 14pt, 17pt, 20pt
\usepackage[margin=2.5cm, top=2.5cm, bottom=3cm, footskip=1cm]{geometry}	%definiert Seitenraender

\usepackage{amsmath}		%verbessert das Setzen mathematischer Ausdruecke
\usepackage{amsfonts}		%siehe amsmath
\usepackage{amsthm}
\usepackage[utf8]{inputenc}
\usepackage{color}
\usepackage{tikz}
\usepackage{slashed}
\usepackage{mathrsfs} 
\usetikzlibrary{calc}
\usetikzlibrary{arrows}  
\usetikzlibrary{decorations.markings}
\usetikzlibrary{shapes,snakes}
\usepackage{hyperref} 
\hypersetup{
	bookmarks=true,
	unicode=false,
	pdftoolbar=true,
	pdfmenubar=true,
	pdffitwindow=false,
	pdfstartview={FitH},
	pdftitle={My title},
	pdfauthor={Author},
	pdfsubject={Subject},
	pdfcreator={Creator},
	pdfproducer={Producer},
	pdfkeywords={keyword1} {key2} {key3},
	pdfnewwindow=true,
	colorlinks=true,   %Colour hyperlinks in document
	linkcolor={blue!80!black}, %Specify which colour to use to link equations, sections, etc. are 
	citecolor={CrimsonRed}, %Specify which colour to use for citation
	filecolor={blue!80!black},
	urlcolor={blue},      %Specify hyperlink colour
	hypertexnames=false, %
} 
\usepackage[figure]{hypcap}
\usepackage{amssymb}		%siehe amsmath
\usepackage{amstext}		%ermoeglicht \text und mehr
\usepackage{graphicx}		%einbinden von Grafiken
\usepackage[utf8]{inputenc}	%Zeichensatz
\usepackage[T1]{fontenc}	%Silbentrennung und mehr
\usepackage{pdfpages}		%einbinden von PDF
\usepackage{wrapfig}		%einbinden von Bildern in Text
\usepackage{here}			%laut float-Beschreibung obsolet
\usepackage{float}			%ermoeglicht den H-parameter bei Bildern
\usepackage{fancyhdr}
\usepackage{verbatim}		%ermoeglicht mehrzeilige Kommentare mit \begin{comment} \end{comment}
\usepackage{wasysym}		%erweiterter Zeichensatz
\usepackage{tocloft}		%Gestaltung des Inhaltsverzeichnisses und mehr
\usepackage{siunitx}		%fehlerfreie Darstellung von Werten und Einheiten
\usepackage[arrow, matrix, curve]{xy} %kommutative Diagramme
\usepackage{wrapfig,tabularx,multirow,hyperref,icomma}
\usepackage{mathptmx}
\usepackage{lmodern}
\usepackage{courier}
\usepackage{paralist}
\usepackage{afterpage}
\usepackage{subcaption}
\usepackage[section]{placeins}
{\left.\begin{aligned}}
	{\end{aligned}\right\rbrace}
	
\usepackage{etoolbox}
\usepackage{relsize,exscale}
\usepackage[titletoc]{appendix}
\usepackage{natbib}
\usepackage[bb=boondox]{mathalpha}

\usepackage{nameref}
\definecolor{RoyalBlue}{RGB}{35,32, 57}
\definecolor{lightblue}{RGB}{132,131,143}
\definecolor{CrimsonRed}{RGB}{130,0,29}

\newtheorem{definition}{Definition}[section]

\newtheorem{theorem}[definition]{Theorem}

\newtheorem{proposition}[definition]{Proposition}

\newtheorem{corollary}[definition]{Corollary}

\newtheorem{lemma}[definition]{Lemma}

\theoremstyle{definition}
\newtheorem{remark}[definition]{Remark}

\theoremstyle{definition}
\newtheorem*{notation}{Notation}

\theoremstyle{definition}
\newtheorem{Regularity}[definition]{Assumption}

\theoremstyle{definition}

\newcommand{\res}[1]{\raisebox{0.0ex}{\scalebox{1}{$\underset{#1}{\mathrm{Res}}\;$}}}%\res{f}_{z=a} for residue of f at z=a 

\renewcommand{\Im}{\mathrm{Im}} %imaginary part of a function
\renewcommand{\Re}{\mathrm{Re}} %real part of a function

\newcommand\numberthis{\addtocounter{equation}{1}\tag{\theequation}}
\numberwithin{equation}{section}

\makeatletter
\patchcmd{\chapter}{\if@openright\cleardoublepage\else\clearpage\fi}{}{}{}
\makeatother

\usepackage{graphicx}
\usepackage{calc}
\usepackage{tocloft}
\usepackage[nottoc,notlof]{tocbibind}
\usepackage{emptypage}
\usepackage{hyperref}
\usepackage{bashful}
\usepackage{enumerate}

\usepackage{authblk}

\title{A mathematical analogy between certain hollow vortex and bubble flows -- existence, uniqueness, and identical boundary shapes}

\author[1]{Valentin D. Kunz}
\affil[1]{{\footnotesize Department of Mathematics, Ohio State University, 231 W 18th Ave, Columbus, OH 43210, USA}}
\author[1]{Saleh  Tanveer}
\author[2]{Darren G. Crowdy}
\affil[2]{{\footnotesize Department of Mathematics, Imperial College London, 180 Queen's Gate, London SW7 2AZ, UK }}
\affil[ ]{{\footnotesize
		v$\_$kunz@startmail.com	}}
\date{\today}
\begin{document}

\maketitle

	\begin{abstract}
		We study two physically distinct steady free-boundary problems in a two-dimensional inviscid fluid: a hollow vortex in a constant-vorticity flow subject to simple shear, third order and linear strain at infinity, without surface tension, and an irrotational flow past a constant-pressure bubble subject to surface tension, linear strain, and circulation at infinity.
		Building on the mathematical vortex–bubble correspondence recently developed by Crowdy and Tanveer, we show that both problems are linked by a common mathematical formulation. For the vortex problem, we extend the sequence-space fixed-point approach of Crowdy and Tanveer to obtain new local existence, uniqueness, and analyticity results. For the bubble problem, we develop a separate existence theory, reducing it to a nonlinear singular integral operator fixed-point problem and deriving a scalar solvability condition that defines a real-analytic surface in the physical parameter space. On this surface we obtain locally unique bubble solutions with real-analytic boundaries and holomorphically extendible velocity fields. Finally, we prove that the scalar solvability condition is equivalent to a functional conformal-mapping identity. The scalar condition between the parameters is enough for the vortex and bubble problems to have the same free-boundary shape, despite their different exterior flows.
	\end{abstract}

\section{Introduction}

Free-boundary problems for the incompressible Euler equations arise when part of the boundary of the fluid region is itself unknown and must be determined simultaneously with the flow. Even for steady two-dimensional flows, the resulting problems retain a nonlinear coupling between the geometry of the free boundary and the velocity field. Two classical examples are provided by hollow vortices and gas bubbles in an inviscid liquid. In the former, the geometry of a constant-pressure vortex boundary is determined by the surrounding rotational flow, while in the latter the pressure jump across the interface is balanced by surface tension and hence by its curvature. The present article studies such hollow vortex, subject to shear and third order strain, as well as a bubble subject to strain and circulation. The precise mathematical formulations of these problems are given below.

Beyond their classical role in free-streamline theory, hollow vortices are connected to several broader themes in mathematical physics. Point vortices provide a finite-dimensional Hamiltonian model for concentrated vorticity, whose relative equilibria and stability have been studied extensively, while hollow vortices can be viewed as free-boundary desingularisations of such singular vortex configurations \cite{Aref1983,ChenWalshWheeler2026, Newton2001}. For certain choice of parameters, the particular hollow-vortex problem considered in the present article can be interpreted as a limiting case of a corotating five-vortex configuration in which a central vortex is modelled as a hollow vortex and four distant point vortices of equal circulation are situated at the vertices of a square, as discussed in a companion manuscript by the present authors, submitted for publication. Exact equilibria consisting of a central hollow vortex surrounded by a non-rotating polygonal array of point vortices were previously constructed in \cite{CrowdyRoenby2014}. Related hollow-core models also arise in the theory of quantized vortices in superfluid helium \cite{Donnelly1991,LehmannSchmied2003}. Capillary bubbles, on the other hand, are canonical examples of deformable interfaces in multiphase flow  \cite{CliftGraceWeber1978,LegendreZenit2025, YanEtAl2023}. Thus, both classes of problems connect nonlinear free-boundary phenomena with fundamental questions in vortex dynamics, interfacial mechanics, and transport.
Historically, however, hollow vortices and capillary bubbles have largely been studied independently.

Despite this historical separation, the same free-boundary profiles have repeatedly appeared in physically distinct Euler flows, most notably in vortical flows without surface tension and in irrotational capillary flows. Crapper's classical exact solutions for pure capillary waves in irrotational flow \cite{Crapper1957} were later found to reappear in rotational flows without surface tension: numerical computations of Dyachenko and Hur \cite{DyachenkoHur2019b,DyachenkoHur2019a} first indicated the emergence of Crapper's profiles in an appropriate limiting regime, further numerical and asymptotic evidence was obtained by Hur and Vanden-Broeck \cite{HurVandenBroeck2020}, and Hur and Wheeler \cite{HurWheeler2020} subsequently constructed exact constant-vorticity waves having precisely the same free-surface shapes. Related correspondences were found in other geometries. Crowdy and Roenby \cite{CrowdyRoenby2014} obtained travelling water waves carrying submerged point vortices whose free surfaces are again related to Crapper's capillary-wave profiles. They also found that equilibria for the aforementioned non-rotating point-vortex configuration, with a single hollow vortex at its centre, admit the same class of free-boundary shapes as those found by Wegmann and Crowdy \citep{Crowdy1999, WegmannCrowdy2000} for a capillary bubble  with circulation, but without strain. It was later found by Crowdy, Nelson and Krishnamurthy \cite{CrowdyNelsonKrishnamurthy2021} that this same class of free-boundary curves also describes equilibria of a single rotating hollow vortex without surface tension. Taken together, these results suggested that the recurrence of identical free-boundary profiles across physically distinct flows reflects an underlying mathematical correspondence rather than a collection of isolated coincidences. 

Recent work by Crowdy and Tanveer \cite{CrowdyTanveer2025} has provided a mathematical explanation for these correspondences, showing that certain hollow-vortex problems without surface tension and certain irrotational free-boundary problems with surface tension, despite their quite different physical origins, have a common mathematical framework. These include the  problems of a hollow vortex in a simple shear and linear strain as well as a translating bubble in the presence of surface tension but without circulation, which, individually, have been studied extensively in their respective settings; see, for instance, \cite{BakerSaffmanSheffield1976,LlewellynSmithCrowdy2012,ZannettiFerlautoLlewellynSmith2016} for related hollow-vortex problems and \cite{McLeod1955,VandenBroeckKeller1980,Tanveer1996} for the translating-bubble problem. 
Outside the special cases of vanishing shear for the vortex and vanishing Bernoulli constant for the bubble, exact solutions are not generally available.

The present article considers a further pair of physically distinct problems, which we denote by (i) (vortex) and (ii) (bubble) in the enumeration below. These problems generalize several of the aforementioned radial configurations. For the Vortex problem (i), we adapt the sequence-space argument developed by Crowdy and Tanveer \cite{CrowdyTanveer2025} for a vortex in linear strain. The resulting fixed-point equation contains an additional dependence on an unknown Laurent coefficient and requires a modified contraction argument; cf.\! Remark \ref{rem:11}.

For the Bubble problem (ii), substantially different analytical machinery is required. As in \cite{CrowdyTanveer2025}, the vortex and bubble problems can be embedded into a common conformal-mapping formulation. A solution of this common problem, however, need not satisfy the additional compatibility condition imposed by the bubble's complex velocity potential. We therefore reformulate the bubble problem as a singular-integral fixed-point problem and express the compatibility condition as a scalar solvability condition which defines a real-analytic surface in the physical parameter space. Away from certain exceptional parameter values, this formulation yields locally unique bubble solutions by a contraction argument. At the exceptional values, where the direct fixed-point construction degenerates, we use the common conformal-mapping formulation together with the existence theory for the vortex problem to analytically continue the bubble solutions across the exceptional points.

 In a companion paper \cite{CKT}, we present exact solutions in the case of vanishing  shear (vortex) and vanishing Bernoulli constant (bubble), respectively, elucidate the physical features of the problems considered here, and provide numerical computations that supplement and extend our theoretical results beyond the parameter regime considered in the present article. \\

\textbf{Problems studied in the present Article.} We consider the two
physical problems of
\begin{enumerate}[(i)]
  \item a hollow 2-D inviscid equilibrium vortex in a simple shear and third
  order straining flow, allowing for a linear straining component;
  \item a 2D bubble in an inviscid-irrotational flow with linear strain at
  infinity and surface tension on the boundary, allowing for circulation at infinity, 
\end{enumerate}
These problems are illustrated in Figure \ref{Fig.01} below. Mathematically, both (i)
and (ii) are free boundary value problems. After suitable scaling and
non-dimensionalisation, they can be formulated as follows. We identify
$\mathbb{R}^2 = \mathbb{C}$ and consider a connected region $\Omega \subset
\mathbb{C}$ with sufficiently regular boundary $\partial \Omega \subset \mathbb{C}$ describing a
simple closed curve. We also consider a ``streamfunction'' $\psi : \Omega
\rightarrow \mathbb{R}$. A-priori regularity assumptions on $\partial \Omega$
and $\psi$ will be stated within Section \ref{sec:write-up2.2}, in terms of a
conformal mapping $Z : \mathbb{D} \rightarrow \Omega \cup \{\infty\}$, that maps the unit disc
$\mathbb{D} \subset \mathbb{C}$ onto $\Omega \cup \{\infty\}$.
\begin{enumerate}[(i)]
  \item The pair $(\psi, \Omega)$ satisfies the equations
  \begin{alignat*}{3}
    \Delta \psi & = 2 \gamma, &&  \quad \text{in} \quad \phantom{\partial}\Omega; \numberthis \label{e:1.1}\\
    \psi = 0 \quad \text{and} \quad | \nabla \psi |& = 1,
    && \quad \text{on} \quad \partial \Omega . \label{e:1.2}\numberthis
  \end{alignat*}
  Moreover, if we let $\xi_V : \Omega \rightarrow \mathbb{C}$ denote the
  analytic function defined via
  \begin{equation}
    2 i \partial_z \psi = i \gamma \bar{z} + \xi_V (z), \label{e:01}
  \end{equation}
  then the physical condition in the far-field implies
  \begin{equation}
    \xi_V (z) \sim i (\hat{\alpha} z^3 + (\beta - \gamma) z + \delta + \omega
    z^{- 1})  + \mathcal{O} (z^{- 2}), \quad \text{as} \quad z \rightarrow
    \infty. \label{e:write-up02}
  \end{equation}
  $\hat{\alpha} $ and $\gamma$ are real whereas $\beta$ is generally
  complex; $\omega$ is also real. Physically, $2 \gamma$ is the shear-rate (so $-2\gamma$ is the vorticity). The
  parameters $\hat{\alpha}$ and $\beta$ physically describe the
  third and first order strain components, and are specified a-priori along
  with $\gamma$. In \cite{CKT}, it was shown that we may take $\hat{\alpha}<0$ without loss of generality, which we will do throughout this article.  $\delta$ and $- 2 \pi \omega$, which physically describe drift and
  circulation, will both be determined as part of the solution,
  see Corollary \ref{prop:write-up26}.
  
  \item Let $\kappa : \partial \Omega \rightarrow \mathbb{R}$ denote the
  curvature of $\partial \Omega$. The pair $(\psi, \Omega)$ satisfies the
  equations
  \begin{alignat*}{3}
    \Delta \psi & = 0, &&  \quad \text{in} \quad   \phantom{\partial}\Omega; \label{e:1.5} \numberthis \\
    \psi = 0 \quad \text{and} \quad | \nabla \psi |^2 + \kappa & = \gamma, &&
    \quad \text{on} \quad \partial \Omega, \label{e:1.6} \numberthis
  \end{alignat*}
  subject to the asymptotic condition
  \begin{equation}
    2 i \partial_z \psi \sim \alpha z - \frac{i \Gamma_0}{z} +
    \mathcal{O} (z^{- 2}), \quad \text{as} \quad z \rightarrow \infty .
    \label{e:02}
  \end{equation}
  $\gamma, \alpha$ and $\Gamma_0$ are all real. $\gamma$ represents the
  Bernoulli constant, $\alpha$ describes the strain-rate and $2\pi \Gamma_0$
  describes the circulation. Without loss of generality, we will choose $\alpha >0$ (cf.\! \cite{CKT}).  As will be proved in the ensuing, only two of the
  parameters $(\gamma, \alpha, \Gamma_0)$ may be specified with the remaining
  parameter determined through a scalar constraint. All these parameters are
  part of the problem formulation, see Section \ref{sec:write-up2.2}.
  
  \quad Since $\psi$ is harmonic in $\Omega$, it is the imaginary part of a
  (possibly multivalued) function $W (z)$ holomorphic in $\Omega$, i.e.,
  \begin{equation}
    \psi (z) = \Im (W (z)).
  \end{equation}
  Note, however, $dW/dz$, which is related to the physical velocity, must be
  single-valued. We will refer to $W (z)$ as \textit{complex velocity
  potential}.
\end{enumerate}
In Problems (i) and (ii), the region $\mathbb{C} \setminus \Omega$ is called a
\textit{hollow vortex} and \textit{bubble}, respectively. The region
$\mathbb{C} \setminus \Omega$ is simply connected, in the sense that there exists a univalent conformal map $Z: \mathbb{D} \to \Omega \cup \{\infty\}$. So, we
consider here only a single hollow vortex (resp. bubble). \\

	\begin{figure}[h!]
			\includegraphics[width=\textwidth]{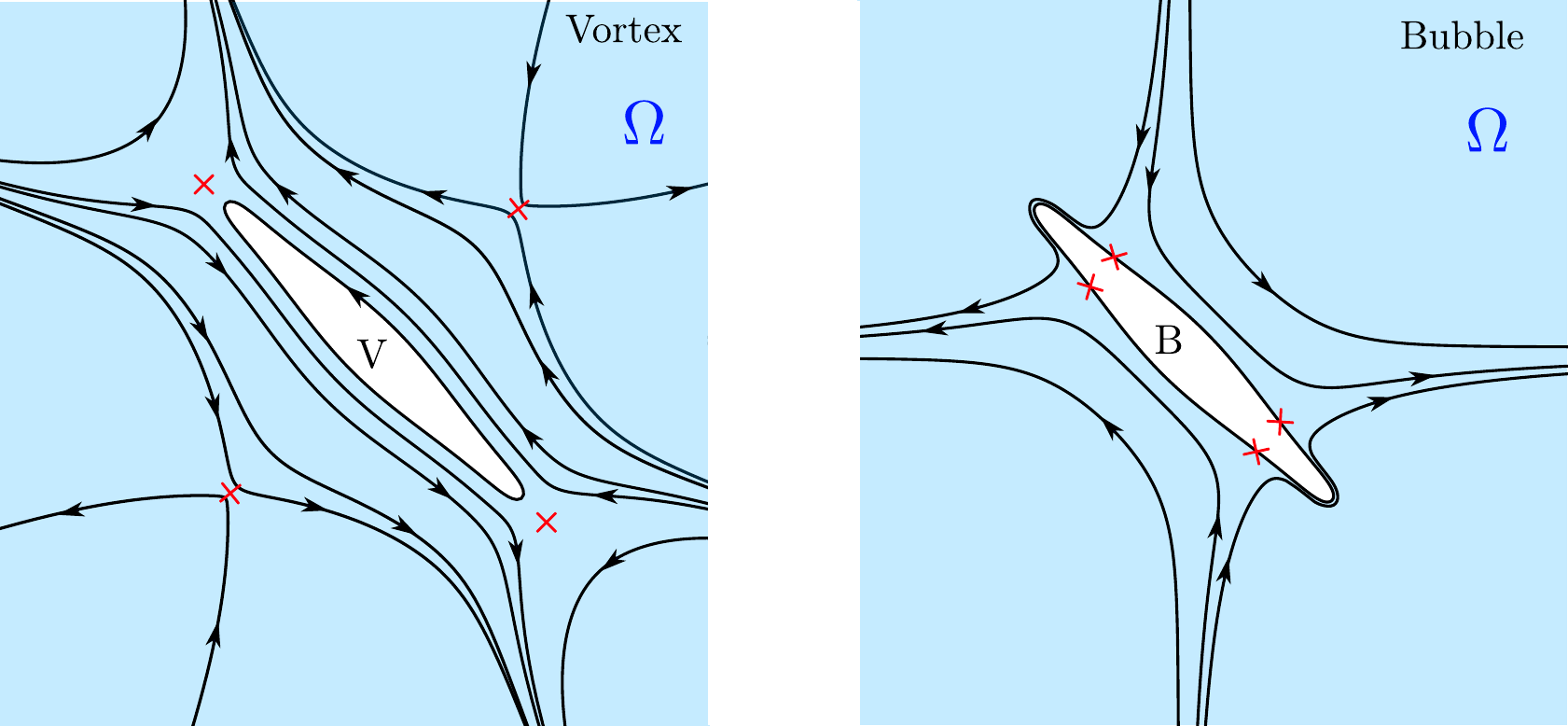}
				\caption{\small Illustration of the Vortex (left) and Bubble (right) problem. The regions labelled V and B show the hollow vortex and the bubble, respectively. The lines are streamlines whereon $\psi$ is constant, and the arrows indicate the fluid's flow direction. These figures are illustrative adaptations of the numerical solutions to Problems (i) and (ii), respectively, provided in \cite{CKT}. Particularly, the physical parameters are chosen such that the vortex and bubble shape are identical. The crosses indicate stagnation points.}
					\label{Fig.01}
	\end{figure}

\textbf{Aim and Scope of the Article.} The main objective of this article
is to prove that when $| \gamma |$ is sufficiently small, then both,
Problems (i) and (ii), admit a locally unique solution pair $(\psi, \Omega)$ with
$\partial \Omega$ real-analytic and the velocity-related functions $\xi_V$ and $dW/dz$ extend holomorphically past $\partial \Omega$. There are some other conditions that need to be satisfied
in order for such solution to exist, and these will be discussed in detail
throughout this article.\footnote{For instance, $\hat{\alpha}$ should not
approach $- \frac{1}{5}$.} We will give an overview of our results in Section
\ref{sec:write-up2.2}, once the appropriate
conformal mapping framework has been introduced. The article's structure and main-results are also summarised in the diagrammatic scheme of Figure \ref{Fig:02}.

As will transpire, both, Problems (i) and (ii), are closely related and, in
fact, mathematically analogous. This mathematical analogy will be used in
order to study symmetry properties of the problems' solutions, and determine
analyticity of $\partial \Omega$. We will also use it to show that a non-trivial conformal mapping identity is satisfied by the function $Z : \mathbb{D} \rightarrow \Omega \cup \{\infty\}$ mapping $\mathbb{D}$ onto the fluid-region  $\Omega$ if, and only if, the physical parameters $(\gamma, \alpha, \Gamma_0)$ (resp. $(\gamma, \hat{\alpha}, \beta)$) are constrained
to lie on a certain real-analytic surface $\mathcal{C}_f$, cf. the diagrammatic scheme in Figure
\ref{Fig:02}, and Section \ref{sec:05}.  Moreover, Problems (i) and (ii) yield identical Vortex/Bubble shapes $\partial \Omega$, iff this conformal mapping identity holds.

In addition to allowing one to apply results that are valid within the bubble problem to the vortex problem and vice-versa, the mathematical analogy between Problems (i) and (ii) \textit{also} plays an essential role in our existence theory for the bubble problem, as will transpire within the proof of Theorem \ref{thm:63}. More precisely, our analysis of Problem (ii) gives rise to certain technical difficulties which are overcome by, mathematically, viewing the bubble as a hollow vortex using the aforementioned conformal mapping identity. Remarkably, this identity is true in the vortex problem when the parameters $\gamma, \hat{\alpha},$ and $\beta$ are appropriately constrained even though, as we shall see in the ensuing, there appears no natural reason for such conformal mapping identity to hold for the vortex problem (see Remark \ref{rem:2.3}). The unified mathematical formulation is what makes it possible to identify and establish this otherwise unexpected conformal-mapping identity.

The physical properties of solutions $(\psi, \Omega)$ are explained in the
companion paper \cite{CKT}, which also provides
numerical computions supplementing and extending the theoretical results of
the present article. Moreover, \cite{CKT} derives exact solutions to
Problems (i) and (ii) when $\gamma = 0$, which we shall recall in Section
\ref{sec:write-up2.3}. \\

The remainder of this article is organised as follows. In Section  \ref{s:002}, we introduce the conformal mapping framework that is employed in order to prove the existence of solutions to the Vortex problem.  In Section \ref{sec:write-up3}, we will prove the existence of solutions to the Vortex problem (i) for specified physical parameters $\gamma, \hat{\alpha},$ and $\beta$. We also discuss symmetry and analyticity properties of solutions to both, the Vortex problem (i) and Bubble problem (ii). In Section \ref{sec:write-up04} we will prove the existence of solutions to the Bubble problem, and formulate the aforementioned constraint $(\gamma, \alpha, \Gamma_0) \in \mathcal{C}_f$ which must necessarily be satisfied in order for the Bubble problem (ii) to admit a solution. Finally, in Section \ref{sec:05}, we discuss the conformal mapping identity which is a criterion for when the Vortex and Bubble shapes, respectively, are identical \textit{even though the flows exterior to the Vortex and Bubble regions differ} (as illustrated in Figure \ref{Fig.01}).

\section{Conformal mapping framework and main results}\label{s:002}

\subsection{Two mathematically analogous physical
problems}\label{s:write-up2.1}

In \cite{CKT}, it was also shown that, akin to
Crowdy-Tanveer \cite{CrowdyTanveer2025} problems of a translating bubble and a vortex in
linear strain, both Problems (i) and (ii) can be cast into a unifying
mathematical framework. This is achieved as follows. Let us denote by
$\mathcal{Z} (s)$ the arclength parameterisation of the simple closed curve
$\partial \Omega$, with arc-length increasing in the positive (anti-clockwise)
direction. As detailed in \cite{CKT}, it can be shown that in both, Problems (i) and (ii), the following
non-linear differential equation needs to be satisfied by $\mathcal{Z}$:\footnote{Derivation of \eqref{e:Seq01} from \eqref{e:1.1}-\eqref{e:1.2} for the vortex problem allows for $\pm d\bar{\mathcal{Z}}/ds$ on the left-hand-side of \eqref{e:Seq01}. However, there is no loss of generality in taking only the positive sign for otherwise we may change sign of the physical parameters, which will also yield \eqref{e:Seq01}.}
\begin{equation}
  \frac{d}{d s} \overline{\mathcal{Z}} (s) = i \gamma \overline{\mathcal{Z}}
  (s) + \xi (\mathcal{Z} (s)) . \label{e:Seq01}
\end{equation}
Within Equation \eqref{e:Seq01}, $\xi (z)$ is an unknown function that is
analytic within the region $\Omega$ exterior to the vortex/bubble, is
linked to the flow's velocity field, and has prescribed asymptotic behaviour (cf.\! \eqref{e:write-up02} and \eqref{e:02}).

\begin{notation}
	 Everywhere below, we use the following notation. If $f(z)$ is a meromorphic function, then we set $\bar{f}(z) = \overline{f(\bar{z})}$ where $\bar{z}$ is the usual complex conjugate.
\end{notation}

 As will transpire, \eqref{e:Seq01} is enough to determine both $\mathcal{Z}(s)$ and $\xi(z)$ for specified far-field behaviour of $\xi$.  In \cite{CKT}, it is
explained that the following properties of $\xi (z)$ are known a-priori.
\begin{enumerate}[(a)]
  \item For the vortex problem, the asymptotic condition \eqref{e:write-up02}
  must hold for $\xi = \xi_V$, for specified $\hat{\alpha}$, $\beta$ within a
  certain range, as discussed in Section \ref{sec:write-up3.4}.
  \item For the bubble-problem, using the streamline condition $\psi=0$ in the form $\overline{W}_{\bar{z}} \frac{d\bar{z}}{ds} ds = W_z \frac{dz}{ds} ds$ on $\partial \Omega$, the Bernoulli condition \eqref{e:1.6} implies that 
  \begin{equation}
    \xi (z) = - i \int^z\! \left(\frac{dW}{dz'}\right)^2\! d z' 
    \label{e:write-up03}
  \end{equation}
  where $W (z)$ is the aforementioned complex velocity potential associated to
  the bubble problem, see \cite{CKT, CrowdyTanveer2025}. The function $W (z)$ can be described explicity by means
  of a conformal map $Z (\zeta)$, cf. Equation \eqref{e:write-up11}.
\end{enumerate}
To summarise, we will use the following terminology in the sequel.

\begin{definition}
	\label{def:01}We say that a simple closed curve $\mathcal{Z} (s)$,
	parametrised according to arc-length in the positive direction, solves the
	vortex problem, if the Equation \eqref{e:Seq01} holds for some $\xi (z)$
	which is analytic in the region $\Omega$ exterior to $\mathcal{Z} (s)$, except at $\infty$ where the asymptotic expansion \eqref{e:write-up02} holds for specified
	$\hat{\alpha}$, $\beta$. We say that $\mathcal{Z} (s)$ solves the bubble
	problem if, on the other hand, $\mathcal{Z} (s)$ solves  \eqref{e:Seq01} with corresponding  $\xi (z)$ compatible with \eqref{e:write-up03}.
	In both cases, vortex and bubble, there are a-priori regularity assumptions
	imposed on $\mathcal{Z} (s)$, which we formulate more conveniently in terms of a conformal map in the following Section
	\ref{sec:write-up2.2}.
\end{definition}

The Equation \eqref{e:Seq01} enables us to introduce the aforementioned
conformal mapping framework, which we will use to study both, Problems (i) and
(ii).

\begin{remark}\label{rem:2.2}
		If $\mathcal{Z}(s)$ solves the vortex problem, then we obtain a solution to Problem (i) formulated in the introduction as follows. The streamfunction $\psi$ is obtained from $\xi= \xi_V$ via \eqref{e:01}, and we define $\Omega$ as the region exterior to the closed curve $\mathcal{Z}(s)$. Similarly, if $\mathcal{Z}(s)$ solves the bubble problem, we obtain a solution to Problem (ii) formulated in the introduction. 
\end{remark}

\begin{remark}\label{rem:2.3}
		As we will see in the following Section \ref{sec:write-up2.2}, the conformal mapping identity mentioned in the introduction is forced upon us solely by the requirement that, for the bubble problem, $\xi(z)$, as determined by \eqref{e:Seq01}, is consistent with \eqref{e:write-up03}. Since there is no complex velocity potential for the vortex problem, and hence no corresponding compatibility condition, it is surprising that the same conformal mapping identity \textit{also} holds for the vortex problem, once the physical parameters are constrained appropriately.
\end{remark}

\subsection{Conformal mapping framework}\label{sec:write-up2.2}

Following \cite{CrowdyTanveer2025}, and similar approaches to other related problems  \cite{Tanveer1996}, we study Problems (i) and (ii)
\ by means of a conformal mapping
\begin{equation}
	Z : \mathbb{D} \rightarrow \Omega \cup \{\infty\}, \quad \zeta \mapsto Z (\zeta)
\end{equation}
that maps the interior of the unit disc $\mathbb{D} \subset \mathbb{C},$ lying
in the complex $\zeta$-plane, onto the region $\Omega$ exterior to the
vortex/bubble in the complex $z$-plane, with $Z (0) = \infty$. Note that the existence of region $\Omega$ has not yet been established. Thus, all of the following computations are only formal, and we will verify their validity a-posteriori. For brevity, we will simply write $\Omega$ instead of $\Omega \cup \{\infty\}$ in the sequel and it is understood that $Z$ is always normalised according to $Z(0)=\infty$.

Let us describe the Laurent-series expansion of $Z (\zeta)$. Without loss of
generality, after using a rotational degree of freedom granted by the
Riemann-mapping theorem, a translational degree of freedom in the physical
$z$-plane, and appropriate length-scaling, we may assume that
\begin{equation}
  Z (\zeta) = - \zeta^{- 1} + \sum_{j = 2}^{\infty} c_j \zeta^{j - 1}
  \label{e:Seq03}.
\end{equation}
Next, define the
function $y (\zeta) = \sqrt{Z_{\zeta}}$ which, due to \eqref{e:Seq03}, admits
the representation\footnote{By specifying the Series \eqref{e:Seq04}, we have implicitly chosen a branch of the square root.}
\begin{equation}
  y (\zeta) = \zeta^{- 1} + \sum_{j = 2}^{\infty} a_j \zeta^{j - 1}
  \label{e:Seq04} .
\end{equation}
Note that for $0 < | \zeta | < 1$, $y (\zeta) \neq 0$ is well-defined since
$Z$ is conformal. Moreover, we can now formulate the following a-priori
regularity assumption imposed on the boundary $\partial \Omega$ of the fluid
region. 

\begin{Regularity}[A-priori regularity assumption on $\partial \Omega$] \label{A-priori}
	The series  \eqref{e:Seq04} is absolutely convergent and non-vanishing on $\partial \mathbb{D}$. Note that this implies that \eqref{e:Seq03} is absolutely convergent on $\partial \mathbb{D}$ (see \eqref{e:seq05}).
\end{Regularity}

Next, introduce the
function $\bar{K}$ via
\begin{equation}
  \xi (Z (\zeta)) = i \bar{K} (\zeta) . \label{e:write-up07}
\end{equation}
Then, as detailed in \cite{CrowdyTanveer2025}, the interface
condition \eqref{e:Seq01} may be rewritten as
\begin{equation}
  \zeta y (\zeta) = \gamma \bar{y} (\zeta^{- 1}) Z (\zeta) + K  (\zeta^{- 1})
  \bar{y} (\zeta^{- 1}) \quad \text{on} \quad \partial \mathbb{D} .
  \label{e:seq07}
\end{equation}
Let us recall some results presented in \cite{CKT} that
concern the complex-analytic structure of the function $K (\zeta)$.

\begin{lemma}
  \label{lem:write-up01}Let $\mathcal{Z}$ solve the vortex problem. Then, the
  far-field condition \eqref{e:write-up02} on $\xi (z)$ yields the following
  Laurent-series expansion.
  \begin{equation}
    \bar{K} (\zeta) = \bar{k}_{- 2} \zeta^{- 3} + k_0 \zeta^{- 1} + \sum_{j =
    1}^{\infty} \bar{k}_j \zeta^{j - 1} \label{e:seq08},
  \end{equation}
  where the coefficients $\bar{k}_{- 2}, \bar{k}_0, \bar{k}_1,$and $\bar{k}_2$
  are linked to the physical parameters $\hat{\alpha}, \beta, \delta$ and
  $\omega$ as follows.
  \begin{equation}
    \bar{k}_{- 2} = - \hat{\alpha}, \quad \bar{k}_0 = 3 \hat{\alpha} c_2 -
    \beta + \gamma, \quad \bar{k}_1 = 3 c_3 \hat{\alpha} + \delta, \quad
    \bar{k}_2 = 3 \hat{\alpha} c_4 - 3 c_{2 }^2 \hat{\alpha} + (\beta -
    \gamma) c_2 - \omega, \label{e:write-up10}
  \end{equation}
  with $c_2, c_3,$and $c_4$ being Laurent-series coefficients of $Z (\zeta)$ in \eqref{e:Seq03}.
\end{lemma}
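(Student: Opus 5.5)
The plan is to pull the far-field expansion \eqref{e:write-up02} back to the $\zeta$-disc through the conformal map $Z$, and then read off Laurent coefficients. By \eqref{e:write-up07} we have $\bar{K}(\zeta) = -i\,\xi(Z(\zeta))$.

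\textbf{Analytic structure.} Since $\mathcal{Z}$ solves the vortex problem, $\xi=\xi_V$ is analytic in $\Omega$ except at $\infty$. The map $Z$ is a conformal bijection of $\mathbb{D}\setminus\{0\}$ onto $\Omega\setminus\{\infty\}$ with $Z(0)=\infty$ and a simple pole there, by \eqref{e:Seq03}. Hence $\bar{K}$ is holomorphic in $0<|\zeta|<1$. The asymptotic relation \eqref{e:write-up02} shows that $\bar K$ has at most a pole of order three at $\zeta=0$. So $\bar{K}$ has a Laurent expansion of the form $\sum_{n\ge -3}(\cdot)\,\zeta^n$, and the structural claim \eqref{e:seq08} reduces to identifying its principal part and its first few regular coefficients. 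The tail $\mathcal{O}(z^{-2})$ in \eqref{e:write-up02} becomes $\mathcal{O}(\zeta^{2})$, because $1/Z(\zeta) = -\zeta\,(1 - c_2\zeta - c_3\zeta^2-\dots)^{-1} = \mathcal{O}(\zeta)$. Therefore only the explicit terms $\hat\alpha Z^3+(\beta-\gamma)Z+\delta+\omega Z^{-1}$ contribute to the coefficients of $\zeta^{-3},\dots,\zeta^{1}$. The higher coefficients $\bar k_j$ are left free, which is exactly what the series $\sum_{j\ge1}\bar k_j\zeta^{j-1}$ in \eqref{e:seq08} expresses.

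\textbf{Coefficient matching.} I would substitute $Z=-\zeta^{-1}+c_2+c_3\zeta+c_4\zeta^2+\dots$ and expand each term:
\begin{equation*}
Z^3=-\zeta^{-3}+3c_2\zeta^{-2}+(3c_3-3c_2^2)\zeta^{-1}+(3c_4-6c_2c_3+c_2^3)+\mathcal{O}(\zeta),
\end{equation*}
and $(\beta-\gamma)Z=-(\beta-\gamma)\zeta^{-1}+(\beta-\gamma)c_2+\dots$, and $\omega/Z=-\omega\zeta+\mathcal{O}(\zeta^2)$. Collecting powers of $\zeta$ gives the following:
\begin{itemize}
\item the $\zeta^{-3}$ coefficient is $-\hat\alpha$, which matches $\bar k_{-2}$;
\item the lower-order coefficients are combinations of $3\hat\alpha c_2$, $3\hat\alpha c_3$, $3\hat\alpha c_4-3\hat\alpha c_2^2$, $\beta-\gamma$, $\delta$ and $\omega$, which are precisely the ingredients appearing in \eqref{e:write-up10}.
\end{itemize}
I would then match these term by term with the labelling $\bar k_0,\bar k_1,\bar k_2$. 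For this I use the notation $\bar f(\zeta)=\overline{f(\bar\zeta)}$ to pass between the coefficients of $K$ and those of $\bar K$, together with the reality of $\hat\alpha,\gamma,\omega$.

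\textbf{Main obstacle.} The hard step is the bookkeeping in the last stage, namely reconciling the naive expansion with the exact index convention of \eqref{e:seq08}. That expansion displays no $\zeta^{-2}$ term and a separately labelled $\zeta^{-1}$ coefficient $k_0$, while $3\hat\alpha c_2$ naturally arises at order $\zeta^{-2}$. I would first check whether the normalisation \eqref{e:Seq03} is meant to include a further translation or rotation convention, imported from \cite{CKT}, that eliminates or absorbs a coefficient; for example, one that makes the $\zeta^{-2}$ contribution combine with another term. Failing that, I would carefully re-index the coefficients, including the conjugation and the factor $-i$, so that each coefficient in \eqref{e:write-up10} is identified with the correct power of $\zeta$. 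Everything else is routine Laurent-series algebra.
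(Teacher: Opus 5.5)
Your strategy is the right one: substitute into $\bar K(\zeta)=-i\,\xi(Z(\zeta))=\hat\alpha Z^3+(\beta-\gamma)Z+\delta+\omega Z^{-1}+\mathcal{O}(\zeta^2)$ and match Laurent coefficients. The paper only recalls the lemma from \cite{CKT}, and this substitution is the whole argument. However, the coefficient matching is the entire content of the lemma, and as you have written it, it fails; the fixes you suggest would not repair it.

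The cause is a misreading of \eqref{e:Seq03}. There $c_j$ multiplies $\zeta^{j-1}$, so
\[
Z(\zeta)=-\zeta^{-1}+c_2\zeta+c_3\zeta^2+c_4\zeta^3+\cdots,
\]
with no constant term at all. The translational freedom has already been used to remove it (compare $c_1=0$ in \eqref{e:write-up19}). Your expansion $-\zeta^{-1}+c_2+c_3\zeta+\cdots$ is shifted by one power. That shift is the sole source of the spurious $3\hat\alpha c_2\zeta^{-2}$ term and of the extra cross terms $-6c_2c_3+c_2^3$. Taken at face value, your expansion would contradict the structure \eqref{e:seq08} unless $c_2=0$. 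No hidden rotation/translation convention or relabelling of indices can reconcile this, so the ``main obstacle'' you identify is an artefact rather than a genuine difficulty.

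With the correct expansion, write $Z=-\zeta^{-1}(1-u)$ with $u=c_2\zeta^2+c_3\zeta^3+c_4\zeta^4+\cdots$. Then
\[
Z^3=-\zeta^{-3}+3c_2\zeta^{-1}+3c_3+(3c_4-3c_2^2)\zeta+\mathcal{O}(\zeta^2).
\]
Also $(\beta-\gamma)Z=-(\beta-\gamma)\zeta^{-1}+(\beta-\gamma)c_2\zeta+\mathcal{O}(\zeta^2)$ and $\omega Z^{-1}=-\omega\zeta+\mathcal{O}(\zeta^3)$. Hence the $\zeta^{-2}$ coefficient of $\bar K$ vanishes. The coefficients of $\zeta^{-3},\zeta^{-1},\zeta^{0},\zeta^{1}$ are $-\hat\alpha$, $3\hat\alpha c_2-\beta+\gamma$, $3\hat\alpha c_3+\delta$ and $3\hat\alpha c_4-3\hat\alpha c_2^2+(\beta-\gamma)c_2-\omega$. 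These are exactly $\bar k_{-2},\bar k_0,\bar k_1,\bar k_2$ in \eqref{e:write-up10}.

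No conjugation step is needed: the factor $-i$ cancels the $i$ in \eqref{e:write-up02}, and the lemma names the coefficients of $\bar K$ directly. Finally, the coefficient written $k_0$ in front of $\zeta^{-1}$ in \eqref{e:seq08} should be read as $\bar k_0$. That is what \eqref{e:write-up10} gives, and, after conjugation and using $c_2=2a_2$, what \eqref{e:seq015} requires. So there is no separately labelled coefficient to reconcile either.
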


Before we formulate the analogous lemma concerning the structure of $\bar{K}
(\zeta)$ in the context of the bubble problem, let us note that, for the
bubble-problem, the streamline condition
\[ \psi = 0 \quad \text{on} \quad \partial \Omega \]
implies that the complex velocity potential $W$ is completely determined as a
function of $\zeta$:
\begin{equation}
  W (Z (\zeta)) = \frac{\alpha }{2} (\zeta^2 + \zeta^{- 2}) + i
  \Gamma_0 \log (\zeta), \label{e:write-up11}
\end{equation}
see \cite{CKT}. Therefore, \eqref{e:write-up03} may be re-written as 
	\begin{align}
		\bar{K}(\zeta)= -  \int^{\zeta} \left(\frac{d W}{d \zeta}\right)^2 \frac{d\zeta}{Z_\zeta}. \label{e:2.11}
	\end{align}
 Using the representation \eqref{e:2.11} and the explicit formula \eqref{e:write-up11} to expand $\bar{K}$ in a Laurent series, and using the equation \eqref{e:seq07} to compare terms of like powers of $\zeta$ (akin to the procedure detailed in Section \ref{sec:write-up3.1}),  it can be
 shown that the following Lemma holds (cf.\! \cite{CKT}).

\begin{lemma} \label{lem:write-up02} Let $\mathcal{Z}$ solve the
  bubble problem. Then the expansion \eqref{e:seq08} for $\bar{K} (\zeta)$
  still holds, where now the coefficients $\bar{k}_{- 2}, \bar{k}_0$ and
  $\bar{k}_2$ are linked to the physical parameters $\alpha$ and $\Gamma_0$ as
  follows.
  \begin{equation}
    \bar{k}_{- 2} = \frac{\alpha^2}{3}, \quad \bar{k}_0 = - 2 a_2 \alpha^2 -
    2 i \Gamma_0 \alpha, \quad \bar{k}_2 = 2 \alpha^2 +
   \Gamma_0^2 - 4 i \Gamma_0 \alpha a_2 - (3 a_2^2 -
    2 a_4) \alpha^2 \label{e:write-up12}
  \end{equation}
  with $a_2,$ and $a_4$ being series coefficients of $y (\zeta)$ in
  \eqref{e:Seq04}.
\end{lemma}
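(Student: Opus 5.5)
The plan is a direct computation. Substitute the explicit complex potential \eqref{e:write-up11} into the representation \eqref{e:2.11}, expand the integrand in a Laurent series about $\zeta=0$ using \eqref{e:Seq04}, and integrate term by term. The only non-mechanical point is a logarithmic term, which will be excluded by single-valuedness through \eqref{e:seq07}.

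\textbf{Step 1 (integrand).} From \eqref{e:write-up11},
\[
\frac{dW}{d\zeta}=\alpha(\zeta-\zeta^{-3})+\frac{i\Gamma_0}{\zeta},
\]
and hence
\[
\Big(\frac{dW}{d\zeta}\Big)^2=\alpha^2(\zeta^{-6}-2\zeta^{-2}+\zeta^2)+2i\Gamma_0\alpha(1-\zeta^{-4})-\Gamma_0^2\zeta^{-2}.
\]
Since $Z_\zeta=y^2$, \eqref{e:Seq04} gives
\[
\frac{1}{Z_\zeta}=\zeta^2\big(1+a_2\zeta^2+a_3\zeta^3+a_4\zeta^4+\dots\big)^{-2}=\zeta^2\big(1-2a_2\zeta^2-2a_3\zeta^3+(3a_2^2-2a_4)\zeta^4+\mathcal{O}(\zeta^5)\big).
\]
This expansion is legitimate because $y$ has a simple pole at $0$ and does not vanish for $0<|\zeta|<1$. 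Multiplying, the integrand of \eqref{e:2.11} has the following low-order coefficients:
\begin{itemize}
\item $\zeta^{-4}$: coefficient $\alpha^2$;
\item $\zeta^{-3}$: coefficient $0$, since $1/y^2$ has no $\zeta^3$ term;
\item $\zeta^{-2}$: coefficient $-2a_2\alpha^2-2i\Gamma_0\alpha$;
\item $\zeta^{-1}$: coefficient $-2a_3\alpha^2$;
\item $\zeta^{0}$: coefficient $\alpha^2(3a_2^2-2a_4)-2\alpha^2-\Gamma_0^2+4i\Gamma_0\alpha a_2$.
\end{itemize}

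\textbf{Step 2 (integration).} Computing $-\int^\zeta$ term by term gives:
\begin{itemize}
\item $\bar k_{-2}=\alpha^2/3$ as the coefficient of $\zeta^{-3}$;
\item no $\zeta^{-2}$ term;
\item $\bar k_0=-2a_2\alpha^2-2i\Gamma_0\alpha$ as the coefficient of $\zeta^{-1}$;
\item an arbitrary constant of integration, which is the free coefficient $\bar k_1$;
\item $\bar k_2=-(\text{coefficient of }\zeta^0)$, which is exactly the expression in \eqref{e:write-up12}.
\end{itemize}
Higher powers only contribute to $\bar k_j$ with $j\ge 3$, which are unconstrained in the statement. Thus the shape \eqref{e:seq08} and the relations \eqref{e:write-up12} follow, provided no $\log\zeta$ term appears.

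\textbf{Step 3 (the main obstacle: excluding the logarithm).} The $\zeta^{-1}$ term of the integrand would produce $2a_3\alpha^2\log\zeta$ in $\bar K$. To rule it out, I would use the interface equation \eqref{e:seq07}. On $\partial\mathbb{D}$ it expresses
\[
K(\zeta^{-1})=\frac{\zeta y(\zeta)-\gamma\bar y(\zeta^{-1})Z(\zeta)}{\bar y(\zeta^{-1})},
\]
which is a single-valued continuous function on the circle, by Assumption~\ref{A-priori} and the nonvanishing of $y$ there. Equivalently, $\xi$ in \eqref{e:Seq01} is single-valued on $\partial\Omega$, and the boundary values of the analytic function $\bar K$ in the punctured disc therefore have zero winding increment. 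Since $\log\zeta$ changes by $2\pi i$ around the circle, its coefficient must vanish. Because $\alpha>0$, this forces $a_3=0$ and removes the logarithm.

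Two alternatives are available if this argument needs reinforcement. One is to compare like powers in \eqref{e:seq07} directly, in the manner of Section~\ref{sec:write-up3.1}, and obtain $a_3=0$ from the low-order balance. The other is to note that the residue of $(dW/dz)^2$ at infinity is a physical quantity that must vanish for $\xi$ in \eqref{e:write-up03} to be single-valued. Finally, the notation $k_0\zeta^{-1}$ (rather than $\bar k_0$) in \eqref{e:seq08} is purely a labelling convention for that coefficient, and its value is the one computed in Step 2.
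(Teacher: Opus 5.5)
Your coefficient computation is the one the paper intends: expand \eqref{e:2.11} using \eqref{e:write-up11} and integrate. Your values of $\bar k_{-2}$, $\bar k_0$ and $\bar k_2$ are correct. The genuine difference is in how you dispose of the $\log\zeta$ term.

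The paper, in the remark after the lemma, concedes that $a_3=0$ is needed and defers it to the parity symmetry of Lemma~\ref{lem:seq12}. That gives $a_{2j+1}=0$ only through Corollary~\ref{cor:write-up18}, i.e.\ for locally unique fixed points in the small-$\gamma$ regime. Moreover, the underlying equations \eqref{e:seq11} were derived assuming $K$ already has the Laurent form \eqref{e:write-up17}.

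Your argument is self-contained. The additive monodromy of $\bar K$ around $|\zeta|=1$ is $-\oint_{|\zeta|=1}(dW/d\zeta)^2 Z_\zeta^{-1}\,d\zeta=4\pi i\,a_3\alpha^2$. It must vanish because the other terms in \eqref{e:seq07} (equivalently $\bar{\mathcal Z}_s-i\gamma\bar{\mathcal Z}$ in \eqref{e:Seq01}) are periodic. Physically, this is the zero-net-force condition $\oint_{\partial\Omega}(dW/dz)^2\,dz=0$, obtained by integrating the Bernoulli condition around the closed boundary.

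Your argument holds for every solution of the bubble problem, using only Assumption~\ref{A-priori}. It also shows directly that bubble solutions satisfy the single-valuedness of $K$ built into Problem (a). The paper's symmetry route gives more, namely that all odd coefficients vanish, but that is not needed for this lemma.

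One caution: discard your first fallback. Comparing like powers in \eqref{e:seq07} presupposes that $K$ has a Laurent expansion, i.e.\ that the logarithm is already gone. Even then, the $j=3$ case of \eqref{e:seq11} reads $a_3=\gamma\sum_{m\ge 0}c_{m+3}\bar a_m$. This is not forced to vanish for $\gamma\neq 0$ without parity information, and \eqref{e:seq10} only determines $k_1$. The monodromy argument is the one that does the work.
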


\begin{remark}
	For the bubble problem, there is an additional constraint $a_3=0$ required in order that \eqref{e:02} reduces to \eqref{e:seq08}. But this is moot since, as it will transpire, for any solution to \eqref{e:seq07} we necessarily have $a_{2j+1} \equiv 0$ for all $j$, see Lemma \ref{lem:seq12}.
\end{remark}

As will turn out, for specified $k_{- 2}$, $k_0$, $\gamma$, by using \eqref{e:seq07},  the series $\{ a_j
\}_{j = 2}^{\infty}$, $\{ c_j \}_{j = 2}^{\infty}$, and $\{ k_j \}_{j =
1}^{\infty}$ are determined uniquely, as for the vortex problem, without
regard to the relation \eqref{e:2.11} (and hence, without regards to the relation \eqref{e:write-up03}), as we will see in Section \ref{sec:write-up3}. A similar strategy can also be employed to study solutions to the bubble problem (see Section \ref{sec:write-up3}).  Particularly, if the bubble problem
were to have a solution, the $\{ k_j \}_{j = 1}^{\infty}$ coefficients determined by employing such strategy must, according to \eqref{e:2.11}, satisfy the following \textit{conformal mapping identity}
\begin{equation}
  - 3 {\bar{k}_{- 2}}  \zeta^{- 4} - k_0 \zeta^{- 2} + \sum_{j =
  2}^{\infty} (j - 1) \bar{k}_j \zeta^{j - 2}  = \frac{d \bar{K}}{d \zeta} = -
  \left( \alpha \left(\zeta -\zeta^{-3}\right) + \frac{i \Gamma_0}{\zeta} \right)^2 \frac{1}{Z_{\zeta}}. \label{e:16}
\end{equation}
 In particular, the expressions for
$\bar{k}_2$ obtained from \eqref{e:write-up12} must be consistent with those
derived later, from \eqref{e:seq09}. This will lead to the aforementioned constraint on
$(\gamma, \alpha, \Gamma_0)$ which needs to be satisfied by any solution to
the bubble problem. The constraint is discussed in detail within Sections \ref{sec:4.5}, \ref{sec:4.6}, and  \ref{sec:05}.

Even though Lemma \ref{lem:write-up02} gives no information on the coefficient
$\bar{k}_1$, we will see in Section \ref{sec:write-up3.1} that both,
$\bar{k}_1$ and $\bar{k}_2$ can be recovered from $\bar{k}_{- 2}$ and $\bar{k}_0$ in conjunction with the coefficients $\{ c_j \}_{j = 2}^{\infty}$ and $\{ a_j \}_{j =2}^{\infty}$ of the Laurent-series expansions of $Z (\zeta)$ and $y (\zeta)$ through \eqref{e:seq07}. 

Next, following \cite{CKT}, in order to formulate a common mathematical framework for both vortex and bubble, we introduce the parameters
$\lambda$ and $\sigma$ as follows.
\begin{align*}
	\text{Vortex Problem (i):}  ~ & (\lambda, \sigma) = (- \hat{\alpha}, -
	\bar{\beta} + \gamma) ; ~ \text{Bubble Problem (ii):} ~ (\lambda,
	\sigma) = \left( \frac{\alpha^2}{3} , 2 i \Gamma_0 \alpha \right) 
\end{align*}
The sequences $\{ c_j \}_{j = 2}^{\infty}$ and $\{ a_j \}_{j =
	2}^{\infty}$ of series coefficients are linked by the relation
\begin{equation}
	c_j = \frac{2 a_j}{j - 1} + \frac{1}{j - 1} \sum_{m = 2}^{j - 2} a_m a_{j -
		m}, \quad j \geq 2 \label{e:seq05},
\end{equation}
where the sum is
understood to vanish for $j = 2, 3$. Thus,
we know that $c_2 = 2 a_2$ and therefore
\begin{equation}
	(k_{- 2}, k_0) = \left( \lambda, - 6 \lambda\bar{a}_2 + \sigma \right)
	\label{e:seq015}
\end{equation}
for \textit{both} physical problems.  In summary, we have just shown that any
solution to either the vortex or the bubble problem solves, via the conformal
map $Z (\zeta)$, the following ``Problem (a)''.

\begin{definition}[Problem (a)]
  \label{def:Problem3 a}Let $\lambda > 0$, and $\sigma \in \mathbb{C}$ be
  constant. We say that a map
  \begin{equation}
    Z(\zeta) = - \zeta^{- 1} +
    \sum_{j = 2}^{\infty} c_j \zeta^{j - 1} \label{e:Seq003}
  \end{equation}
  solves ``Problem (a)'' if the a-priori regularity assumption \ref{A-priori} is satisfied by $y (\zeta) = \sqrt{Z_{\zeta} (\zeta)}$, and $Z (\zeta)$ solves
  \begin{equation}
    \zeta y (\zeta) = \gamma \bar{y} (\zeta^{- 1}) Z (\zeta) + K  (\zeta^{-
    1}) \bar{y} (\zeta^{- 1}), \quad \text{on} \quad | \zeta | = 1.
    \label{e:write-up16}
  \end{equation}
  Here,  $K$ is analytic and single valued within $\mathbb{D} \setminus \{0\}$
with Laurent-series of the form
  \begin{equation}
    K (\zeta) = k_{- 2} \zeta^{- 3} + k_0 \zeta^{- 1} + \sum_{j = 1}^{\infty}
    k_j \zeta^{j - 1}. \label{e:write-up17}
  \end{equation}
The coefficients $k_{- 2} $ and $k_0$ are given by \eqref{e:seq015}.
  Exact rational solutions to Problem (a) for $\gamma=0$ are reported in
  \cite{CKT}, and will be recalled in Section
  \ref{sec:write-up2.3}.
\end{definition}

In Section \ref{sec:write-up3.4}, we will discuss univalency of solutions to
Problem (a). The notion ``Problem (a)'' is mostly due to Figure
\ref{Fig:02}. Let us explain what we can, and cannot, gain by
studying it.
\begin{enumerate}
		\item Let us assume that $Z(\zeta)$ solves Problem
	(a), and that $\gamma, \lambda,$ and $\sigma$ are in a range where $Z (\zeta)$ is	univalent. In particular, corresponding $\omega$ is necessarily real, see Equation \eqref{e:CircReal}.  Then, the curve $\mathcal{Z} (s)$ describing the
	(counter-clockwise) arc-length parametrisation of $\partial \Omega = Z
	(\partial \mathbb{D})$ solves the vortex Problem for we may set $\xi (z) =
	i \bar{K} (Z^{- 1} (z))$. 
	
	\quad The sought-after pair $(\psi, \Omega)$ is then obtained according to the procedure detailed in Remark \ref{rem:2.2}.

  \item Solutions to Problem (a) generally \textit{do not} yield solutions
  to the Bubble problem, even when univalent. This is because Equation \eqref{e:write-up03}
  generally needs not hold for $\xi (z) = i \bar{K} ( Z^{- 1}
  (z))$. In fact, as we will see,  the
  bubble problem admits a solution if, and only if, the physical parameters
  $\gamma, \alpha,$ and $\Gamma_0$ are constrained to lie on an analytic surface
  $\mathcal{C}_f$, see Sections \ref{sec:4.5} and \ref{sec:4.6}. However, in Section
  \ref{sec:write-up3}, we will show that Problem (a) admits solutions without
  any such constraint imposed on $\gamma, \lambda$, and $\sigma$. Thus without the constraint $(\gamma, \alpha, \Gamma_0) \in \mathcal{C}_f$, solutions to Problem (a) do not correspond to solutions of the Bubble
  problem (ii).
  
  \item Studying Problem (a) will unveil symmetry and boundary analyticity
  properties of solutions to both, the vortex and the bubble problem, see
  Corollary \ref{cor:write-up18} and Lemma \ref{lem:seq20}. Particularly, we
  will see that $\partial \Omega$ describes a real-analytic curve.
  
  \item Given the constraint $(\gamma, \alpha, \Gamma_0) \in \mathcal{C}_f$
  described in Pt.\! 2, consistency of the two expressions \eqref{e:write-up03}
  and \eqref{e:write-up17} for $K (\zeta)$ will yield the conformal mapping identity \eqref{e:16}. This identity is discussed in detail in Section \ref{sec:05}, and transcends the bubble problem in the sense, that when parameters $\alpha,$ and $\Gamma_0$ are, by means of $\lambda$  and $\sigma$, replaced by equivalent vortex parameters $\hat{\alpha}$ and $\beta$, the same identity holds for the vortex solution.
\end{enumerate}
So, to summarise, Problem (a) allows us to understand the complex-analytic
structure of solutions to the vortex and bubble problem. It may be used to
prove existence of solutions to the vortex problem, though we will require an
existence proof of the bubble problem that does not rely on the ``Problem (a)'' formulation, since it is a-priori unclear that the $\bar{K}$ obtained through \eqref{e:write-up16} is consistent with \eqref{e:16}.

Indeed, in order to prove the existence of solutions to the bubble problem,
we will rely on the following ``Problem (b)'' formulation.

\begin{definition}[Problem (b)]
  \label{def:Problem3 b} Let $\alpha > 0$, and $\Gamma_0 \in 
  \mathbb{R}$ be constant. We say that a map
  \begin{equation}
    Z(\zeta)  =  - \zeta^{- 1} +
    \sum_{j = 2}^{\infty} c_j \zeta^{j - 1} \label{e:Seq0003}
  \end{equation}
  solves ``Problem (b)'' if the a-priori regularity assumption \ref{A-priori} is satisfied by $y(\zeta) = \sqrt{Z_{\zeta}}$, and $y(\zeta)$
  satisfies
  \begin{equation}
    | y (\zeta) |^2 \left( 1 + 2 \Re \left( \zeta \frac{y_{\zeta} (\zeta)}{y
    (\zeta)} \right) \right) + \left| \frac{d W}{d \zeta} \right|^2 = \gamma |
    y (\zeta) |^4, \quad \text{on} \quad | \zeta | = 1. \label{e:write-up75}
  \end{equation}
  Here, $W (\zeta)$ is given by \eqref{e:write-up11}.
\end{definition}

The fact that solutions to Problem (ii) yield solutions to Problem (b) can be proved by verbatim adaptation of Tanveer's arguments \cite{Tanveer1996} concerning the derivation of \eqref{e:write-up75} in the context of a steadily translating inviscid bubble, so we omit the computational details. Any univalent solution to Problem
(b) may be integrated to yield a solution to Problem (a) with $(\lambda,\sigma)$ identified to be $(\frac{\alpha^2}{3}, 2 i \Gamma_0 \alpha)$, but solutions to
Problem (a) only give rise to solutions of Problem (b) if the 
conformal mapping identity \eqref{e:16} is satisfied. If Problem (b) is
solved and function $Z(\zeta)$ is univalent, then the explicit formula for $W
(\zeta)$ in \eqref{e:write-up11} gives rise to the sought-after streamfunction
$\psi$ in Problem (ii). 

To conclude this subsection, let us summarise the present article's structure
and results in the diagrammatic scheme shown in Figure \ref{Fig:02}. As mentioned previously, the companion paper \cite{CKT}
reports exact solutions to Problems (a) and (b) when $\gamma = 0$. We will
present these solutions in the following Subsection \ref{sec:write-up2.3}.

\begin{figure}[h!]
	\centering
	\includegraphics[width=1\textwidth]{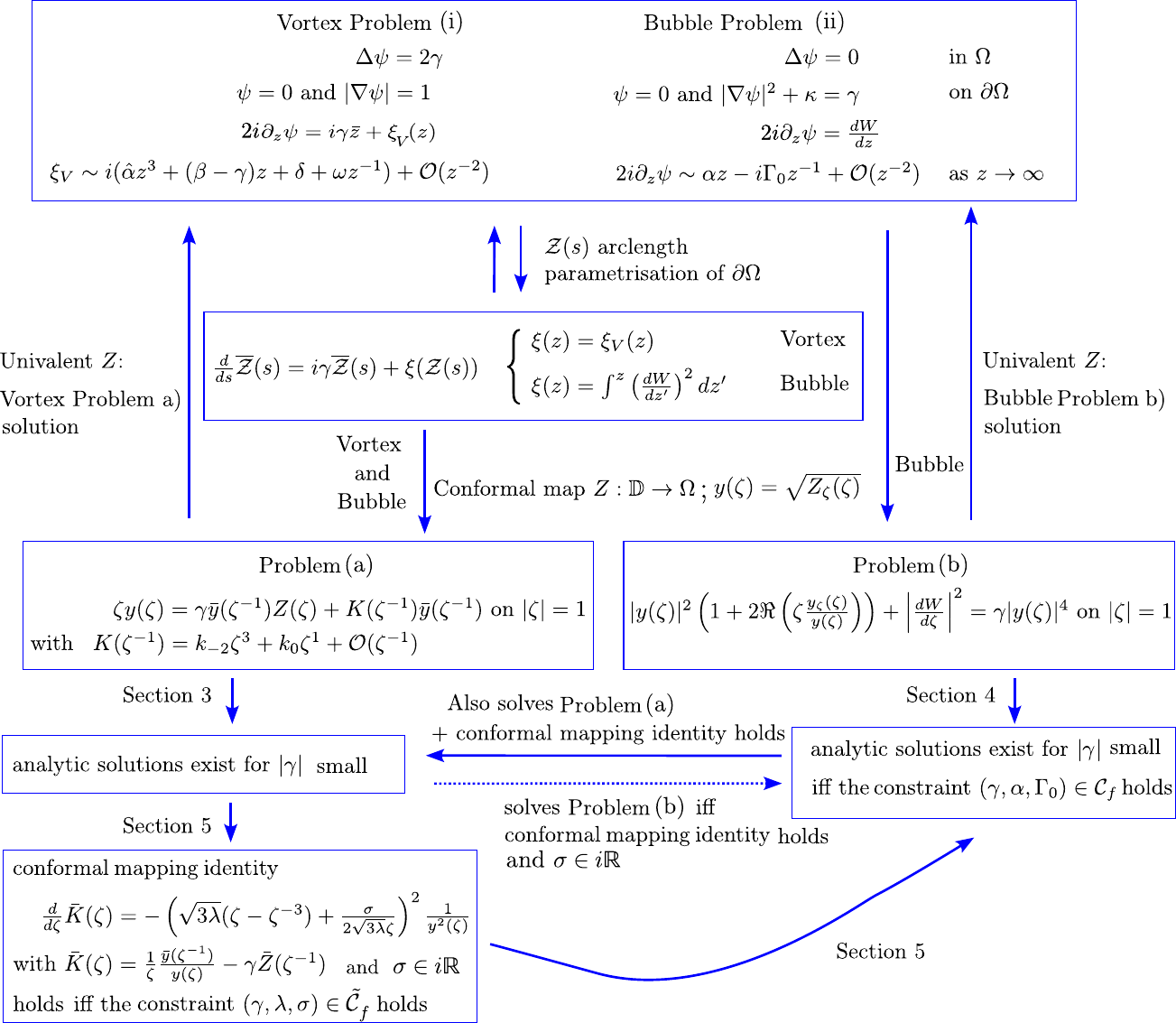}
		\caption{\small Diagrammatic overview of the article's structure and main results. Here, $(\gamma, \lambda, \sigma) \in \tilde{\mathcal{C}}_f$ indicates that, for $\alpha =\sqrt{3 \lambda}$ and $\Gamma_0 =\sigma/(2 i \sqrt{3 \lambda})$, we have $(\alpha, \Gamma_0) \in \mathcal{C}_f$ (cf.\! Corollary \ref{cor:5.5}).}
		\label{Fig:02}
\end{figure}

\subsection{Exact solutions for $\gamma = 0$}\label{sec:write-up2.3}

In the special case when $\gamma = 0$, exact solutions to Problem (a) are
available, and these solutions have been recently reported in
\cite{CKT}. Namely, we have
\begin{equation}
  y_0 (\zeta) = \frac{1}{\zeta} + \left( \frac{5 \lambda \bar{\sigma} -
  \sigma}{25 \lambda^2 - 1} \right) \zeta + \lambda \zeta^3 . \label{e:18}
\end{equation}
Thus, $a_0 = 1, a_2 = \left( \frac{5 \lambda \bar{\sigma} - \sigma}{25
\lambda^2 - 1} \right)$, $a_4 = \lambda$, and all the other coefficients
vanish. From the relation \eqref{e:seq05}, the coefficients of $Z (\zeta)$ can
be computed and we find

\begin{equation}
  Z_0 (\zeta) = - \frac{1}{\zeta} + \left( \frac{10 \lambda \bar{\sigma} - 2
  \sigma}{25 \lambda^2 - 1} \right) \zeta  + \frac{1}{3}  \left(2 \lambda +
  \left( \frac{5 \lambda \bar{\sigma} - \sigma}{25 \lambda^2 - 1} \right)^2 \right)
  \zeta^3  + \frac{2}{5} \left( \frac{5 \lambda \bar{\sigma} - \sigma}{25
  \lambda^2 - 1} \right) \lambda \zeta^5 + \frac{\lambda^2}{7} \zeta^7 .
  \label{e:19}
\end{equation}
The Equations \eqref{e:18} and \eqref{e:19} are valid for both, the vortex and
the bubble problem. The function $\bar{K} (\zeta)$ can be computed directly
from \eqref{e:18} via the relation
\begin{equation}
  \bar{K}_0 (\zeta) = \frac{\bar{y}_0 (\zeta^{- 1})}{\zeta y_0 (\zeta)}
\end{equation}
which gives
\begin{equation}
  \bar{K}_0 (\zeta) = \zeta^{- 3} \left( \frac{\lambda + \bar{a}_2 \zeta^2 +
  \zeta^4}{1 + a_2 \zeta^2 + \lambda \zeta^4} \right) .
\end{equation}
In the case of the bubble, validity of these solutions requires $\Gamma_0$ and
$\alpha$ to be constrained according to
\begin{equation}
 \Gamma_0^2 = \frac{(3 - 5 \alpha^2)^2 (3 - 7 \alpha^2)}{9 (25
  \alpha^2 + 3)} . \label{e:C-15}
\end{equation}
The relation \eqref{e:C-15} describes a curve in the real $(\alpha,\Gamma_0)$-plane, and is obtained from equating the two expressions for
$k_2$ given by \eqref{e:write-up12} and \eqref{e:seq09}, which must hold for
any solution to the bubble problem. The curve described by \eqref{e:C-15} is shown in Figure \ref{Fig:b}, left. 

Note that in the context of the vortex
problem, equating these two expressions for $k_2$ yields
\begin{equation}
	\omega = - 1 - \hat{\alpha}^2 - | a_2 |^2
\end{equation}
which, , in particular, implies that $\omega$ is real, as required, and it also illustrates that
no constraint akin to \eqref{e:C-15} is required in order for the vortex
problem to admit solutions. Now, when the constraint \eqref{e:C-15} is
satisfied, it can be checked directly that
\begin{equation}
  \frac{d}{d \zeta} \bar{K}_0  = - \left( \sqrt{3 \lambda} \left( \zeta -
  \frac{1}{\zeta^3} \right) + \frac{\sigma}{2 \sqrt{3 \lambda} \zeta}
  \right)^2
  \frac{1}{dZ_{0}/d\zeta} \label{e:29}
\end{equation}
which is  exactly the \textit{conformal mapping identity} given in \eqref{e:16} for the case $\gamma=0$ (see also the formulation of  Theorem
\ref{seq:thmConformal}). Moreover, if we replace $\alpha^2/3$ by $-\hat{\alpha}$ and $2 i \Gamma_0 \alpha$ by $-\bar{\beta}$ in \eqref{e:C-15}, then the conformal mapping identity \eqref{e:29} holds equally for the Vortex problem.  In addition to the conformal mapping identity
\eqref{e:29}, let us note the following properties of $Z_0$.
\begin{enumerate}
  \item $c_{2 j + 1} \equiv 0$ for all $j$.
  
  \item If $\sigma \in i \mathbb{R}$, then $c_{4 j}$ is real whereas $c_{4 j -
  2}$ is purely imaginary for all $j$.
  
  \item Solutions to the bubble problem exist only if the constraint
  \eqref{e:C-15} holds. 
\end{enumerate}
In the following Sections, we will show that Problems (a) and (b) admit solutions
even when $\gamma \neq 0$, and the above properties exemplified by the
explicit $\gamma = 0$ solutions will be shown to hold generally for $\gamma \neq0$ solutions
to Problem (a) (see Corollary \ref{cor:write-up18}, Lemma \ref{lem:seq20}, and Theorem \ref{theorem:bubble}). Particularly, the symmetry
properties stated in 2.\! and 3.\! above are valid for any solution to 
Problem (b), whereas 1.\! is valid for any solution to Problem (a), i.e., for any solution to either the bubble or the vortex problem. 
For $\gamma \neq 0$, instead of \eqref{e:C-15}, the constraint
discussed in Sections \ref{sec:4.5} and \ref{sec:05} needs to be satisfied by $(\gamma, \alpha, \Gamma_0)$ for the bubble solution.

\section{On the existence of solutions to Problem (a) }\label{sec:write-up3}

\subsection{Reformulation in sequence space}\label{sec:write-up3.1}

Let us assume that $Z(\zeta)$ is a solution to Problem (a). We will
prove the existence of such solutions by reformulating Problem (a) as 
fixed-point problem in the sequence space $\ell^1$. To achieve this, let us, akin to the discussion of Section \ref{sec:write-up2.2}, formally study the properties of $Z(\zeta)$.

To ensure more compactly written formulae, set
\begin{equation}
  a_0 = 1, ~ c_0 = - 1, ~ a_1 = 0, ~ c_1 = 0, ~ \text{ and } ~ k_{- 1} = 0.
\end{equation}
Then, the Equations \eqref{e:Seq03}, \eqref{e:Seq04}, and \eqref{e:write-up17}
for $Z (\zeta)$, $y (\zeta)$, and $K (\zeta)$ become rewritten as
\begin{equation}
  Z (\zeta) = \sum_{j = 0}^{\infty} c_j \zeta^{j - 1}, \quad y (\zeta) =
  \sum_{j = 0}^{\infty} a_j \zeta^{j - 1}, \quad \text{and} \quad K (\zeta) =
  \sum_{j = - 2}^{\infty} k_j \zeta^{j - 1} . \label{e:write-up19}
\end{equation}
Recall that the series coefficients $\{ c_j \}_{j = 2}^{\infty}$ and $\{ a_j
\}_{j = 2}^{\infty}$ are linked according to \eqref{e:seq05}. Now,
substituting the expansions \eqref{e:write-up19} into \eqref{e:write-up16} and
comparing the coefficients of $\zeta^j$ for $j \geq 0$ gives
\begin{equation}
  0 = \gamma \sum_{m = 2}^{\infty} c_{m + 1} \bar{a}_m + k_{- 2} \bar{a}_3 +
  k_1  \qquad \text{from} \quad j = 0, \label{e:seq10}
\end{equation}
and
\begin{equation}
  1 = - \gamma + \gamma \sum_{m = 2}^{\infty} c_m \bar{a}_m + \bar{a}_4 k_{-
  2} + \bar{a}_2 k_0 + k_2, \qquad \text{from} \quad j = 1. \label{e:seq09}
\end{equation}
The equation \eqref{e:seq10} can be used to determine $k_1$, once the series
$\{ c_j \}_{j = 3}^{\infty}$ and $\{ a_j \}_{j = 2}^{\infty}$ are known.
Equation \eqref{e:seq09}, on the other hand, can be used to determine $k_2$
once $\{ c_j \}_{j = 2}^{\infty}$ and $\{ a_j \}_{j = 2}^{\infty}$ are known.
Comparing the remaining coefficients of $\zeta^j$ for $j$ positive gives
\begin{equation}
  a_j = \gamma \sum_{m = 0}^{\infty} c_{m + j} \bar{a}_m + \delta_{4, j} k_{-
  2} + \delta_{2, j} (k_{- 2} \bar{a}_2 + k_0) \label{e:seq11}, \quad j \geq 2
\end{equation}
where $\delta_{m j} = 0$ if $m \neq j$ and $\delta_{j j} = 1$ is the Kronecker
delta.

Finally, we note that by matching negative powers of $\zeta^{- j}$ within
\eqref{e:seq07} we obtain a recurrence relation for determining $k_{j + 2}$
for $j \geq 1$. Namely, we find
\begin{equation}
  k_{j + 2} = - \gamma \left( \sum_{m = 2}^{\infty} c_m \bar{a}_{m + j} -
  \bar{a}_j \right) - \sum_{m = 2}^{j + 4} k_{j + 2 - m} \bar{a}_m, \quad
  \text{for} \quad j \geq 1. \label{e:seq12}
\end{equation}
This gives the following Lemma.

\begin{lemma}
  \label{lem:write-up06}Let $Z(\zeta)$ be a solution
  to Problem (a). Then, once \eqref{e:seq05} and \eqref{e:seq11} are solved for
  $\{ a_j \}_{j = 2}^{\infty}$ and $\{ c_j \}_{j = 2}^{\infty},$ the series
  coefficients $\{ k_j \}_{j = 1}^{\infty}$ of $K (\zeta)$ are fully
  determined by \eqref{e:seq10}, \eqref{e:seq09}, and \eqref{e:seq12}. 
\end{lemma}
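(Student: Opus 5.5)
The plan is to read off all the coefficients $\{k_j\}_{j\ge1}$ from the boundary identity \eqref{e:write-up16} by matching powers of $\zeta$ on $|\zeta|=1$, treating $\{a_j\}$ and $\{c_j\}$ as already known. On $|\zeta|=1$ we have $\zeta^{-1}=\bar\zeta$, so each term of \eqref{e:write-up16} is a product of absolutely convergent Laurent series. The series for $y$ and $Z$ converge absolutely by Assumption \ref{A-priori} and \eqref{e:seq05}. The series for $K(\zeta^{-1})$ converges absolutely on the unit circle once we know $K$ extends there. Alternatively, we may regard \eqref{e:write-up16} as an identity between $L^2(\partial\mathbb{D})$ functions and compare Fourier coefficients; this only needs $K(\zeta^{-1})\bar y(\zeta^{-1})$ to be integrable. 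Using the compact notation \eqref{e:write-up19}, the product series can be expanded and the Cauchy products rearranged freely by absolute convergence. This yields, coefficient by coefficient, exactly the relations \eqref{e:seq10}, \eqref{e:seq09}, \eqref{e:seq11} and \eqref{e:seq12}. The substantive point is to check that the unknowns $k_j$ enter these relations in a triangular way.

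The key steps, in order, are as follows.

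\textbf{The $\zeta^0$ coefficient.} On the right-hand side, $K(\zeta^{-1})\bar y(\zeta^{-1})=\sum_{l,m}k_l\bar a_m\zeta^{2-l-m}$. The constant term receives only pairs with $l+m=2$. Since $k_{-1}=0$, $k_0$ pairs with $\bar a_2$ and $k_{-2}$ with $\bar a_4$, which belong to other powers; concretely the contributing pairs are $(l,m)=(1,1),(-2,4),\dots$ appropriately. The resulting equation is \eqref{e:seq10}, which contains only $k_1$ (with coefficient $\bar a_0=1$) together with known quantities. Hence it determines $k_1$.

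\textbf{The $\zeta^{1}$ coefficient.} The same bookkeeping gives \eqref{e:seq09}. Here $k_2$ appears linearly with coefficient $\bar a_0=1$, and the only other $k$'s present are the prescribed $k_{-2},k_0$. So \eqref{e:seq09} determines $k_2$.

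\textbf{The powers $\zeta^{-j}$, $j\ge1$.} Here the left-hand side $\zeta y(\zeta)$ contributes nothing beyond the term $\bar a_j$-type pieces absorbed in \eqref{e:seq12}. The $K\bar y$ product contributes $\sum_{m}k_{j+2-m}\bar a_m$. Isolating the $m=0$ term, with $\bar a_0=1$, gives $k_{j+2}$ explicitly. The remaining terms involve $k_{j+2-m}$ only for $m\ge2$, since $a_1=0$. These are indices strictly smaller than $j+2$, running down to $k_{-2}$ at $m=j+4$. Therefore \eqref{e:seq12} is a genuine forward recursion: by induction on $j$, starting from the known $k_{-2},k_0,k_1,k_2$ (with $k_{-1}=0$), every $k_{j+2}$ for $j\ge1$ is uniquely determined.

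\textbf{The positive powers $\zeta^j$, $j\ge2$.} These reproduce \eqref{e:seq11}. That system is assumed solved, so it imposes no further condition on the $k$'s.

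I expect the main obstacle to be justifying the coefficient comparison itself, rather than the algebra. We must know that $K(\zeta^{-1})\bar y(\zeta^{-1})$ has a well-defined Fourier series on $|\zeta|=1$ that can be multiplied termwise. My first attempt would be to argue that $K=(\zeta y-\gamma\bar y(\zeta^{-1})Z)/\bar y(\zeta^{-1})$ on the circle, using the non-vanishing of $y$ in Assumption \ref{A-priori}. By Wiener's lemma, $1/\bar y(\zeta^{-1})$ then has an absolutely convergent Fourier series, so $K$ restricted to the circle lies in the Wiener algebra. All products are therefore absolutely convergent and the rearrangements are legitimate. Uniqueness of Fourier coefficients then shows that the $k_j$ obtained from the recursion coincide with the Laurent coefficients of $K$.
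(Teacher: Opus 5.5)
Your proposal is correct and follows the paper's route. The paper substitutes the expansions \eqref{e:write-up19} into \eqref{e:write-up16}, reads off \eqref{e:seq10}, \eqref{e:seq09}, \eqref{e:seq11} and \eqref{e:seq12} by matching coefficients, and notes that $k_1$, $k_2$ and then $k_{j+2}$ each appear with coefficient $\bar a_0=1$, so the system is triangular; your Wiener-lemma justification of the coefficient comparison adds rigour the paper omits at this point.

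One slip, which the paper's own ``from $j=0$/$j=1$'' labels share: the $\zeta^0$ coefficient (pairs $l+m=2$, namely $k_2$, $k_0\bar a_2$, $k_{-2}\bar a_4$) gives \eqref{e:seq09}, and the $\zeta^1$ coefficient (pairs $l+m=1$, namely $k_1$, $k_{-2}\bar a_3$) gives \eqref{e:seq10}. Your pair bookkeeping in the ``$\zeta^0$'' paragraph is therefore self-contradictory, but this does not affect the conclusion.
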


Lemma \ref{lem:write-up06} suggests that solving \eqref{e:seq05} and
\eqref{e:seq11} for $\{ a_j \}_{j = 2}^{\infty}$ and $\{ c_j \}_{j =
2}^{\infty}$ should suffice in order to obtain solutions to Problem (a). As we
shall see, according to Proposition \ref{prop:write-up18}, this is
indeed the case. That is, in order to prove the existence of solutions to
Problem (a), we need to show that the Equation \eqref{e:seq11} admits a solution
$\{ a_j \}_{j = 2}^{\infty}$ whenever $\{ c_j \}_{j = 2}^{\infty}$ is given by
\eqref{e:seq05}. To achieve this, we will follow the methodology of \cite{CrowdyTanveer2025} and reformulate the problem of solving \eqref{e:seq11} subject
to \eqref{e:seq05} as a fixed-point problem within a weighted $\ell^1$ space. It is defined as follows.

\begin{definition}
  \label{def:seq01}Let $\boldsymbol{a} = (a_2, a_3, a_4, \ldots)$ and
  $\boldsymbol{c} = (c_2, c_3, c_4, \ldots)$. Let $\rho \in (0, 1]$ and let
  $\ell^1_{\rho}$ denote the sequence-space $\ell^1$ equipped with the
  weighted norm $\| \cdot \|_{\rho}$ according to
  \begin{equation}
    \| \boldsymbol{a} \|_{\rho} = \sum_{j = 2}^{\infty} \rho^{- j} | a_j | .
    \label{e:seq13}
  \end{equation}
  We denote by $\mathcal{B}_r (\boldsymbol{a}) \subset \ell^1_{\rho}$ the ball
  of radius $r$ centred at $\boldsymbol{a} \in \ell^1_{\rho}$.
\end{definition}

\begin{notation}
	Let $\mathcal{G}: \ell^1_{\rho} \to \ell^1_{\rho}$. For $\boldsymbol{a} \in \ell^1_{\rho} $, we will write $\mathcal{G}[\boldsymbol{a}] = \left(\mathcal{G}[\boldsymbol{a}]_1, \mathcal{G}[\boldsymbol{a}]_2,\mathcal{G}[\boldsymbol{a}]_3,...\right)$ i.e., the $j^{\text{th}}$ component of the sequence $\mathcal{G}[\boldsymbol{a}]$ will be denoted by $\mathcal{G}[\boldsymbol{a}]_j$.
\end{notation}

\begin{definition}
  \label{def:seq02} Let $\mathcal{M} : \ell^1_{\rho} \rightarrow
  \ell^1_{\rho}$, $\Lambda : \ell^1_{\rho} \rightarrow \ell^1_{\rho}$, and
  $\mathcal{T} : \ell^1_{\rho} \rightarrow \ell^1_{\rho}$ be the operators
  defined by
  \begin{equation}
    \mathcal{M} [\boldsymbol{a}]_j = \frac{1}{j - 1} \sum_{m = 2}^{j - 2} a_m
    a_{j - m}, \quad \Lambda [\boldsymbol{a}]_j = \frac{2}{j - 1} a_j \quad
    \text{and\quad$\mathcal{T} [\boldsymbol{a}] = \Lambda [\boldsymbol{a}] +
    \mathcal{M} [\boldsymbol{a}]$} \label{e:seq14} .
  \end{equation}
  Define, moreover, the operator $\mathcal{N} : \ell^1_{\rho} \times
  \ell^1_{\rho} \rightarrow \ell^1_{\rho}$ as
  \begin{equation}
    \mathcal{N} [\boldsymbol{c}, \boldsymbol{a}]_j = \sum_{m = 2}^{\infty} c_{m +
    j} \bar{a}_m \label{e:seq15} .
  \end{equation}
\end{definition}

The operators $\mathcal{M}, \Lambda, \mathcal{T}$, and $\mathcal{N}$ were
introduced in \cite{CrowdyTanveer2025}, wherein it was also proved that
they admit the mapping properties claimed within Definition \ref{def:seq02}.
More precisely, we have the following Lemma.

\begin{lemma}[\cite{CrowdyTanveer2025}, Section 3(i)]
  \label{lem:seq03}Let $\rho \in (0, 1]$ and $\boldsymbol{a }^{1, 2},
  \boldsymbol{c}^{1, 2} \in \ell^1_{\rho}$. The following inequalities are true.
  \begin{equation}
    \| \mathcal{N} [\boldsymbol{a}^1, \boldsymbol{c}^1] - \mathcal{N}
    [\boldsymbol{a}^2, \boldsymbol{c}^2] \|_{\rho} \leq \| \boldsymbol{a}^1
    \|_{\rho} \| \boldsymbol{c}^1 - \boldsymbol{c}^2 \|_{\rho} + \| \boldsymbol{c}^2
    \|_{\rho} \| \boldsymbol{a}^1 - \boldsymbol{a}^2 \|_{\rho} \label{e:seq16}
  \end{equation}
  \begin{equation}
    \| \mathcal{T} [\boldsymbol{a}^1] - \mathcal{T} [\boldsymbol{} \boldsymbol{a}^2]
    \|_{\rho} \leq \left( 2 + \frac{1}{3} (\| \boldsymbol{a}^1 \|_{\rho} + \|
    \boldsymbol{a}^2 \|_{\rho}) \right) \| \boldsymbol{a}^1 - \boldsymbol{a}^2
    \|_{\rho} . \label{e:seq18}
  \end{equation}
  Particularly, we have
  \begin{equation}
    \| \mathcal{N} [\boldsymbol{c}, \boldsymbol{a}] \|_{\rho} \leq \| \boldsymbol{}
    \boldsymbol{c} \|_{\rho} \| \boldsymbol{a} \|_{\rho}, \quad \text{and} \quad
    \| \mathcal{T} [\boldsymbol{a}] \|_{\rho} \leq 2 \| \boldsymbol{a} \|_{\rho} +
    \frac{1}{3} \| \boldsymbol{a} \|_{\rho}^2 .
  \end{equation}
\end{lemma}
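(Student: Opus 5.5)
The plan is to work directly with the definitions \eqref{e:seq14}--\eqref{e:seq15} and the weighted norm \eqref{e:seq13}. The only structural facts needed are that $\rho \in (0,1]$, that $\rho^{-j}$ is multiplicative in the index, and that $\mathcal{N}$ is linear in $\boldsymbol{c}$ and conjugate-linear in $\boldsymbol{a}$. I would first prove the two ``particular'' bounds, and then deduce the Lipschitz estimates \eqref{e:seq16} and \eqref{e:seq18} from them by algebraic identities.

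\emph{The bound for $\mathcal{N}$.} For $j \geq 2$, I would use the triangle inequality and then interchange the order of summation (Tonelli, since all terms are nonnegative):
\[
\|\mathcal{N}[\boldsymbol{c},\boldsymbol{a}]\|_\rho \leq \sum_{m\geq 2} |a_m| \sum_{j \geq 2} \rho^{-j} |c_{m+j}| .
\]
Since $\rho \leq 1$, we have $\rho^{-j} \leq \rho^{-(m+j)} \rho^{-m}$, because this amounts to $\rho^{2m} \leq 1$. The inner sum is therefore at most $\rho^{-m}\|\boldsymbol{c}\|_\rho$, which gives $\|\mathcal{N}[\boldsymbol{c},\boldsymbol{a}]\|_\rho \leq \|\boldsymbol{c}\|_\rho\|\boldsymbol{a}\|_\rho$. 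For \eqref{e:seq16}, I would split the difference by (conjugate-)bilinearity into one term carrying $\boldsymbol{c}^1-\boldsymbol{c}^2$ against $\boldsymbol{a}^1$, plus one term carrying $\boldsymbol{c}^2$ against $\boldsymbol{a}^1-\boldsymbol{a}^2$. Applying the product bound to each term yields exactly the right-hand side of \eqref{e:seq16}, with the letters read consistently with the order of arguments in \eqref{e:seq15}.

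\emph{The bound for $\mathcal{T}$.} For $\Lambda$, since $j \geq 2$ we have $2/(j-1) \leq 2$, so $\Lambda$ is linear with $\|\Lambda[\boldsymbol{a}]\|_\rho \leq 2\|\boldsymbol{a}\|_\rho$. For $\mathcal{M}$, the sum is empty unless $j \geq 4$, and then $1/(j-1) \leq 1/3$. Using $\rho^{-j} = \rho^{-m}\rho^{-(j-m)}$, the weighted norm of the convolution is bounded by the product of the weighted norms. Hence $\|\mathcal{M}[\boldsymbol{a}]\|_\rho \leq \tfrac13 \|\boldsymbol{a}\|_\rho^2$.

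For the Lipschitz estimate \eqref{e:seq18}, I would use the symmetrisation identity
\[
a^1_m a^1_{j-m} - a^2_m a^2_{j-m} = \tfrac12\big[(a^1-a^2)_m (a^1+a^2)_{j-m} + (a^1+a^2)_m (a^1-a^2)_{j-m}\big].
\]
Applying the same convolution estimate to each half gives
\[
\|\mathcal{M}[\boldsymbol{a}^1]-\mathcal{M}[\boldsymbol{a}^2]\|_\rho \leq \tfrac13\|\boldsymbol{a}^1+\boldsymbol{a}^2\|_\rho\|\boldsymbol{a}^1-\boldsymbol{a}^2\|_\rho \leq \tfrac13(\|\boldsymbol{a}^1\|_\rho+\|\boldsymbol{a}^2\|_\rho)\|\boldsymbol{a}^1-\boldsymbol{a}^2\|_\rho .
\]
Adding the linear estimate for $\Lambda$ then gives \eqref{e:seq18}.

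There is no real obstacle here. The one point that needs care is the weight inequality used for $\mathcal{N}$: the shift $m+j$ in the index of $\boldsymbol{c}$ is exactly what requires $\rho \leq 1$, and this is where that hypothesis enters. All the other steps are exact manipulations of $\ell^1$ convolutions.
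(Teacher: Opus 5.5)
Your proposal is correct and takes essentially the same direct route as the argument the paper defers to in Crowdy--Tanveer (Section 3(i)), since the paper does not reproduce a proof: triangle inequality with interchange of summation, the weight comparison $\rho^{-j}\le\rho^{-(m+j)}\rho^{-m}$ for $\mathcal{N}$ (where $\rho\le 1$ enters), the bounds $2/(j-1)\le 2$ and $1/(j-1)\le 1/3$ with the Cauchy-product estimate for $\mathcal{T}$, and (conjugate-)bilinear splitting for the differences. The only difference is cosmetic: you prove the product bounds first and derive \eqref{e:seq16}--\eqref{e:seq18} from them, whereas the paper lists the product bounds as the special case $\boldsymbol{a}^2=\boldsymbol{c}^2=\boldsymbol{0}$; here the two orders are equivalent.
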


Using the operators $\mathcal{T}$ and $\mathcal{N}$, as well as \
\eqref{e:seq015}, we can rewrite the Equations \eqref{e:seq05} and
\eqref{e:seq11} as follows.
\begin{equation}
  \boldsymbol{c} = \mathcal{T} [\boldsymbol{a}] \quad \text{and} \quad
  \boldsymbol{a} - \gamma \boldsymbol{c} = \gamma \mathcal{N} [\boldsymbol{c},
  \boldsymbol{a}] + \phi [\boldsymbol{a}] . \label{e:seq0024}
\end{equation}
where $\phi [\boldsymbol{a}]$ is the series \
\begin{equation}
  \phi [\boldsymbol{a}]_j = \lambda \delta_{j 4} + (- 5 \lambda \bar{a}_2 +
  \sigma) \delta_{j 2}, \quad j \geq 2. \label{e:seq26}
\end{equation}
\begin{remark} \label{rem:11} Comparing \eqref{e:seq0024} with the equations given in
\cite{CrowdyTanveer2025}, Section 3(h), we find that the 
  difference between ``Problem 2'' formulated within
	\cite{CrowdyTanveer2025} and  Problem (a) considered in the present
  article is the form of the forcing due to $\phi [\boldsymbol{a}]$. Indeed,
  within \cite{CrowdyTanveer2025}, instead of \eqref{e:seq0024}, one
  considers the equation $\boldsymbol{a} - \gamma \boldsymbol{c} = \gamma
  \mathcal{N} [\boldsymbol{c}, \boldsymbol{a}] + b \delta_{j 2}$ subject to
  $\boldsymbol{c} = \mathcal{T} [\boldsymbol{a}]$. Here $b$ is a parameter that,
  akin to $(\lambda, \sigma)$, is linked to the physical parameters of the
  problems considered in \cite{CrowdyTanveer2025}.
\end{remark}

The form of the ``forcing'' term $\phi [\boldsymbol{a}]$ makes it difficult to
use \eqref{e:seq0024} directly in order to prove the existence of solutions to
Equation \eqref{e:seq11}. This is because, in some sense, we will have no
good control over the term $- 5 \lambda\bar{a}_2$ without compromising
on the permitted range of $\lambda$ (cf. proof of Lemma \ref{lem:seq07}). So,
let us reformulate the Equation \eqref{e:seq0024}.

\begin{definition}
  Let $\boldsymbol{a} \in \ell^1_{\rho}$. We let $ \iota : \ell^1_{\rho}
  \rightarrow \ell^1_{\rho}$ be the map defined by
  \begin{equation}
    \iota [\boldsymbol{a}]_j = a_j + 5 \lambda \bar{a}_2 \delta_{j 2}, \quad j
    \geq 2. \label{e:seq27}
  \end{equation}
  Moreover, we let $\Phi \in \ell^1_{\rho}$ be the series $\Phi_j = \lambda
  \delta_{j 4} + \sigma \delta_{2 j}$.
\end{definition}

Using these definitions, we find that $\boldsymbol{a}$ solves \eqref{e:seq11}
with $\boldsymbol{c} = \mathcal{T} [\boldsymbol{a}]$ if, and only if,
\begin{equation}
  \iota [\boldsymbol{a}] = \gamma \mathcal{N} [\mathcal{T} [\boldsymbol{a}],
  \boldsymbol{a}] + \gamma \mathcal{T} [\boldsymbol{a}] + \Phi .
\end{equation}
However, we may also write $\boldsymbol{a}$ in terms of $\iota [\boldsymbol{a}]$
if $\lambda \neq \frac{1}{5}$. Namely, upon solving \eqref{e:seq27} for $a_j$,
we find
\begin{equation}
  a_j = \iota [\boldsymbol{a}]_j (1 - \delta_{j 2}) + \left( \frac{\iota
  [\boldsymbol{a}]_2}{1 - 25 \lambda^2} - \frac{5 \lambda}{1 - 25 \lambda^2}
  \overline{\iota[\boldsymbol{a}]}_2 \right) \delta_{j 2}, \quad j \geq 2.
\end{equation}
This gives the following Lemma.

\begin{lemma}
  Let $\lambda \neq \frac{1}{5}$. The map $\tau : \ell^1_{\rho} \rightarrow
  \ell^1_{\rho}$ defined via
  \begin{equation}
    \tau [\boldsymbol{b}]_j = b_j (1 - \delta_{j 2}) + \left( \frac{b_2}{1 - 25
    \lambda^2} - \frac{5 \lambda}{1 - 25 \lambda^2} \overline{b}_2 \right)
    \delta_{j 2}, \quad j \geq 2
  \end{equation}
  is inverse to $\iota$. 
\end{lemma}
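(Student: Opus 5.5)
Both $\iota$ and $\tau$ leave every component with index $j\geq 3$ unchanged. They act only on the second component, through a real-linear map of $\mathbb{C}$. The plan is therefore to reduce the statement to a computation in $\mathbb{C}\cong\mathbb{R}^2$, where one checks that two real-linear maps are mutually inverse. Separately, we check that both maps send $\ell^1_\rho$ into itself.

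\emph{Well-definedness.} Changing only finitely many entries does not affect summability. Hence $\iota[\boldsymbol{a}]\in\ell^1_\rho$ and $\tau[\boldsymbol{b}]\in\ell^1_\rho$ whenever $\boldsymbol{a},\boldsymbol{b}\in\ell^1_\rho$. Explicitly,
\[
\|\iota[\boldsymbol{a}]\|_\rho\leq(1+5\lambda)\|\boldsymbol{a}\|_\rho,
\qquad
\|\tau[\boldsymbol{b}]\|_\rho\leq\frac{1+5\lambda}{|1-25\lambda^2|}\|\boldsymbol{b}\|_\rho
+\|\boldsymbol{b}\|_\rho .
\]
Both bounds are finite because $\lambda>0$ and $\lambda\neq\frac15$, which forces $1-25\lambda^2\neq0$.

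\emph{Scalar computation.} Write $\mu=5\lambda\in\mathbb{R}$ and $D=1-\mu^2\neq0$. On the second component the two maps are
\[
i(z)=z+\mu\bar z,\qquad t(w)=\frac{w-\mu\bar w}{D}.
\]
First compute $t(i(z))$. Since $\mu$ is real, $\overline{i(z)}=\bar z+\mu z$. Then
\[
t(i(z))=\frac{(z+\mu\bar z)-\mu(\bar z+\mu z)}{D}=\frac{(1-\mu^2)z}{D}=z .
\]
Next compute $i(t(w))$. We have $\overline{t(w)}=(\bar w-\mu w)/D$. Then
\[
i(t(w))=\frac{w-\mu\bar w+\mu\bar w-\mu^2 w}{D}=w .
\]
Combining these with the identity on all components $j\geq3$ gives $\tau\circ\iota=\mathrm{id}=\iota\circ\tau$ on $\ell^1_\rho$.

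The only point needing care is that $\lambda$ is real, since this is what allows $\mu$ to be conjugated trivially. This holds because $\lambda>0$ is assumed in Definition \ref{def:Problem3 a}. The hypothesis $\lambda\neq\frac15$, together with $\lambda>0$, excludes exactly the degenerate case $D=0$. In that case $i$ fails to be injective; for instance, when $\mu=1$, every $z\in i\mathbb{R}$ lies in the kernel of $i$. There is no real obstacle beyond this bookkeeping.
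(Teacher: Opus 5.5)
Your proposal is correct and follows essentially the paper's argument. The paper obtains $\tau$ by solving $b_2 = a_2 + 5\lambda\bar{a}_2$, together with its complex conjugate (using $\lambda\in\mathbb{R}$), for $a_2$, leaving all components $j\geq 3$ unchanged; this is the same real-linear $2\times 2$ inversion that you verify by composing in both orders, and your boundedness estimates on $\ell^1_\rho$ and the remark that $\lambda>0$ also excludes $\lambda=-\frac{1}{5}$ only make explicit what the paper leaves implicit.
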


Thus, for $\lambda \neq \frac{1}{5}$, we have just proved that $\boldsymbol{a}$
solves \eqref{e:seq11} with $\boldsymbol{c} = \mathcal{T} [\boldsymbol{a}]$ if,
and only if,
\begin{equation}
  \iota [\boldsymbol{a}] = \gamma \mathcal{N} [\mathcal{T} [ \boldsymbol{a}], \boldsymbol{a}] + \gamma \mathcal{T} [
  \boldsymbol{a}] + \Phi .
\end{equation}
This gives the following Lemma.

\begin{lemma}
  \label{prop:seq07}Let $\lambda \neq \frac{1}{5}$, and let $\mathcal{R} :
  \ell^1_{\rho} \rightarrow \ell^1_{\rho}$ be the operator defined by
  \begin{equation}
    \mathcal{R}  [\boldsymbol{b}] = \gamma \mathcal{N} [\mathcal{T} [\tau [\boldsymbol{b}]], \tau [\boldsymbol{b}]] + \gamma \mathcal{T} [\tau
    [\boldsymbol{b}]] + \Phi . \label{e:write-up38}
  \end{equation}
  Then, $\boldsymbol{a}$ solves \eqref{e:seq11} with $\boldsymbol{c} = \mathcal{T}
  [\boldsymbol{a}]$ if, and only if, $\boldsymbol{b} = \iota [\boldsymbol{a}]$ is a
  fixed point of the operator $\mathcal{R}$.
\end{lemma}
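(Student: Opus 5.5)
The statement follows from two facts: $\iota$ is a bijection of $\ell^1_\rho$ with inverse $\tau$, and the system \eqref{e:seq11} with $\boldsymbol{c}=\mathcal{T}[\boldsymbol{a}]$ can be rewritten exactly in terms of $\iota[\boldsymbol{a}]$.

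\textbf{Rewriting the equation.} First I will check that \eqref{e:seq11}, with $\boldsymbol{c}=\mathcal{T}[\boldsymbol{a}]$, is equivalent to
\[
\iota[\boldsymbol{a}] = \gamma\mathcal{N}[\mathcal{T}[\boldsymbol{a}],\boldsymbol{a}]+\gamma\mathcal{T}[\boldsymbol{a}]+\Phi .
\]
In \eqref{e:seq11}, the $m=0$ term of $\gamma\sum_m c_{m+j}\bar a_m$ is $\gamma c_j$, because $a_0=1$. The $m=1$ term vanishes, because $a_1=0$. The terms with $m\ge 2$ are exactly $\gamma\mathcal{N}[\boldsymbol{c},\boldsymbol{a}]_j$.

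For the forcing, I use \eqref{e:seq015} to write $k_{-2}\bar a_2+k_0=\lambda\bar a_2-6\lambda\bar a_2+\sigma=-5\lambda\bar a_2+\sigma$. This gives \eqref{e:seq0024} with $\phi$ as in \eqref{e:seq26}. The term $-5\lambda\bar a_2\delta_{j2}$ is then moved to the left-hand side. By \eqref{e:seq27}, the left-hand side becomes $\iota[\boldsymbol{a}]_j$, and the remaining forcing is exactly $\Phi$.

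\textbf{Inverting $\iota$.} For $j\ne2$, $\iota$ acts as the identity. In the $j=2$ component it is the real-linear map $a_2\mapsto a_2+5\lambda\bar a_2$ on $\mathbb{C}\cong\mathbb{R}^2$. Its determinant is $1-25\lambda^2$, which is nonzero because $\lambda>0$ and $\lambda\ne\frac15$.

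I will verify directly that $\tau\circ\iota=\mathrm{id}$ and $\iota\circ\tau=\mathrm{id}$. Set $b_2=a_2+5\lambda\bar a_2$. Then
\[
\frac{b_2-5\lambda\bar b_2}{1-25\lambda^2}=\frac{a_2+5\lambda\bar a_2-5\lambda\bar a_2-25\lambda^2a_2}{1-25\lambda^2}=a_2 ,
\]
which gives $\tau\circ\iota=\mathrm{id}$. The reverse composition $\iota\circ\tau=\mathrm{id}$ is the same computation run backwards. Since only one coordinate is changed, and by a bounded real-linear map, both $\iota$ and $\tau$ map $\ell^1_\rho$ into itself. This is the preceding lemma.

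\textbf{Conclusion.} Suppose $\boldsymbol{a}$ solves \eqref{e:seq11} with $\boldsymbol{c}=\mathcal{T}[\boldsymbol{a}]$, and set $\boldsymbol{b}=\iota[\boldsymbol{a}]$. Then $\boldsymbol{a}=\tau[\boldsymbol{b}]$. Substituting this into the rewritten equation gives $\boldsymbol{b}=\mathcal{R}[\boldsymbol{b}]$, by \eqref{e:write-up38}.

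Conversely, suppose $\boldsymbol{b}$ is a fixed point of $\mathcal{R}$, and set $\boldsymbol{a}=\tau[\boldsymbol{b}]$. Then $\iota[\boldsymbol{a}]=\boldsymbol{b}$, and the fixed-point identity is exactly the rewritten equation for this $\boldsymbol{a}$. Hence $\boldsymbol{a}$ solves \eqref{e:seq11}.

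Lemma \ref{lem:seq03} ensures that $\mathcal{R}$ is well defined on $\ell^1_\rho$. No step is a real obstacle. The only points that need care are the index bookkeeping (that $a_0=1$ produces the $\gamma\mathcal{T}$ term and $a_1=0$ kills the $m=1$ term) and the conjugate-linearity of $\iota$ in the second coordinate.
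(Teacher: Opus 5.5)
Your proposal is correct and follows the paper's own argument. You rewrite \eqref{e:seq11} with $\boldsymbol{c}=\mathcal{T}[\boldsymbol{a}]$ as $\iota[\boldsymbol{a}]=\gamma\mathcal{N}[\mathcal{T}[\boldsymbol{a}],\boldsymbol{a}]+\gamma\mathcal{T}[\boldsymbol{a}]+\Phi$ (using $a_0=1$, $a_1=0$ and \eqref{e:seq015}) and then invert $\iota$ by $\tau$, which needs $1-25\lambda^2\neq 0$ and is guaranteed here by $\lambda>0$, $\lambda\neq\frac15$; you only spell out the index bookkeeping and the check $\tau\circ\iota=\mathrm{id}$ that the paper leaves implicit.
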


We will prove the existence of such fixed point in Section
\ref{sec:write-up3.2}. First, let us discuss why this existence is sufficient
in order to show that Problem (a) has a solution. Since $\iota [\boldsymbol{a}] = \mathcal{R} [\iota
[\boldsymbol{a}]]$,
\begin{equation}
  \boldsymbol{a} = \gamma \mathcal{N} [\mathcal{T} [\boldsymbol{a}], \boldsymbol{a}]
  + \gamma \mathcal{T} [\boldsymbol{a}] + \phi [\boldsymbol{a}]
\end{equation}
so $\boldsymbol{a}$ is a fixed point of the operator
\begin{equation}
  \widehat{\mathcal{R}} [\boldsymbol{a}] = \gamma \mathcal{N} [\mathcal{T}
  [\boldsymbol{a}], \boldsymbol{a}] + \gamma \mathcal{T} [\boldsymbol{a}] + \phi
  [\boldsymbol{a}],
\end{equation}
with $\phi [\boldsymbol{a}] $ given by \eqref{e:seq26}. In analogy with
\cite{CrowdyTanveer2025}, Lemma 3.14, we have the following result.

\begin{lemma}
  \label{lem:seq12}Let $\tilde{\boldsymbol{a}} = (a_2, - a_3, a_4, - a_5,
  a_6, \ldots) .$ If $\widehat{\mathcal{R}} [\boldsymbol{a}] = \boldsymbol{a}$,
  then we also have $\widehat{\mathcal{R}} [\tilde{\boldsymbol{a}}] =
  \tilde{\boldsymbol{a}}$.
\end{lemma}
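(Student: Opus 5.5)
The plan is a direct componentwise check: apply the sign flip $S:\boldsymbol{a}\mapsto\tilde{\boldsymbol{a}}$, $(S\boldsymbol{a})_j=(-1)^j a_j$, to every ingredient of $\widehat{\mathcal{R}}$ and show that $\widehat{\mathcal{R}}$ commutes with $S$, that is,
\[
\widehat{\mathcal{R}}[S\boldsymbol{a}]=S\,\widehat{\mathcal{R}}[\boldsymbol{a}].
\]
Granting this, if $\widehat{\mathcal{R}}[\boldsymbol{a}]=\boldsymbol{a}$, then $\widehat{\mathcal{R}}[\tilde{\boldsymbol{a}}]=S\boldsymbol{a}=\tilde{\boldsymbol{a}}$, which is the claim. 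Equivalently, $\tilde{\boldsymbol{a}}$ corresponds to the change $\zeta\mapsto-\zeta$, under which $y(\zeta)\mapsto -y(-\zeta)$ and $Z(\zeta)\mapsto -Z(-\zeta)$. The sequence-level computation is cleaner, so I use that.

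\textbf{Step 1: $\mathcal{T}$ commutes with $S$.} For $\Lambda$ this is immediate, since it acts diagonally. For $\mathcal{M}$, each term $a_m a_{j-m}$ picks up the sign $(-1)^m(-1)^{j-m}=(-1)^j$. Hence $\mathcal{T}[S\boldsymbol{a}]=S\mathcal{T}[\boldsymbol{a}]$, so $\tilde{\boldsymbol{c}}:=\mathcal{T}[\tilde{\boldsymbol{a}}]$ has entries $(-1)^jc_j$.

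\textbf{Step 2: $\mathcal{N}$ commutes with $S$.} We have
\[
\mathcal{N}[S\boldsymbol{c},S\boldsymbol{a}]_j=\sum_{m\ge 2}(-1)^{m+j}c_{m+j}\,(-1)^m\bar a_m=(-1)^j\mathcal{N}[\boldsymbol{c},\boldsymbol{a}]_j,
\]
using that $(-1)^m$ is real, so conjugation does not affect the sign, and $(-1)^{2m}=1$. The interchange of sign and summation is justified by absolute convergence in $\ell^1_\rho$ (Lemma \ref{lem:seq03}).

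\textbf{Step 3: the forcing $\phi$.} The forcing $\phi[\boldsymbol{a}]$ is supported on $j=2$ and $j=4$, both even, so $S$ acts trivially there. It depends on $\boldsymbol{a}$ only through $\bar a_2$, which is unchanged under $S$. Hence $\phi[S\boldsymbol{a}]=\phi[\boldsymbol{a}]=S\phi[\boldsymbol{a}]$.

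\textbf{Conclusion.} Combining Steps 1 to 3, linearity of $S$ gives the commutation identity, and the lemma follows.

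The only point requiring care is Step 2: one must track that the conjugation in $\mathcal{N}$ does not interfere with the parity signs and that the index shift $m+j$ yields exactly the factor $(-1)^j$. The step is nonetheless routine. No smallness of $\gamma$ or restriction on $\lambda$ is needed, since the argument is purely algebraic.
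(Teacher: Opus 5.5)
Your proof is correct and takes the same approach as the paper. The paper also shows that $\gamma\mathcal{N}[\mathcal{T}[\boldsymbol{a}],\boldsymbol{a}]+\gamma\mathcal{T}[\boldsymbol{a}]$ keeps even components and negates odd ones under $\boldsymbol{a}\mapsto\tilde{\boldsymbol{a}}$, and that $\phi[\tilde{\boldsymbol{a}}]=\phi[\boldsymbol{a}]$ is supported on even indices; the only difference is that you work out the sign computations for $\mathcal{T}$ and $\mathcal{N}$ explicitly, where the paper defers them to Crowdy--Tanveer, Lemma 3.14.
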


\begin{proof}
  Exactly as in \cite{CrowdyTanveer2025}, Lemma 3.14, we obtain
  \begin{equation}
    \gamma \mathcal{N} [\mathcal{T} [\tilde{\boldsymbol{a}} \boldsymbol{}],
    \tilde{\boldsymbol{a}}]_{2 j}  + \gamma \mathcal{T}
    [\tilde{\boldsymbol{a}}]_{2 j} = \gamma \mathcal{N} [\mathcal{T}
    [\boldsymbol{a} \boldsymbol{}], \boldsymbol{a}]_{2 j}  + \gamma \mathcal{T}
    [\boldsymbol{a}]_{2 j}
  \end{equation}
  and
  \begin{equation}
    \gamma \mathcal{N} [\mathcal{T} [\tilde{\boldsymbol{a}} \boldsymbol{}],
   \tilde{\boldsymbol{a}}]_{2 j +1}  + \gamma \mathcal{T}
   [\tilde{\boldsymbol{a}}]_{2 j +1} = - \gamma \mathcal{N} [\mathcal{T}
   [\boldsymbol{a} \boldsymbol{}], \boldsymbol{a}]_{2 j +1}  - \gamma \mathcal{T}
   [\boldsymbol{a}]_{2 j+1}
  \end{equation}
  Thus, it only remains to investigate the term $\phi [\boldsymbol{a}]$.
  However,  $\phi[\boldsymbol{a}] = \phi[\tilde{\boldsymbol{a}}]$. Thus,
  $\widehat{\mathcal{R}} [\tilde{\boldsymbol{a}}]_{2 j} =
  \widehat{\mathcal{R}} [\boldsymbol{a}]_{2 j}$, $\widehat{\mathcal{R}}
  [\tilde{\boldsymbol{a}}]_{2 j + 1} = - \widehat{\mathcal{R}}
  [\boldsymbol{a}]_{2 j + 1}$, and consequently $\widehat{\mathcal{R}}
  [\tilde{\boldsymbol{a}}] = \tilde{\boldsymbol{a}}$.
\end{proof}

\begin{corollary}
  \label{cor:write-up18}If $\iota [\boldsymbol{a}]$ is a locally unique fixed point of 
  $\cal{R}$, then $a_{2 j + 1} = 0$ and
  $c_{2 j + 1} = 0$ for all $j \geq 2$.
\end{corollary}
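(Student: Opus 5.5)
The idea is to combine the symmetry of Lemma \ref{lem:seq12} with the local uniqueness of the fixed point of $\mathcal{R}$. Suppose $\boldsymbol{b} = \iota[\boldsymbol{a}]$ is a locally unique fixed point of $\mathcal{R}$, unique within some ball $\mathcal{B}_r(\boldsymbol{b}) \subset \ell^1_\rho$. Lemma \ref{prop:seq07} shows that $\boldsymbol{a}$ solves \eqref{e:seq11} with $\boldsymbol{c} = \mathcal{T}[\boldsymbol{a}]$, and hence $\widehat{\mathcal{R}}[\boldsymbol{a}] = \boldsymbol{a}$. Lemma \ref{lem:seq12} then gives $\widehat{\mathcal{R}}[\tilde{\boldsymbol{a}}] = \tilde{\boldsymbol{a}}$. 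This is the same statement as $\tilde{\boldsymbol{a}}$ solving \eqref{e:seq11} with $\tilde{\boldsymbol{c}} = \mathcal{T}[\tilde{\boldsymbol{a}}]$, so by Lemma \ref{prop:seq07} again, $\iota[\tilde{\boldsymbol{a}}]$ is also a fixed point of $\mathcal{R}$.

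The next step is to show that $\iota[\tilde{\boldsymbol{a}}]$ lies in the uniqueness ball. Since $\iota$ only modifies the second entry, and that entry is unchanged by $\boldsymbol{a} \mapsto \tilde{\boldsymbol{a}}$, we have $\iota[\tilde{\boldsymbol{a}}] = \widetilde{\iota[\boldsymbol{a}]} = \tilde{\boldsymbol{b}}$. The map $\boldsymbol{b} \mapsto \tilde{\boldsymbol{b}}$ is an isometry of $\ell^1_\rho$, since it only flips signs of entries. This alone does not place $\tilde{\boldsymbol{b}}$ near $\boldsymbol{b}$.

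This is the main obstacle, and I would handle it as follows. The uniqueness will come from a contraction argument of the kind in Section \ref{sec:write-up3.2}, where $\mathcal{R}$ maps some ball $\mathcal{B}_r(\Phi)$ into itself. The centre $\Phi = \lambda\delta_{j4} + \sigma\delta_{j2}$ is invariant under the tilde map, because it has no odd entries. Hence $\tilde{\boldsymbol{b}} \in \mathcal{B}_r(\Phi)$ whenever $\boldsymbol{b} \in \mathcal{B}_r(\Phi)$, as $\|\tilde{\boldsymbol{b}} - \Phi\|_\rho = \|\boldsymbol{b} - \Phi\|_\rho$. So I would read ``locally unique'' as uniqueness in a ball whose centre is invariant under the symmetry. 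This is exactly the ball produced by the existence theory, and in any case one can argue with the symmetric ball $\mathcal{B}_r(\Phi)$. Local uniqueness then forces $\tilde{\boldsymbol{b}} = \boldsymbol{b}$, so $b_{2j+1} = -b_{2j+1}$, i.e. $b_{2j+1} = 0$. Since $a_j = b_j$ for $j \neq 2$, this gives $a_{2j+1} = 0$ for all $j \geq 1$, in particular for $j \geq 2$.

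Finally, for the $\boldsymbol{c}$ coefficients, use $\boldsymbol{c} = \mathcal{T}[\boldsymbol{a}]$, i.e. \eqref{e:seq05}. For odd $j$ the term $\frac{2a_j}{j-1}$ vanishes. Each product $a_m a_{j-m}$ in the sum has one odd index, since $m + (j-m)$ is odd, so every product vanishes too. Therefore $c_{2j+1} = 0$ for all $j$.
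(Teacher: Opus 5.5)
Your argument is correct and is essentially the paper's own reasoning: the corollary follows from the symmetry of Lemma \ref{lem:seq12} combined with local uniqueness, and the convolution structure of \eqref{e:seq05} then gives the vanishing of the odd $c_j$. One small correction: Lemma \ref{lem:seq07} gives uniqueness in $\mathcal{B}_r(\boldsymbol{0})$, not in $\mathcal{B}_r(\Phi)$, but that ball is equally invariant under the sign flip, so your handling of the ``main obstacle'' goes through unchanged (and you in fact get $a_{2j+1}=c_{2j+1}=0$ for all $j\geq 1$, which is what Proposition \ref{prop:write-up18} later uses).
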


We can now prove the following proposition, which ensures that fixed points of
$\mathcal{R}$ indeed give rise to solutions of Problem (a).

\begin{proposition}
  \label{prop:write-up18}Let $\lambda \in \mathbb{R}_+ \setminus \left\{
  \frac{1}{5} \right\}$, $\sigma \in \mathbb{C}$ and $\gamma \in \mathbb{R}$
  be fixed. Let $0 < \rho \leq 1$, and assume that $\iota [\boldsymbol{a}] =
  \mathcal{R} [\iota [\boldsymbol{a}]] \in \ell^1_{\rho}$ is a locally unique fixed point
  of $\mathcal{R}$. If, moreover, $\lambda$ and $\sigma$ are in a range where the function $Z (\zeta)$ determined by the
  series $\boldsymbol{c} = \mathcal{T} [\boldsymbol{a}]$ according to
  \eqref{e:write-up19} is univalent on $\partial \mathbb{D}$ then $Z(\zeta)$ solves Problem (a). Such range of univalency is discussed in Section \ref{sec:write-up3.4}.
\end{proposition}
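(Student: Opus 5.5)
We have to show that, from the fixed point, we can build a map $Z$ and a function $K$ meeting every requirement of Definition \ref{def:Problem3 a}: the series form \eqref{e:Seq003}, Assumption \ref{A-priori}, the Laurent form \eqref{e:write-up17} of $K$, and the boundary identity \eqref{e:write-up16}.

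\textbf{Step 1: the series.} Since $\iota$ is invertible for $\lambda\neq\frac15$, we put $\boldsymbol{a}=\tau[\boldsymbol{b}]$, where $\boldsymbol{b}$ is the fixed point. By Lemma \ref{prop:seq07}, $\boldsymbol{a}\in\ell^1_\rho$ solves \eqref{e:seq11} with $\boldsymbol{c}=\mathcal{T}[\boldsymbol{a}]$. Lemma \ref{lem:seq03} gives $\boldsymbol{c}\in\ell^1_\rho$. Because $\rho\le 1$, the $\ell^1_\rho$ norm dominates the unweighted $\ell^1$ norm. Hence the series for $y$ and $Z$ in \eqref{e:write-up19} converge absolutely on $\overline{\mathbb{D}}\setminus\{0\}$, and they are analytic in $\mathbb{D}\setminus\{0\}$.

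Relation \eqref{e:seq05} is exactly the coefficient identity for $Z_\zeta=y^2$. Multiplying Cauchy products of absolutely convergent series shows that $Z_\zeta=y^2$ holds on $\overline{\mathbb{D}}\setminus\{0\}$.

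\textbf{Step 2: regularity.} Univalence of $Z$ on $\partial\mathbb{D}$, together with its simple pole at $0$, should give univalence of $Z$ on $\mathbb{D}$ via the argument principle. Then $Z_\zeta\neq 0$ in $\mathbb{D}\setminus\{0\}$. The non-vanishing of $y$ on $\partial\mathbb{D}$ I would read as part of the univalency hypothesis, interpreted as local univalence up to the boundary ($Z_\zeta\neq0$ there). With that reading, Assumption \ref{A-priori} holds.

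\textbf{Step 3: the coefficients of $K$.} Set $k_{-2}=\lambda$ and $k_0=-6\lambda\bar a_2+\sigma$, as in \eqref{e:seq015}. Define $k_1$, $k_2$ and $k_{j+2}$ for $j\ge1$ by \eqref{e:seq10}, \eqref{e:seq09} and \eqref{e:seq12}, as in Lemma \ref{lem:write-up06}.

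The \textbf{main obstacle} is proving that $\sum_j|k_j|<\infty$, so that $K$ is a genuine function analytic in $\mathbb{D}\setminus\{0\}$ and continuous up to $\partial\mathbb{D}$. The recursion \eqref{e:seq12} is a convolution with $\bar{\boldsymbol{a}}$, so estimating it directly risks exponential growth.

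I would avoid the recursion by defining $K$ directly. Since $y$ is non-vanishing on $\partial\mathbb{D}$ and $\bar y(\zeta^{-1})=\overline{y(\zeta)}$ there, set, on $|\zeta|=1$,
\[
K(\zeta^{-1}) := \frac{\zeta y(\zeta)-\gamma\bar y(\zeta^{-1})Z(\zeta)}{\bar y(\zeta^{-1})}.
\]
Its Fourier series is absolutely convergent. This follows from Wiener's lemma: $1/\bar y$ lies in the Wiener algebra because $y$ does and has no zeros on $\partial\mathbb{D}$.

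It then remains to show that the Fourier modes $\zeta^{n}$ of $K(\zeta^{-1})$ vanish for $n\ge 5$, and equal the prescribed $k_{-2}$, $k_0$ (and $0$ for $k_{-1}$) at $n=4,3,2$. Equivalently, the function
\[
\zeta y-\gamma\bar y(\zeta^{-1})Z-\Big(\lambda\zeta^{3}+k_0\zeta\Big)\bar y(\zeta^{-1})
\]
should contain only nonpositive powers of $\zeta$. Expanding this expression, its coefficients of $\zeta^j$ for $j\ge 2$ are exactly equations \eqref{e:seq11}, with \eqref{e:seq015} substituted for $k_0$. Those equations hold by Step 1.

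\textbf{Step 4: conclusion.} Once the Fourier modes are identified as above, uniqueness of Fourier coefficients shows that the remaining coefficients of $K$ coincide with those produced by \eqref{e:seq10}, \eqref{e:seq09} and \eqref{e:seq12}. Thus $K$ has the form \eqref{e:write-up17} with absolutely summable coefficients. It extends analytically into $\mathbb{D}\setminus\{0\}$ and satisfies \eqref{e:write-up16} by construction, so $Z$ solves Problem (a).

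Local uniqueness of the fixed point is not needed for this conclusion. Its role here is only to supply, through Corollary \ref{cor:write-up18}, the vanishing of the odd coefficients.
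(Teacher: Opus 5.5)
Your route is the paper's: define $K(\zeta^{-1})$ on $|\zeta|=1$ as the ratio, expand it in a Fourier series, and identify its top modes through \eqref{e:seq11}. Using Wiener's lemma for absolute convergence is a welcome addition. However, Step 3 has a gap at exactly the step that carries the content, and it has two parts.

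\textbf{The target is shifted by one.} Since $K(\zeta^{-1})=k_{-2}\zeta^{3}+k_0\zeta+\sum_{j\ge1}k_j\zeta^{1-j}$, its modes must vanish for $n\ge4$ and equal $k_{-2},0,k_0$ at $n=3,2,1$. Put $S:=K(\zeta^{-1})-\lambda\zeta^3-k_0\zeta$. Because the top mode of $\bar y(\zeta^{-1})$ is $\zeta$, the product $E=S\,\bar y(\zeta^{-1})$ should contain only powers $\zeta^n$ with $n\le 1$, not only nonpositive powers. As you stated it, you also need the $\zeta^1$-coefficient of $E$ to vanish. That coefficient is $-\gamma\sum_{m\ge2}\bar a_m c_{m+1}-\lambda\bar a_3$, which is $k_1$ from \eqref{e:seq10}. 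You only check the modes $j\ge2$. The extra vanishing holds only through the odd symmetry of Corollary \ref{cor:write-up18}, which contradicts your closing remark. With the correct target ($n\le1$), the $j\ge2$ equations \eqref{e:seq11} are exactly what is needed. Your remark that local uniqueness is dispensable is then right: the paper invokes the odd symmetry at \eqref{e:seq52}, but it is not essential.

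\textbf{``Equivalently'' only holds automatically in one direction.} If $S$ has only modes $n\le0$, then $E$ has only modes $n\le1$. You need the converse: dividing $E$ by $\bar y(\zeta^{-1})$ must create no positive modes. On $|\zeta|=1$ one has $1/\bar y(\zeta^{-1})=\zeta^{-1}\,\overline{(1/Y)(\zeta)}$ with $Y=\zeta y$. This contains only modes $n\le-1$ if and only if $1/Y$ is analytic in $\mathbb{D}$, i.e.\ $Y$ has no zeros in $\overline{\mathbb{D}}$. In that case the analytic version of Wiener's lemma gives $1/Y$ an absolutely convergent Taylor series.

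Non-vanishing on $\partial\mathbb{D}$, which is all you feed into Wiener's lemma, gives absolute convergence but not this one-sidedness. If $y$ vanished somewhere in $\mathbb{D}\setminus\{0\}$, then $1/\bar y(\zeta^{-1})$ would carry positive modes and the conclusion could fail. The fix is to invoke here the interior non-vanishing $Z_\zeta\neq0$ on $\mathbb{D}\setminus\{0\}$, which you obtained in Step 2 from univalence but never used. The paper relies on the same fact, stated as ``$Z_\zeta\neq0$ in $\overline{\mathbb{D}}$'', when it compares Fourier coefficients in \eqref{e:seq53}--\eqref{e:3.32}. With it, $S=\zeta^{-1}E\cdot\overline{(1/Y)}$ has only modes $n\le0$, and your Step 4 goes through.
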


\begin{proof}
  We need to show that
  for $y (\zeta)$ and $Z (\zeta)$ given by
  \begin{equation}
    y (\zeta) = \sum_{j = 0}^{\infty} a_j \zeta^{j - 1}, \quad Z (\zeta) =
    \sum_{j = 0}^{\infty} c_j \zeta^{j - 1} \quad \text{with} \quad a_0 = 1,
    c_0 = - 1, a_1 = c_1 = 0,
  \end{equation}
  the ratio $[y (\zeta) - \gamma \bar{y} (\zeta^{- 1}) Z (\zeta)] / \bar{y}
  (\zeta^{- 1})$ is well-defined and admits a Laurent series expansion according to
  \begin{equation}
    \frac{\zeta y (\zeta) - \gamma \bar{y} (\zeta^{- 1}) Z (\zeta)}{\bar{y}
    (\zeta^{- 1})} = k_{- 2} \zeta^3 + \sum_{j = 0}^{\infty} k_j \zeta^{1 - j}
    \label{e:seq48}
  \end{equation}
  with $k_{- 2}$ and $k_0$ determined by \eqref{e:seq015}. Again, we follow
  \cite{CrowdyTanveer2025}. Introduce the projection operator
  $\boldsymbol{P}$ acting on Fourier-series according to
  \begin{equation}
    \boldsymbol{P} \sum_{m = - \infty}^{\infty} p_m e^{2 \pi i m} 
    = \sum_{m = 1}^{\infty} p_m e^{2 \pi i m},
  \end{equation}
  and let $\boldsymbol{I}$ denote the identity operator. Using  the a-priori regularity assumption \ref{A-priori} on absolute convergence, computations show that
  \begin{equation}
    \boldsymbol{P}  \left(\gamma \bar{y} (\zeta^{- 1}) Z (\zeta)\right) = \gamma \sum_{j =
    1}^{\infty} \zeta^j \sum_{m = 0}^{\infty} \bar{a}_m c_{j + m}, \quad |
    \zeta | = 1.
  \end{equation}
  By Corollary \ref{cor:write-up18} we know that $\sum_{m = 0}^{\infty}
  \bar{a}_m c_{1 + m} = 0$ and therefore, using the fact that the pair 
  $(\boldsymbol{a}, \boldsymbol{c})$  solves \eqref{e:seq11}, we find
  \begin{equation}
    \zeta y (\zeta) = \boldsymbol{P} \left(\gamma \bar{y} (\zeta^{- 1}) Z (\zeta)\right) +
    k_{- 2} \zeta^4 + (k_{- 2} \bar{a}_2 + k_0) \zeta^2 + 1, \quad \text{on}
    \quad | \zeta | = 1, \label{e:seq52}
  \end{equation}
  so
  \begin{equation}
    \zeta y (\zeta) - \gamma \bar{y} (\zeta^{- 1}) Z (\zeta) = k_{- 2} \zeta^4
    + (k_{- 2} \bar{a}_2 + k_0) \zeta^2 + 1 - (\boldsymbol{I} - \boldsymbol{P} ) \left(\gamma
    \bar{y} (\zeta^{- 1}) Z (\zeta)\right) . \label{e:seq53}
  \end{equation}
  Since $Z$ is univalent on $\partial \mathbb{D}$ with $Z_{\zeta} \neq 0$ in $\overline{\mathbb{D}}$, we know that the $\bar{y} (\zeta^{- 1}) \neq 0$ on $\partial \mathbb{D}$. So, on
  $| \zeta | = 1$, we may expand the left-hand-side of \eqref{e:seq48} in a
  Fourier-series and find that
  \begin{equation}
    \zeta y (\zeta) - \gamma \bar{y} (\zeta^{- 1}) Z (\zeta) = \bar{y}
    (\zeta^{- 1}) \sum_{j = - \infty}^{\infty} \kappa_j \zeta^{1 - j} \quad
    \text{on} \quad | \zeta | = 1. \label{e:seq54}
  \end{equation}
  We can now compare Fourier-coefficients of the right-hand-sides of
  \eqref{e:seq53} and \eqref{e:seq54} giving
  \begin{equation}
    \sum_{j = - \infty}^{\infty} \kappa_j \zeta^{1 - j} = k_{- 2} \zeta^3 +
    k_0 \zeta^1 + \sum_{j = 1}^{\infty} \kappa_j \zeta^{1 - j}. \quad
    \text{on} \quad | \zeta | = 1. \label{e:3.32}
  \end{equation}
This proves the Proposition, if we associate to $K(\zeta^{-1})$ the Laurent-series \eqref{e:3.32}. Indeed, note that convergence, on $\partial \mathbb{D}$, of the Laurent-series on the left-hand-side of \eqref{e:3.32} implies convergence of the right-hand-side series on $\partial \mathbb{D}$. Hence, $K(\zeta^{-1})$ is analytic and single-valued for $\zeta \in \mathbb{C} \setminus \mathbb{D}$.
\end{proof}

Before we proceed with proving the existence of fixed points to $\mathcal{R}$ in $\ell^1_{\rho}$ for $\rho <1$,
let us note that in addition to Lemma \ref{lem:seq12}, we have an additional
symmetry property when $\sigma$ is purely imaginary, as is the case for the
bubble problem.

\begin{lemma}
  \label{lem:seq20}Let $\sigma \in i \mathbb{R}$ and let $\boldsymbol{a}$ be a
  fixed-point of $\widehat{\mathcal{R}}$. Let us define $\tilde{\boldsymbol{a}}=(\tilde{a}_2,\tilde{a}_3,...)$ as
  the series such that
  \begin{equation}
    \tilde{a}_{4 j} =\bar{a}_{4 j} \quad \text{and}
    \quad \tilde{a}_{4 j - 2} = -\bar{a}_{4 j - 2}, \quad \text{for all} ~ j.
  \end{equation}
  Then, $\widehat{\mathcal{R}} [\tilde{\boldsymbol{a}}] =
  \tilde{\boldsymbol{a}}$. Particularly, if $\iota [\boldsymbol{a}]$ is a locally unique fixed point of $\mathcal{R}$, then the coefficients $a_{4 i}$ and
  $c_{4 i}$ are fully real, whereas the coefficients $a_{4 j - 2}$ and $c_{4 j
  - 2}$ are purely imaginary.
\end{lemma}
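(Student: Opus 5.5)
The plan is to show that the map $\boldsymbol{a}\mapsto\tilde{\boldsymbol{a}}$ commutes with every ingredient of $\widehat{\mathcal{R}}$, and then to invoke local uniqueness exactly as in Lemma \ref{lem:seq12} and Corollary \ref{cor:write-up18}. For convenience I define the transformation on all indices by
\begin{equation*}
\mathcal{S}[\boldsymbol{a}]_j = i^j\,\bar{a}_j, \qquad j\geq 2 .
\end{equation*}
For even indices this reproduces the statement, since $i^{4j}=1$ and $i^{4j-2}=-1$. On odd indices the choice is immaterial, because those coefficients vanish by Corollary \ref{cor:write-up18}. The map $\mathcal{S}$ is a real-linear involution, since $i^j\,\overline{i^j\bar{a}_j}=a_j$. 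It is also an isometry of $\ell^1_\rho$.

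First I check that $\mathcal{T}\circ\mathcal{S}=\mathcal{S}\circ\mathcal{T}$. The term $\Lambda$ is diagonal with real weights, so it commutes with $\mathcal{S}$. In $\mathcal{M}$ each product $a_m a_{j-m}$ turns into $i^m i^{j-m}\bar{a}_m\bar{a}_{j-m}=i^j\,\overline{a_m a_{j-m}}$, and the prefactor $1/(j-1)$ is real. Next I check that $\mathcal{N}[\mathcal{S}\boldsymbol{c},\mathcal{S}\boldsymbol{a}]=\mathcal{S}\,\mathcal{N}[\boldsymbol{c},\boldsymbol{a}]$. Each summand becomes $i^{m+j}\bar{c}_{m+j}\,\overline{i^m\bar{a}_m}=i^{m+j}(-i)^m\bar{c}_{m+j}a_m=i^j\,\overline{c_{m+j}\bar{a}_m}$. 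Because $\gamma$ is real, the term $\gamma\mathcal{N}[\mathcal{T}[\boldsymbol{a}],\boldsymbol{a}]+\gamma\mathcal{T}[\boldsymbol{a}]$ is therefore $\mathcal{S}$-equivariant.

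The forcing term $\phi$ in \eqref{e:seq26} is the only place where the hypothesis $\sigma\in i\mathbb{R}$ enters. At $j=4$ we have $\phi_4=\lambda$, which is real, and $i^4\lambda=\lambda$. At $j=2$ we have $\phi[\mathcal{S}\boldsymbol{a}]_2=-5\lambda\overline{(-\bar{a}_2)}+\sigma=5\lambda a_2+\sigma$. On the other hand $\mathcal{S}[\phi[\boldsymbol{a}]]_2=-\overline{(-5\lambda\bar{a}_2+\sigma)}=5\lambda a_2-\bar{\sigma}$. These two agree exactly because $\bar\sigma=-\sigma$. Hence $\widehat{\mathcal{R}}\circ\mathcal{S}=\mathcal{S}\circ\widehat{\mathcal{R}}$, and applying this to a fixed point $\boldsymbol{a}$ shows that $\tilde{\boldsymbol{a}}=\mathcal{S}[\boldsymbol{a}]$ is again a fixed point. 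This proves the first claim.

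For the second claim, I observe that $\iota$ also commutes with $\mathcal{S}$. Indeed $\iota[\mathcal{S}\boldsymbol{a}]_2=-\bar{a}_2-5\lambda a_2=-\overline{(a_2+5\lambda\bar{a}_2)}$, and therefore $\tau$ commutes with $\mathcal{S}$ as well. Consequently $\iota[\tilde{\boldsymbol{a}}]=\mathcal{S}[\iota[\boldsymbol{a}]]$ is a fixed point of $\mathcal{R}$.

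The main obstacle is the local-uniqueness step. One must know that $\mathcal{S}[\iota[\boldsymbol{a}]]$ lies in the same neighbourhood in which $\iota[\boldsymbol{a}]$ is the unique fixed point. My plan is to use that the contraction ball of Section \ref{sec:write-up3.2} is centred at the $\gamma=0$ data, namely $\Phi$ (equivalently $\iota$ of the exact solution \eqref{e:18}). This centre is $\mathcal{S}$-invariant: $\Phi_4=\lambda$ is real, and $\Phi_2=\sigma$ is imaginary, so $\mathcal{S}[\Phi]_2=-\bar\sigma=\sigma$. The same holds for $a_2$ of \eqref{e:18}, which equals $-\sigma(5\lambda+1)/(25\lambda^2-1)$ and is imaginary. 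Since $\mathcal{S}$ is an isometry fixing the centre, it maps the ball onto itself. Uniqueness in the ball then gives $\mathcal{S}[\iota[\boldsymbol{a}]]=\iota[\boldsymbol{a}]$, and applying $\tau$ gives $\tilde{\boldsymbol{a}}=\boldsymbol{a}$. Reading this off componentwise, $a_{4j}=\bar{a}_{4j}$ is real and $a_{4j-2}=-\bar{a}_{4j-2}$ is purely imaginary. Finally, $\boldsymbol{c}=\mathcal{T}[\boldsymbol{a}]=\mathcal{T}[\mathcal{S}\boldsymbol{a}]=\mathcal{S}[\boldsymbol{c}]$ gives the same reality properties for $c_{4j}$ and $c_{4j-2}$.
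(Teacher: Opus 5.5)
Your proposal is correct and is essentially the paper's argument, which the paper only sketches as analogous to Lemma \ref{lem:seq12}. You show that $\widehat{\mathcal{R}}$, $\mathcal{T}$ and $\iota$ are equivariant under the conjugate-linear isometry $a_j\mapsto i^j\bar a_j$, with $\sigma\in i\mathbb{R}$ entering only through the forcing term, and then invoke local uniqueness; your check that the uniqueness ball is invariant makes explicit a step the paper leaves implicit. One small slip: the contraction ball in Lemma \ref{lem:seq07} is $\mathcal{B}_r(\boldsymbol{0})$, not a ball centred at $\Phi$, but $\boldsymbol{0}$ is trivially fixed by your involution, so the invariance argument goes through unchanged.
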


The proof of this Lemma is analogous to the proof of Lemma \ref{lem:seq12},
and hence omitted.

\subsection{Existence of fixed points of $\mathcal{R}$}\label{sec:write-up3.2}

Proposition \ref{prop:write-up18} dictates that in order to prove the existence
of solutions to Problem (a), we need to prove the existence of fixed points
$\iota [\boldsymbol{a}]$ of $\mathcal{R}$.

Let
\begin{equation}
  \boldsymbol{a}^0 = (- 5 \lambda \bar{a}_2 + \sigma, 0, \lambda, 0, 0, \ldots)
  = \left( \frac{5 \lambda \bar{\sigma} - \sigma}{25 \lambda^2 - 1}, 0,
  \lambda, 0, 0, \ldots \right) \label{e:write-up50}
\end{equation}
be the series coefficients of the exact $\gamma = 0$ solutions presented in
Section \ref{sec:write-up2.3}. Indeed, it is readily checked that for $\gamma
= 0$, we have not only $\iota (\boldsymbol{a}^0) = \mathcal{R} [\iota
(\boldsymbol{a}^0)]$, but in fact $\iota (\boldsymbol{a}^0) = \mathcal{R}
[\boldsymbol{0}]$ where $\boldsymbol{0} \in \ell^1_{\rho}$ denotes the trivial
series.

\begin{lemma}\label{lem:3.13}
  \label{lem:seq07}Let $\lambda \in \mathbb{R}_+ \setminus \left\{ \frac{1}{5}
  \right\}$, $\sigma \in \mathbb{C}$. There exists an $r>0$ and $\gamma^{\star} >0$,  such that there exists a unique fixed-point  $\boldsymbol{b} = \mathcal{R} [\boldsymbol{b}] \in \mathcal{B}_r
  (\boldsymbol{0}) \subset \ell^1$ for all $\gamma \in (- \gamma^{\star}, \gamma^{\star})$. Furthermore, for $\rho$ sufficiently close to $1$, $\boldsymbol{b}\in \ell^1_{\rho}$.
\end{lemma}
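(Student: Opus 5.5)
The plan is to apply the Banach fixed-point theorem to $\mathcal{R}$ on a closed ball $\overline{\mathcal{B}_r(\boldsymbol{0})}$, first in $\ell^1=\ell^1_1$ and then in $\ell^1_\rho$ for $\rho$ close to $1$. Note that $\mathcal{R}[\boldsymbol{0}]=\Phi\neq\boldsymbol{0}$, so the ball cannot be small. I would take $r=2\|\Phi\|_{1}=2(|\sigma|+\lambda)$, or $r=\|\Phi\|_1+1$ if one wants room to spare.

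\emph{Step 1: bound $\tau$.} The map $\tau$ is only real-linear, since it involves $\bar b_2$, but it changes only the second component. Hence, for every $\rho\in(0,1]$,
\begin{equation*}
\|\tau[\boldsymbol{b}^1]-\tau[\boldsymbol{b}^2]\|_\rho\le C_\lambda\|\boldsymbol{b}^1-\boldsymbol{b}^2\|_\rho,\qquad C_\lambda=\max\Bigl(1,\frac{1+5\lambda}{|1-25\lambda^2|}\Bigr)=\max\Bigl(1,\frac{1}{|1-5\lambda|}\Bigr).
\end{equation*}
This is where $\lambda\neq\frac15$ enters. The estimates of Lemma \ref{lem:seq03} are insensitive to conjugation of $\boldsymbol{a}$, so they apply unchanged to compositions with $\tau$.

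\emph{Step 2: self-map.} Write $\mathcal{R}[\boldsymbol{b}]-\Phi=\gamma\bigl(\mathcal{N}[\mathcal{T}[\tau\boldsymbol{b}],\tau\boldsymbol{b}]+\mathcal{T}[\tau\boldsymbol{b}]\bigr)$. For $\|\boldsymbol{b}\|_\rho\le r$, Lemma \ref{lem:seq03} gives $\|\mathcal{T}[\tau\boldsymbol{b}]\|_\rho\le 2C_\lambda r+\frac13C_\lambda^2r^2=:t(r)$. Hence
\begin{equation*}
\|\mathcal{R}[\boldsymbol{b}]\|_\rho\le\|\Phi\|_\rho+|\gamma|\,t(r)\bigl(1+C_\lambda r\bigr).
\end{equation*}

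\emph{Step 3: contraction.} Use \eqref{e:seq16} and \eqref{e:seq18} with $\boldsymbol{c}^i=\mathcal{T}[\tau\boldsymbol{b}^i]$. This gives
\begin{equation*}
\|\mathcal{R}[\boldsymbol{b}^1]-\mathcal{R}[\boldsymbol{b}^2]\|_\rho\le|\gamma|\,L(r)\,\|\boldsymbol{b}^1-\boldsymbol{b}^2\|_\rho,\qquad L(r)=C_\lambda\Bigl[\bigl(2+\tfrac23C_\lambda r\bigr)(1+C_\lambda r)+t(r)\Bigr].
\end{equation*}
Now choose $\gamma^\star>0$ so that $\gamma^\star t(r)(1+C_\lambda r)\le r/4$ and $\gamma^\star L(r)\le\frac12$. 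Then for $|\gamma|<\gamma^\star$, $\mathcal{R}$ maps $\overline{\mathcal{B}_r(\boldsymbol{0})}\subset\ell^1$ into itself (because $\|\Phi\|_1=r/2$) and is a contraction. This yields existence and uniqueness of $\boldsymbol{b}$ in that ball. Both $\gamma^\star$ and $r$ depend on $\lambda$ and $\sigma$, and $\gamma^\star$ degenerates as $\lambda\to\frac15$ because $C_\lambda\to\infty$. This dependence is expected and is not a defect of the argument.

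\emph{Step 4: weighted space.} The constants in Lemma \ref{lem:seq03}, and $C_\lambda$, are independent of $\rho$. Moreover, $\|\Phi\|_\rho=|\sigma|\rho^{-2}+\lambda\rho^{-4}$ tends to $\|\Phi\|_1$ as $\rho\to1$. So there is $\rho_0<1$ with $\|\Phi\|_\rho\le\frac34 r$ for all $\rho\in[\rho_0,1]$, and Steps 2 and 3 then run verbatim in $\ell^1_\rho$ with the same $r$ and $\gamma^\star$. This produces a fixed point $\boldsymbol{b}_\rho$ in the $\ell^1_\rho$-ball of radius $r$. Since $\|\cdot\|_1\le\|\cdot\|_\rho$, $\boldsymbol{b}_\rho$ also lies in the $\ell^1$-ball of radius $r$, so uniqueness there gives $\boldsymbol{b}_\rho=\boldsymbol{b}$. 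Hence $\boldsymbol{b}\in\ell^1_\rho$.

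The main obstacle is bookkeeping rather than anything deep. The ball must be centred at $\boldsymbol{0}$ yet still contain the nonzero forcing $\Phi$, and the margin $r-\|\Phi\|_\rho$ must stay uniform in $\rho$ so that a single $\gamma^\star$ works for all $\rho$ near $1$. The real-linearity of $\tau$ needs only a remark, since all the estimates are triangle-inequality based.
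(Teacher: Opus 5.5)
Your proof is correct and follows the paper's argument. Both run a Banach contraction for $\mathcal{R}$ on a ball about $\boldsymbol{0}$ whose radius is twice the size of the forcing $\Phi$, using the Lipschitz bound for $\tau$ with constant $\max\{1,(1+5\lambda)/|1-25\lambda^2|\}$ and the estimates of Lemma \ref{lem:seq03}, and then transfer the result to $\ell^1_\rho$ for $\rho$ near $1$. Your last step is handled more carefully than the paper's appeal to continuity of $r$ in $\rho$: you fix $r$ from $\|\Phi\|_1$, keep a margin that is uniform in $\rho$, and identify the $\ell^1_\rho$ fixed point with the $\ell^1$ one by uniqueness.
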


\begin{proof}
  For brevity, let us set
  \[ \max \left\{ 1, \frac{1 + | 5 \lambda |}{| 1 - 25 \lambda^2 |} \right\}
     = \lambda_1 \]
  and note that $\| \tau (\boldsymbol{b}^1) - \tau (\boldsymbol{b}^2) \|_{\rho}
  \leq \lambda_1 \| \boldsymbol{b}^1 - \boldsymbol{b}^2 \|_{\rho}$. Then, using
  the estimates on $\mathcal{T}$ and $\mathcal{N}$ provided in Lemma
  \ref{lem:seq03}, we note that
  \[ \| \mathcal{T} [\tau (\boldsymbol{b}^1)] - \mathcal{T} [\boldsymbol{} \tau
     (\boldsymbol{b}^2)] \|_{\rho} \leq \lambda_1 \left( 2 + \frac{1}{3}
     \lambda_1 (\| \boldsymbol{b}^1 \|_{\rho} + \| \boldsymbol{b}^2 \|_{\rho})
     \right) \| \boldsymbol{b}^1 - \boldsymbol{b}^2 \|_{\rho} \]
  and
  \begin{align*}
    \| \mathcal{N} [\mathcal{T} [\tau
    (\boldsymbol{b}^1)],\tau (\boldsymbol{b}^1)] - \mathcal{N} [\mathcal{T} [\tau
    (\boldsymbol{b}^2)], \tau (\boldsymbol{b}^2)] \|_{\rho} \leq & \lambda_1 \| \boldsymbol{b}^1 \|_{\rho}
    \left( 2 + \frac{\lambda_1}{3} (\| \boldsymbol{b}^1 \|_{\rho} + \|
    \boldsymbol{b}^2 \|_{\rho}) \right) \| \boldsymbol{b}^1 - \boldsymbol{b}^2
    \|_{\rho}\\
    & + \lambda_1^2 \left( 2 \| \boldsymbol{b}^2 \|_{\rho} + \frac{\lambda_1
    }{3} \| \boldsymbol{b}^2 \|_{\rho}^2 \right) \| \boldsymbol{b}^1 -
    \boldsymbol{b}^2 \|_{\rho} .
  \end{align*}
From the definition of $\mathcal{R}$, we find
  {\fontsize{10pt}{11pt} \selectfont
  \begin{align*}
    \| \mathcal{R} [\boldsymbol{b}^1] - \mathcal{R} [\boldsymbol{b}^2] \|_{\rho}
    \leq & | \gamma | \lambda_1  (\| \boldsymbol{b}^1 \|_{\rho} + 1) \left( 2 +
    \frac{1}{3} \lambda_1  (\| \boldsymbol{b}^1 \|_{\rho} + \| \boldsymbol{b}^2
    \|_{\rho}) + \lambda_1^2 \left( 2 \| \boldsymbol{b}^2 \|_{\rho} +
    \frac{1}{3} {\lambda_1 }  \| \boldsymbol{b}^2 \|_{\rho}^2 \right) \right) \|
    \boldsymbol{b}^1 - \boldsymbol{b}^2 \|_{\rho} .
  \end{align*}}
  Let
  \begin{equation}
    r = r (\rho, \sigma, \lambda) = 2 \| \mathcal{R} [\boldsymbol{0}] \|_{\rho} =
    \rho^{- 2} | \sigma | + \rho^{- 4} | \lambda | \label{e:seq29},
  \end{equation}
such that, for $\boldsymbol{b} \in \mathcal{B}_r (\boldsymbol{0})$, we have
  \begin{equation}
    \| \mathcal{R} [\boldsymbol{b}] \|_{\rho} \leq \| \mathcal{R} [\boldsymbol{b}]
    - \mathcal{R} [\boldsymbol{0}] \|_{\rho} + \frac{r}{2} \leq | \gamma |
    \left( \lambda_1  (r + 1) \left( 2 + \frac{2}{3} \lambda_1  r \right) +
    \lambda_1^2 \left( 2 r + \frac{\lambda_1 }{3} r^2 \right) \right) 2 r +
    \frac{r}{2} . \label{e:3.36}
  \end{equation}
  Thus, for sufficiently small $\gamma^{\star} > 0$, we find that $\|
  \mathcal{R} [\boldsymbol{b}] \|_{1} < r$ and that $\mathcal{R} :
  \mathcal{B}_r (\boldsymbol{0}) \rightarrow \mathcal{B}_r (\boldsymbol{0})$ is a
  contraction for all $\gamma \in (- \gamma^{\star}, \gamma^{\star})$. From continuity of $r$ with respect to $\rho$, there exists $\rho^{\star}$ sufficiently close to $1$ such that the same contraction argument remains valid in the ball $\mathcal{B}_{r}(\boldsymbol{0}) \subset \ell^1_{\rho}$ for all $\rho \in (\rho^{\star},1]$. Since $\ell^1_{\rho} \subset \ell^1$,
  the Lemma follows by Banach's fixed point theorem.
\end{proof}

Now Proposition \ref{prop:write-up18} in combination with Lemmas
\ref{lem:seq12} and \ref{lem:seq20} gives the following result.

\begin{theorem}  \label{thm:seq09} 
	Let $\lambda \in \mathbb{R}_+ \setminus \left\{ \frac{1}{5}
  \right\}$, and $\sigma \in \mathbb{C}$ be fixed. Let $\gamma^{\star} > 0$ be
  suitably small, and let $\gamma \in (- \gamma^{\star}, \gamma^{\star})$
  also be fixed. Assume that the conditions of Proposition \ref{prop:write-up18} are satisfied. Then, $Z (\zeta)$ is the locally unique solution to
  Problem (a).\footnote{We have not excluded the possibility for multiple
  isolated solutions for fixed $\gamma, \lambda,$ and $\sigma$.} For such
  solution, $c_{2 j + 1} \equiv 0$. Moreover, if $\sigma \in i \mathbb{R}$,
  then $c_{4 j} \in \mathbb{R}$ whereas $c_{4 j - 2} \in i \mathbb{R}$.
  Moreover, for the function $K (\zeta)$ corresponding to the solution of
  Problem (a), $k_1 = 0$ according to Equation \eqref{e:seq10}. Moreover, $Z(\zeta)$ extends holomorphically past $\partial \mathbb{D}$.
\end{theorem}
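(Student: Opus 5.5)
The proof assembles the preceding results in order; the only genuinely new point is the holomorphic extension.

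\textbf{Existence and local uniqueness.} Lemma \ref{lem:seq07} supplies, for $|\gamma|<\gamma^\star$, a unique fixed point $\boldsymbol{b}=\mathcal{R}[\boldsymbol{b}]$ in $\mathcal{B}_r(\boldsymbol{0})$. Since $\lambda\neq\frac15$, we set $\boldsymbol{a}=\tau[\boldsymbol{b}]$, so that $\iota[\boldsymbol{a}]=\boldsymbol{b}$, and $\boldsymbol{c}=\mathcal{T}[\boldsymbol{a}]$. By Lemma \ref{prop:seq07}, $\boldsymbol{a}$ solves \eqref{e:seq11} with $\boldsymbol{c}=\mathcal{T}[\boldsymbol{a}]$. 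Under the assumed univalency hypothesis, Proposition \ref{prop:write-up18} shows that the resulting $Z(\zeta)$ solves Problem (a).

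Local uniqueness is the converse of this chain. Any solution of Problem (a) whose coefficient sequence lies in $\ell^1$ satisfies \eqref{e:seq05} and \eqref{e:seq11}, by the coefficient comparison of Section \ref{sec:write-up3.1}. Hence, by Lemma \ref{prop:seq07}, $\iota[\boldsymbol{a}]$ is a fixed point of $\mathcal{R}$. If $\iota[\boldsymbol{a}]\in\mathcal{B}_r(\boldsymbol{0})$, it coincides with $\boldsymbol{b}$. This identifies a neighbourhood in which the solution is unique, and matches the footnote's caveat that other isolated solutions outside the ball are not excluded.

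\textbf{Symmetries and $k_1=0$.} Since $\boldsymbol{b}$ is the unique fixed point in the ball, Corollary \ref{cor:write-up18} gives $a_{2j+1}=c_{2j+1}=0$. For this we must check that the reflected sequence $\tilde{\boldsymbol{a}}$ of Lemma \ref{lem:seq12} still has $\iota[\tilde{\boldsymbol{a}}]\in\mathcal{B}_r(\boldsymbol{0})$. This holds because the map $\boldsymbol{a}\mapsto\tilde{\boldsymbol{a}}$ preserves $\|\cdot\|_\rho$ and commutes with $\iota$, which only alters the $j=2$ entry. For $\sigma\in i\mathbb{R}$ the same argument applies with Lemma \ref{lem:seq20}: its conjugation-type reflection also preserves the norm and commutes with $\iota$, because $\iota$ acts on $a_2$ by $a_2\mapsto a_2+5\lambda\bar a_2$, which is compatible with $a_2\mapsto-\bar a_2$. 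Uniqueness then forces $\tilde{\boldsymbol{a}}=\boldsymbol{a}$. This gives $a_{4j}\in\mathbb{R}$ and $a_{4j-2}\in i\mathbb{R}$, and \eqref{e:seq05} transfers these properties to $c_j$ by checking parities of indices in the convolution. Finally, \eqref{e:seq10} reads $k_1=-\gamma\sum_m c_{m+1}\bar a_m-k_{-2}\bar a_3$. Each product $c_{m+1}\bar a_m$ contains one odd-indexed factor, so every term vanishes and $k_1=0$.

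\textbf{Holomorphic extension (main obstacle).} By the last clause of Lemma \ref{lem:seq07}, the contraction argument runs in $\ell^1_\rho$ for some $\rho<1$. The one subtlety is that the ball $\mathcal{B}_r\subset\ell^1_\rho$ is contained in the ball $\mathcal{B}_r\subset\ell^1$, because $\|\cdot\|_1\le\|\cdot\|_\rho$ and $r$ is computed consistently. The $\ell^1_\rho$ fixed point is therefore also the unique $\ell^1$ fixed point, so it equals $\boldsymbol{b}$.

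Hence $\sum_j\rho^{-j}|b_j|<\infty$, and since $\tau$ changes only the second entry boundedly, also $\sum_j\rho^{-j}|a_j|<\infty$. Lemma \ref{lem:seq03} then gives $\|\boldsymbol{c}\|_\rho\le 2\|\boldsymbol{a}\|_\rho+\frac13\|\boldsymbol{a}\|_\rho^2<\infty$. It follows that the series $\sum_j c_j\zeta^{j-1}$ converges absolutely on $|\zeta|\le\rho^{-1}$, so $Z$ extends holomorphically to the annulus $0<|\zeta|<\rho^{-1}$, which contains $\partial\mathbb{D}$.
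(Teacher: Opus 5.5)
Your proposal is correct and follows the paper's route. The paper obtains the theorem by combining Lemma \ref{lem:seq07} (including its $\ell^1_\rho$ clause) with Proposition \ref{prop:write-up18} and the symmetry Lemmas \ref{lem:seq12} and \ref{lem:seq20}, and you supply the details it leaves implicit: invariance of the ball under the reflections and commutation with $\iota$, transfer of the symmetries to $c_j$ via \eqref{e:seq05}, termwise vanishing in \eqref{e:seq10}, and holomorphic extension from $\|\boldsymbol c\|_\rho<\infty$. The one point to state explicitly is that the $\ell^1_\rho$ ball must use the same radius $r$ as for $\rho=1$; continuity in $\rho$ allows this, whereas $r(\rho)$ in \eqref{e:seq29} grows as $\rho$ decreases, and only with a common $r$ does your containment argument identify the two fixed points.
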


The following two corollaries are immediate. Corollary \ref{cor:write-up20}
will only be required later, within the proof of Theorem \ref{thm:63}, whereas Corollary
\ref{prop:write-up26} addresses the existence of solutions to the Vortex
Problem a).

\begin{corollary}
  \label{cor:write-up20}The locally unique solution $Z (\zeta)$ to Problem
  (a) depends real-analytically on the parameters $(\lambda, \sigma)$ and
  $\gamma$.
\end{corollary}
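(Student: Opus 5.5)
The plan is to obtain the fixed point $\boldsymbol{b}$ of Lemma \ref{lem:seq07} as the zero set of a real-analytic map and invoke the real-analytic implicit function theorem on Banach spaces. Regard $\ell^1_{\rho}$ as a real Banach space by identifying $\mathbb{C}$ with $\mathbb{R}^2$ componentwise; this is necessary because $\tau$ involves $\bar{b}_2$ and $\mathcal{N}$ involves $\bar{\boldsymbol{a}}$. Define
\[
F(\boldsymbol{b};\gamma,\lambda,\sigma) = \boldsymbol{b} - \mathcal{R}[\boldsymbol{b}],
\]
with $\mathcal{R}$ as in \eqref{e:write-up38}, viewed as a map on $\ell^1_{\rho}\times(-\gamma^\star,\gamma^\star)\times(\mathbb{R}_+\setminus\{\tfrac15\})\times\mathbb{C}$. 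The first step is to check that $F$ is real-analytic. The map $\tau$ is real-linear in $\boldsymbol{b}$, and its coefficients $1/(1-25\lambda^2)$ and $5\lambda/(1-25\lambda^2)$ are real-analytic in $\lambda\neq\tfrac15$. The operator $\mathcal{T}$ is a continuous linear term plus a continuous bilinear term, by Lemma \ref{lem:seq03}. The operator $\mathcal{N}$ is continuous and real-bilinear, by \eqref{e:seq16}. Finally, $\Phi$ is affine in $(\lambda,\sigma)$. Thus $F$ is a polynomial of degree at most four in $\boldsymbol{b}$, with bounded multilinear coefficients that depend real-analytically (polynomially in $\gamma,\sigma$, rationally in $\lambda$) on the parameters. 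It is therefore real-analytic.

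The second step is invertibility of the linearisation. At a fixed point we have $D_{\boldsymbol{b}}F = I - D\mathcal{R}[\boldsymbol{b}]$. The Lipschitz estimate from the proof of Lemma \ref{lem:seq07} shows $\|D\mathcal{R}[\boldsymbol{b}]\| \le |\gamma|\,C(\lambda_1,r) < 1$ on $\mathcal{B}_r(\boldsymbol{0})$ for $|\gamma|<\gamma^\star$. A Neumann series then gives a bounded inverse of $D_{\boldsymbol{b}}F$. The analytic implicit function theorem yields a locally unique real-analytic branch $\boldsymbol{b}(\gamma,\lambda,\sigma)$ near each parameter point. By the uniqueness statement of Lemma \ref{lem:seq07} in $\mathcal{B}_r(\boldsymbol{0})$, this branch coincides with the contraction fixed point. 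A more direct alternative is that the Picard iterates $\mathcal{R}^n[\boldsymbol{0}]$ are real-analytic in the parameters and converge uniformly on a complexified parameter neighbourhood, so the limit is analytic. I will keep this as a backup.

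The third step transfers analyticity to $Z$. The map $\boldsymbol{a} = \tau[\boldsymbol{b}]$ is real-analytic in $(\boldsymbol{b},\lambda)$, and $\boldsymbol{c} = \mathcal{T}[\boldsymbol{a}]$ is a continuous polynomial map on $\ell^1_{\rho}$. Hence $\boldsymbol{c}$ depends real-analytically on $(\gamma,\lambda,\sigma)$ in $\ell^1_{\rho}$. For each $\zeta$ with $|\zeta|\le 1$, or with $|\zeta|<\rho^{-1}$, evaluation $\boldsymbol{c}\mapsto\sum_j c_j\zeta^{j-1}$ is a bounded linear functional. So $Z(\zeta)$ is real-analytic in the parameters, uniformly in $\zeta$. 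The same argument gives $K$ through \eqref{e:seq10}, \eqref{e:seq09} and \eqref{e:seq12}.

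The main obstacle is uniformity. The radius $r$ and the threshold $\gamma^\star$ in Lemma \ref{lem:seq07} depend on $(\lambda,\sigma)$, and they degenerate as $\lambda\to\tfrac15$ since $\lambda_1$ blows up. I would therefore restrict to compact parameter sets avoiding $\lambda=\tfrac15$, and choose $r$ and $\gamma^\star$ uniformly there using the continuity of \eqref{e:seq29} and \eqref{e:3.36}. I would also make sure the implicit-function branch stays inside the uniqueness ball, so that it really is the locally unique solution of Theorem \ref{thm:seq09}.
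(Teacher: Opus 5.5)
Your proposal is correct and follows essentially the same route as the paper. The paper simply observes that $\mathcal{R}$ depends real-analytically on $(\boldsymbol{b},\gamma,\lambda,\sigma)$, concludes that the fixed point $\boldsymbol{b}$ does too, and transfers this through $\boldsymbol{a}=\tau[\boldsymbol{b}]$ and $\boldsymbol{c}=\mathcal{T}[\boldsymbol{a}]$; your analytic implicit function theorem argument (with the Neumann-series inverse of $I-D\mathcal{R}$, the identification with the contraction fixed point, and uniform choices of $r,\gamma^\star$ on compact sets avoiding $\lambda=\tfrac15$) makes explicit what the paper leaves implicit.
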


\begin{proof}
  The operator $\mathcal{R}$ depends real-analytically on $(\lambda, \sigma)$,
  $\gamma$, and $\boldsymbol{b}$. So the fixed-point $\boldsymbol{b} = \mathcal{R}
  [\boldsymbol{b}]$ depends analytically on $(\lambda, \sigma)$, and $\gamma$.
  From this, it is readily checked that the Laurent-series coefficients
  $\boldsymbol{a} = \tau [\boldsymbol{b}]$ of $y (\zeta)$ depend real-analytically
  on $(\lambda, \sigma)$, and $\gamma$. Thus, the same statement is true for
  $\boldsymbol{c} = \mathcal{T} [\boldsymbol{a}]$, the Laurent-series coefficients
  of $Z (\zeta)$.
\end{proof}

\begin{corollary}  \label{prop:write-up26}
	%	Let $\beta - \gamma \in i \mathbb{R}$
	%\VK{Update this if we find that $\omega \in \mathbb{R}$ always},
	%	and 
	Let $| \gamma | $ be sufficiently small. Let the parameters $\hat{\alpha}, \beta,$ and $\gamma$ be in a range where
	the map $Z(\zeta)$ determined by the series coefficients $\boldsymbol{c} = \mathcal{T}
	[\boldsymbol{a}]$ with $\boldsymbol{a} = \widehat{\mathcal{R}} [\boldsymbol{a}]
	\in \mathcal{B}_{\delta} (\boldsymbol{0})$ is univalent (as discussed in the following Section \ref{sec:write-up3.4}). Then, the Vortex
	Problem a) admits a locally unique solution. For this solution,
	$\delta = 0$ and $\omega \in \mathbb{R}$.
\end{corollary}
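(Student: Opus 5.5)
We prove Corollary \ref{prop:write-up26} by translating Theorem \ref{thm:seq09} into the vortex variables and then reading $\delta$ and $\omega$ off the coefficient relations of Lemma \ref{lem:write-up01}.

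First, set $(\lambda,\sigma)=(-\hat{\alpha},-\bar{\beta}+\gamma)$. Since $\hat{\alpha}<0$, we have $\lambda>0$. We also assume $\hat{\alpha}\neq -\tfrac15$, which is the excluded value noted in the introduction. Lemma \ref{lem:seq07} then gives a locally unique fixed point $\boldsymbol{b}=\mathcal{R}[\boldsymbol{b}]$ for $|\gamma|<\gamma^\star$. Hence $\boldsymbol{a}=\tau[\boldsymbol{b}]$ is a locally unique fixed point of $\widehat{\mathcal{R}}$, because $\iota$ and $\tau$ are mutually inverse bijections and Lipschitz. By the univalency hypothesis, Proposition \ref{prop:write-up18} and Theorem \ref{thm:seq09} show that $Z(\zeta)$ solves Problem (a) and is locally unique. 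As in point 1 following Definition \ref{def:Problem3 a}, we set $\xi_V(z)=i\bar K(Z^{-1}(z))$. This is analytic in $\Omega$, and $\mathcal{Z}(s)=Z(e^{i\theta(s)})$ satisfies \eqref{e:Seq01}. The Laurent expansion \eqref{e:write-up17} of $K$, with $k_{-2},k_0$ given by \eqref{e:seq015}, reproduces the growth terms $i(\hat{\alpha}z^3+(\beta-\gamma)z)$ of \eqref{e:write-up02}. To check this, invert $\zeta=Z^{-1}(z)$ near $z=\infty$, using $Z=-\zeta^{-1}+2a_2\zeta+\dots$. The relations \eqref{e:write-up10} then hold by construction, with $\delta$ and $\omega$ defined through $\bar k_1,\bar k_2$. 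Local uniqueness for the vortex problem follows because every vortex solution gives a Problem (a) solution with the same $(\lambda,\sigma)$, and therefore gives a fixed point of $\mathcal{R}$ in the ball.

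Next we show $\delta=0$. From \eqref{e:write-up10}, $\delta=\bar k_1-3c_3\hat{\alpha}$. By Theorem \ref{thm:seq09} (or Corollary \ref{cor:write-up18}), $c_3=0$ and $a_3=0$. Moreover \eqref{e:seq10} gives $k_1=-\gamma\sum_m c_{m+1}\bar a_m - k_{-2}\bar a_3$. In this sum, for every $m$ one of $m$ and $m+1$ is odd, so each term vanishes. Hence $k_1=0$ and $\delta=0$.

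The main obstacle is showing $\omega\in\mathbb{R}$. By \eqref{e:write-up10}, $\omega=3\hat{\alpha}c_4-3c_2^2\hat{\alpha}+(\beta-\gamma)c_2-\bar k_2$, and $k_2$ is determined by \eqref{e:seq09}. The approach is to substitute $k_2=1+\gamma-\gamma\sum c_m\bar a_m-\bar a_4 k_{-2}-\bar a_2 k_0$, together with $c_2=2a_2$, $c_4=\tfrac23 a_4+\tfrac13a_2^2$, $k_0=6\hat{\alpha}\bar a_2+\sigma$ and $\sigma=-\bar\beta+\gamma$. Conjugating, the $\hat{\alpha}a_4$ terms cancel exactly, since $2\hat{\alpha}a_4$ appears from $3\hat{\alpha}c_4$ and $-\hat{\alpha}\cdot(-a_4)$ must combine with it. The $a_2^2$ terms should likewise combine into $|a_2|^2$-type terms. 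The $\beta$ terms pair up as $2\beta a_2$ against $-\beta a_2$ from $\bar k_0\, a_2$. We expect the total to reduce to a real expression of the form $\omega=-1-\gamma+\gamma\overline{\sum_m c_m\bar a_m}+\hat{\alpha}(\ldots)$. This is what happens at $\gamma=0$, where $\omega=-1-\hat{\alpha}^2-|a_2|^2$; there the fixed-point relation $a_2(1-25\lambda^2)=5\lambda\bar\sigma-\sigma$ is what makes the leftover $\beta a_2$ and $\hat{\alpha}a_2^2$ terms combine into real ones.

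For $\gamma\neq0$ we will use the fixed-point equations \eqref{e:seq11} for $j=2$ and $j=4$ to eliminate $a_2$ and $a_4$ in favour of $\gamma$-sums. The remaining task is to show that $\sum_m c_m\bar a_m$ and the $\gamma\mathcal{N}$-terms give a real combination. If the direct algebra does not close, the fallback is physical. The circulation $-2\pi\omega$ equals $\oint\nabla\psi\cdot d\boldsymbol{s}$, which is real since $\psi$ is real. Concretely, $\omega$ is the $z^{-1}$ coefficient of $-i\xi_V$, so $-2\pi i\omega$ equals a contour integral of $\xi_V\,dz$. On $\partial\Omega$, \eqref{e:Seq01} turns this into $\oint(d\bar{\mathcal Z}/ds - i\gamma\bar{\mathcal Z})\,d\mathcal Z$. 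Here $\oint \bar{\mathcal{Z}}'\mathcal Z'\,ds=\oint ds=L$ is real, and $\oint\bar{\mathcal Z}\,d\mathcal Z=2i\,\mathrm{Area}$ is purely imaginary, so $i\gamma\oint\bar{\mathcal Z}\,d\mathcal Z$ is real. Hence $\omega=-(L-2\gamma\,\mathrm{Area})/(2\pi)$, up to the sign and orientation conventions, which would be the cleanest route and which we would adopt as the actual argument.
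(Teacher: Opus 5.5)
Your argument is the paper's. You realise the locally unique Problem (a) solution as a vortex solution through $\xi_V=i\bar K\circ Z^{-1}$, read off $\delta=\bar k_1-3\hat{\alpha}c_3=0$ from $k_1=c_3=0$, and obtain $\omega\in\mathbb{R}$ by integrating \eqref{e:Seq01} around $\partial\Omega$. The algebraic detour through \eqref{e:seq09} is unnecessary and unfinished as written, so drop it.

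The one step to repair is the contour computation, which carries the whole content of $\omega\in\mathbb{R}$.
\begin{itemize}
\item Since $-i\xi_V$ has $z^{-1}$-coefficient $\omega$, one gets $\oint_{\partial\Omega}\xi_V\,dz=i\cdot 2\pi i\,\omega=-2\pi\omega$, not $-2\pi i\omega$. Read literally, your identity together with the reality of the right-hand side would force $\omega\in i\mathbb{R}$.
\item The area term has the opposite sign to yours: $-i\gamma\oint\bar{\mathcal{Z}}\,d\mathcal{Z}=-i\gamma\cdot 2iA=+2\gamma A$. Hence $-\omega=(|\partial\Omega|+2\gamma A)/(2\pi)$, which is exactly \eqref{e:CircReal}.
\end{itemize}

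Also, do not lean on the physical heuristic. The quantity $\oint\nabla\psi\cdot d\boldsymbol{s}$ vanishes identically, so it is not the circulation. More importantly, in the existence direction the single-valuedness of the real $\psi$ built from $\xi_V$ via \eqref{e:01} is precisely what $\omega\in\mathbb{R}$ provides. That argument would therefore be circular. The computation via \eqref{e:Seq01} is what actually proves the claim.
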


\begin{proof}
	As we discussed at the end of Section \ref{sec:write-up2.2}, univalent $Z
	(\zeta)$ determined by $\boldsymbol{c} = \mathcal{T} [\boldsymbol{a}]$ with
	$\boldsymbol{a} = \widehat{\mathcal{R}} [\boldsymbol{a}] \in
	\mathcal{B}_{\delta} (\boldsymbol{0})$ yields a solution to the Vortex
	Problem (i) so long as $\omega \in \mathbb{R}$. Moreover, following  \cite{CrowdyNelsonKrishnamurthy2021} and using the fact that $\mathcal{Z}(s)$ is an arc-length parametrisation of $\partial \Omega$,  it can be seen from  \eqref{e:Seq01} that $\omega$ is always real for 
	\begin{align}
	-\omega = \frac{1}{2\pi } \int_{\partial \Omega} \xi(z) dz = \frac{1}{2\pi} \int_{\partial \Omega} \left[\frac{d}{ds} \overline{\mathcal{Z}}(s) - i \gamma \overline{\mathcal{Z}}(s)\right] \frac{d \mathcal{Z}}{ds} ds = \frac{1}{2\pi} |\partial \Omega| + \frac{\gamma}{ \pi} |\Omega|  \label{e:CircReal}
	\end{align}		
	which expresses $\omega$ in terms of the vortex area $|\Omega|$ and perimeter $|\partial \Omega|$.\footnote{We are grateful to Ali Farsad for pointing out this argument.}
	Thus, the Vortex Problem
	(i) has a solution. Moreover, recall that, according to \eqref{e:write-up10},
	$\bar{k}_1 = 3 c_3 \hat{\alpha} + \delta$ whereas Corollary \ref{thm:seq09}
	ensures that $k_1 = 0$ and $c_3 = 0$, giving $\delta = 0$ as was claimed in
	the Corollary.
\end{proof}

\subsection{Univalency of solutions to Problem (a)}\label{sec:write-up3.4} 

Let $\| . \|_{\infty}$ denote the sup-norm on $| \zeta | = 1$, and let $Z
(\zeta)$ denote the locally unique solution to Problem (a). The following
``stability result'' ensures that $Z (\zeta)$ is univalent whenever it is
close to another univalent function $\tilde{Z}(\zeta)$ \textit{that need not
necessarily solve Problem (a)}.

\begin{proposition}
  \label{prop:23}Let $\tilde{Z}$ be analytic in $\mathbb{D} \setminus \{0\}$ with a simple pole at the origin such that $\res{\zeta=0}\tilde{Z}=-1$. Let $\| \tilde{Z} \|_{\infty} < \infty$ on $\partial \mathbb{D}$, and assume that
  \begin{equation}
    \tilde{m}_Z :=\inf_{\zeta_1, \zeta_2 \in \partial \mathbb{D},
    \zeta_1 \neq \zeta_2} \left| \frac{\tilde{Z} (\zeta_1) - \tilde{Z}
    (\zeta_2)}{\zeta_1 - \zeta_2} \right| > 0 \label{e:66}
  \end{equation}
  and $\| Z_{\zeta} - \tilde{Z}_{\zeta} \|_{\infty} <  \tilde{m}_Z$.\footnote{We are grateful to Ali Farsad for pointing out that we do not require the bound $\frac{\pi}{2} \tilde{m}_Z$ which was given in \cite{CrowdyTanveer2025}.} Then $Z: \mathbb{D} \to \Omega$ is univalent and $Z_{\zeta} \neq 0$ on $\partial \mathbb{D}$.
\end{proposition}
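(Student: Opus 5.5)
The plan is to reduce everything to the difference $h := Z - \tilde{Z}$. Since both $Z$ and $\tilde{Z}$ have a simple pole at $0$ with residue $-1$, the function $h$ is holomorphic in all of $\mathbb{D}$. By Assumption \ref{A-priori} the series for $Z$ and $Z_\zeta$ converge absolutely on $\partial\mathbb{D}$, so $h_\zeta$ is continuous up to the boundary. The maximum modulus principle then gives $|h_\zeta(\zeta)| \le \|Z_\zeta - \tilde{Z}_\zeta\|_\infty =: \epsilon < \tilde{m}_Z$ on $\overline{\mathbb{D}}$.

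\emph{Step 1: injectivity and nondegeneracy on the boundary.} For $\zeta_1 \neq \zeta_2$ on $\partial\mathbb{D}$, I integrate $h_\zeta$ along the straight chord joining them, which lies in $\overline{\mathbb{D}}$ and avoids $0$ unless the points are antipodal. In the antipodal case I use a slightly perturbed path, or pass to the limit, since only the bound on $h_\zeta$ matters and $h$ has no singularity at $0$. This gives $|h(\zeta_1)-h(\zeta_2)| \le \epsilon|\zeta_1-\zeta_2|$, hence
\begin{equation*}
|Z(\zeta_1)-Z(\zeta_2)| \ge (\tilde{m}_Z - \epsilon)\,|\zeta_1-\zeta_2| > 0 .
\end{equation*}
So $\Gamma := Z(\partial\mathbb{D})$ is a Jordan curve. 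Letting $\zeta_2 \to \zeta_1$ gives $|Z_\zeta(\zeta_1)| \ge \tilde{m}_Z - \epsilon > 0$, which proves the second claim.

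\emph{Step 2: counting preimages.} For $w \notin \Gamma$, the argument principle applied to $Z - w$, which has exactly one simple pole in $\mathbb{D}$, gives $N(w) - 1 = \mathrm{wind}(\Gamma, w)$, where $N(w)$ is the number of zeros of $Z - w$ in $\mathbb{D}$. For $w$ in the unbounded component of $\mathbb{C}\setminus\Gamma$ the winding number is $0$, so $N(w)=1$. For $w$ in the bounded component it is $\pm 1$, so $N(w) \in \{0,2\}$. It therefore remains to show that $\Gamma$ is traversed clockwise, so that the interior of the curve is not hit at all and $Z$ maps $\mathbb{D}$ one-to-one onto the exterior region $\Omega$ together with $\infty$.

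\emph{Step 3 (main obstacle): orientation.} I would fix the orientation by a homotopy $Z_t := \tilde{Z} + t h$, $t\in[0,1]$. Each $Z_t$ has the same pole structure, and Step 1 applies to it with $\epsilon$ replaced by $t\epsilon$. So every $Z_t(\partial\mathbb{D})$ is a Jordan curve with nonvanishing tangent $i\zeta\,\partial_\zeta Z_t$. The tangent winding number is $\pm1$ by the Umlaufsatz and varies continuously in $t$, so it is constant. Hence $Z$ has the same orientation as $\tilde{Z}$, which is clockwise because $\tilde{Z}$ is the reference univalent map with a pole at the origin, as in the intended use of the proposition.

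Equivalently, the argument principle for $Z_\zeta$, which has a double pole at $0$, gives the number of zeros of $Z_\zeta$ in $\mathbb{D}$ as $2 + \mathrm{wind}(Z_\zeta(\partial\mathbb{D}),0) = 1 + (\text{tangent winding})$. The clockwise case is precisely the case $Z_\zeta \neq 0$ in $\mathbb{D}$ and $N \equiv 0$ inside $\Gamma$.

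The point I expect to need care is that the hypotheses as literally stated give only boundary information on $\tilde{Z}$. I would make explicit that $\tilde{Z}$ is univalent in $\mathbb{D}$, or at least that $\tilde{Z}(\partial\mathbb{D})$ is negatively oriented, as the text preceding the proposition indicates. With that, the homotopy closes the argument.
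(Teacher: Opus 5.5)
Your Step 1 is the whole of the paper's proof. The paper (following Crowdy--Tanveer) obtains $|Z(\zeta_1)-Z(\zeta_2)|\ge(\tilde m_Z-\|Z_\zeta-\tilde Z_\zeta\|_\infty)\,|\zeta_1-\zeta_2|$ on $\partial\mathbb{D}$ and lets $\zeta_2\to\zeta_1$ to get $Z_\zeta\neq0$ there. Beyond that it only asserts that ``it is sufficient to prove univalency on $\partial\mathbb{D}$''. Your Steps 2--3 supply the justification for that reduction, which the paper omits. Your caution is warranted: for a map with a pole at the origin, boundary injectivity does not give univalence without an orientation condition. For instance, $\tilde Z=Z=-\zeta^{-1}+2\zeta$ satisfies all the stated conditions on $\tilde Z$ and on $\|Z_\zeta-\tilde Z_\zeta\|_\infty$, since the difference quotient is $2+1/(\zeta_1\zeta_2)$ and so $\tilde m_Z=1$. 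It maps $\partial\mathbb{D}$ injectively onto the counter-clockwise ellipse $\cos\theta+3i\sin\theta$, yet $Z(\pm 2^{-1/2})=0$ and $Z_\zeta(\pm i/\sqrt2)=0$. This is precisely your $N\equiv 2$ case.

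The hypothesis you add (univalence, or negative orientation, of $\tilde Z$) is the one presupposed by the sentence introducing the proposition (``close to another univalent function $\tilde Z$''). It also holds in the paper's applications. When the right-hand side of \eqref{e:write-up62} is positive, the non-polar part of $Z^0$ has derivative of modulus less than $1$ on $\partial\mathbb{D}$. Your homotopy from $-\zeta^{-1}$ then shows that $Z^0$ is negatively oriented.

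Alternatively, in the paper's setting you can get the orientation intrinsically and drop the extra hypothesis. For the Problem (a) solution, $Z_\zeta=y^2$ with $y$ single-valued and odd (Corollary \ref{cor:write-up18}). So the zeros of $Z_\zeta$ in $\mathbb{D}$ come in pairs $\pm\zeta_0$ of even multiplicity, and their number is divisible by $4$. Your count gives $1+(\text{tangent winding})\in\{0,2\}$, so the tangent winding must be $-1$. The example above violates exactly this, since $\sqrt{Z_\zeta}$ branches at $\pm i/\sqrt2$. Either way, your argument is correct and more complete than the paper's.
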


\begin{proof}
	We follow the proof of \cite{CrowdyTanveer2025}, Proposition 6, verbatim, so we
	will be brief. It is sufficient to prove univalency on $\partial
	\mathbb{D}$. For $\zeta_1 = e^{i \nu_1} \neq e^{i \nu_2} = \zeta_2$,
	one finds that
	\begin{equation}
		\left| \frac{Z (\zeta_1) - Z (\zeta_2)}{\zeta_1 - \zeta_2} \right| \geq
		\tilde{m}_Z - \| Z_{\zeta} - \tilde{Z}_{\zeta} \|_{\infty} =: \hat{\delta}>0.
		\label{e:write-up65}
	\end{equation}
	Since, $\hat{\delta}$ does not depend on $\zeta_1$ and $\zeta_2$, we find that  $Z$ is univalent and, moreover, upon letting $\zeta_1 \to \zeta_2$, that $Z_{\zeta} \neq 0$ on $\partial \mathbb{D}$.
\end{proof}

\begin{remark} \label{rem.25} 
	Note that the function $\tilde{Z} (\zeta)$ in the formulation
	of Proposition \ref{prop:23} is \textit{not} required to solve Problem 3
	a).  For instance, we may take $\tilde{Z} = Z^0 $ to be the $\gamma = 0$
	exact solution given in Section \ref{sec:write-up2.3}, or we may take
	$\tilde{Z} (\zeta)$ to be any approximative solution computed numerically,
	like those presented in the companion manuscript. 
	Indeed, for a given triplet
	$(\gamma, \lambda, \sigma)$ of parameters, and any explicitly known
	approximative solution $\tilde{Z} (\zeta)$, the condition \eqref{e:66} may
	be checked explicitly, and one may use the preceding proposition to
	verify univalence through a computer assisted proof. How, exactly, such computer assisted proof of univalence can be achieved is discussed in Appendix \ref{App:Univ}.
\end{remark}

Let us now discuss the case when $\tilde{Z} (\zeta) = Z^0 (\zeta)$ is the
exact $\gamma = 0$ solution in detail.

\begin{lemma}
  Let
  \begin{equation}
    m^0_Z = \inf_{\zeta_1, \zeta_2 \in \partial \mathbb{D}, \zeta_1 \neq
    \zeta_2} \left| \frac{Z^0 (\zeta_1) - Z^0 (\zeta_2)}{\zeta_1 - \zeta_2}
    \right| .
  \end{equation}
  Then
  \begin{equation}
    m^0_Z \geq 1 - 2 \left| a_2^0 \right| - \left| 2 \lambda +\left(a_2^0\right)^2 \right| - 2
    \lambda \left|a_2^0
    \right| - \lambda^2, ~ \text{ with } ~ a_2^0 = \frac{5 \lambda \bar{\sigma} - \sigma}{25 \lambda^2 - 1}.   \label{e:write-up62}
  \end{equation}
\end{lemma}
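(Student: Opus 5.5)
The plan is to write the difference quotient of $Z^0$ explicitly from \eqref{e:19} and bound it from below on $\partial\mathbb{D}$. With $\zeta_1,\zeta_2\in\partial\mathbb{D}$, $\zeta_1\neq\zeta_2$, the pole term contributes
\[
\frac{-\zeta_1^{-1}+\zeta_2^{-1}}{\zeta_1-\zeta_2}=\frac{1}{\zeta_1\zeta_2},
\]
which has modulus exactly $1$. Every other term is a monomial $c_j^0\zeta^{j-1}$ with $j-1\in\{1,3,5,7\}$. For these we use the standard identity
\[
\frac{\zeta_1^n-\zeta_2^n}{\zeta_1-\zeta_2}=\sum_{k=0}^{n-1}\zeta_1^k\zeta_2^{n-1-k}.
\]
On the unit circle this sum has modulus at most $n$. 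The reverse triangle inequality therefore gives
\[
\left|\frac{Z^0(\zeta_1)-Z^0(\zeta_2)}{\zeta_1-\zeta_2}\right|\geq 1-\sum_{j}(j-1)\,|c_j^0|.
\]

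Next we substitute the coefficients of \eqref{e:19}. These are $c_2^0=2a_2^0$, $c_4^0=\tfrac13(2\lambda+(a_2^0)^2)$, $c_6^0=\tfrac25 a_2^0\lambda$ and $c_8^0=\lambda^2/7$. Multiplying each by its degree $1,3,5,7$ respectively gives
\[
2|a_2^0|+\bigl|2\lambda+(a_2^0)^2\bigr|+2\lambda|a_2^0|+\lambda^2,
\]
which is exactly the right-hand side of \eqref{e:write-up62}. We use $\lambda>0$ to write $|\lambda|=\lambda$. Taking the infimum over $\zeta_1\neq\zeta_2$ finishes the proof.

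The only point requiring care is the form of $c_j^0$. They should be checked via \eqref{e:seq05} from $y_0$, with $a_2^0$ and $a_4^0=\lambda$ as the only nonzero coefficients. For instance,
\[
c_4=\tfrac13\bigl(2a_4+a_2^2\bigr),\qquad c_6=\tfrac15\cdot 2a_2a_4,\qquad c_8=\tfrac17 a_4^2.
\]
With that confirmed, I expect no real obstacle: the estimate is a direct termwise bound.
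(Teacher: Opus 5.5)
Your proposal is correct and is essentially the paper's argument: the paper writes $\zeta_{1,2}=e^{i(\theta\pm\varphi)}$, so that the difference quotient of $\zeta^{2k+1}$ becomes $u^{2k}U_{2k}(\cos\varphi)$ with $|U_{2k}|\le 2k+1$. That is exactly your geometric-sum bound, followed by the same termwise reverse triangle inequality on the coefficients of \eqref{e:19}. Your sign for the pole contribution, $1/(\zeta_1\zeta_2)$, is the correct one; the paper's $-u^{-2}$ has the opposite sign, which is immaterial because only its modulus $1$ enters.
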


\begin{proof}
  Let $\zeta_1 = e^{i (\theta + \varphi)}$ and $\zeta_2 = e^{i (\theta -
  \varphi)}$ with $\varphi \notin \pi \mathbb{Z}$, $u = e^{i \theta} $and $x =
  \cos (\varphi)$. Then, for $k \geq 0$, we have
  \begin{equation}
    \frac{\zeta_1^{2 k + 1} - \zeta_2^{2 k + 1}}{\zeta_1 - \zeta_2} = u^{2 k}
    U_{2 k} (x) \quad \text{and} \quad \frac{\zeta_1^{- 1} - \zeta_2^{-
    1}}{\zeta_1 - \zeta_2} = - u^{- 2} \label{e:write-up63}
  \end{equation}
  where $U_{2 k}$ denotes the Chebyshev polynomial of the second kind. 
  Using \eqref{e:write-up63}, we find
  \begin{equation}
    \frac{Z^0 (\zeta_1) - Z^0 (\zeta_2)}{\zeta_1 - \zeta_2} = - u^{- 2} + 2
    a_2^0 + \frac{2 \lambda + (a_2^0)^2}{3} u^2 U_2 (x) + \frac{2}{5} \lambda
    a_2^0 u^4 U_4 (x) + \frac{\lambda^2}{7} u^6 U_6 (x) . \label{e:write-up64}
  \end{equation}
  Using the fact that $| U_2 (x) | \leq 3$, $| U_4 (x) | \leq 5$ and $| U_6
  (x) | \leq 7$, the inequality \eqref{e:write-up62} now follows from
  \eqref{e:write-up64} and the reverse triangle inequality.
\end{proof}

Particularly, the inequality \eqref{e:write-up62} gives a direct way to check
whether the condition $m_Z^0 > 0$ holds. Of course, one may also use the
equation \eqref{e:write-up64} directly to check, numerically, whether $m_Z^0
\neq 0$. Since the sup-norm is weaker than the $\ell^1_{\rho}$ norm on
functions analytic on $\partial \mathbb{D}$, we find that $\| Z_{\zeta} -
Z_{\zeta}^0 \|_{\infty} < m^0_Z$ whenever $| \sigma |, | \lambda |$ are
sufficiently small (cf. Section \ref{sec:write-up3.2}). Moreover, whenever $|
\sigma |, | \lambda |$ are sufficiently small, it is readily seen from
\eqref{e:write-up62} that $m_Z^0 > 0$, so we find the following result.

\begin{corollary}
  \label{cor:write-up26}Let $| \lambda |, | \sigma |$ be
  small enough for $m_Z^0>0$, and let $Z (\zeta) = - \zeta^{- 1} + \sum_{j = 1}^{\infty} c_{2 j}
  \zeta^{2 j - 1}$ be the locally unique solution to Problem (a) for $|
  \gamma |$ small. Then, $Z (\zeta)$ is univalent, so the Vortex Problem (i)
  admits a locally unique solution.
\end{corollary}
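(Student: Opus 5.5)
The plan is to apply Proposition \ref{prop:23} with $\tilde{Z} = Z^0$, the exact $\gamma=0$ solution \eqref{e:19}. The only remaining hypothesis of that proposition is the estimate $\|Z_\zeta - Z^0_\zeta\|_\infty < m^0_Z$. Its other hypotheses hold by inspection: $Z^0$ is a Laurent polynomial with residue $-1$ at the origin, so it is bounded on $\partial\mathbb{D}$. By \eqref{e:write-up62}, $m^0_Z>0$ whenever $|\lambda|,|\sigma|$ are small. Here $|a_2^0| \le (5|\lambda|+1)|\sigma|/|1-25\lambda^2|$ is small, and the right-hand side of \eqref{e:write-up62} tends to $1$ as $(\lambda,\sigma)\to 0$. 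We keep $\lambda$ away from $1/5$, which is automatic for small $\lambda$.

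The core step is to bound $\|Z_\zeta - Z^0_\zeta\|_\infty$ on $|\zeta|=1$. Write $Z_\zeta - Z^0_\zeta = \sum_{j\ge 2}(j-1)(c_j - c^0_j)\zeta^{j-2}$. To control the weight $(j-1)$, pass to $\ell^1_\rho$ with $\rho<1$, which Lemma \ref{lem:seq07} permits for $\rho$ close to $1$. The elementary bound $(j-1)\rho^{j}\le C_\rho$ then gives
\[
\|Z_\zeta - Z^0_\zeta\|_\infty \le C_\rho \,\|\boldsymbol{c}-\boldsymbol{c}^0\|_\rho .
\]
Next use $\boldsymbol{c}=\mathcal{T}[\tau[\boldsymbol{b}]]$ and $\boldsymbol{c}^0=\mathcal{T}[\tau[\boldsymbol{b}^0]]$, where $\boldsymbol{b}^0=\iota[\boldsymbol{a}^0]=\mathcal{R}_{\gamma=0}[\boldsymbol{0}]=\Phi$. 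By \eqref{e:seq18} and the Lipschitz bound for $\tau$ with constant $\lambda_1$,
\[
\|\boldsymbol{c}-\boldsymbol{c}^0\|_\rho \le \lambda_1\Bigl(2+\tfrac{\lambda_1}{3}(\|\boldsymbol{b}\|_\rho+\|\boldsymbol{b}^0\|_\rho)\Bigr)\|\boldsymbol{b}-\boldsymbol{b}^0\|_\rho .
\]
Finally, since $\boldsymbol{b}=\mathcal{R}[\boldsymbol{b}]$ and $\Phi=\mathcal{R}_{\gamma=0}[\boldsymbol{b}]$ for every $\boldsymbol{b}$, we have $\boldsymbol{b}-\Phi = \gamma(\mathcal{N}[\mathcal{T}[\tau\boldsymbol{b}],\tau\boldsymbol{b}]+\mathcal{T}[\tau\boldsymbol{b}])$. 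Lemma \ref{lem:seq03} then gives $\|\boldsymbol{b}-\Phi\|_\rho \le |\gamma|\,C(r,\lambda_1)$ on $\mathcal{B}_r(\boldsymbol{0})$. Hence $\|Z_\zeta-Z^0_\zeta\|_\infty = O(|\gamma|)$, uniformly for $(\lambda,\sigma)$ in a small fixed neighbourhood.

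The main obstacle is ordering the smallness choices consistently. First fix a compact neighbourhood of small $(\lambda,\sigma)$ on which $m^0_Z \ge 1/2$, say, and $\lambda_1$ is bounded. Then fix $\rho\in(\rho^\star,1)$, with $\rho^\star$ and $r$ controlled uniformly over this neighbourhood; this holds because $r$ in \eqref{e:seq29} depends continuously on $(\rho,\sigma,\lambda)$. Only then choose $\gamma^\star$ so that both the contraction of Lemma \ref{lem:seq07} holds and $C_\rho\,\lambda_1(\cdots)\,|\gamma|C < 1/2$. With this order, Proposition \ref{prop:23} yields univalence of $Z$ on $\overline{\mathbb{D}}$ and $Z_\zeta\neq 0$ on $\partial\mathbb{D}$.

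It remains to confirm the form $Z = -\zeta^{-1}+\sum c_{2j}\zeta^{2j-1}$ and the vortex conclusion. Corollary \ref{cor:write-up18} gives $c_{2j+1}=0$. Assumption \ref{A-priori} requires the series for $y$ to be absolutely convergent and non-vanishing on $\partial\mathbb{D}$. Convergence follows from $\boldsymbol{a}\in\ell^1$, and non-vanishing follows from $Z_\zeta\neq0$. With these in place, Proposition \ref{prop:write-up18} and Theorem \ref{thm:seq09} show that $Z$ solves Problem (a). Corollary \ref{prop:write-up26} then yields the locally unique solution of the Vortex Problem (i), with $\lambda=-\hat\alpha$ and $\sigma=-\bar\beta+\gamma$.
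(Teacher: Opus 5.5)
Your proposal is correct and follows the same route as the paper: apply Proposition \ref{prop:23} with $\tilde{Z}=Z^0$, obtain $m^0_Z>0$ from \eqref{e:write-up62}, and control $\|Z_\zeta-Z^0_\zeta\|_\infty$ by the weighted $\ell^1_\rho$ fixed-point estimates of Section \ref{sec:write-up3.2}. You fill in details the paper only sketches, namely absorbing the weight $(j-1)$ by taking $\rho<1$ and giving an explicit bound through $\boldsymbol{b}-\Phi=\gamma(\cdots)$; the paper instead attributes the smallness of $\|Z_\zeta-Z^0_\zeta\|_\infty$ to small $|\sigma|,|\lambda|$ rather than small $|\gamma|$, but this difference is bounded by a multiple of $|\gamma|\,(|\sigma|+|\lambda|)$, so either reading works.
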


\begin{remark}
	Note that the conditions stated in Corollary \ref{cor:write-up26} are rather restrictive. Indeed, according to Remark \ref{rem.25},  the parameter range can be extended significantly through a computer assisted proof of univalence, by following the strategy detailed in Appendix \ref{App:Univ}. Thus, one may prove univalency for parameter ranges that are much closer to what is observed numerically the companion paper \cite{CKT}.
\end{remark}

Since, according to the discussion of Section \ref{sec:write-up2.2}, solutions
to Problem (b) also solve Problem (a), the results formulated within this
Section \ref{sec:write-up3.4} hold for any solution to Problem (b).

\section{On the existence of solutions to Problem (b)}\label{sec:write-up04}
Within this section, we will discuss the existence of locally unique solutions
to Problem (b). Note that the results presented within this section apply primarily to the bubble problem,
whereas the results formulated in Section \ref{sec:write-up3}, concerned
 \textit{both}, the bubble \textit{and} the vortex problem.

\subsection{The boundary condition \eqref{e:write-up75}
rewritten.}\label{sec:write-up4.1}

Let us introduce the notation
\begin{equation}
  Y (\zeta) = \zeta y (\zeta), \label{e:write-up74}
\end{equation}
where $y (\zeta) = \sqrt{Z_{\zeta} (\zeta)}$ for the conformal map $Z :
\mathbb{D} \rightarrow \Omega$ introduced in Section \ref{sec:write-up2.2}.
Let us rewrite the boundary condition \eqref{e:write-up75} that must be
satisfied by any solution to Problem (b).
\begin{equation}
  - 1 + 2 \Re \left( \frac{\zeta Y'}{Y} \right) + \frac{| W_{\zeta} |^2}{| Y
  |^2} - \gamma | Y |^2 = 0, \quad \text{on} \quad | \zeta | = 1.
  \label{e:C-01}
\end{equation}
Now, define
\begin{equation}
  F (\zeta) = \log (Y (\zeta)) \label{e:C-02}
\end{equation}
where the normalisation $Y (0) = 1$ guarantees that
\begin{equation}
  F (0) = 0.
\end{equation}
\begin{remark}
  \label{rem:write-up29}
  Within Section \ref{sec:4.6}, we will prove that there exists $F (\zeta)$, holomorphic for $| \zeta | < 1$ and analytic on $| \zeta | = 1$ for which corresponding $Y$ solves \eqref{e:C-01} (see Theorem \ref{thm:63}).  Thus all subsequent calculations will be justified a-posteriori, and should be viewed as formal computations
  that lead to the framework which we use in order to prove the existence of
  solutions to Problem (b).
\end{remark}

Bearing the previous remark in mind, let us, formally, re-write the boundary
condition \eqref{e:C-01} by using \eqref{e:C-02}. This gives
\begin{equation}
  \boldsymbol{M}  [F] :=- 1 + 2 \text{Re} (\zeta F') + |
  W_{\zeta} |^2 \exp (- 2 \text{Re} (F)) - \gamma \exp (2 \text{Re} (F)) = 0.
  \label{e:C-04}
\end{equation}
Let us re-write the equation \eqref{e:C-04}. To that end, let
\begin{equation}
  Y_0 (\zeta) = \zeta y_0 (\zeta), \quad \text{and} \quad F_0 (\zeta) = \log
  (Y_0 (\zeta)),
\end{equation}
with $y_0$ the exact $\gamma = 0$ solution given in Equation \eqref{e:18}.
In order to prove that the bubble problem admits a solution with $\gamma \neq
0$, we will look for solutions to \eqref{e:C-01} of the form
\begin{equation}
  F (\eta) = F_0 (\eta) + f (\eta), \quad \eta = \zeta^2 . \label{e:C-08}
\end{equation}
with $f \in \mathcal{S}$, where $\mathcal{S}$ is defined as follows.

\begin{definition}
  \label{def:C-01}Let $A (\mathbb{D})$ denote the ``disc-algebra'' of
  functions holomorphic in $\eta = \zeta^2$ that extend continuously onto
  $\partial \mathbb{D}$, equipped with the sup-norm
  \begin{equation}
    \| f \| = \sup_{\eta \in \partial \mathbb{D}} |f (\eta) |,
  \end{equation}
  and let
  \[ B_{\delta} = \{ f \in \mathcal{S} : \| f \| < \delta \} \]
  denote the $\delta$-ball. We consider the following two subspaces of $A
  (\mathbb{D})$.
  \begin{enumerate}
    \item $\mathcal{S} = \{ f \in A (\mathbb{D}) : f (0) = 0 \}.$ This is a
    Banach-subspace;
    \item $\mathcal{S}^{\omega} = \left\{ f \in \mathcal{S} : f ~
    \text{extends analytically onto} ~ \partial \mathbb{D} \right\}$.
  \end{enumerate}
\end{definition}

Theorem \ref{thm:seq09} ensures that for any solution to the bubble problem, which necessarily solves Problem (a), $Y (\zeta)$ will be a function of
$\eta$ only. So, the condition that $f$ depends only on $\eta$ can be imposed
without loss of generality. Our goal is to prove that for $\gamma \neq 0$,
there always exists a function $f (\eta) \in \mathcal{S}^{\omega}$ such that
$F (\zeta)$ solves the Equation \eqref{e:C-04}. To achieve this goal, let us
study which equation needs to be satisfied by such $f (\eta) \in
\mathcal{S}^{\omega}$. Computations show that, on $\partial \mathbb{D}$,
\begin{equation}
  \boldsymbol{M} [F_0] = \frac{B_0}{D (\eta)} - \gamma D (\eta) =: r (\eta) \label{e:C-10}
\end{equation}
with
\begin{equation}
  B_0 = \left( \frac{(25 \alpha^2 + 3) (\alpha^2 + 3)}{(5 \alpha^2 - 3)^2}
  \right) \left(\Gamma_0^2 - \frac{(3 - 5 \alpha^2)^2 (3 - 7
  \alpha^2)}{9 (25 \alpha^2 + 3)} \right) \label{e:C-11}
\end{equation}
and
\begin{equation}
  D (\eta) = \eta^{- 2} (A_4 (\eta^4 + 1) + i A_2 (\eta^3 - \eta) + A_0
  \eta^2) \label{e:C-12}
\end{equation}
where the real coefficients $A_0, A_2, A_4 \in \mathbb{R}$ are given by
\begin{equation}
  A_0 = \frac{\alpha^4}{9} + \frac{36 \Gamma_0^2 \alpha^2}{ (5 \alpha^2 -
  3)^2} + 1, \quad A_2 = \frac{2 \alpha (\alpha^2 - 3) \Gamma_0}{5 \alpha^2 -
  3}, \quad A_4 = \frac{\alpha^2}{3} . \label{e:C-13}
\end{equation}
\begin{remark}
  For $\gamma = 0$, $F_0 (\eta)$ must satisfy the boundary condition $r (\eta)
  = 0$. This requires $B_0 \equiv 0,$ which confirms that the constraint
  \eqref{e:C-15} must be satisfied by the parameters $\alpha$ and $\Gamma_0$
\end{remark}

\begin{remark}
  Note that $D (\zeta) = A_4 (\zeta^4 + \zeta^{- 4}) + i A_2 (\zeta^2 -
  \zeta^{- 2}) + A_0$, which is why we choose the notations $A_0$, $A_2$, and
  $A_4$.
\end{remark}

After some more computations, we obtain the following Lemma.

\begin{lemma}
  \label{lem:write-up33}Let $F = F_0 + f$, and let
  \begin{equation}
    U (\eta) = - \frac{| W_{\zeta} |^2 (\eta)}{2 D (\eta)}
    \label{e:write-up90}
  \end{equation}
  with $W_{\zeta}=\alpha(\zeta - \zeta^{-3}) + i\Gamma_0 /\zeta$ and $D$ given by \eqref{e:C-12}.
  Then, we have $\boldsymbol{M} [F] = 0$ if, and only if, the following
  equation is true for $f$.
  \begin{equation}
    \Re (\eta f' + U f) = - \frac{1}{4} r + \frac{1}{2} \gamma D \Re (f) +
    \frac{1}{4} \boldsymbol{N} [f] \label{e:C-20}, \quad \text{on} \quad
    \partial \mathbb{D}
  \end{equation}
  where
  \begin{equation}
    \boldsymbol{N} [f] = \boldsymbol{N}_{\!\!1} [f] + \gamma \boldsymbol{N}_{\!\!2} [f]
  \end{equation}
  with
  \begin{equation}
    \boldsymbol{N}_{\!\!1} [f] = 2 U (\exp (- 2 \Re f) - 1 + 2 \Re (f)), \quad 
    \boldsymbol{N}_{\!\!2} [f] = D (\exp (2 \Re f) - 1 - 2 \Re (f)),
  \end{equation}
\end{lemma}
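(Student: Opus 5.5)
The plan is a direct algebraic substitution of $F = F_0 + f$ into $\boldsymbol{M}[F]$, followed by a Taylor expansion of the two exponentials. Throughout, $\zeta\partial_\zeta = 2\eta\partial_\eta$, so $\zeta F' = \zeta F_0' + 2\eta f'(\eta)$, where $'$ on $f$ denotes differentiation in $\eta$. Hence
\begin{align*}
\boldsymbol{M}[F_0+f] ={}& -1 + 2\Re(\zeta F_0') + 4\Re(\eta f') \\
&+ |W_\zeta|^2 e^{-2\Re F_0}\, e^{-2\Re f} - \gamma e^{2\Re F_0}\, e^{2\Re f}.
\end{align*}
The key input is the identification, on $\partial\mathbb{D}$,
\begin{equation*}
e^{2\Re F_0} = |Y_0|^2 = D(\eta).
\end{equation*}
I would verify this from \eqref{e:18}: for $|\zeta|=1$ we have $|Y_0|^2 = Y_0(\zeta)\overline{Y_0(\zeta)} = (1+a_2\eta+\lambda\eta^2)(1+\bar a_2\eta^{-1}+\lambda\eta^{-2})$. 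Expanding gives $\lambda(\eta^2+\eta^{-2}) + (a_2+\lambda\bar a_2)\eta + (\bar a_2 + \lambda a_2)\eta^{-1} + 1 + |a_2|^2 + \lambda^2$. With $\lambda = \alpha^2/3 = A_4$ and $\sigma = 2i\Gamma_0\alpha$, one computes $a_2 = i\cdot(\text{real})$ and checks that $a_2 + \lambda\bar a_2 = iA_2$ and $1+|a_2|^2+\lambda^2 = A_0$. This is consistent with $D(\eta)$ as given in \eqref{e:C-12}. The check is routine algebra using \eqref{e:C-13}; it is the one computational step I expect to need care with, especially the sign conventions for $A_2$.

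Given this identification, \eqref{e:C-10} says that $-1 + 2\Re(\zeta F_0') + |W_\zeta|^2/D - \gamma D = r$. Substituting, $\boldsymbol{M}[F]$ equals
\begin{equation*}
r + 4\Re(\eta f') + \frac{|W_\zeta|^2}{D}\left(e^{-2\Re f}-1\right) - \gamma D\left(e^{2\Re f}-1\right).
\end{equation*}
Since $|W_\zeta|^2/D = -2U$, I would write $e^{-2\Re f}-1 = -2\Re f + (e^{-2\Re f}-1+2\Re f)$ and $e^{2\Re f}-1 = 2\Re f + (e^{2\Re f}-1-2\Re f)$. This gives
\begin{equation*}
\boldsymbol{M}[F] = r + 4\Re(\eta f') + 4U\Re f - \boldsymbol{N}_{\!\!1}[f] - 2\gamma D\Re f - \gamma\boldsymbol{N}_{\!\!2}[f].
\end{equation*}
Because $U$ and $D$ are real on $\partial\mathbb{D}$, we have $U\Re f = \Re(Uf)$. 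Setting $\boldsymbol{M}[F]=0$ and dividing by $4$ then yields exactly \eqref{e:C-20}. Each step is reversible, so the equivalence holds in both directions.

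The main obstacle is the identity $|Y_0|^2 = D$ on the unit circle, together with confirming that $D$ is real there: $\eta^{-2}(\eta^4+1)=\eta^2+\bar\eta^2$ is real, and $i(\eta-\eta^{-1}) = i(\eta-\bar\eta)$ is real. Everything else is bookkeeping.
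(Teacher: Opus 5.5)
Your proof is correct and is essentially the computation the paper leaves implicit when it says the lemma follows "after some more computations". You substitute $F=F_0+f$ (with $\zeta\partial_\zeta=2\eta\partial_\eta$), use $e^{2\Re F_0}=|Y_0|^2=D$ on $\partial\mathbb{D}$, and split the exponentials into linear parts plus the remainders $\boldsymbol{N}_{\!\!1}[f]$ and $\boldsymbol{N}_{\!\!2}[f]$; your deferred check of $|Y_0|^2=D$ does go through, since $a_2=-6i\Gamma_0\alpha/(5\alpha^2-3)$ gives exactly the $A_0$, $A_2$ and $A_4$ of \eqref{e:C-13}.
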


\subsection{The integral operator $\mathcal{Q}$}\label{s:C-02}

The discussions of Section \ref{sec:write-up2.2} and \ref{sec:write-up4.1}
dictate that in order to prove the existence of solutions to the bubble
problem, we need to show that the Equation \eqref{e:C-20} admits a solution $f
\in \mathcal{S}^{\omega}$. In order to prove this, we first need understand
the complex-analytic structure of the functions $r (\eta)$ and $U (\eta)$ for
this will enable us to reformulate the Equation \eqref{e:C-20} as a
first order integro differential equation which may, formally, be
integrated. Once this integration is achieved, the problem of proving the
existence of solutions to Problem (b) will be reformulated as a fixed-point
problem of the form $f = \mathcal{Q} [f]$ where $\mathcal{Q}$ is some
appropriate integral-operator.

The following Section \ref{sec:write-up4.2.1} will be dedicated to studying
the singularity structure of $r$ and $U$. The reformulation of \eqref{e:C-20}
as a fixed-point problem will be the topic of Section \ref{sec:write-up4.2.2},
wherein we will also introduce the integral operator $\mathcal{Q}$.

\subsubsection{Singularity structure of $r (\eta)$ and $U
(\eta)$.}\label{sec:write-up4.2.1}

\begin{lemma}
  \label{lem:write-up35}The function $r (\eta)$ has a second order pole at
  $\eta = 0$. Moreover, the following four points are first order poles of $r
  (\eta)$.
  \begin{equation}
    \eta_1 = \frac{i \tau_1 - \sqrt{4 - \tau_1^2}}{2}, \quad \eta_2 = \frac{i
    \tau_2 + \sqrt{4 - \tau_2^2}}{2}, \quad \eta_3 = \frac{i \tau_1 + \sqrt{4
    - \tau_1^2}}{2}, \quad \eta_4 = \frac{i \tau_2 - \sqrt{4 - \tau_2^2}}{2},
  \end{equation}
  where
  \begin{equation}
    \tau_{1, 2} = - \frac{3 (\alpha^2 - 3)}{\alpha (5 \alpha^2 - 3)}
    \Gamma_0 \mp \frac{3 (\alpha^2 + 3)}{\alpha
    (5 \alpha^2 - 3)} \sqrt{ \Gamma_0^2 + \frac{(5
    \alpha^2 - 3)^2}{27}} . \label{e:write-up92}
  \end{equation}
  The function $r (\eta)$ is holomorphic in $\mathbb{C} \setminus \{ 0,
  \eta_1, \eta_2, \eta_3, \eta_4 \}$. The choice of branch of the square root
  will be explained in Remark \ref{rem:write-up40}.
\end{lemma}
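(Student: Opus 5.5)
The function is $r(\eta) = B_0/D(\eta) - \gamma D(\eta)$, so its singularities come from the zeros of $D$ and from the pole of $D$ at $\eta=0$. The second term $\gamma D(\eta) = \gamma\eta^{-2}\bigl(A_4(\eta^4+1) + iA_2(\eta^3-\eta) + A_0\eta^2\bigr)$ is a Laurent polynomial. Since $A_4 = \alpha^2/3 \neq 0$, it has exactly a second order pole at $\eta=0$ when $\gamma\neq 0$, and is otherwise entire. The first term can be written
\[ \frac{B_0}{D(\eta)} = \frac{B_0\,\eta^2}{P(\eta)}, \qquad P(\eta) = A_4\eta^4 + iA_2\eta^3 + A_0\eta^2 - iA_2\eta + A_4 . \]
Because $P(0)=A_4\neq 0$, this term is holomorphic at $0$. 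Its only possible singularities are the zeros of the quartic $P$, and they are simple poles provided $B_0\neq 0$ and the zeros of $P$ are simple. So the plan is to factor $P$ explicitly.

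The key observation is that $P$ is self-reciprocal up to conjugation: $\eta^{-2}P(\eta) = A_4(\eta^2+\eta^{-2}) + iA_2(\eta-\eta^{-1}) + A_0$. Substituting $t = \eta - \eta^{-1}$ gives $\eta^2+\eta^{-2} = t^2+2$, so $D$ becomes the quadratic $A_4 t^2 + iA_2 t + A_0 + 2A_4$. Set $t = i\tau$, which gives
\[ A_4\tau^2 + A_2\tau - (A_0 + 2A_4) = 0, \]
a real quadratic in $\tau$ with real roots
\[ \tau_{1,2} = \frac{-A_2 \mp \sqrt{A_2^2 + 4A_4(A_0 + 2A_4)}}{2A_4}. \]
I will then insert $A_0$, $A_2$ and $A_4$ from \eqref{e:C-13}. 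The term $-A_2/(2A_4) = -3(\alpha^2-3)\Gamma_0/(\alpha(5\alpha^2-3))$ matches at once. For the discriminant, I need the algebraic identity
\[ A_2^2 + 4A_4(A_0 + 2A_4) = \frac{4\alpha^2(\alpha^2+3)^2}{(5\alpha^2-3)^2}\left(\Gamma_0^2 + \frac{(5\alpha^2-3)^2}{27}\right). \]
Checking this is the main computational step. The $\Gamma_0^2$ coefficients agree, since $4\alpha^2(\alpha^2-3)^2 + 48\alpha^2 = 4\alpha^2(\alpha^2+3)^2$. The $\Gamma_0$-free part also agrees: $\tfrac{4\alpha^2}{3}\bigl(\tfrac{\alpha^4}{9} + 1 + \tfrac{2\alpha^2}{3}\bigr) = \tfrac{4\alpha^2(\alpha^2+3)^2}{27}$. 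This reproduces \eqref{e:write-up92}.

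For each $\tau_k$, the equation $\eta - \eta^{-1} = i\tau_k$ is $\eta^2 - i\tau_k\eta - 1 = 0$, with roots $\eta = \bigl(i\tau_k \pm \sqrt{4-\tau_k^2}\bigr)/2$. These are exactly $\eta_1,\dots,\eta_4$. The ordering and labelling depend on the branch convention deferred to Remark \ref{rem:write-up40}.

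Simplicity of the poles needs three facts:
\begin{itemize}
\item $\tau_1 \neq \tau_2$, because the discriminant above is strictly positive for $\alpha>0$;
\item $\tau_k \neq \pm 2$, so the two roots for each $\tau_k$ are distinct;
\item $B_0 \neq 0$, i.e. we are off the $\gamma=0$ curve \eqref{e:C-15}.
\end{itemize}
If some $\tau_k=\pm2$ or $B_0=0$, those points degenerate or drop out, and I would state this as a genericity assumption. With these facts, $r$ has a second order pole at $0$, simple poles at $\eta_1,\dots,\eta_4$, and is holomorphic elsewhere. The only step that is not mechanical is the discriminant identity.
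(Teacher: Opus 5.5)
Your proposal is correct and is essentially the computation the paper omits as routine. The remark right after the lemma writes $\tau_{1,2}=\frac{1}{2A_4}\bigl(-A_2\pm\sqrt{A_2^2+4A_4(2A_4+A_0)}\bigr)$, which are exactly the roots of the quadratic you get from $\eta-\eta^{-1}=i\tau$. Your pairing $\eta\mapsto-1/\eta$ is also the symmetry the paper uses later. Two small points: in your $\Gamma_0^2$ check the added term should be $48\alpha^4$, not $48\alpha^2$. Your caveats $\gamma\neq0$ (for the double pole at $0$) and $B_0\neq0$ (for the poles at $\eta_1,\dots,\eta_4$) are genuinely needed, and the paper's statement assumes them tacitly.
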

The lemma's proof is computational since $r(\eta)$ is an explicitly given rational function, so we omit it for brevity. Moreover, we will require the following rational function $b:(0,\infty) \times \mathbb{R} \to \mathbb{R}$ defined by 
	\begin{align}
		b(\alpha,\Gamma_0) = \Gamma_0^2 - \frac{(3-5 \alpha^2)^2 (3-7 \alpha^2)}{9 (25 \alpha^2 + 3)}. \label{e:write-up93}
	\end{align}

\begin{remark}
  The points $\tau_1$ and $\tau_2$ may also be described in terms of the
  coefficients $A_0, A_2,$ and $A_4$ via $\tau_{1, 2} = \frac{1}{2 A_4} \left(
  - A_2 \pm \sqrt{A_2^2 + 4 A_4 (2 A_4 + A_0)} \right)$. However, the
  representation \eqref{e:write-up92} will be more advantageous for the
  purpose of studying the equation \eqref{e:C-20}, as we shall see in the
  sequel.
\end{remark}

\begin{remark}
  \label{rem:write-up37}Note that $B_0$, which was introduced in
  \eqref{e:C-11} and directly feeds into the definition of $r (\eta)$,
  satisfies $B_0 = \frac{(25 \alpha^2 + 3) (\alpha^2 + 3)}{(5 \alpha^2 - 3)^2}
  b (\alpha, \Gamma_0)$. Particularly, the constraint \eqref{e:C-15}, which must
  be satisfied by the parameters $\alpha$ and $\Gamma_0$ in order for the bubble
  problem to admit a solution for $\gamma = 0$, is equivalent to imposing $b
  (\alpha, \Gamma_0) = 0$.
\end{remark}

\begin{remark}
	Since $b(\alpha,\Gamma_0)$, defined in \eqref{e:write-up93}, is a rational function we shall write $b(\alpha,\Gamma_0)=0$ and $(\alpha,\Gamma_0) \in b^{-1}(0)$ interchangeably. The curve described by $b(\alpha,\Gamma_0) =0$ is shown in Figure \ref{Fig:b}, left.
\end{remark}

Now, note that if $b (\alpha, \Gamma_0) = 0$, we are guaranteed that $\tau_{1,
2} \in \mathbb{R}$. In fact, recalling that $\alpha > 0$, we find that $\tau_1
> 2$ and $\tau_2 < - 2$. Upon treating $b$ as independent parameter, solving \eqref{e:write-up93} for $\Gamma_0$ shows that $\Gamma_0$ depends continuously on $b$ and $\alpha$, whenever the inequality
	\begin{align}
		 b+ \frac{(3-5 \alpha^2)^2(3-7\alpha^2)}{9 (25 \alpha^2 + 3)} \geq 0 \label{e:4.20}
	\end{align}
holds.	The inequality \eqref{e:4.20} necessarily must hold to ensure that $\Gamma_0$ be real, as required for the bubble problem.
 Thus, according to \eqref{e:write-up92}, we have the
following Lemma.

\begin{lemma} \label{lem:C-04}
	Let $0< \alpha^2 < \frac{3}{5}$ and $\Gamma_0 \in \mathbb{R}$ such that $|b (\alpha, \Gamma_0) |$ is sufficiently small (but not necessarily vanishing). Then, the two roots $\eta_1$ and $\eta_2$
  lie inside the unit circle, and they are purely imaginary. The roots
  $\eta_3$ and $\eta_4$, on the other hand, lie outside the unit circle, and
  they are also purely imaginary. Moreover, these roots are bounded away from
  $\partial \mathbb{D}$.
\end{lemma}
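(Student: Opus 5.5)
The plan is to reduce everything to the sign of $|\tau_{1,2}|-2$. By construction, each pair $\eta_1,\eta_3$ (from $\tau_1$) and $\eta_2,\eta_4$ (from $\tau_2$) consists of the two roots of
\begin{equation*}
\eta^2 - i\tau\eta - 1 = 0 .
\end{equation*}
If $\tau$ is real with $|\tau|>2$, then $\sqrt{4-\tau^2}=\pm i\sqrt{\tau^2-4}$, so both roots equal $\tfrac{i}{2}\bigl(\tau\pm\sqrt{\tau^2-4}\bigr)$ and are purely imaginary. Their product is $-1$, so their moduli are $\tfrac12\bigl(|\tau|-\sqrt{\tau^2-4}\bigr)<1$ and $\tfrac12\bigl(|\tau|+\sqrt{\tau^2-4}\bigr)>1$. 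The gap to $1$ is controlled from below by $|\tau|-2$. With the branch convention of Remark \ref{rem:write-up40}, this gives $\eta_1,\eta_2\in\mathbb{D}$ and $\eta_3,\eta_4$ outside. It therefore suffices to show that $\tau_{1,2}$ are real, of opposite sign, and satisfy $\tau_1\ge 2+\epsilon$ and $\tau_2\le-2-\epsilon$ for some $\epsilon>0$ uniform in the admissible parameters.

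To verify this I will use the representation in terms of $A_0,A_2,A_4$: the $\tau_{1,2}$ are the roots of $p(\tau)=A_4\tau^2+A_2\tau-(2A_4+A_0)$ with $A_4=\alpha^2/3>0$. Since $p(0)<0$, the roots are real and of opposite sign. Both lie outside $[-2,2]$ iff $p(\pm2)<0$, that is, iff
\begin{equation*}
A_0-2A_4-2|A_2|>0 .
\end{equation*}
Inserting \eqref{e:C-13} and using $3-5\alpha^2>0$, I expect this to be a perfect square:
\begin{equation*}
A_0-2A_4-2|A_2| = \left(\frac{3-\alpha^2}{3}-\frac{6\alpha|\Gamma_0|}{3-5\alpha^2}\right)^2 ,
\end{equation*}
since the cross term reproduces $2|A_2|=4\alpha(3-\alpha^2)|\Gamma_0|/(3-5\alpha^2)$. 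Hence the quantity is nonnegative, and it vanishes only when $\Gamma_0^2=(3-\alpha^2)^2(3-5\alpha^2)^2/(324\alpha^2)$.

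The main step is to show that this degenerate set stays away from $b^{-1}(0)$. On $b=0$ we have $\Gamma_0^2=(3-5\alpha^2)^2(3-7\alpha^2)/(9(25\alpha^2+3))$. Equating the two expressions and writing $u=\alpha^2$, coincidence would require
\begin{equation*}
(3-u)^2(25u+3)=36u(3-7u), \quad\text{i.e.}\quad 25u^3+105u^2+99u+27=0,
\end{equation*}
which has no solution for $u>0$. So on $b=0$ the square above is strictly positive.

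Finally, $\Gamma_0^2=b+(\text{rational function of }\alpha)$ depends continuously on $b$, and $A_0-2A_4-2|A_2|$ is continuous in $(\alpha,\Gamma_0)$. Consequently, for $\alpha$ in a compact subset of the admissible range and $|b|$ small, the square stays bounded below by a positive constant. This gives the uniform $\epsilon$, and with it the uniform distance of the $\eta_j$ from $\partial\mathbb{D}$.
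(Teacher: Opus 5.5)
Your argument is correct, and its skeleton matches the paper's: reduce to $\tau_1>2$ and $\tau_2<-2$, verify this on $b^{-1}(0)$, then extend to small $|b|$ by continuity.

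The difference is that the paper only asserts the inequality at $b=0$ ("recalling that $\alpha>0$, we find $\tau_1>2$ and $\tau_2<-2$"). You prove it, and your proof yields more.

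\textbf{Global version.} The identity $A_0-2A_4-2|A_2|=\bigl(1-\lambda-|a_2^0|\bigr)^2$, with $\lambda=\alpha^2/3$ and $a_2^0$ as in \eqref{e:18}, holds for every real $\Gamma_0$ and every $0<\alpha^2<3/5$. So $|\tau_{1,2}|\ge 2$ holds globally, not just near $b=0$. The only obstruction is the explicit curve $|a_2^0|=1-\lambda$. This is exactly where $y_0$ vanishes on $\partial\mathbb{D}$: since $\sigma\in i\mathbb{R}$, one has $D(\eta)=Y_0(\eta)\bar{Y}_0(1/\eta)$, so $D=|Y_0|^2$ on the circle.

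\textbf{Explicit threshold.} Your cubic shows that on this curve
\[
b=\frac{(3-5u)^2\,(25u^3+105u^2+99u+27)}{324\,u\,(25u+3)}>0,\qquad u=\alpha^2 ,
\]
which makes the admissible smallness threshold for $|b|$ explicit.

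\textbf{Uniformity in $\alpha$.} The same formula shows this threshold tends to $0$ as $\alpha^2\to 3/5$. So there are real parameters with $\alpha^2<3/5$, arbitrarily small $|b|$, and $|\Gamma_0|<1$, for which $\eta_1=\eta_3$ (or $\eta_2=\eta_4$) lies on $\partial\mathbb{D}$. Your restriction to compact $\alpha$-ranges in the last step is therefore necessary, not a weakness. The lemma only holds with an $\alpha$-dependent smallness condition on $|b|$, or with $\alpha^2$ bounded away from $3/5$. The paper's continuity-in-$b$ argument tacitly needs the same restriction.
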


\begin{remark}  \label{rem:write-up40} 
	Let us comment on  Lemma \ref{lem:C-04}.
	\begin{enumerate}
		\item For the Lemma to be true, we
		must have $\tau_1 > 2$ and $\tau_2 < - 2$. Particularly, the square root
		appearing within Lemma \ref{lem:write-up35} is simply the usual square root
		function on $\mathbb{R }_+$ with branch chosen according to $\sqrt{1} = 1$;
		\item The condition that $\alpha^2 < 3/5$ is due to the fact that the point $\alpha^2 = 3/5$ is a polar singularity of $\tau_{1,2}$ and $B_0$.
		\item Since, for  $b \neq 0$, the inequality \eqref{e:4.20} is required to ensure $\Gamma_0 \in \mathbb{R}$, the formulation of Lemma \ref{lem:C-04} implicitly assumes the validity of \eqref{e:4.20}.  A generic region corresponding to ``$|b(\alpha, \Gamma_0)|$ small'', without regards to the inequality \eqref{e:4.20}, is shown in Figure \ref{Fig:b}, right.
		\item For $b >0$, the inequality \eqref{e:4.20} permits $\alpha^2 > 3/7$ whereas, for $b<0$, the inequality \eqref{e:4.20} necessarily enforces $\alpha^2 < 3/7$.
		\item Later, within Section \ref{sec:4.5}, we will further impose the condition that $0 \leq | \Gamma_0 | < 1$.
	\end{enumerate}
\end{remark}

	\begin{figure}[h!]
	\centering 
	\includegraphics[width=\textwidth]{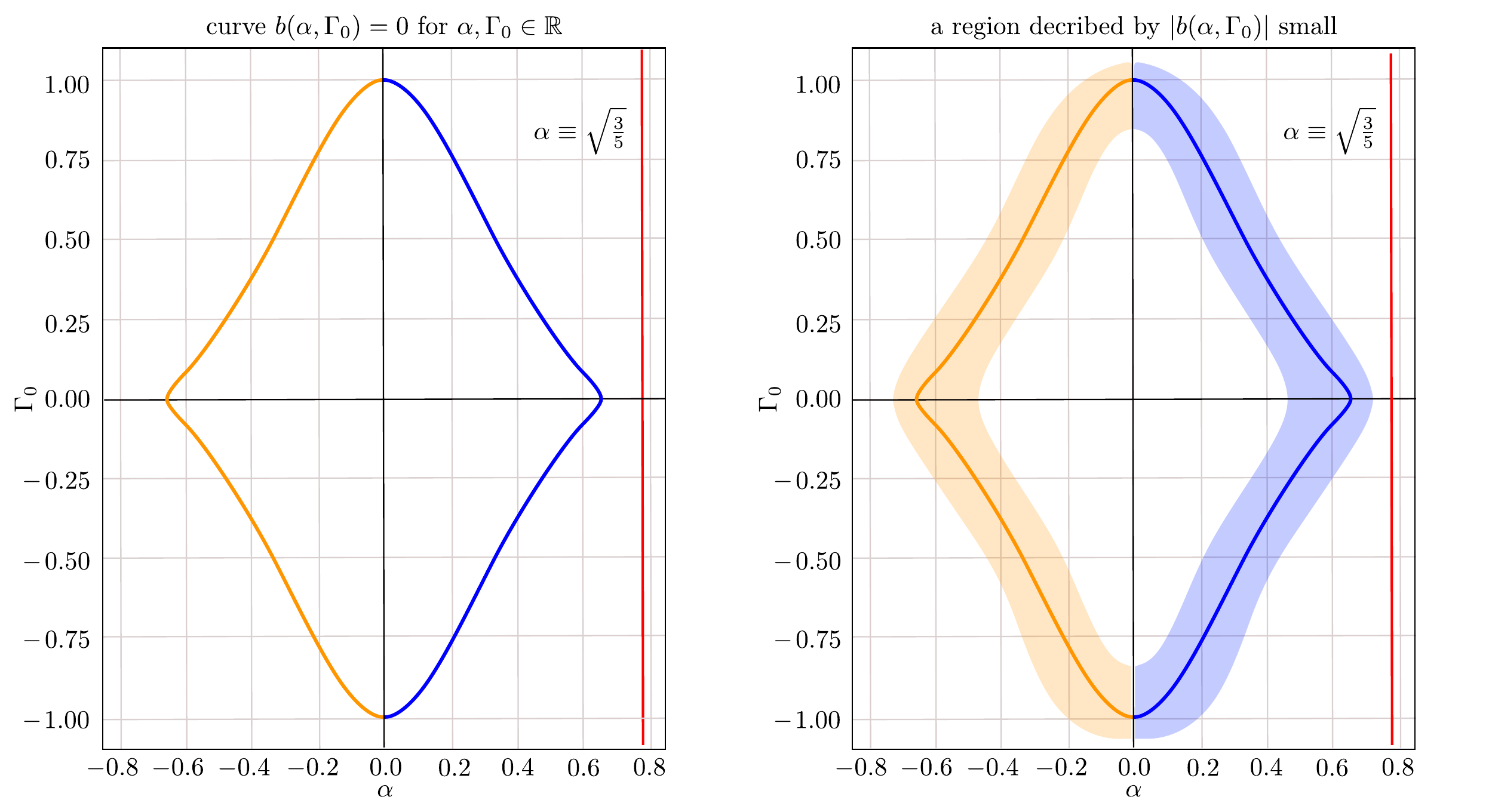}
	\caption{\small Left: Curve $b(\alpha,\Gamma_0)=0$. Right: Illustration of a region for which $|b(\alpha,\Gamma_0)|$ is ``small'', without regards to the inequality \eqref{e:4.20}, but not necessarily vanishing.  Recall that we only consider $\alpha>0$ throughout this article. On the left and right, we also see the line $\alpha \equiv \sqrt{3/5}$ whereon $\tau_1, \tau_2$, and $B_0$ exhibit a polar singularity.}
	\label{Fig:b}
\end{figure} 

Finally, let us describe the singularity structure of $U (\eta)$ within
$\mathbb{D}$, which one obtains by inspection of \eqref{e:write-up90}.

\begin{lemma} \label{lem:write-up40} Let $|b (\alpha, \Gamma_0) |$ be sufficiently small and let the inequality \eqref{e:4.20} hold. Then, the
  following statements are true.
  \begin{enumerate}
    \item Within $\mathbb{D}$, the function $U (\eta)$ has exactly two simple
    poles, which are located at $\eta = \eta_1$ and $\eta = \eta_2$
    
    \item $U$ is real-valued on $\partial \mathbb{D}$.
  \end{enumerate}
  Moreover, letting $r_j = \text{res}_{\eta = \eta_j} U (\eta)$ denote the
  residue of $U$ at $\eta_j$, we have $r_j \in i \mathbb{R}$.
\end{lemma}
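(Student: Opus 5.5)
The plan is to write $U$ explicitly as a rational function of $\eta=\zeta^2$ and read off its poles. On $|\zeta|=1$ we have $\bar\zeta=\zeta^{-1}$. With $W_\zeta=\alpha(\zeta-\zeta^{-3})+i\Gamma_0\zeta^{-1}$, the conjugate on the circle is $\overline{W_\zeta}=\alpha(\zeta^{-1}-\zeta^{3})-i\Gamma_0\zeta$. Hence
\[
|W_\zeta|^2=\bigl(\alpha(\zeta-\zeta^{-3})+i\Gamma_0\zeta^{-1}\bigr)\bigl(\alpha(\zeta^{-1}-\zeta^{3})-i\Gamma_0\zeta\bigr),
\]
which is a Laurent polynomial in $\eta$ with terms $\eta^{-2},\dots,\eta^{2}$. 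We take this as the analytic continuation defining $U(\eta)=-|W_\zeta|^2/(2D(\eta))$. Since $D(\eta)=\eta^{-2}P(\eta)$ with $P(\eta)=A_4(\eta^4+1)+iA_2(\eta^3-\eta)+A_0\eta^2$, we get $U=-\eta^2|W_\zeta|^2/(2P)$. The numerator $\eta^2|W_\zeta|^2$ is a polynomial of degree $4$ in $\eta$, so $U$ is holomorphic at $0$. Its only possible poles are the zeros of $P$.

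\textbf{Poles of $U$.} First locate the zeros of $P$. Because the coefficients are palindromic up to the $iA_2$ antisymmetry, divide by $\eta^2$ and set $t=(\eta-\eta^{-1})/i$, so that $\eta^2+\eta^{-2}=2-t^2$. This gives
\[
P/\eta^2=A_4(2-t^2)-A_2t+A_0 .
\]
Its roots are exactly $t=\tau_{1,2}$ in the form stated in the Remark after Lemma \ref{lem:write-up35}. Each $\tau$ then produces two roots of $\eta^2-i\tau\eta-1=0$, namely $\eta=(i\tau\pm\sqrt{4-\tau^2})/2$. These are the four points $\eta_1,\dots,\eta_4$ of Lemma \ref{lem:write-up35}.

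By Lemma \ref{lem:C-04} (with $\tau_1>2$, $\tau_2<-2$, via \eqref{e:4.20} and $|b|$ small), $\sqrt{4-\tau^2}$ is purely imaginary. So all four roots lie on $i\mathbb{R}$ and are simple, since they are distinct. The two roots for each $\tau$ have product $-1$, so exactly one lies in $\mathbb{D}$; Lemma \ref{lem:C-04} identifies these as $\eta_1$ and $\eta_2$.

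It remains to check that the numerator $\eta^2|W_\zeta|^2$ does not vanish at $\eta_1$ or $\eta_2$, so that these are genuine simple poles. This is the main obstacle. I would handle it by factoring the numerator in the same $t$-variable. Dividing by $\eta^2$ gives $|W_\zeta|^2=-\alpha^2(t^2+\dots)$, a quadratic in $t$ with roots expressible in $\alpha,\Gamma_0$. At $b=0$ one verifies explicitly that these roots differ from $\tau_{1,2}$ for $0<\alpha^2<3/5$; continuity then extends this to small $|b|$. As a cross-check, if a zero cancelled, $r=B_0/D-\gamma D$ would still have the pole, but $U$ would not. So a direct computation of the $t$-resultant is the fallback.

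\textbf{Reality on $\partial\mathbb{D}$ and imaginary residues.} On $|\eta|=1$, $|W_\zeta|^2\ge0$ is real. Also $D(\zeta)=A_4(\zeta^4+\zeta^{-4})+iA_2(\zeta^2-\zeta^{-2})+A_0$ equals $2A_4\cos4\theta-2A_2\sin2\theta+A_0$ on the circle, which is real. Hence $U$ is real there; this is claim 2, and $D\neq0$ on the circle because its roots are bounded away from $\partial\mathbb{D}$.

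For the residues, note that $U$ is real on $|\eta|=1$ and rational, so $\overline{U(1/\bar\eta)}=U(\eta)$. Moreover all coefficients of $U$ as a function of $\eta$ are of the form (real)$\cdot i^{k}$ matching $\eta^k$, because $t=(\eta-\eta^{-1})/i$ is real when $\eta\in i\mathbb{R}$. Thus $U(is)$ is real for real $s$, i.e.\ $\overline{U(-\bar\eta)}=U(\eta)$. Near a pole $\eta_j=is_j$, write $U\approx r_j/(\eta-is_j)$ and apply this symmetry: $\overline{r_j}/(-\eta+is_j)=r_j/(\eta-is_j)$. This gives $\overline{r_j}=-r_j$, so $r_j\in i\mathbb{R}$.
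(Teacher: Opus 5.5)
Your argument is correct and is exactly the explicit computation that the paper compresses into ``by inspection of \eqref{e:write-up90}''. The step you flag as the main obstacle closes in one line. In your variable $t$ one has $W_\zeta=i\zeta^{-1}(\alpha t+\Gamma_0)$, so the continued numerator is the perfect square $(\alpha t+\Gamma_0)^2$ (not $-\alpha^2(t^2+\dots)$). Moreover, $A_4(2-t^2)-A_2t+A_0$ evaluated at $t=-\Gamma_0/\alpha$ equals $\frac{(\alpha^2+3)^2}{9}\bigl(1+\frac{15\Gamma_0^2}{(5\alpha^2-3)^2}\bigr)>0$. Hence no cancellation at $\eta_1,\eta_2$ can occur for any real parameters, and the continuity argument from $b=0$ is unnecessary.
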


Lemma \ref{lem:C-04} will be crucial in the sequel since our analysis will rely on the fact that $\eta_1$ and $\eta_2$ not only lie within the unit disc, but are bounded away from $\partial \mathbb{D}$. That is, we will henceforth assume the following.\\

\begin{Regularity}[A-priori assumption on $\alpha$ and $\Gamma_0$.]\label{Assumption4.13}
	For the remainder of Section \ref{sec:write-up04}, we will \textit{always} assume that $\alpha$ and $\Gamma_0$ satisfy the conditions required for Lemma \ref{lem:C-04} to be true. Indeed, whenever we require that $|b(\alpha,\Gamma_0)|$ is ``sufficiently small'', we require, in particular, that the conditions of Lemma \ref{lem:C-04} are satisfied by $\alpha$ and $\Gamma_0$,  so  the inequality \eqref{e:4.20} holds.
\end{Regularity}

\subsubsection{Reformulation of the Equation \eqref{e:C-20}
}\label{sec:write-up4.2.2}

Let us use the results of the previous Section \ref{sec:write-up4.2.1} in
order to reformulate the Equation \eqref{e:C-20} as
an integro differential equation which may be integrated. To achieve this, let
\begin{equation}
  W_j (\eta) = - \left( \frac{r_j}{\eta - \eta_j} + \frac{r_j \eta}{1 + \eta_j
  \eta} \right), \quad j = 1, 2. \label{e:C-23}
\end{equation}
where $r_j$ denotes the residue of $U$ about $\eta = \eta_j$ and $\eta_j$
denotes either of $U$'s simple poles within $\mathbb{D}$. Note that, due to
Lemmas \ref{lem:C-04} and \ref{lem:write-up40}, the functions $W_j$ have the
following properties.
\begin{enumerate}
  \item $W_j$ is purely imaginary on $\partial \mathbb{D}$
  
  \item $\text{res}_{\eta = \eta_j} W_j = -r_j$
\end{enumerate}
Particularly, the function
\begin{equation}
  g  = \eta f' + U f + W_1 f (\eta_1) + W_2 f (\eta_2) \label{e:C-24}
\end{equation}
lies in $A(\mathbb{D})$ i.e., the purpose of introducing the functions $W_j$ is
to remove the poles of $U f$ whilst leaving the Equation \eqref{e:C-20}
invariant, in the sense that \eqref{e:C-20} becomes rewritten as
\begin{equation}
  \Re (\eta f' + U f + W_1 f (\eta_1) + W_2 f (\eta_2)) = - \frac{1}{4} r +
  \frac{1}{2} \gamma D \Re (f) + \frac{1}{4} \boldsymbol{N} [f] . \label{e:C-25}
\end{equation}
Next, introduce the Schwarz/Poisson-Operator\footnote{The following expression requires $g$ to be H\"older continuous on $\partial \mathbb{D}$. However, using Fourier theory, the formula for $\mathcal{P}[g]$ extends to $g \in L^2(\partial \mathbb{D})$, see \cite{Duren1970TheoryHpSpaces} Chapter 4.}
\begin{equation}
  \mathcal{P} [g] (\eta) = \frac{1}{2 \pi i} \int_{\partial \mathbb{D}}
  \frac{\chi + \eta}{\chi - \eta} g (\chi) \frac{d \chi}{\chi} .
  \label{e:C-26}
\end{equation}
$\mathcal{P}$ has the following properties that will be important for us (see,
for instance, \cite{Duren1970TheoryHpSpaces}, Chapter 4). For
referencing in the sequel, we shall formulate these results as a lemma.

\begin{lemma}\label{lem:write-up42}
  \hfill 
  \begin{enumerate}
    \item We consider $\mathcal{P}$ acting on real-valued functions $g :
    \partial \mathbb{D} \rightarrow \mathbb{R}$ that are continuous. For such
    $g$, $\mathcal{P} [g]$ is analytic in $\mathbb{D}$.
    \item For $g \in L^2 (\partial \mathbb{D}),$ we have $\mathcal{P} [g]
    (\eta) \rightarrow g (\eta) + i \mathcal{H} [g] (\eta)$ in $L^2 (\partial
    \mathbb{D})$ as $\eta \rightarrow \partial \mathbb{D}$ from within
    $\mathbb{D}$, where $\mathcal{H} [g] (\eta)$ denotes the $2\pi$-periodic
    Hilbert-transform.    
    \item For $g \in A(\mathbb{D})$, we have $\mathcal{P} [\Re (g)] = g - i \Im
    (g (0))$.
  \end{enumerate}
\end{lemma}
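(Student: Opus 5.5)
The three statements are classical properties of the Schwarz integral (cf.\ \cite{Duren1970TheoryHpSpaces}, Chapter 4). I would prove all of them from one Fourier-series representation of $\mathcal{P}$.

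\textbf{Kernel expansion.} Parametrise $\chi = e^{i\theta}$, so that $\frac{d\chi}{2\pi i \chi} = \frac{d\theta}{2\pi}$. Write $\hat g_n = \frac{1}{2\pi}\int_0^{2\pi} g(e^{i\theta})e^{-in\theta}\,d\theta$. For fixed $|\eta|<1$ the kernel expands, uniformly in $\chi \in \partial\mathbb{D}$, as
\begin{equation*}
\frac{\chi+\eta}{\chi-\eta} = 1 + 2\sum_{n\ge 1} \eta^n \chi^{-n}.
\end{equation*}
Integrating term by term, which is justified by uniform convergence and $g \in L^1$, gives
\begin{equation*}
\mathcal{P}[g](\eta) = \hat g_0 + 2\sum_{n\ge1} \hat g_n \eta^n, \qquad |\eta|<1 .
\end{equation*}
The coefficients are bounded by $\|g\|_{L^1}$, so this power series converges in $\mathbb{D}$. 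This proves item 1; alternatively one can differentiate under the integral sign, since the kernel is holomorphic in $\eta$ for $|\eta|<1$ uniformly in $\chi\in\partial\mathbb{D}$.

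\textbf{Item 2.} I use the Fourier-multiplier definition of the periodic Hilbert transform, $\widehat{\mathcal{H}[g]}_n = -i\,\mathrm{sgn}(n)\,\hat g_n$. Then $g + i\mathcal{H}[g]$ has Fourier coefficients $\hat g_0$ at $n=0$, $2\hat g_n$ for $n\ge1$, and $0$ for $n<0$. Set $\eta = \rho e^{i\theta}$. By Parseval,
\begin{equation*}
\|\mathcal{P}[g](\rho\,\cdot) - (g+i\mathcal{H}[g])\|_{L^2}^2 = 4\sum_{n\ge1}|\hat g_n|^2 (1-\rho^n)^2 .
\end{equation*}
This tends to $0$ as $\rho\to1^-$ by dominated convergence, because $\sum|\hat g_n|^2<\infty$. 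This is the claimed $L^2$ convergence. If one prefers a nontangential rather than radial approach, I would add the standard Poisson-maximal-function argument, but radial limits suffice for how the lemma is used later.

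\textbf{Item 3.} Let $g\in A(\mathbb{D})$ with Taylor series $g=\sum_{n\ge0} b_n\eta^n$. First I identify the Fourier coefficients of the boundary function with the $b_n$. Since $g(\rho e^{i\theta}) \to g(e^{i\theta})$ uniformly by continuity on $\overline{\mathbb{D}}$, the boundary function has Fourier coefficients $b_n$ for $n\ge0$ and $0$ for $n<0$. Hence $\Re g$ has coefficients $\Re b_0$ at $n=0$, $b_n/2$ for $n\ge1$, and $\bar b_{|n|}/2$ for $n<0$. Substituting into the expansion above gives
\begin{equation*}
\mathcal{P}[\Re g] = \Re b_0 + \sum_{n\ge1} b_n\eta^n = g - i\,\Im\, g(0).
\end{equation*}

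The only delicate point is this identification of the boundary Fourier coefficients with the Taylor coefficients for $g\in A(\mathbb{D})$, which rests on the uniform convergence just described. Everything else is bookkeeping of Fourier coefficients.
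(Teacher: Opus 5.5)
Your proof is correct. The paper gives no argument of its own for this lemma and simply cites Duren, Chapter~4, so your Fourier-series derivation is a self-contained proof of the classical facts being cited rather than a competing route.

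Compared with the textbook treatment, which uses the Poisson and conjugate Poisson kernels and M.~Riesz's theorem for the conjugate function in $L^p$, your approach is more elementary. The kernel expansion $\frac{\chi+\eta}{\chi-\eta}=1+2\sum_{n\ge1}\eta^n\chi^{-n}$ together with Parseval gives exactly the $L^2$ statement needed in item 2. In exchange, you have to fix the sign convention of $\mathcal{H}$ through its multiplier $-i\,\mathrm{sgn}(n)$. The paper leaves this implicit, but it is indeed the convention under which item 2 holds as written.

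Your aside about nontangential approach is not needed. The $L^2$-norm limit is naturally taken along the dilations $\rho\to1^-$, which is how the lemma is used later. A maximal-function argument would give a.e.\ pointwise nontangential convergence, which is a different statement.
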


Since $f (0) = 0$ and since, by Lemma \ref{lem:write-up40}, $U$ has no pole at
the origin, we have 
	\begin{align*}
		(\eta f' + U f) (0) = 0.
	\end{align*} 
So, applying the operator
$\mathcal{P}$ to \eqref{e:C-25}, we find, by Lemma \ref{lem:write-up42}, 3.,
that
\begin{equation}
  \eta f' + U f + W_1 f (\eta_1) + W_2 f (\eta_2) = \mathcal{P} [G] (\eta) +
  i \Im \left( \frac{r_1}{\eta_1} f (\eta_1) +
  \frac{r_2}{\eta_2} f (\eta_2) \right), \quad | \eta | < 1 \label{e:C-27},
\end{equation}
with
\begin{equation}
  G (\eta) = - \frac{1}{4} r (\eta) + \frac{1}{2} \gamma D (\eta) \Re (f
  (\eta)) + \frac{1}{4} \boldsymbol{N} [f] (\eta) . \label{e:C-28}
\end{equation}
Equation \eqref{e:C-27} is the aforementioned reformulation of \eqref{e:C-20}
which may be integrated. Indeed, using the fact that $\{ \eta_3, \eta_4 \} =
\left\{ - \frac{1}{\eta_1}, - \frac{1}{\eta_2} \right\}$ (which is obtained
from the symmetry $U (\eta) = U (- 1 / \eta)$), we find that
{\fontsize{10.2pt}{11pt}\selectfont
\begin{equation}
   f (\eta) = \frac{1}{\mu (\eta)} \int_0^{\eta} \frac{\mu
  (\chi)}{\chi} \left( - W_1 (\chi) f (\eta_1) - W_2
  (\chi) f (\eta_2) + i \Im \left( \frac{r_1}{\eta_1}
  f (\eta_1) + \frac{r_2}{\eta_2} f (\eta_2) \right) + \mathcal{P} [G]
  (\chi) \right) d \chi, \label{e:C-29}
\end{equation}}
with integrating factor $\mu (\eta)$ given by
\begin{equation}
  \mu (\eta) = \eta^{\frac{3}{2}} (\eta - \eta_1)^{\nu_1} (\eta -
  \eta_2)^{\nu_2} (1 + \eta \eta_1)^{- \nu_1} (1 + \eta \eta_2)^{- \nu_2},
  \quad \text{for} \quad \nu_j = \frac{r_j}{\eta_j}, \quad j = 1, 2.
  \label{e:C-30}
\end{equation}
\begin{remark}
  \label{rem:C-07}The equation \eqref{e:C-27}, evaluated at $\eta =
  0$, dicates that we necessarily have
  \begin{equation}
\Re (\nu_1 f (\eta_1) + \nu_2 f (\eta_2)) - \mathcal{P}
    [G] (0) = 0, \label{e:C-31}
  \end{equation}
  since $W_j (0) = \nu_j$. As we will show in Corollary \ref{cor:C-15}, the
  condition \eqref{e:C-31} guarantees that the right-hand side of
  \eqref{e:C-29} vanishes at the origin. Thus, we obtain no additional
  constant of the form $\frac{C}{\mu (\eta)}$ upon integrating \eqref{e:C-27}.
 % \textcolor{red}{Saleh, I agree with your last comment on this remark. Namely, that it may be better to postpone this remark until we give the Corollary  \ref{cor:C-15}. But I would like to give a reason as to why we have no  additional constant in \eqref{e:C-29}. How should we go about this?}
\end{remark}

Some comments are in place regarding the path of integration that is taken in
\eqref{e:C-29}, but we will postpone these comments until Remark
\ref{rem:C-10}. First, let us describe the domain of $\mu
(\eta) $. The function $\mu (\eta)$ has exactly three branch points within the
unit disc, and they are located at
\[ \eta = 0, \quad \eta = \eta_1 \quad \text{and} \quad \eta = \eta_2 . \]
So, for definiteness, let us introduce some branch cuts to render $\mu (\eta)$
single-valued. This is done as follows. Choose the cuts which give the domain
$\mathbb{D}_c$ shown in Figure \ref{Fig:03}, (1). That is, we
connect $\eta_1$ to $- 1 / \eta_1$ with a cut along the positive imaginary
axis, we connect $\eta_2$ to $- 1 / \eta_2$ with a cut along the negative
imaginary axis, and we connect $0$ to $\infty$ along the positive real-axis. \
Moreover, define
\begin{equation}
  \quad \mathbb{T}_c = \overline{\mathbb{D}}_c \cap \partial \mathbb{D}
  \label{e:C-32}
\end{equation}
Within \eqref{e:C-32}, the shores of the branch cuts are considered as part
of the boundary $\partial \mathbb{D}_c$ of $\mathbb{D}_c$, so $\mathbb{T}_c$
is the cut boundary of the unit disc, as displayed in Figure
\ref{Fig:03}, (2). The branch of $\mu (\eta)$ in these domains is
determined by $\arg (- 1) = \pi$. We can now factorise $\mu(\eta), ~ \eta \in \mathbb{D}_c,$ as follows.
\begin{equation}
  \mu (\eta) = \eta^{\frac{3}{2}} \left( \frac{\eta - \eta_1}{1 + \eta \eta_1}
  \right)^{\nu_1} \left( \frac{\eta - \eta_2}{1 + \eta \eta_2} \right)^{\nu_2}\hspace{-.35cm}. \label{e:C-33}
\end{equation}

\begin{figure}[h!]
	\includegraphics[width=\textwidth]{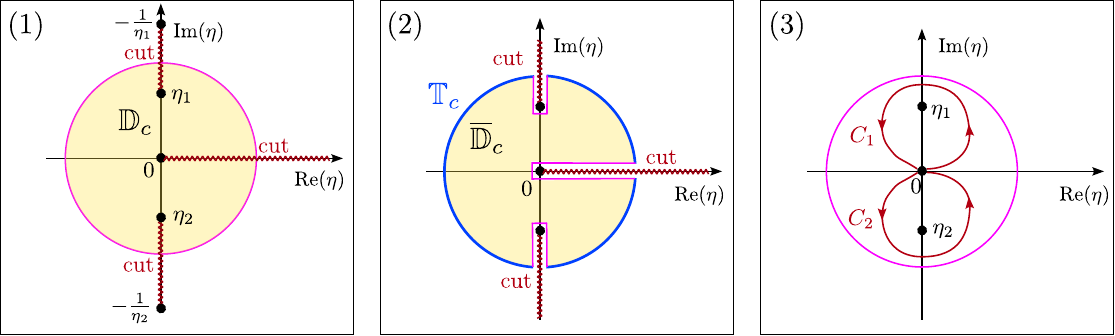}
		\caption{\small (1): Domain $\mathbb{D}_c.$ (2):   $\mathbb{D}_c$'s closure $\overline{\mathbb{D}}_c$ and the ``cut''-boundary $\mathbb{T}_c$. (1) and (2): branch cuts used for these sets' definitions. (3): curves $C_1$ and $C_2$ used within the definition of the integral operators $\mathcal{J}_1$ and $\mathcal{J}_2$ in \eqref{e:C-35}.}
			\label{Fig:03}
\end{figure}

\begin{notation}
  \label{def:C-08}We introduce the operator $\mathcal{Q}$, acting on
  $\mathcal{S}$, as
  \[ \mathcal{Q} [f] (\eta) = \frac{1}{\mu (\eta)} \int_0^{\eta} \frac{\mu
     (\chi)}{\chi} (- W_1 (\chi) f (\eta_1) - W_2 (\chi) f (\eta_2) +
    i \Im (\nu_1 f (\eta_1) + \nu_2 f (\eta_2)) +
     \mathcal{P} [G] (\chi)) d \chi . \]
  Recall that according to \eqref{e:C-28}, the function $G$ depends on
  $\gamma$ and $f$ even though this dependence is not explicitly highlighted.
  Recall also that all functions appearing in the definition of $\mathcal{Q}$
  depend on the parameters $\alpha$ and $\Gamma_0$. 
\end{notation}

\begin{remark}
  \label{rem:write-up46}At this point, neither are the mapping properties of
  $\mathcal{Q}$ known, nor has the path of integration been specified. Thus,
  one should treat the operator $\mathcal{Q}$ as a short-hand notation which
  allows us to re-write the integrated boundary condition \eqref{e:C-29} as $f
  = \mathcal{Q} [f]$.
\end{remark}

In order to study the mapping properties of $\mathcal{Q}$, we introduce
following intergral operators.

\begin{definition} \label{def:09}
   Let $p$ be meromorphic in $\mathbb{D} \setminus \{ 0, \eta_1,
  \eta_2 \}$ such that $p (z)$ grows at most like $z^{- 1}$ as $z \rightarrow
  0$. We define the integral operator $\mathcal{I}$ as
  \begin{equation}
    \mathcal{I} [p] (\eta) = \frac{1}{\mu (\eta) } \int_0^{\eta} \frac{\mu
    (\chi)}{\chi} p (\chi) d \chi  \label{e:C-34}
  \end{equation}
  and, moreover, the \textit{two} operators
  \begin{equation}
    \mathcal{J}_1 [p] = \int_{C_1} \frac{\mu (\chi)}{\chi} p (\chi) d \chi
    \quad \text{and} \quad \mathcal{J}_2 [p] = \int_{C_2} \frac{\mu
    (\chi)}{\chi} p (\chi) d \chi. \label{e:C-35}
  \end{equation}
  Here, $C_j$ is a loop in $\mathbb{D}$ that starts at $\eta = 0$, and
  encircles only $\eta_j$ exactly once in the positive direction, as displayed
  in Figure \ref{Fig:03}, (3). The path of integration in
  \eqref{e:C-35} is such that it is bounded away from the singularities
  $\eta_1$ and $\eta_2$ of $\mu$, and such that it fully lies within
  $\mathbb{D}_c$. Note that, due to the fact that $0$ is a branch point of $\mu(\chi)$, $C_1$ and $C_2$ are not closed loops.
\end{definition}

\begin{remark}
  \label{rem:C-10}Note that we have $\mathcal{Q} [f] = \mathcal{I} [p]$ for
  \[ p (\eta) = - W_1 (\eta) f (\eta_1) - W_2 (\eta) f (\eta_2) +
    i \Im (\alpha_1 f (\eta_1) + \alpha_2 f (\eta_2)) +
     \mathcal{P} [G] (\eta), \]
  and the path of integration in Equation \eqref{e:C-29} is then chosen
  according to Definition \ref{def:09}. That is, it lies fully within
  $\mathbb{D}_c$. The precise choice of path does not matter by Cauchy's
  theorem. 
\end{remark}

We note that Lemmas \ref{lem:C-04} and \ref{lem:write-up40} yield the
following result concerning the exponents $\nu_1$ and $\nu_2$. Since these
exponents feed directly into the definition of $\mu (\eta)$, and thus into the
definition of $\mathcal{Q}$, the following Corollary will play an important
role in proving the existence of solutions to the bubble problem.

\begin{corollary} \label{lem:C-11}
	Let $\alpha, \Gamma_0$ be real such that $ | b (\alpha,
  \Gamma_0) | $ is sufficiently small. Then, we have $\nu_j \in \mathbb{R}$ for $j = 1, 2$. If $b
  (\alpha, \Gamma_0) = 0$, then, $\nu_j = - 1$ for $j = 1, 2$.
\end{corollary}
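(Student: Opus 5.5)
Since $\nu_j = r_j/\eta_j$, the reality claim reduces to the statements already available. By Lemma \ref{lem:write-up40}, the residues satisfy $r_j \in i\mathbb{R}$. By Lemma \ref{lem:C-04}, the poles $\eta_1,\eta_2$ are purely imaginary and nonzero: they are bounded away from $\partial\mathbb{D}$, and $\eta_1\eta_3\neq 0$ because $\{\eta_3,\eta_4\}=\{-1/\eta_1,-1/\eta_2\}$. The quotient of two nonzero purely imaginary numbers is real, so $\nu_j\in\mathbb{R}$ for $j=1,2$. No further work is needed for the first assertion, apart from checking that the hypothesis ``$|b|$ sufficiently small'' together with Assumption \ref{Assumption4.13} places us within the scope of both lemmas.

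For the value at $b=0$, I would compute the residue explicitly. Write $|W_\zeta|^2$ on $\partial\mathbb{D}$ as a Laurent polynomial in $\eta=\zeta^2$, continued off the circle. With $\bar\zeta=1/\zeta$ we have
\begin{align*}
W_\zeta\,\overline{W_\zeta} = \Bigl(\alpha(\zeta-\zeta^{-3})+\tfrac{i\Gamma_0}{\zeta}\Bigr)\Bigl(\alpha(\zeta^{-1}-\zeta^{3})-i\Gamma_0\zeta\Bigr),
\end{align*}
which is a function $V(\eta)$ with a pole of order $2$ at $0$. Then $U=-V/(2D)$ with $D(\eta)=\eta^{-2}P(\eta)$, where $P(\eta)=A_4(\eta^4+1)+iA_2(\eta^3-\eta)+A_0\eta^2$. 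Hence $U=-\eta^2V(\eta)/(2P(\eta))$, and
\begin{align*}
\nu_j=\frac{r_j}{\eta_j}=-\frac{\eta_j\,V(\eta_j)}{2P'(\eta_j)}.
\end{align*}
Next I use the symmetry $P(\eta)=\eta^4P(-1/\eta)$, which makes $P$ self-reciprocal in $\eta\mapsto-1/\eta$. This allows the substitution $\eta+(-1/\eta)^{\!*}$, more precisely $t=i(\eta-\eta^{-1})$ or an equivalent variable, under which $\eta^{-2}P = A_4(t^2\text{-type quadratic})+\dots$ factors as $A_4(t-\tau_1)(t-\tau_2)$. Indeed $\eta_{1,2}$ are roots of $\eta^2-i\tau\eta-1=0$, i.e.\ $\eta-\eta^{-1}=i\tau$. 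Likewise $\eta^{-1}V(\eta)$ should be expressible as a quadratic in the same variable $\tau$. The residue computation then reduces to a rational expression in $\tau_j$, $\alpha$ and $\Gamma_0$, namely
\begin{align*}
\nu_j=-\frac{\tilde V(\tau_j)}{A_4(\tau_j-\tau_{3-j})\,(\text{Jacobian factor})}.
\end{align*}

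The final step is to show this equals $-1$ when $b=0$. I expect this to be the main obstacle, since it is a purely algebraic verification. I would first try a structural argument. When $b=0$, we have $B_0=0$, so $\boldsymbol{M}[F_0]=r=-\gamma D$, and the $\gamma=0$ solution is exact. The zeros of $D$ inside $\mathbb{D}$ then coincide with the zeros of $Y_0=1+a_2\eta+\lambda\eta^2$, since $D\propto|Y_0|^2$ on the circle. Meanwhile $|W_\zeta|^2=|Y_0|^2(1-2\Re(\eta F_0'))$ by $\boldsymbol{M}[F_0]=0$ at $\gamma=0$. Substituting gives $U=-\tfrac{1}{2}\cdot\mathrm{const}\cdot(1-2\Re(\eta Y_0'/Y_0))$. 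Near a zero $\eta_j$ of $Y_0$, the holomorphic part $-\tfrac12\cdot(-2\eta Y_0'/Y_0)$ has residue $\eta_j$, up to a sign fixed by the normalisation. This yields $r_j=-\eta_j$ and hence $\nu_j=-1$. The care needed lies in tracking the normalising constant relating $D$ to $|Y_0|^2$ and in the signs; if this bookkeeping fails, I would fall back on brute-force simplification of $\tilde V(\tau_j)$ using the relation \eqref{e:C-15}.
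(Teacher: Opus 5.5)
Your argument for $\nu_j\in\mathbb{R}$ is exactly the paper's. The paper states the corollary as an immediate consequence of Lemmas \ref{lem:C-04} and \ref{lem:write-up40}: $r_j,\eta_j\in i\mathbb{R}$, and $\eta_j\neq 0$ since $\eta_1\eta_3=\eta_2\eta_4=-1$. It leaves the value $\nu_j=-1$ on $b^{-1}(0)$ to computation. Your structural route to that value is the right idea, but its key step is wrong.

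Off the circle, $|Y_0|^2$ continues to $Y_0(\eta)\,h(\eta)$ with $h(\eta):=\bar Y_0(1/\eta)$. In fact $D=Y_0h$ holds identically, with constant exactly $1$, for all $(\alpha,\Gamma_0)$. The zeros of $Y_0=1+a_2\eta+\lambda\eta^2$, with $a_2\in i\mathbb{R}$, lie outside $\overline{\mathbb{D}}$ precisely when $|a_2|<1-\lambda$. On $b^{-1}(0)$ one has $|a_2|^2=4\alpha^2(3-7\alpha^2)/(25\alpha^2+3)<1/4$ and $\lambda\le 1/7$, so this holds, as it must since $y_0$ is nonvanishing in $\overline{\mathbb{D}}\setminus\{0\}$. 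Hence $\eta_3,\eta_4$ are the zeros of $Y_0$. The interior poles $\eta_1,\eta_2$ of $U$ are the zeros of the reflected factor $h$, not of $Y_0$. If you carry out your computation at a zero of $Y_0$, you get residue $+\eta_j$ and hence $\nu_j=+1$. The normalising constant is positive (it is $1$) and cannot flip that sign, so ``up to a sign fixed by the normalisation'' hides a genuine error, not bookkeeping.

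The repair is short. First, $2\Re(\zeta\partial_\zeta F)=4\Re(\eta F_0'(\eta))$. You wrote $2$, and then took $2g$ as the holomorphic part of $2\Re g$; these two slips cancel. With this, \eqref{e:C-10} at $B_0=0$ gives $U=-\tfrac12+2\Re(\eta Y_0'/Y_0)$ on $|\eta|=1$. On the circle, $\overline{\eta Y_0'(\eta)/Y_0(\eta)}=-\eta h'(\eta)/h(\eta)$, so the meromorphic continuation is
\[ U(\eta)=-\frac12+\frac{\eta Y_0'(\eta)}{Y_0(\eta)}-\frac{\eta h'(\eta)}{h(\eta)}. \]
In $\mathbb{D}$ only the last term is singular. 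It has residue $-\eta_j$ at the simple zeros $\eta_j$ of $h$, so $r_j=-\eta_j$ and $\nu_j=-1$.

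Keeping $B_0\neq 0$, the same computation gives $\nu_j=-1-B_0/(2\eta_jD'(\eta_j))$, where $\eta_jD'(\eta_j)\in\mathbb{R}$ because $D$ is real on $i\mathbb{R}$. This delivers at once the reality of $\nu_j$, its continuity in $b$, and the sign $\nu_j<0$ that Corollary \ref{cor:09} actually uses. Your brute-force fallback is only a plan and would not close the gap by itself.
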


Finally,  recall the Definition \ref{def:C-01} of $\mathcal{S}$ and
$\mathcal{S}^{\omega}$. Since $\mathcal{S}^{\omega}$ is not a Banach
space,\footnote{$\mathcal{S}^{\omega}$ is not complete, since the space of
polynomials is dense in $A (\mathbb{D})$.} we choose to work
within the space $\mathcal{S}$ instead. We will show in the
sequel that the equation $f = \mathcal{Q} [f]$ admits a solution $f \in
\mathcal{S}$ iff the parameters $(\gamma, \alpha, \Gamma_0)$ are constrained
according to \eqref{e:C-31}. The following lemma guarantees that whenever $|
\gamma | $ is sufficiently small, we have $f \in \mathcal{S}^{\omega}$. 

\begin{lemma} \label{lem:write-up43} 
  Let $| \gamma |$ be sufficiently
  small, and let $f = \mathcal{Q}[f] \in \mathcal{S}$. Then, the function $$y (\zeta) = \frac{1}{\zeta} \exp (F_0
  (\zeta^2) + f (\zeta^2))$$ extends analytically onto $\partial \mathbb{D}$,
  and, particularly, $f \in \mathcal{S}^{\omega}$.
\end{lemma}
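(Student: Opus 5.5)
The idea is to use the analogy with Problem (a): show that $y$ built from $f$ solves Problem (a), and then inherit boundary analyticity from the sequence-space results of Section \ref{sec:write-up3}.

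Let $f=\mathcal{Q}[f]\in\mathcal{S}$ and set $F=F_0+f$, $Y=\exp F$, $y=Y/\zeta$ and $Z=\int y^2$. Here $F_0$ is analytic on a neighbourhood of $\overline{\mathbb{D}}$, since $y_0$ in \eqref{e:18} is a polynomial times $\zeta^{-1}$ with no zeros near $\partial\mathbb{D}$ for the admissible parameters. So it suffices to show that $f$ continues analytically across $\partial\mathbb{D}$. The plan has three steps.

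\textbf{Step 1: regularity bootstrap.} From $f=\mathcal{Q}[f]$ we have \eqref{e:C-27}, so $\eta f'=\mathcal{P}[G]-Uf-W_1f(\eta_1)-W_2f(\eta_2)+\text{const}$. Since $f\in A(\mathbb{D})$ and $G$ is continuous, $\mathcal{P}[G]$ lies in every $L^p$ (Riesz). Hence $f'\in H^p$, so $f$ is Hölder continuous, and then $G$ is Hölder. Privalov's theorem then makes $\mathcal{P}[G]$ Hölder, so $f\in C^{1,\alpha}$ up to $\partial\mathbb{D}$. Iterating gives $f\in C^\infty(\overline{\mathbb{D}})$. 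This regularity is what justifies Assumption \ref{A-priori}: the coefficients of $y$ decay rapidly, the series is absolutely convergent on $\partial\mathbb{D}$, and it is non-vanishing there because $y=\zeta^{-1}e^{F}$. Also, since $|\gamma|$ is small, $f$ is small in sup norm (it is the contraction fixed point near $0$). So $y$ is close to $y_0$, and $Z$ is close to $Z_0$ in $C^1$, which makes it univalent by Proposition \ref{prop:23}.

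\textbf{Step 2: $y$ solves Problem (a).} With $f$ smooth, the boundary condition \eqref{e:C-04}, equivalently \eqref{e:write-up75}, holds pointwise on $\partial\mathbb{D}$. By the discussion after Definition \ref{def:Problem3 b}, a univalent solution of Problem (b) integrates to a solution of Problem (a) with $(\lambda,\sigma)=(\alpha^2/3,2i\Gamma_0\alpha)$. Its coefficient sequence $\boldsymbol{a}$ lies in $\ell^1$ with small norm and satisfies \eqref{e:seq11}, so $\iota[\boldsymbol{a}]$ is a fixed point of $\mathcal{R}$ in $\mathcal{B}_r(\boldsymbol 0)\subset\ell^1$. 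The smallness of $\boldsymbol{a}$ follows from $\|f\|$ small together with the decay from Step 1.

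\textbf{Step 3: conclusion.} By uniqueness in Lemma \ref{lem:seq07}, this fixed point coincides with the one that also lies in $\ell^1_\rho$ for some $\rho<1$. Hence $\sum|a_j|\rho^{-j}<\infty$, so $y$ is holomorphic in the annulus $\rho<|\zeta|<\rho^{-1}$ minus the origin, and extends analytically onto $\partial\mathbb{D}$. Since $y\neq0$ near $\partial\mathbb{D}$, $F=\log(\zeta y)$ also extends analytically, and so does $f=F-F_0$. Thus $f\in\mathcal{S}^\omega$.

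\textbf{Main obstacle.} The hardest step is Step 2: identifying the small $\ell^1$ ball precisely, i.e. showing $\|\iota[\boldsymbol a]\|_1<r$ from the sup-norm smallness of $f$. The sup norm does not control the $\ell^1$ norm, so I would first use the bootstrap of Step 1 to get $C^{1,\alpha}$ bounds on $f$ that are linear in $|\gamma|$ and $|b|$. Bernstein's theorem then gives $\ell^1$ control of the coefficients of $e^{f}$. If that route is awkward, the fallback is to run the bootstrap all the way in the class of functions analytic in a fixed annulus, using the smoothing of $\mathcal{P}$ on boundary data. That requires $G$ itself to continue analytically, which follows because $r$, $D$ and $U$ are rational with poles away from $\partial\mathbb{D}$ (Lemma \ref{lem:C-04}).
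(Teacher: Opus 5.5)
Your route is essentially the paper's. You extract from \eqref{e:C-27} enough boundary regularity to put the coefficients of $y$ in $\ell^1$, note that $Z$ then satisfies \eqref{e:write-up16}, and use Lemma \ref{lem:seq07} to upgrade to $\ell^1_\rho$. The paper gets the $\ell^1$ bound more cheaply ($\mathcal{P}[G]\in L^2$ gives $y_\zeta\in L^2(\partial\mathbb{D})$, then Cauchy--Schwarz), so the Privalov bootstrap is unnecessary.

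Two small corrections:
\begin{itemize}
\item Drop the appeal to Proposition \ref{prop:23}. Univalence plays no role, and invoking it imports the unstated hypothesis $m^0_Z>0$. Integrating \eqref{e:write-up75} only needs $y=\zeta^{-1}e^{F}\neq 0$, together with the evenness of $W_\zeta^2/Z_\zeta$ so that $\bar K$ in \eqref{e:2.11} is single-valued.
\item The check you flag as the main obstacle, that $\iota[\boldsymbol{a}]$ lies in the uniqueness ball of Lemma \ref{lem:seq07}, tacitly needs $f\in B_\delta$, which is not in the lemma's hypotheses. The paper skips this check. It follows from the same $L^2$ estimate made quantitative, combined with the algebra property of $\ell^1$ applied to $y=y_0\,e^{f(\zeta^2)}$, so Bernstein's theorem is not needed.
\end{itemize}
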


\begin{proof}
  Differentiating \eqref{e:C-29} we find that \eqref{e:C-27} holds.
  Therefore, as we let $\eta \rightarrow \partial \mathbb{D}$ from within
  $\mathbb{D}$, we have, according to Lemma \ref{lem:write-up42}, 2.,
  \begin{equation}
    \eta f' (\eta) = G (\eta) + i \mathcal{H} [G] (\eta) - [U (\eta) f (\eta) + W_1
    f (\eta_1) + W_2 f (\eta_2)] + i \Im (\nu_1 f (\eta_1) +
    \nu_2 f (\eta_2)) \label{e:write-up128}
  \end{equation}
  where the Equality \eqref{e:write-up128} is taken in $L^2 (\partial
  \mathbb{D})$. Therefore, %since $y (\zeta) = \frac{1}{\zeta} \exp (F_0 + f)$
 % is continuous and non-vanishing on $\partial \mathbb{D}$, 
 we find that
  $y_{\zeta} \in L^2 (\partial \mathbb{D}) \cong \ell^2$.
  So, by Cauchy-Schwarz
  \begin{equation}
    \left( \sum_{n = 2}^{\infty} | a_n | \right)^2 \leq \frac{\pi^2}{6} 
    \sum_{n = 2}^{\infty} | n |^2 | a_n |^2 < \infty
  \end{equation}
  i.e. $\{a_j\}_{j=2}^{\infty} \in \ell^1$, so the a-priori regularity assumption \ref{A-priori} is satisfied. Moreover, it is readily checked that the
  boundary condition \eqref{e:write-up75} holds for $y$. Upon integrating
  \eqref{e:write-up75}, we therefore find that the function $Z (\zeta)$,
  normalised according to \eqref{e:Seq03}, with $\sqrt{Z_{\zeta}} = y$ solves
 the equation \eqref{e:write-up16}. Thus, according to Lemma \ref{lem:3.13}, whenever $| \gamma | $ is sufficiently small, $\{a_j\}_{j=2}^{\infty} \in \ell^1_{\rho}$ for some $\rho \in (0,1)$, and so $y$ extends analytically
  onto $\partial \mathbb{D}$.
\end{proof}

\subsection{Analyticity properties of $\mathcal{Q}$}\label{s:C-03}

According to the previous Sections \ref{sec:write-up4.1} and \ref{s:C-02}, our
goal is to show that there exists a fixed point $f = \mathcal{Q} [f] \in
\mathcal{S}$ of $\mathcal{Q}$. We would like to achieve this by applying
Banach's fixed point theorem. This requires us to first investigate the
analyticity properties of the operator $\mathcal{Q}$. Indeed, as we will see
in this section, we are not guaranteed that $\mathcal{Q} [f] \in \mathcal{S}$,
unless we impose some additional constraints.

\begin{notation}
  Throughout the remainder of this article, we will use indexing modulo 2 as
  follows. If $j = 1$, then $j - 1$=0=2, and if $j = 2$, then $j - 1$=1. 
\end{notation}

\begin{notation}
  Whenever we speak of a function `analytic in $\mathbb{D} \setminus \{ \eta_j
  \}$, it is presumed that $\eta_j$ is either a pole or a branch point. If
  $\eta_j$ is a branch point, analyticity in $\mathbb{D} \setminus \{ \eta_j
  \}$ implies multivaluedness, but we shall always work within simply
  connected domains $U \subset \mathbb{D} \setminus \{ \eta_j \}$, so we may
  always choose a single-valued branch in $U$. 
\end{notation}

Let us study the analyticity properties of $\mathcal{Q}$. To this end, we will
use the following Lemma, which is proved in Appendix \ref{App.A}.

\begin{lemma}
  \label{lem:C-14}Let $\nu_j < 0$. Then, for $p$ as in Definition
  \ref{def:09}, $\mathcal{I} [p]$ is analytic within $\mathbb{D}_c$.
  Moreover, the following statements about the operator $\mathcal{I}$ are
  true.
  \begin{enumerate}
    \item If $(\eta - \eta_j) p$ is analytic within
    $\mathbb{D} \setminus \{ \eta_{j - 1} \}$,
   then $\mathcal{I} [p] (\eta)$ is analytic at $\eta =
    \eta_j$ iff
    \begin{equation}
      \mathcal{J}_j [p] = 0 . \label{e:C-36}
    \end{equation}
    \item If $p$ is analytic at $\eta = 0$, then $\mathcal{I} [p]$ is analytic
    at $\eta = 0$. Moreover, $p (0) = 0$ iff $\mathcal{I} [p] (0) = 0$.
    \item If, additionally, $p$ is continuous on $\partial \mathbb{D}$  then
    $\mathcal{I} [p]$ is continuous on $\mathbb{T}_c$.
  \end{enumerate}
\end{lemma}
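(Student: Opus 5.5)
The plan is to analyse the integrand $\frac{\mu(\chi)}{\chi}p(\chi)$ near each singular point separately, using the factorisation \eqref{e:C-33}. On $\mathbb{D}_c$ the function $\mu$ is a single-valued holomorphic branch that does not vanish on $\mathbb{D}_c$, and $p$ is analytic there. Hence the integral from $0$ to $\eta$ along any path in $\mathbb{D}_c$ is path independent by Cauchy's theorem, provided it converges at the endpoint $0$. Near $0$ we have $\mu(\chi)/\chi = O(|\chi|^{1/2})$ and $p=O(|\chi|^{-1})$, so the integrand is $O(|\chi|^{-1/2})$ and integrable. Thus $\mathcal{I}[p]$ is analytic in $\mathbb{D}_c$.

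For item 1, write $\mu(\chi) = (\chi-\eta_j)^{\nu_j} h_j(\chi)$, where $h_j$ is analytic and nonvanishing in a neighbourhood of $\eta_j$. The factor $(\chi-\eta_j)p(\chi) =: q_j(\chi)$ is analytic near $\eta_j$, so the integrand equals $(\chi-\eta_j)^{\nu_j-1}\phi_j(\chi)$ with $\phi_j = h_j q_j/\chi$ analytic at $\eta_j$.

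Fix a base point $\eta_*$ near $\eta_j$, with the integral from $0$ to $\eta_*$ understood as a constant $c_*$. Expand $\phi_j$ in powers of $(\chi-\eta_j)$ and integrate term by term. Because $\nu_j\in\mathbb{R}$ is negative, Corollary \ref{lem:C-11} lets us assume $\nu_j\notin\mathbb{Z}$ generically. We then get
\[
\int_0^\eta \frac{\mu}{\chi} p\, d\chi = C_j + (\eta-\eta_j)^{\nu_j}\Psi_j(\eta)
\]
with $\Psi_j$ analytic near $\eta_j$. In the integer case, $\nu_j=-1$ in particular, a logarithmic term $\log(\eta-\eta_j)$ times a residue coefficient may also appear; its coefficient is likewise detected by the loop integral. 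Dividing by $\mu = (\eta-\eta_j)^{\nu_j}h_j$ gives
\[
\mathcal{I}[p] = C_j (\eta-\eta_j)^{-\nu_j}/h_j + \Psi_j/h_j .
\]
The second term is analytic. The first is branched, or has the wrong behaviour, unless $C_j=0$ (when $-\nu_j$ is a positive integer, analyticity must instead be read through the log term).

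The key identification is $C_j = \mathcal{J}_j[p]/(1-e^{2\pi i\nu_j})$, up to the harmless factor coming from the branch at $0$. This follows by deforming $C_j$ into the segment from $0$ to a point near $\eta_j$, a small circle around $\eta_j$, and the return. The circle multiplies the $(\chi-\eta_j)^{\nu_j}$ branch by $e^{2\pi i\nu_j}$, so the regular part $\Psi_j$ contributes a monodromy that cancels against the constant.

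Precisely here lies the main technical obstacle: the endpoint $0$ is itself a branch point of $\mu$, through the factor $\chi^{3/2}$, so $C_j$ is not closed. The bookkeeping of the two branches along $C_j$ must be done carefully to show that $\mathcal{J}_j[p]$ vanishes exactly when the singular constant $C_j$ vanishes. I will handle it by choosing $C_j$ as the keyhole path described above, which reduces the statement to $\mathcal{J}_j[p] = (e^{2\pi i\nu_j}-1)C_j$. The condition $\nu_j\notin\mathbb{Z}$ makes the prefactor nonzero, and the integer case is treated separately via the log coefficient.

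For item 2, near $0$ we have $\mu = \chi^{3/2}m(\chi)$ with $m$ analytic and $m(0)\ne0$. If $p$ is analytic at $0$, the integrand is $\chi^{1/2}m(\chi)p(\chi)$. Its integral is $\eta^{3/2}\Theta(\eta)$ with $\Theta$ analytic and $\Theta(0)=\tfrac{2}{3}p(0)$. Dividing by $\eta^{3/2}m$ gives an analytic function with value $\tfrac{2}{3}p(0)/m(0)$ at $0$, which yields both claims.

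For item 3, $\eta_1$ and $\eta_2$ are bounded away from $\partial\mathbb{D}$ by Lemma \ref{lem:C-04}, and $\mu$ is continuous and nonvanishing on $\overline{\mathbb{D}}_c$ away from these points and $0$. Hence dominated convergence, with continuity of $p$ up to $\partial\mathbb{D}$, gives continuity of the integral up to $\mathbb{T}_c$, with each shore of a cut treated separately.
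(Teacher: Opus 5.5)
Your proof is correct. Its skeleton matches the paper's appendix proof:
\begin{itemize}
\item at $\eta_j$, a local expansion with a split into $\nu_j\notin\mathbb{Z}$ (a branched term $\propto(\eta-\eta_j)^{-\nu_j}$) and $\nu_j\in\mathbb{Z}$ (a $\log$ term whose coefficient is a residue);
\item for item 2, a power series in $\eta$ after stripping off $\eta^{3/2}$;
\item for item 3, inspection.
\end{itemize}

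Where you genuinely differ is in how the obstruction constant is tied to $\mathcal{J}_j[p]$.

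The paper moves $\eta_j$ to $\xi=0$ by the M\"obius map $\xi=(\eta-\eta_j)/(1+\eta\eta_j)$. It expands the regular factor $h_j$ about $\xi=0$ in a disc whose boundary passes through $-\eta_j$, the image of the origin. It then invokes Flajolet--Odlyzko singularity analysis ($h_{k,j}\sim \mathrm{const}\cdot k^{-3/2}\eta_j^{1/2-k}$) so that the series converges at $-\eta_j$ and can be integrated term by term from there. The obstruction becomes the explicit sum $\sum_k h_{k,j}(-\eta_j)^k/(k+\nu_j)$. Its equivalence with $\mathcal{J}_j[p]=0$ is then only asserted.

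You instead split the path at an interior base point $\eta_*$, so you only use the local expansion strictly inside its disc of convergence. The keyhole decomposition then gives the exact relations
\begin{itemize}
\item $\mathcal{J}_j[p]=(1-e^{2\pi i\nu_j})\,C_j$ for $\nu_j\notin\mathbb{Z}$;
\item $\mathcal{J}_j[p]=2\pi i\,\phi_n$ for $\nu_j=-n$.
\end{itemize}
This is more elementary: it avoids any boundary-convergence or transfer theorem, and it supplies precisely the step the paper leaves implicit. What the paper's route buys is an explicit series for the constant, which the lemma itself does not need.

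A few small slips, none of which affects the argument:
\begin{itemize}
\item Your two formulas for the monodromy factor disagree in sign. Direct computation gives $(1-e^{2\pi i\nu_j})C_j$.
\item There is no ``factor coming from the branch at $0$''. The keyhole path leaves and returns to $0$ without winding around it, and the integrand is $O(|\chi|^{-1/2})$ on both legs.
\item In item 2, $\Theta(0)=\tfrac23 m(0)p(0)$, so $\mathcal{I}[p](0)=\tfrac23 p(0)$.
\item Corollary \ref{lem:C-11} does not let you assume $\nu_j\notin\mathbb{Z}$ ``generically''. In fact $\nu_j=-1$ exactly on $b^{-1}(0)$, which is the central case for the paper. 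Your separate treatment of the integer case via the log coefficient is therefore essential, not an afterthought.
\end{itemize}
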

Note that whenever $(\eta - \eta_1) (\eta - \eta_2) p \in \mathcal{S} $ we
have the inequalities
\begin{equation}
  | \mathcal{I} [p ] (\eta) | \leq c_{\mathcal{I}} \sup_{\eta \in
  \mathbb{\partial \mathbb{D}}} | p (\eta)  | , ~ \text{ for } ~ \eta \in
  \mathbb{T}_c, ~ \text{ and } ~  | \mathcal{J}_j [p ] | \leq c_{\mathcal{} j}^{\mathcal{J}} \sup_{\eta \in
  	\partial \mathbb{D}} | p (\eta) |. \label{e:C-38}
\end{equation}
The constants $c_{\mathcal{I}}$ and $c_{\mathcal{} j}^{\mathcal{J}}$ in
\eqref{e:C-38} are independent of the locations of $\eta_1$
and $\eta_2$, since the paths of integration are chosen in $\mathbb{D}$ to be bounded away
from these singularities. In particular, the constants $c_{\mathcal{I}}$ and
$c_{j}^{\mathcal{J}}$ are independent of the parameters $\alpha$
and $\Gamma_0$.

Let us now use Lemma \ref{lem:C-14} to study the analyticity properties of
$\mathcal{Q}$. First, the following Corollary gives criteria for when
$\mathcal{Q} [f]$ is analytic at $\eta = \eta_1$ and $\eta = \eta_2$.

\begin{corollary}\label{cor:09} Let $| b (\alpha, \Gamma_0) |$ be sufficiently small, and recall the a-priori assumption \ref{Assumption4.13} on $\alpha$ and $\Gamma_0$, which we impose throughout Section \ref{sec:write-up04}.
  Then, $\mathcal{Q} [f] (\eta)$ is analytic at $\eta = \eta_1$ if, and only
  if,
  \begin{equation}
    \mathcal{J}_1 [- W_1 (\eta) f (\eta_1) - W_2 (\eta) f (\eta_2) +
   i \Im (  \nu_1 f
    (\eta_1) + \nu_2 f (\eta_2)) + \mathcal{P} [G] (\eta)] = 0. \label{e:C-39}
  \end{equation}
  $\mathcal{Q} [f]$ is analytic at $\eta = \eta_2$ if, and only if,
  \begin{equation}
    \mathcal{J}_2 [- W_1 (\eta) f (\eta_1) - W_2 (\eta) f (\eta_2) +
 i \Im (\nu_1 f
    (\eta_1) + \nu_2 f (\eta_2)) + \mathcal{P} [G] (\eta)] = 0. \label{e:C-40}
  \end{equation}
\end{corollary}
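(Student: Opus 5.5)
The plan is to apply Lemma \ref{lem:C-14}, 1., with $p$ the integrand of $\mathcal{Q}$, i.e.\ (cf.\ Remark \ref{rem:C-10})
\[
p(\eta) = - W_1 (\eta) f (\eta_1) - W_2 (\eta) f (\eta_2) + i \Im (\nu_1 f(\eta_1) + \nu_2 f (\eta_2)) + \mathcal{P} [G] (\eta),
\]
so that $\mathcal{Q}[f] = \mathcal{I}[p]$. Two hypotheses of that lemma must be checked: $\nu_j<0$ and the analyticity of $(\eta-\eta_j)p$ in $\mathbb{D}\setminus\{\eta_{j-1}\}$. Once they hold, Lemma \ref{lem:C-14}, 1., states that $\mathcal{I}[p]$ is analytic at $\eta_j$ iff $\mathcal{J}_j[p]=0$. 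For $j=1$ this is \eqref{e:C-39}, and for $j=2$ it is \eqref{e:C-40}.

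\textbf{Sign of $\nu_j$.} By Corollary \ref{lem:C-11}, $\nu_j\in\mathbb{R}$ when $|b(\alpha,\Gamma_0)|$ is small, and $\nu_j=-1$ on $b^{-1}(0)$. The exponents $\nu_j = r_j/\eta_j$ are built from the residues of the explicit rational function $U$ at the simple poles $\eta_j$. Under Assumption \ref{Assumption4.13} these poles are bounded away from $\partial\mathbb{D}$ and from each other, so the $\nu_j$ depend continuously on $(\alpha,\Gamma_0)$. Hence $\nu_j<0$ after shrinking $|b|$ if necessary.

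\textbf{Structure of $p$.} The $i\Im(\cdot)$ term is a constant, so it is entire. The term $\mathcal{P}[G]$ is analytic in $\mathbb{D}$ by Lemma \ref{lem:write-up42}, 1., because $G$ in \eqref{e:C-28} is real-valued and continuous on $\partial\mathbb{D}$. Continuity of $G$ follows from three facts: $f\in\mathcal{S}$, $U$ is real and bounded on $\partial\mathbb{D}$ by Lemma \ref{lem:write-up40}, and $D$ does not vanish on $\partial\mathbb{D}$ because its zeros are the $\eta_k$, which lie off the circle by Lemma \ref{lem:C-04}. Finally, $W_k$ in \eqref{e:C-23} has a simple pole at $\eta_k$ and a second pole at $-1/\eta_k$, which lies outside $\overline{\mathbb{D}}$. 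Therefore $(\eta-\eta_j)\,W_j$ is analytic in $\mathbb{D}$, while $(\eta-\eta_j)\,W_{j-1}$ is analytic in $\mathbb{D}\setminus\{\eta_{j-1}\}$. So $(\eta-\eta_j)p$ is analytic in $\mathbb{D}\setminus\{\eta_{j-1}\}$, as required. The behaviour of $p$ near $0$ needed in Definition \ref{def:09} is immediate, since $p$ is analytic there.

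\textbf{Expected obstacle.} The only delicate point is the sign $\nu_j<0$ away from $b=0$. This rests on the continuity of $\nu_j$ in $(\alpha,\Gamma_0)$, which Assumption \ref{Assumption4.13} secures. With both hypotheses verified, the corollary follows directly from Lemma \ref{lem:C-14}.
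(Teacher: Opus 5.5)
Your proposal is correct and follows the paper's proof. You take $p$ as in \eqref{e:C-41} so that $\mathcal{Q}[f]=\mathcal{I}[p]$, obtain $\nu_j<0$ from Corollary \ref{lem:C-11}, and apply Lemma \ref{lem:C-14}, 1. You also check the full hypothesis that $(\eta-\eta_j)p$ is analytic in $\mathbb{D}\setminus\{\eta_{j-1}\}$, and you spell out the continuity step giving $\nu_j<0$; the paper leaves both of these implicit.
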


\begin{proof}
  Let
  \begin{equation}
    p (\eta) = - W_1 (\eta) f (\eta_1) - W_2 (\eta) f (\eta_2) +
 i \Im (  \nu_1 f
    (\eta_1) + \nu_2 f (\eta_2)) + \mathcal{P} [G] (\eta), \label{e:C-41}
  \end{equation}
  so that $\mathcal{Q} [f] = \mathcal{I} [p]$. The condition that $| b(\alpha, \Gamma_0) |$ is ``sufficiently small'' guarantees, by Lemma \ref{lem:C-11}, that $\nu_j <0$. Moreover, $(\eta - \eta_1) p (\eta)$ is analytic at $\eta = \eta_1$.
  So, Lemma \ref{lem:C-14} establishes analyticity of $\mathcal{Q} [f]$ at $\eta
  = \eta_1$ if, and only if, the Equation \eqref{e:C-39} is satisfied.
  Similarly, we find that $\mathcal{Q} [f]$ is analytic at $\eta = \eta_2$ if,
  and only if, the Equation \eqref{e:C-40} is satisfied.
\end{proof}

Next, we find the following criterion to guarantee analyticity of $\mathcal{Q}
[f]$ at $\eta = 0$.

\begin{corollary}
  \label{cor:C-15}If
  \begin{equation}
    \Re (\nu_1 f (\eta_1) + \nu_2 f (\eta_2)) - \mathcal{P}
    [G] (0) = 0 \label{e:C-42},
  \end{equation}
Then $\mathcal{Q} [f] (\eta)$ is analytic at
  $\eta = 0$ and $\mathcal{Q} [f] (0) = 0$.
\end{corollary}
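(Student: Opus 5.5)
We reduce Corollary \ref{cor:C-15} to Lemma \ref{lem:C-14}, 2., by writing $\mathcal{Q}[f] = \mathcal{I}[p]$ with $p$ the function defined in \eqref{e:C-41}. Lemma \ref{lem:C-14}, 2., states that $\mathcal{I}[p]$ is analytic at $\eta=0$ whenever $p$ is analytic there, and that $\mathcal{I}[p](0)=0$ if and only if $p(0)=0$. The proof therefore has two steps: show that $p$ is analytic at the origin, and show that $p(0)=0$ is exactly condition \eqref{e:C-42}.

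\textbf{Analyticity of $p$ at $0$.} We check each of the four terms of $p$.
\begin{enumerate}[(i)]
\item The functions $W_j$ in \eqref{e:C-23} are rational. Their only singularities are at $\eta_j$ and at $-1/\eta_j$. By Lemma \ref{lem:C-04}, and hence under Assumption \ref{Assumption4.13}, these points are bounded away from the origin. So $W_j$ is analytic at $\eta=0$.
\item The term $i\,\Im(\nu_1 f(\eta_1)+\nu_2 f(\eta_2))$ is constant.
\item $\mathcal{P}[G]$ is analytic in $\mathbb{D}$ by Lemma \ref{lem:write-up42}, 1. This needs $G$ to be real-valued and continuous on $\partial\mathbb{D}$. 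It is: $r$, $U$ and $D$ are real on $\partial\mathbb{D}$ by \eqref{e:C-10}, Lemma \ref{lem:write-up40} and \eqref{e:C-12}, and $r$ has no poles on $\partial\mathbb{D}$ by Lemma \ref{lem:C-04}.
\end{enumerate}
Hence $p$ is analytic at $0$, and the first claim of the Corollary follows from Lemma \ref{lem:C-14}, 2.

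\textbf{Evaluating $p(0)$.} From \eqref{e:C-23}, $W_j(0) = -\bigl(r_j/(-\eta_j)\bigr) = r_j/\eta_j = \nu_j$. This is the identity $W_j(0)=\nu_j$ already noted in Remark \ref{rem:C-07}. Therefore
\begin{equation*}
p(0) = -\nu_1 f(\eta_1) - \nu_2 f(\eta_2) + i\,\Im\bigl(\nu_1 f(\eta_1)+\nu_2 f(\eta_2)\bigr) + \mathcal{P}[G](0) = -\Re\bigl(\nu_1 f(\eta_1)+\nu_2 f(\eta_2)\bigr) + \mathcal{P}[G](0).
\end{equation*}
The second equality uses that $\nu_j\in\mathbb{R}$ (Corollary \ref{lem:C-11}) only to identify $\Re(\nu_1 f(\eta_1)+\nu_2f(\eta_2))$ with what remains; the algebra $z - i\Im z = \Re z$ holds for any complex $z$ anyway. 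We also need $\mathcal{P}[G](0)$ to be real. From \eqref{e:C-26}, $\mathcal{P}[G](0) = \frac{1}{2\pi}\int_0^{2\pi} G\,d\theta$, and this is real because $G$ is real on $\partial\mathbb{D}$. So $p(0)$ is real, and $p(0)=0$ holds precisely when \eqref{e:C-42} holds. Lemma \ref{lem:C-14}, 2., then gives $\mathcal{Q}[f](0)=0$.

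\textbf{Main obstacle.} The only delicate point is the branch point of $\mu$ at $0$ with exponent $3/2$ in \eqref{e:C-30}, which must not produce a spurious term $C/\mu(\eta)$. This is covered by Lemma \ref{lem:C-14}, 2., which we take as given. Locally $\mu(\chi)/\chi \sim c\,\chi^{1/2}$, so
\[
\frac{1}{\mu(\eta)}\int_0^\eta \chi^{1/2}\bigl(p(0)+O(\chi)\bigr)\,d\chi = \tfrac{2}{3}p(0) + O(\eta).
\]
This power-series expansion confirms that $\mathcal{I}[p]$ is analytic at $0$ and that $\mathcal{I}[p](0)=\tfrac23 p(0)$, which is consistent with the lemma.
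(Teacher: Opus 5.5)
Your proposal is correct and follows the paper's proof: take $p$ as in \eqref{e:C-41}, so that $\mathcal{Q}[f]=\mathcal{I}[p]$, note via $W_j(0)=\nu_j$ that \eqref{e:C-42} is exactly $p(0)=0$, and apply Lemma~\ref{lem:C-14}, 2. Your extra checks (analyticity of $p$ at $0$, and the local expansion giving $\mathcal{I}[p](0)=\tfrac{2}{3}p(0)$) are fine; the reality of $\mathcal{P}[G](0)$ is not needed, since \eqref{e:C-42} is literally the equation $-p(0)=0$.
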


\begin{proof}
  Again, let $p (\eta)$ be given by \eqref{e:C-41}, so that $\mathcal{Q} [f] =
  \mathcal{I} [p]$. The equation \eqref{e:C-42} guarantees that we have $p (0)
  = 0$. Therefore, Lemma \ref{lem:C-14}, guarantees that $\mathcal{Q} [f]$ is
  analytic at $\eta = 0$, and that it vanishes there.
\end{proof}

So, overall, we have found three constraints, \eqref{e:C-39}, \eqref{e:C-40},
and \eqref{e:C-42} which need to be satisfied in order for $\mathcal{Q} [f]$ to lie in $\mathcal{S}$. Also, these conditions are not independent as will transpire
later. It seems difficult to implement these constraints on the physical
parameters $\gamma, \alpha$, and $\Gamma_0$ directly. So, instead, we will now
pass to a different integral operator, $\mathscr{Q}$, where the
constraints \eqref{e:C-39} and \eqref{e:C-40} are automatically satisfied, but
whose fixed points, subject to \eqref{e:C-42}, are also fixed points of
$\mathcal{Q}$. So, we will have ``freed up'' two of the three constraints.

\subsection{Modified $\mathcal{Q}$-Operator}\label{sec:modQ}

Let us define an operator $\mathscr{Q} : \mathcal{S} \rightarrow
\mathcal{S}$ which will allow us to find non-trivial fixed points of
$\mathcal{Q}$. To that end, introduce some notation. We set
\begin{equation}
  \mathscr{W}_j (\eta) = (- W_j (\eta) + \nu_j), \quad \mathscr{P}
  [G] (\eta) = \mathcal{P} [G] (\eta) - \mathcal{P} [G] (0),
\end{equation}
and define the functionals $F_1 (f)$ and $F_2 (f)$ through
\begin{equation}
  \left(\begin{array}{c}
    F_1 (f)\\
    F_2 (f)
  \end{array}\right) = \frac{1}{J} \left(\begin{array}{cc}
    - \mathcal{J}_2 [\mathscr{W}_2 (\eta)] & \mathcal{J}_1 [\mathscr{W}_2
    (\eta)]\\
    \mathcal{J}_2 [\mathscr{W}_1 (\eta)] & - \mathcal{J}_1 [\mathscr{W}_1 (\eta)]
  \end{array}\right) \left(\begin{array}{c}
    \mathcal{J}_1 [\mathscr{P} [G] (\eta)] \mathcal{}\\
    \mathcal{J}_2 [\mathscr{P} [G] (\eta)]
  \end{array}\right) \label{e:C-44}
\end{equation}
with
\begin{equation}
  J = \mathcal{J}_1 [\mathscr{W}_1 (\eta)] \mathcal{J}_2 [\mathscr{W}_2 (\eta)] -
  \mathcal{J}_1 [\mathscr{W}_2 (\eta)] \mathcal{J}_2 [\mathscr{W}_1 (\eta)] .
  \label{e:write-up117}
\end{equation}
Recall that $G$ depends on $f$, which justifies our notation.

Now, we need to discuss whether $J \neq 0$ such that $F_1(f)$ and $F_2(f)$ are well-defined. To that end, let us assume that $\alpha > 0$ and $\Gamma_0 \in
\mathbb{R}$ are such that $b (\alpha, \Gamma_0) = 0$. First, note that from the
explicit formula \eqref{e:write-up93} for $b (\alpha, \Gamma_0)$, we find
$\alpha^2 \leq \frac{3}{7}$. Next, recall that by Corollary \ref{lem:C-11},
we have $\nu_1 = \nu_2 = - 1$ for $b (\alpha, \Gamma_0) = 0$. Thus, for $b
(\alpha, \Gamma) = 0$, the integrals $\mathcal{J}_n [\mathscr{W}_m (\eta)]$ may
be computed explicitly via residue calculus. From these computations, we find
that
\begin{equation}
  J = 0 \quad \text{iff} \quad L (t) : = 1375 t^3 + 2019 t^2 - 1107 t + 81 =
  0, \quad \text{for} \quad t = \alpha^2 \label{e:write-up118}
\end{equation}
so, since $\alpha^2 \leq \frac{3}{7}$, we obtain the following result.

\begin{lemma} \label{lem:write-up56} Let $t_{1, 2} \in \left( 0, \frac{3}{7} \right)$
  denote the two zeros of $L (t)$ within $\left( 0, \frac{3}{7} \right]$. Let
  $(\alpha,\Gamma_0) \in b^{-1}(0)$. Then, $F_1 (f)$ and $F_2 (f)$ are well-defined whenever
  \begin{equation}
    \alpha \neq \sqrt{t_j}, ~ \text{ and } ~ \Gamma_0
   \neq  \pm \sqrt{\frac{(3 - 5 t_j)^2 (3 - 7 t_j)}{9 (25 t_j + 3)}},
~ \text{ for } ~ j = 1, 2. \label{e:write-up119}
  \end{equation}
\end{lemma}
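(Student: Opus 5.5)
The plan is to reduce well-definedness of $F_1(f)$ and $F_2(f)$ to non-vanishing of the scalar $J$ in \eqref{e:write-up117}. Then translate the condition $J=0$ into the cubic condition $L(\alpha^2)=0$ recorded in \eqref{e:write-up118}. The numerators in \eqref{e:C-44} are finite for every $f\in\mathcal{S}$: the bound \eqref{e:C-38} applies to $\mathscr{P}[G]$, which is continuous on $\partial\mathbb{D}$ and vanishes at $0$, so $\mathcal{J}_j[\mathscr{P}[G]]$ is well-defined. The only obstruction is therefore the division by $J$, and $J$ depends on $(\alpha,\Gamma_0)$ but not on $f$.

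\textbf{Step 1: explicit computation of $J$.} On $b^{-1}(0)$, Corollary \ref{lem:C-11} gives $\nu_1=\nu_2=-1$. Hence
\[
\mu(\chi)=\chi^{3/2}\,\frac{(1+\chi\eta_1)(1+\chi\eta_2)}{(\chi-\eta_1)(\chi-\eta_2)}
\]
is single-valued except for the factor $\chi^{3/2}$. The $\mathscr{W}_m$ are rational, with a simple pole at $\eta_m$ and one at $-1/\eta_m$ outside $\mathbb{D}$, and $\mathscr{W}_m(0)=0$. The integrand $\mu(\chi)\mathscr{W}_m(\chi)/\chi$ is $\chi^{1/2}$ times a rational function. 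Along $C_n$, the part of the path encircling $\eta_n$ contributes $2\pi i$ times the residue of $\sqrt{\chi}\cdot(\text{rational})$ at $\eta_n$; there $\mathscr{W}_n$ produces a double pole when $m=n$. The two rays to and from $0$ differ by the monodromy $e^{3\pi i}=-1$, so they combine into $2\int_0^{\eta_n}$ of that function. This integral converges at $\eta_n$ only as a principal-value/Hadamard combination with the loop, so I would instead deform $C_n$ into a keyhole and evaluate everything in closed form as elementary functions of $\eta_1,\eta_2$.

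\textbf{Step 2: reduction to a polynomial in $t=\alpha^2$.} On $b^{-1}(0)$ one has $\tau_1>2$ and $\tau_2<-2$. The $\eta_j=\tfrac{i}{2}(\tau_j\mp\sqrt{\tau_j^2-4})$ are purely imaginary, and $\tau_{1,2}$ are roots of $A_4\tau^2+A_2\tau-(2A_4+A_0)=0$. Substituting $\Gamma_0^2$ from \eqref{e:C-15}, I would express the symmetric functions of $\eta_1,\eta_2$ in terms of $t$ (and $\Gamma_0$, which appears only up to the sign symmetry $\Gamma_0\to-\Gamma_0$, $\eta_j\to\bar\eta_j$-type reflection). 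Clearing nonvanishing denominators, such as factors of $(5t-3)$ and $(25t+3)$ and $\eta_1-\eta_2$, should show that $J$ is a nonzero multiple of $L(t)$. This algebraic elimination, in particular handling the square roots $\sqrt{\eta_j}$ coming from $\chi^{1/2}$, is the step I expect to be the main obstacle. I would first try to show that the $\sqrt{\eta_j}$ factor out of each column of the $2\times2$ determinant, leaving a rational expression; a computer-algebra check is a fallback.

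\textbf{Step 3: locating the roots.} $L(0)=81>0$ and $L(3/7)>0$ by direct evaluation. At an intermediate point, e.g.\ $t=0.2$, $L=11+80.76-221.4+81<0$. Hence $L$ has exactly two zeros $t_1,t_2\in(0,3/7)$, and since $L$ is cubic with $L(t)\to-\infty$ as $t\to-\infty$, its third root is negative. On $b^{-1}(0)$ we have $\alpha^2\le 3/7$, since otherwise $\Gamma_0^2<0$. So $J\neq0$ unless $\alpha^2=t_j$. In that case \eqref{e:C-15} forces $\Gamma_0=\pm\sqrt{(3-5t_j)^2(3-7t_j)/(9(25t_j+3))}$. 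Excluding these points as in \eqref{e:write-up119} yields $J\neq0$, so $F_1,F_2$ are well-defined.
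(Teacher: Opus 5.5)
Your overall route is the paper's: on $b^{-1}(0)$ one has $\nu_1=\nu_2=-1$, the four numbers $\mathcal{J}_n[\mathscr{W}_m]$ are evaluated by residue calculus, $J=0$ reduces to $L(\alpha^2)=0$, and $\alpha^2\le 3/7$ leaves exactly the excluded points. Your sign-change count in Step~3 is correct and supplies the root location that the paper leaves implicit.

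\textbf{The gap is in Step~1.} This is the only place where real content enters. You claim that the two radial segments of $C_n$ ``differ by the monodromy $e^{3\pi i}=-1$'' and combine into $2\int_0^{\eta_n}$; that is wrong. Because $\nu_n=-1$ is an integer, the integrand $\mu(\chi)\mathscr{W}_m(\chi)/\chi$ has no monodromy at $\eta_n$, and $C_n$ encircles only $\eta_n$. A factor $e^{3\pi i}$ could only come from continuing $\chi^{3/2}$ around the origin. But the integrand vanishes at the base point $0$ (like $\chi^{3/2}$, since $\mathscr{W}_m(0)=0$), so such winding can be slid into the base point at no cost. Concretely, suppose the returning segment really carried the factor $-1$. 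Then the connecting loop would also have to enclose $0$, and that extra loop contributes $-2\int_0^P$ (with $P$ where the radial segment meets the loop), cancelling your ray term. Either way only the local contribution survives:
\[
\mathcal{J}_n[\mathscr{W}_m]=2\pi i\,\mathrm{Res}_{\chi=\eta_n}\Bigl(\frac{\mu(\chi)}{\chi}\,\mathscr{W}_m(\chi)\Bigr).
\]
This is the evaluation under which $J=0$ reduces to $L(\alpha^2)=0$. It is also the only reading compatible with Lemma~\ref{lem:C-14}. For $\nu_n=-1$, $\mathcal{I}[p]$ is analytic at $\eta_n$ iff the coefficient of $\log(\eta-\eta_n)$ in $\int_0^{\eta}\mu p\,d\chi/\chi$ vanishes, and that coefficient is exactly this residue (integer case in Appendix~\ref{App.A}). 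An extra finite-part term $2\int_0^{\eta_n}$ would break that equivalence. It would also produce a transcendental $J$ with no reason to vanish at the zeros of $L$. The keyhole/Hadamard evaluation you propose should therefore be dropped.

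With the correct evaluation, the obstacle you anticipate in Step~2 disappears. On $b^{-1}(0)$ one has $r_m=\nu_m\eta_m=-\eta_m$, hence $\mathscr{W}_m(\chi)=-\chi\,(1-\eta_m^2+2\eta_m\chi)/\bigl((\chi-\eta_m)(1+\eta_m\chi)\bigr)$. The integrand is then $-\chi^{3/2}$ times a rational function, with a simple pole at $\eta_n$ if $m\neq n$ and a double pole if $m=n$. Since $\frac{d}{d\chi}\chi^{3/2}=\frac{3}{2\chi}\chi^{3/2}$, every residue at $\eta_n$ is $\eta_n^{3/2}$ times a rational function of $\eta_1,\eta_2$. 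So $\eta_n^{3/2}$ factors out of the $n$-th row, and $J=\eta_1^{3/2}\eta_2^{3/2}\tilde J$ with $\tilde J$ rational. What remains is the algebraic elimination to $L(t)$, which the paper also only asserts.

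Two small points. First, finiteness of $\mathcal{J}_j[\mathscr{P}[G]]$ should be taken from Lemma~\ref{lem:C-20}, or simply from analyticity of $\mathcal{P}[G]$ on the compact path $C_j\subset\mathbb{D}$. It should not be taken from \eqref{e:C-38}, since $\mathcal{P}[G]$ need not be continuous on $\partial\mathbb{D}$. Second, $b^{-1}(0)$ also contains the isolated point $(\alpha,\Gamma_0)=(\sqrt{3/5},0)$. So ``$\alpha^2\le 3/7$'' relies on the standing restriction $\alpha^2<3/5$ of Assumption~\ref{Assumption4.13}; the paper glosses over this as well.
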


Let us denote
\begin{equation}
  \mathfrak{a}_j = \sqrt{t_j} ~ \text{ and } ~ \mathfrak{b}_j = \sqrt{\frac{(3 - 5
  t_j)^2 (3 - 7 t_j)}{9 (25 t_j + 3)}}, ~  j=1,2. \label{e:119}
\end{equation}
\begin{corollary} \label{cor:write-up57}
  Let $\alpha > 0$ and $\Gamma_0 \in \mathbb{R}$ be such
  that $| b (\alpha, \Gamma_0) |$ is sufficiently small. Then, $F_1 (f)$ and $F_2 (f)$ are
  well-defined so long as the pair $(\alpha, \Gamma_0)$ is bounded away from 
  $(\mathfrak{a}_j, \pm\mathfrak{b}_j)$ for $j = 1, 2$.
\end{corollary}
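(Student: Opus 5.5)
The plan is a continuity argument starting from the curve $b^{-1}(0)$, where Lemma \ref{lem:write-up56} already gives $J\neq 0$. First, I will check that the four numbers $\mathcal{J}_n[\mathscr{W}_m]$, $n,m\in\{1,2\}$, and hence $J$ in \eqref{e:write-up117}, depend continuously (indeed real-analytically) on $(\alpha,\Gamma_0)$ in the region allowed by Assumption \ref{Assumption4.13}. By \eqref{e:write-up92}, $\tau_{1,2}$ are continuous functions of $(\alpha,\Gamma_0)$ away from $\alpha^2=3/5$. So by Lemma \ref{lem:write-up35} and Lemma \ref{lem:C-04}, the poles $\eta_1,\eta_2$ move continuously and stay bounded away from $\partial\mathbb{D}$. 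The residues $r_j$, being residues of the explicit rational function $U$ at simple poles, are continuous as well, and so are the exponents $\nu_j=r_j/\eta_j$. By Corollary \ref{lem:C-11} these exponents are real and close to $-1$.

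Second, I will fix integration loops $C_1,C_2$ as in Definition \ref{def:09} that do not depend on the parameters in a small neighbourhood $\mathcal{U}$ of a given point of $b^{-1}(0)$. This is possible because the $\eta_j$ stay in small discs around their values on the curve. On these loops the integrands $\mu(\chi)\chi^{-1}\mathscr{W}_m(\chi)$ are jointly continuous in $(\chi,\alpha,\Gamma_0)$ and uniformly bounded. Here $\mathscr{W}_m=-W_m+\nu_m$ vanishes at $0$, so with the factor $\chi^{3/2}$ in $\mu$ the integrand is integrable at the endpoint $0$. The branch of $\mu$ fixed by the cuts of $\mathbb{D}_c$ also varies continuously. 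Dominated convergence then gives continuity of $\mathcal{J}_n[\mathscr{W}_m]$, and hence of $J$.

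Third, I will turn the pointwise statement into a uniform one, and I expect this to be the main obstacle. On $b^{-1}(0)$, $J$ is an explicit function of $t=\alpha^2$ vanishing only where $L(t)=0$. If $(\alpha,\Gamma_0)$ is bounded away from $(\mathfrak{a}_j,\pm\mathfrak{b}_j)$, then on the relevant compact portion of $b^{-1}(0)$ we get $|J|\geq 2\epsilon>0$. This uses compactness: $\alpha$ is bounded away from $0$ and from $\sqrt{3/5}$ by Assumption \ref{Assumption4.13}, and the portion of the curve is closed in $\alpha^2\le 3/7$. Uniform continuity of $J$ on a compact neighbourhood of this curve portion then gives $|J|\ge\epsilon$ whenever $|b(\alpha,\Gamma_0)|$ is sufficiently small. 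The delicate point is that near $\alpha\to 0$ the poles may approach problematic positions. If so, I will restrict to compact parameter sets, which is implicit in the phrase ``bounded away''.

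Finally, with $J\neq0$, formula \eqref{e:C-44} defines $F_1(f),F_2(f)$ for every $f\in\mathcal{S}$. This holds because $\mathscr{P}[G]$ is analytic in $\mathbb{D}$, vanishes at $0$, and by \eqref{e:C-38} has finite $\mathcal{J}_j$-integrals.
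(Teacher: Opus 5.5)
Your proposal is correct and follows the paper's own argument. The paper states the corollary without proof as a direct consequence of Lemma \ref{lem:write-up56}: $J\neq 0$ on $b^{-1}(0)$ away from $\alpha^2=t_{1,2}$, combined with continuous dependence of $J$ on $(\alpha,\Gamma_0)$. You have simply written out the continuity and uniformity details that the paper leaves tacit.

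Two small corrections. First, Assumption \ref{Assumption4.13} does not by itself keep $\alpha$ away from $0$: along $b^{-1}(0)$ one has $|\tau_{1,2}|\to\infty$, and hence $\eta_{1,2}\to 0$, as $\alpha\to 0$. So your fallback of restricting to compact parameter sets is what is actually needed. Second, finiteness of $\mathcal{J}_j[\mathscr{P}[G]]$ should be justified by Lemma \ref{lem:C-20}, or simply by analyticity of $\mathcal{P}[G]$ on loops lying inside $\mathbb{D}$, rather than by \eqref{e:C-38}, which requires continuity up to $\partial\mathbb{D}$ that $\mathcal{P}[G]$ need not have.
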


\begin{remark}
 Later, within the proof of Theorem \ref{thm:63}, we will require the numerical values of $t_1$ and $t_2$. These values can, of course, be computed explicitly since $L$ defines a cubic polynomial in $t$. However, we will only require following approximate numerical values. 
 	\begin{align}
 		t_1 \approx 0.0882171317364483 ~ \text{ and } ~ t_2 \approx 0.3502085039944982. \label{e:4.47}
 	\end{align}
\end{remark}

We can now define the following ``modified $\mathcal{Q} -$operator'', which we will call
$\mathscr{Q}$.

\begin{lemma}
  \label{lem:C-17} Let us assume that the Jacobian $J$ introduced in
  \eqref{e:write-up117} is non-vanishing. Let
  \begin{equation}
    \mathscr{Q} [f] = \mathcal{I} \left[\mathscr{W}_1 (\eta) F_1 (f) +
    \mathscr{W}_2 (\eta) F_2 (f) + \mathscr{P} [G] (\eta)\right].
  \end{equation}
  The operator $\mathscr{Q}$ maps $\mathcal{S}$ into  $\mathcal{S}$.
\end{lemma}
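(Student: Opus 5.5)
Set $p(\eta) = \mathscr{W}_1(\eta) F_1(f) + \mathscr{W}_2(\eta) F_2(f) + \mathscr{P}[G](\eta)$, so that $\mathscr{Q}[f] = \mathcal{I}[p]$. The plan is to verify, one at a time, the three conditions of Lemma \ref{lem:C-14} that place $\mathcal{I}[p]$ in $\mathcal{S}$: analyticity at $\eta_1$ and at $\eta_2$, analyticity at $0$ together with vanishing there, and continuity up to $\partial\mathbb{D}$ across the cuts. Throughout, Assumption \ref{Assumption4.13} and Corollary \ref{lem:C-11} give $\nu_j<0$, so Lemma \ref{lem:C-14} applies.

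First I check that $p$ is admissible in Definition \ref{def:09}. For $f \in \mathcal{S}$, the function $G$ of \eqref{e:C-28} is continuous on $\partial\mathbb{D}$, since $r$ and $D$ have no poles on $\partial\mathbb{D}$ by Lemma \ref{lem:C-04} and $U$ is bounded there. Hence $\mathcal{P}[G]$ is analytic in $\mathbb{D}$ by Lemma \ref{lem:write-up42}, and so is $\mathscr{P}[G]$. The functions $\mathscr{W}_j$ are meromorphic with a single simple pole at $\eta_j$ inside $\mathbb{D}$, because the second pole $-1/\eta_j$ lies outside. Moreover $\mathscr{W}_j(0) = -W_j(0)+\nu_j = 0$, since $W_j(0)=\nu_j$, and $\mathscr{P}[G](0)=0$ by construction. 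Therefore $p(0)=0$, and Lemma \ref{lem:C-14}, 2., gives analyticity of $\mathcal{I}[p]$ at $0$ with $\mathcal{I}[p](0)=0$. This is the point of replacing $-W_j$ by $\mathscr{W}_j$ and $\mathcal{P}[G]$ by $\mathscr{P}[G]$: the constraint \eqref{e:C-42} is then built in. Also $(\eta-\eta_j)p$ is analytic in $\mathbb{D}\setminus\{\eta_{j-1}\}$, so Lemma \ref{lem:C-14}, 1., applies at each $\eta_j$.

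The main step is to show $\mathcal{J}_1[p]=\mathcal{J}_2[p]=0$. By linearity this is the $2\times2$ system
\[
\mathcal{J}_k[\mathscr{W}_1]F_1+\mathcal{J}_k[\mathscr{W}_2]F_2=-\mathcal{J}_k[\mathscr{P}[G]],\qquad k=1,2,
\]
whose determinant is $J\neq 0$. I would check by Cramer's rule that $(F_1,F_2)$ in \eqref{e:C-44} is exactly its solution. The expected obstacle is a sign convention: as written, \eqref{e:C-44} seems to produce the solution with $+\mathcal{J}_k[\mathscr{P}[G]]$ on the right-hand side. If so, the intended definition must carry an overall minus sign, which I would absorb, since only solvability of the system matters. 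With $\mathcal{J}_j[p]=0$ for $j=1,2$, Lemma \ref{lem:C-14}, 1., yields analyticity at $\eta_1$ and $\eta_2$.

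Finally, $\mathcal{I}[p]$ is then analytic in $\mathbb{D}$ apart from the cuts. Across the cut from $0$, the factor $\eta^{3/2}$ in $\mu$ gives the monodromy $-1$; since the integrand vanishes at $0$ and the result is analytic at $0$, the jumps cancel and $\mathcal{I}[p]$ is single-valued, hence holomorphic in $\mathbb{D}$. The remaining cuts inside $\mathbb{D}$ join the now removable points $\eta_j$ to the boundary, so they cause no discontinuity. Continuity on $\mathbb{T}_c$ comes from Lemma \ref{lem:C-14}, 3., because $p$ is continuous on $\partial\mathbb{D}$: $\mathscr{P}[G]$ extends continuously, at least when $G$ is Hölder, and otherwise I would use the bound \eqref{e:C-38} together with density. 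The shores of $\mathbb{T}_c$ agree by single-valuedness, so $\mathscr{Q}[f]\in A(\mathbb{D})$ with $\mathscr{Q}[f](0)=0$, that is, $\mathscr{Q}[f]\in\mathcal{S}$. Boundary continuity is the step most likely to need an extra regularity argument.
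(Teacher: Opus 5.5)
Your decomposition is the paper's. From $\mathscr{W}_j(0)=0$ and $\mathscr{P}[G](0)=0$ you get $p(0)=0$, hence analyticity and vanishing at $0$ by Lemma \ref{lem:C-14}, 2. The choice of $F_1,F_2$ gives $\mathcal{J}_1[p]=\mathcal{J}_2[p]=0$, hence analyticity at $\eta_1,\eta_2$. Your discussion of why the interior cuts produce no jumps fills in a step the paper only asserts.

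\textbf{Gap: continuity of $\mathcal{I}[\mathscr{P}[G]]$ up to $\partial\mathbb{D}$.} This step does not go through as you propose. For $f\in\mathcal{S}$ you only know that $\Re f$, and hence $G$, is continuous; nothing gives H\"older regularity. For merely continuous $G$, the boundary values of $\mathcal{P}[G]$ are $G+i\mathcal{H}[G]$ (Lemma \ref{lem:write-up42}, 2.). The function $\mathcal{H}[G]$ need not be continuous or even bounded, so Lemma \ref{lem:C-14}, 3., cannot be applied to this term.

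Your fallback fails for the same reason. The bound \eqref{e:C-38} controls $\mathcal{I}[p]$ by $\sup_{\partial\mathbb{D}}|p|$ with $p=\mathscr{P}[G]$. That quantity is neither finite in general nor bounded by $\sup|G|$. So approximating $G$ uniformly by H\"older functions $G_n$ gives no uniform convergence of $\mathcal{I}[\mathscr{P}[G_n]]$.

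The missing idea is to estimate the composition $\mathcal{I}\circ\mathcal{P}$ as a single operator, as the paper does in Appendix \ref{App.A.2}. By Fubini, $\mathcal{I}[\mathcal{P}[g]](\eta)=\frac{1}{2\pi i}\oint_{|\chi|=1}K(\chi,\eta)g(\chi)\,d\chi$. Integrating the Schwarz kernel in $\eta'$ first leaves only an integrable $\log(\chi-\eta)$ singularity. This gives continuity on $\mathbb{T}_c$ directly, together with the bound \eqref{e:076} in terms of $\sup|g|$; with that bound your density argument would also work. The remaining piece $\mathcal{P}[G](0)\,\mathcal{I}[1]$ is covered by Lemma \ref{lem:C-14}, 3.

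\textbf{The sign problem you anticipate in \eqref{e:C-44} is not there.} Cramer's rule for $\mathcal{J}_k[\mathscr{W}_1]F_1+\mathcal{J}_k[\mathscr{W}_2]F_2=-\mathcal{J}_k[\mathscr{P}[G]]$, $k=1,2$, gives
\begin{align*}
F_1&=\tfrac{1}{J}\bigl(-\mathcal{J}_2[\mathscr{W}_2]\,\mathcal{J}_1[\mathscr{P}[G]]+\mathcal{J}_1[\mathscr{W}_2]\,\mathcal{J}_2[\mathscr{P}[G]]\bigr),\\
F_2&=\tfrac{1}{J}\bigl(\mathcal{J}_2[\mathscr{W}_1]\,\mathcal{J}_1[\mathscr{P}[G]]-\mathcal{J}_1[\mathscr{W}_1]\,\mathcal{J}_2[\mathscr{P}[G]]\bigr),
\end{align*}
which is exactly \eqref{e:C-44}. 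You could not have ``absorbed'' a sign anyway. The lemma concerns $\mathscr{Q}$ with these particular $F_j$, and with the opposite sign one would get $\mathcal{J}_j[p]=2\mathcal{J}_j[\mathscr{P}[G]]\neq 0$ in general.
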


\begin{proof}
  We will use Lemma \ref{lem:C-14}. First, the conditions \eqref{e:C-36} in
  Lemma \ref{lem:C-14} are given by
  \begin{equation}
    \mathcal{J}_j \left[\mathscr{W}_1 (\eta) F_1 (f) + \mathscr{W}_2 (\eta) F_2 (f) +
    \mathscr{P} [G] (\eta)\right] = 0, \quad j = 1, 2. \label{e:C-47}
  \end{equation}
  But $F_1$ and $F_2$ do not depend on $\eta$ so, using linearity of
  $\mathcal{J}_1$ and $\mathcal{J}_2$, \eqref{e:C-47} is solved exactly when
  $F_1 (f) $and $F_2 (f)$ are as in \eqref{e:C-44}. Thus, $\mathscr{Q}$ is analytic at
  $\eta_1$ and $\eta_2$. Now, $\mathscr{W}_j (0) = 0$ for $j = 1, 2$ and
  $\mathscr{P} [G] (0) = 0$, so Lemma \ref{lem:C-14} also
  guarantees that $\mathscr{Q}$ is analytic at $\eta = 0$.
  Overall, we have shown that for $f \in \mathcal{S}$,
  $\mathscr{Q} [f]$ is analytic within $\mathbb{D}$
  
 It remains to show that
  $\mathscr{Q} [f]$ extends continuously onto $\partial \mathbb{D}.$ By Lemma \ref{lem:C-14}, we know that the term $\mathcal{I} [\mathscr{W}_1
  (\eta) F_1 (f) + \mathscr{W}_2 (\eta) F_2 (f)]$ is continuous $\mathbb{T}_c$,
  so it only remains to discuss continuity of $\mathcal{I}
  [\mathscr{P} [G] (\eta)]$ on $\partial \mathbb{D}$. Continuity
  of $\mathcal{I} [\mathscr{P} [G] (\eta)]$ on $\mathbb{T}_c$
  follows by the discussion given in Appendix \ref{App.A.2} Next, note that we
  have already established analyticity of $\mathscr{Q} [f]$ within
  $\mathbb{D}$. Thus, $\mathscr{Q} [f]$ is necessarily continuous
  across the cuts on $\mathbb{T}_c$, i.e., $\mathscr{Q} [f]$
  extends continuously onto $\partial \mathbb{D}$, so $\mathscr{Q}
  [f] \in \mathcal{S}$.
\end{proof}

The importance of $\mathscr{Q}$ stems from the following
proposition.

\begin{proposition}
  \label{prop:C-18} For $f \in \mathcal{S}$, $| b (\alpha, \Gamma_0) |$ sufficiently small,
  and $J \neq 0$, we have $f = \mathcal{Q} [f]$ if, and only if, both of the
  following two statements are true.
  \begin{enumerate}
    \item $f = \mathscr{Q} [f]$ is a fixed point of
    $\mathscr{Q}$.
    
    \item The constraint
    \begin{equation}
      \Re (\nu_1 f (\eta_1) + \nu_2 f (\eta_2)) - \mathcal{P}
      [G] (0) = 0 \label{e:C-48}
    \end{equation}
    is satisfied.
  \end{enumerate}
\end{proposition}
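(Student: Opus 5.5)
The plan is to compare the integrands of $\mathcal{Q}$ and $\mathscr{Q}$ directly. By linearity of $\mathcal{I}$, both operators are $\mathcal{I}$ applied to a function $p$. For $\mathcal{Q}$ we have, with $\mathscr{W}_j = -W_j+\nu_j$ and $\mathscr{P}[G]=\mathcal{P}[G]-\mathcal{P}[G](0)$,
\begin{equation*}
p_{\mathcal{Q}} = \mathscr{W}_1 f(\eta_1) + \mathscr{W}_2 f(\eta_2) + \mathscr{P}[G] + \Big( -\nu_1 f(\eta_1) - \nu_2 f(\eta_2) + i\Im(\nu_1 f(\eta_1)+\nu_2 f(\eta_2)) + \mathcal{P}[G](0)\Big).
\end{equation*}
Since $\nu_j\in\mathbb{R}$ (Corollary \ref{lem:C-11}) and $\mathcal{P}[G](0)$ is real (as $G$ is real-valued), the bracket equals the real constant $-\Re(\nu_1 f(\eta_1)+\nu_2 f(\eta_2)) + \mathcal{P}[G](0)$. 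Hence $p_{\mathcal{Q}} = p_{\mathscr{Q}} + E$, where $p_{\mathscr{Q}}=\mathscr{W}_1F_1(f)+\mathscr{W}_2F_2(f)+\mathscr{P}[G]$ and
\begin{equation*}
E=\mathscr{W}_1\,(f(\eta_1)-F_1(f))+\mathscr{W}_2\,(f(\eta_2)-F_2(f)) - c, \qquad c:=\Re(\nu_1 f(\eta_1)+\nu_2 f(\eta_2))-\mathcal{P}[G](0).
\end{equation*}

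For the direction ($\Leftarrow$): assume $f=\mathscr{Q}[f]$ and $c=0$. Since $f\in\mathcal{S}$ is analytic at $\eta_1,\eta_2$, and $\mathscr{Q}[f]$ is analytic there by Lemma \ref{lem:C-17}, I would evaluate the identity $f=\mathscr{Q}[f]$ at $\eta=\eta_j$. The cleaner route is differential. The fixed-point relation gives $\eta f'+\tfrac{3}{2}\ldots$; more precisely, $\mu f=\int_0^\eta \mu\chi^{-1}p_{\mathscr{Q}}\,d\chi$, so $\eta f' + (\eta\mu'/\mu) f = p_{\mathscr{Q}}$. Near $\eta_j$, $\eta\mu'/\mu$ has a simple pole with residue $\eta_j\nu_j=r_j$, while $\mathscr{W}_j$ has residue $r_j$. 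Matching residues therefore forces $r_j f(\eta_j)=r_j F_j(f)$, i.e. $F_j(f)=f(\eta_j)$ since $r_j\neq0$. I need to check that $\eta\mu'/\mu$ coincides with $U$ up to the normalisation used in \eqref{e:C-27}; if the match is only up to analytic terms, residue matching still suffices. With $F_j(f)=f(\eta_j)$ and $c=0$ we get $E\equiv0$, so $\mathcal{Q}[f]=\mathscr{Q}[f]=f$.

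For the direction ($\Rightarrow$): assume $f=\mathcal{Q}[f]\in\mathcal{S}$. Then $c=0$ follows exactly as in Remark \ref{rem:C-07}: differentiating gives \eqref{e:C-27}, and evaluating at $\eta=0$ (using $f(0)=0$, $W_j(0)=\nu_j$) yields \eqref{e:C-48}. Next, $f$ is analytic at $\eta_1,\eta_2$, so by Corollary \ref{cor:09} the conditions \eqref{e:C-39}–\eqref{e:C-40} hold. Writing $p_{\mathcal{Q}}=\mathscr{W}_1f(\eta_1)+\mathscr{W}_2f(\eta_2)+\mathscr{P}[G]$ (using $c=0$), these conditions form the linear system $\mathcal{J}_j[\mathscr{W}_1]f(\eta_1)+\mathcal{J}_j[\mathscr{W}_2]f(\eta_2)=-\mathcal{J}_j[\mathscr{P}[G]]$ for $j=1,2$. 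Since $J\neq0$, this system has a unique solution, which by construction is $(F_1(f),F_2(f))$ as in \eqref{e:C-44}. I would check the sign convention here: Lemma \ref{lem:C-17} imposes $\mathcal{J}_j[p_{\mathscr{Q}}]=0$, which is the same system. Hence $F_j(f)=f(\eta_j)$, $E\equiv0$, and $\mathscr{Q}[f]=\mathcal{Q}[f]=f$.

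The main obstacle is the step in ($\Leftarrow$) showing $F_j(f)=f(\eta_j)$ from the fixed-point identity. This requires justifying differentiation of $\mathcal{I}$ and identifying the pole of $\mu'/\mu$ at $\eta_j$ with that of $\mathscr{W}_j$; the latter uses $\nu_j=r_j/\eta_j$ and the fact that $W_j$ has residue $-r_j$ at $\eta_j$. An alternative route would be uniqueness: $\mathcal{I}[E]$ must be analytic at $\eta_j$, so $\mathcal{J}_j[E]=0$, which with $c=0$ and $J\neq0$ also forces $f(\eta_j)=F_j(f)$. I would use this as a backup.
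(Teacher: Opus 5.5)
Your proposal is correct and matches the paper's proof. The forward direction is identical: the constraint comes from \eqref{e:C-27} at $\eta=0$, and Corollary~\ref{cor:09} together with $J\neq0$ identifies $(f(\eta_1),f(\eta_2))$ with $(F_1(f),F_2(f))$. In the reverse direction, your residue matching in $\eta f'+(\eta\mu'/\mu)f=p$ is a direct derivation of the paper's evaluation of $\mathscr{Q}[f](\eta_j)$ via \eqref{e:C-73}, and it gets the sign right, $\mathcal{I}[p](\eta_j)=r_j^{-1}\lim_{\eta\to\eta_j}(\eta-\eta_j)p(\eta)$, where the paper's display has a stray minus. Drop the ``backup'' route, however: nothing in your hypotheses forces $\mathcal{I}[E]$ (equivalently $\mathcal{Q}[f]$) to be analytic at $\eta_j$, so that argument presupposes what you want to prove.
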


\begin{proof}
  First, let $f = \mathcal{Q} [f]$ be a fixed point of $\mathcal{Q}$. Then,
  the constraint $\eqref{e:C-48}$ must hold by \eqref{e:C-27} (cf. also Remark
  \ref{rem:C-07}). Thus, using \eqref{e:C-48}, we may re-write $\mathcal{Q}
  [f]$ as
  \begin{equation}
    \mathcal{Q} [f] = \mathcal{I} \left[\mathscr{W}_1 (\eta) f (\eta_1) + \mathscr{W}_2
    (\eta) f (\eta_2) + \mathscr{P} [G] (\eta)\right] .
  \end{equation}
  Since $\mathcal{Q} [f] = f \in \mathcal{S}$ is analytic at $\eta_1$ and also
  at $\eta_2$ we know by Corollary \ref{cor:09} that the conditions
  \eqref{e:C-39} and \eqref{e:C-40} must hold. Then, exactly as in the proof
  of the previous Lemma \ref{lem:C-17}, we find that the vector $(f (\eta_1),
  f (\eta_2))^T$ is given by the right-hand-side of \eqref{e:C-44}, so $f
  (\eta_1) = F_1 (f)$ and $f (\eta_2) = F_2 (f)$. Thus,
  \begin{equation}
    \mathcal{Q} [f] = \mathcal{I} \left[\mathscr{W}_1 (\eta) F_1 (f) + \mathscr{W}_2
    (\eta) F_2 (f) + \mathscr{P} [G] (\eta)\right] =
    \mathscr{Q} [f] .
  \end{equation}
  It remains to show that the equality $f = \mathscr{Q} [f]$ in
  conjunction with the constraint \eqref{e:C-48} implies that $f = \mathcal{Q}
  [f]$.
  
  Let $f = \mathscr{Q} [f]$, and let $p = \mathscr{W}_1 (\eta) F_1
  (f) + \mathscr{W}_2 (\eta) F_2 (f) + \mathscr{P} [G] (\eta)$. From
  Equation \eqref{e:C-73}, we find that
  \begin{equation}
    \mathcal{I} [p] (\eta_j) = - \frac{1}{r_j} \lim_{\eta \rightarrow \eta_j}
    (\eta - \eta_j) p (\eta)
  \end{equation}
  and so $F_j (f) = \mathscr{Q} [f] (\eta_j) = f (\eta_j)$ for $j
  = 1, 2$. Therefore, for a fixed point of $\mathscr{Q}$, we may
  use the equalities $F_j (f) = f (\eta_j)$ and find that
  \begin{equation}
    \mathscr{Q} [f] = \mathcal{I} \left[\mathscr{W}_1 (\eta) f (\eta_1) +
    \mathscr{W}_2 (\eta) f (\eta_2) + \mathscr{P} [G] (\eta)\right].
    \label{e:C-52}
  \end{equation}
  Now, using the constraint \eqref{e:C-48}, we see that the right-hand-side of
  \eqref{e:C-52} is indeed equal to $\mathcal{Q} [f]$ thus proving the Proposition.
\end{proof}

\subsection{The constraint \eqref{e:C-48}}\label{sec:4.5}

In order to prove the existence of solutions to the Problem (b), it remains
to show that $\mathscr{Q}$ has a fixed point in some sufficiently
small ball $B_{\delta}$ centred at the origin. Before we can show
this, however, we need to study the constraint \eqref{e:C-48} in more detail.
Particularly, we will require the fact that \eqref{e:C-48} implicitly defines
a real-analytic surface $\mathcal{C}_f$ whereon the parameters $(\gamma,
\alpha, \Gamma_0)$ must lie. Let us now prove this statement.

First, a direct computation using residue calculus shows that
\begin{equation}
  \mathcal{P} [r] (\eta) = B_0 R_1 (\eta) - \gamma R_2 (\eta)
\end{equation}
with
\begin{equation}
  R_1 (\eta) = \frac{2 \eta^2}{D (\eta)} + \sum_{j = 1}^2 \frac{\eta_j +
  \eta}{\eta_j - \eta} \frac{\eta_j}{D_{\eta}(\eta_j)}, \quad \text{and} \quad R_2
  (\eta) = A_0 + 2 i A_2 \eta + 2 A_4 \eta^2 .
\end{equation}
Here, $B_0$, $D (\eta)$, and $A_{0, 2, 4}$ are given by
\eqref{e:C-11}--\eqref{e:C-13}.

Therefore
\begin{equation}
  \mathcal{P} [G] (\eta) = - \frac{1}{4} (B_0 R_1 (\eta) - \gamma R_2 (\eta))
  + \frac{1}{2} \gamma \mathcal{P} [D \Re (f)] (\eta) + \frac{1}{4}
  \mathcal{P} [\boldsymbol{N} [f]] (\eta) . \label{e:C-82}
\end{equation}
Let
\begin{align}
  \frac{1}{4} R_2 (0) + \frac{1}{2} \mathcal{P} [D \Re (f)] (0) + \frac{1}{4}
  \mathcal{P} [\boldsymbol{N}_{\!\!2} [f]] (0) & = D_{\text{BC}} (\alpha, \Gamma_0 ;
  f), \label{e:4.59} \\
  \Re (\nu_1 f (\eta_1) + \nu_2 f (\eta_2)) + \frac{1}{4} (B_0 R_1 (0) -
  \mathcal{P} [\boldsymbol{N}_{\!\!1} [f]] (0)) & = N_{\text{BC}} (\alpha, \Gamma_0 ; f). \label{e:4.60}
\end{align}
Then, the condition $\Re(\alpha_1 f (\eta_1) + \alpha_2 f
(\eta_2)) - \mathcal{P} [G] (0) = 0$ is equivalent to
\begin{equation}
  \gamma D_{\text{BC}} (\alpha, \Gamma_0 ; f) - N_{\text{BC}} (\alpha, \Gamma_0 ;
  f) = 0. \label{e:0074}
\end{equation}

It can be checked directly that for $(\alpha, \Gamma_0) \in b^{- 1} (0)$ and $f
= 0$, we have $D_{\text{BC}} (\alpha, \Gamma_0 ; 0) \geq \kappa > 0$, where
$\kappa$ is independent of $(\alpha, \Gamma_0, f) .$ So, for $\| f \| < \delta$
sufficiently small, and for $| b (\alpha, \Gamma_0) | $ also sufficiently
small  such that the inequality \eqref{e:4.20} is satisfied, $D_{\text{BC}} (\alpha, \Gamma_0 ; f)$ is bounded away from $0$. This yields the following result.

\begin{lemma}
  \label{lem:024} For $\| f \|$ and $| b (\alpha, \Gamma_0) |$
  both sufficiently small, with $b(\alpha,\Gamma_0)$ such that the inequality \eqref{e:4.20} holds, the constraint \eqref{e:C-48} is satisfied whenever
  \begin{equation}
    \gamma = \frac{N_{\text{BC}} (\alpha, \Gamma_0 ; f)}{D_{\text{BC}} (\alpha,
    \Gamma_0 ; f)} . \label{e:00056}
  \end{equation}
  Recall that validity of \eqref{e:4.20} is needed to ensure that $\Gamma_0 \in \mathbb{R}$.
\end{lemma}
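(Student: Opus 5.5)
The plan is to verify that \eqref{e:C-48} is exactly \eqref{e:0074}, and then divide by $D_{\text{BC}}$. This division is legitimate once we know $D_{\text{BC}}$ stays away from zero in the stated regime.

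First, insert \eqref{e:C-82} into \eqref{e:C-48} at $\eta=0$. Since $\boldsymbol{N}[f]=\boldsymbol{N}_{\!\!1}[f]+\gamma\boldsymbol{N}_{\!\!2}[f]$ and $\mathcal{P}$ is linear, we get
\[
\mathcal{P}[G](0)=-\tfrac14 B_0R_1(0)+\tfrac14\gamma R_2(0)+\tfrac12\gamma\,\mathcal{P}[D\Re f](0)+\tfrac14\mathcal{P}[\boldsymbol{N}_{\!\!1}[f]](0)+\tfrac14\gamma\,\mathcal{P}[\boldsymbol{N}_{\!\!2}[f]](0).
\]
Now collect the terms proportional to $\gamma$; they give precisely $\gamma D_{\text{BC}}$ as defined in \eqref{e:4.59}. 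The remaining terms, together with $\Re(\nu_1f(\eta_1)+\nu_2f(\eta_2))$, give $N_{\text{BC}}$ as defined in \eqref{e:4.60}. So \eqref{e:C-48} reads $N_{\text{BC}}-\gamma D_{\text{BC}}=0$, which is \eqref{e:0074} up to sign.

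One point to check is that these quantities are well defined. By Lemma \ref{lem:write-up40}, $U$ and $D$ are real on $\partial\mathbb{D}$. Hence $\boldsymbol{N}_{\!\!1}[f]$, $\boldsymbol{N}_{\!\!2}[f]$ and $D\Re f$ are real and continuous there, so $\mathcal{P}$ of each of them at $0$ equals the mean value on the circle, a real number. Also, $R_1(0)$ and $R_2(0)=A_0$ are finite because $\eta_{1,2}\neq 0$ by Lemma \ref{lem:C-04}.

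The main step is a uniform lower bound $D_{\text{BC}}\ge\kappa/2$. At $f=0$ and $(\alpha,\Gamma_0)\in b^{-1}(0)$ we have $\boldsymbol{N}_{\!\!2}[0]=0$ and $\mathcal{P}[D\Re 0]=0$, so $D_{\text{BC}}=\tfrac14A_0$. Here $A_0\ge1$ by \eqref{e:C-13}, which is the claimed $\kappa$. For nonzero $f$ we bound the perturbation. First, $|\mathcal{P}[h](0)|\le\|h\|_\infty$ for real continuous $h$. Next, $|D|$ is bounded on $\partial\mathbb{D}$ because $A_0,A_2,A_4$ are bounded away from $\alpha^2=3/5$, and this holds by Assumption \ref{Assumption4.13}. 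Finally, $|\boldsymbol{N}_{\!\!2}[f]|\le C\|D\|\,\|f\|^2e^{2\|f\|}$. Together these give $|D_{\text{BC}}(\alpha,\Gamma_0;f)-\tfrac14A_0|\le C'\|f\|$ for $\|f\|<\delta\le1$.

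Since $A_0\ge1$ holds for every admissible $(\alpha,\Gamma_0)$, not only on $b^{-1}(0)$, no continuity argument in $b$ is really required. Still, the assumptions on $|b|$ and \eqref{e:4.20} are used: they guarantee $\Gamma_0\in\mathbb{R}$, so that $A_0$ is given by \eqref{e:C-13} as a sum of nonnegative terms, and they guarantee that the $\eta_j$ are bounded away from $0$ and from $\partial\mathbb{D}$. Choosing $\delta$ small then gives $D_{\text{BC}}\ge 1/8$. Dividing \eqref{e:0074} by $D_{\text{BC}}$ yields \eqref{e:00056}; conversely, \eqref{e:00056} implies \eqref{e:0074} and hence \eqref{e:C-48}.
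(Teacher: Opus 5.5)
Your proposal is correct and follows the paper's route. You substitute \eqref{e:C-82} at $\eta=0$ to turn \eqref{e:C-48} into \eqref{e:0074}, then divide by $D_{\text{BC}}$, making the paper's ``it can be checked directly'' step explicit via $D_{\text{BC}}(\alpha,\Gamma_0;0)=\tfrac14 R_2(0)=\tfrac14 A_0\ge\tfrac14$. As you note, this needs no restriction to $b^{-1}(0)$.

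One small correction: Assumption \ref{Assumption4.13} does not keep $\alpha^2$ uniformly away from $3/5$. With $b>0$ small, $A_0\to\infty$ as $\alpha^2\to 3/5$, so $\|D\|$ is not uniformly bounded. You can avoid needing any bound on $\|D\|$ at all:
\begin{itemize}
\item $\mathcal{P}[h](0)$ is the circle mean, and $R_2(0)=A_0$ is the mean of $D$, so $D_{\text{BC}}=\frac{1}{8\pi}\int_0^{2\pi}D\,e^{2\Re f}\,d\theta$.
\item $D>0$ on $\partial\mathbb{D}$: it is real there by \eqref{e:C-12}, has mean $A_0>0$, and vanishes only where $\eta-\eta^{-1}=i\tau_{1,2}$ with $|\tau_{1,2}|>2$.
\end{itemize}
Together these give $D_{\text{BC}}\ge\tfrac14 e^{-2\|f\|}A_0$ for every $f$, uniformly in the parameters.
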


\begin{remark}
  Lemma \ref{lem:024} allows us fix the parameters $\alpha$ and $\Gamma_0$, but
  in doing so, we lose the freedom to enforce $\gamma \neq 0$. 
\end{remark}

Let us now show that we may, alternatively, choose to fix $\gamma \neq 0$, so
long as $| \gamma |$ is sufficiently small. Consider the equation
\eqref{e:0074}. Evaluated at $f = 0$, and with $b (\alpha, \Gamma_0) = 0$,
$\Gamma_0 \neq 0$, $0 < \alpha^2 < \frac{3}{7}$, we find
\begin{equation}
  \partial_{\Gamma_0} N_{\text{BC}} (\alpha, \Gamma_0 ; 0) = \frac{(25 \alpha^2 +
  3) (\alpha^2 + 3)}{(5 \alpha^2 - 3)^2} \frac{\Gamma_0}{4 \pi} \left(
  \frac{\eta_1}{D_{\eta}(\eta_1)} + \frac{\eta_2}{D_{\eta}(\eta_2)} \right) \neq 0.
  \label{e:write-up133}
\end{equation}

Let $C (\gamma, \alpha, \Gamma_0, f) :=\gamma D_{\text{BC}} (\alpha,
\Gamma_0 ; f) - N_{\text{BC}} (\alpha, \Gamma_0 ; f)$. Then, we have just shown
that
\begin{equation}
  \partial_{\Gamma_0} C (0, \alpha, \Gamma_0, 0) \neq 0, \quad \text{for} \quad
  (\alpha, \Gamma_0) \in b^{- 1} (0) .
\end{equation}
So, in particular, we may parametrise the set $\mathcal{C} \subset \mathbb{R}^3 \times \mathcal{S}$, given by
\begin{align}
	\mathcal{C} :=\{ (\gamma, \alpha, \Gamma_0 ; f) : \gamma D_{\text{BC}}
	(\alpha, \Gamma_0 ; f) - N_{\text{BC}} (\alpha, \Gamma_0 ; f) = 0, ~ | b (\alpha,
	\Gamma_0)| ~ \text{sufficiently small} \}, \label{e:4.61}
\end{align}
as a graph over $(\gamma, \alpha ; f)$, for $\gamma$ and $f$ sufficiently
small with $0 < \alpha^2 < \frac{3}{7}$ (particularly, the inequality \eqref{e:4.20} holds automatically for $b$ sufficiently small ensuring that $\Gamma_0 \in \mathbb{R}$). This gives the following Lemma.

\begin{lemma}
  \label{lem:0026}For $\| f \| $ and $| \gamma | $ both
  sufficiently small, and whenever $0 < \alpha^2 < \frac{3}{7}$, then the
  constraint \eqref{e:C-48} is satisfied when $\Gamma_0 = \Gamma_0 (\gamma,
  \alpha; f)$ is implicitly determined by \eqref{e:0074}.
\end{lemma}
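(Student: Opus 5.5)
The plan is to apply the real-analytic implicit function theorem to the map
\[
C(\gamma,\alpha,\Gamma_0,f)=\gamma D_{\text{BC}}(\alpha,\Gamma_0;f)-N_{\text{BC}}(\alpha,\Gamma_0;f),
\]
regarded as a map from a neighbourhood of $\{0\}\times\{(\alpha,\Gamma_0)\in b^{-1}(0)\}\times\{0\}$ in $\mathbb{R}^3\times\mathcal{S}$ into $\mathbb{R}$. The base point is $(0,\alpha_\ast,\Gamma_{0\ast},0)$ with $b(\alpha_\ast,\Gamma_{0\ast})=0$ and $0<\alpha_\ast^2<3/7$.

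\textbf{Zero at the base point.} Here $C=-N_{\text{BC}}(\alpha_\ast,\Gamma_{0\ast};0)$. By \eqref{e:4.60}, with $f=0$ we have $\boldsymbol{N}_{\!\!1}[0]=0$ and $f(\eta_j)=0$. Hence
\[
N_{\text{BC}}=\tfrac14 B_0R_1(0),
\]
and this vanishes because $B_0$ is proportional to $b$ (Remark \ref{rem:write-up37}).

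\textbf{Regularity of $C$.} I check that $C$ is (real-)analytic, or at least $C^1$, in $(\gamma,\alpha,\Gamma_0,f)$ near this point. By \eqref{e:write-up92}, Lemma \ref{lem:C-04} and Lemma \ref{lem:write-up40}, the quantities $\eta_j$, $r_j$, $\nu_j$, $D$, $U$, $B_0$ and $R_{1,2}$ depend analytically on $(\alpha,\Gamma_0)$, as long as $\alpha^2<3/5$ and the $\eta_j$ stay inside $\mathbb{D}$ and away from $\partial\mathbb{D}$. The map $f\mapsto f(\eta_j)$ is a bounded linear functional on $\mathcal{S}$. The map $f\mapsto \mathcal{P}[h](0)=\frac{1}{2\pi}\int h\,d\theta$ is bounded on continuous $h$. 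The maps $f\mapsto \boldsymbol{N}_{\!\!1}[f]$ and $f\mapsto\boldsymbol{N}_{\!\!2}[f]$ are compositions of entire functions with $\Re f$, multiplied by the bounded functions $U$ and $D$ on $\partial\mathbb{D}$. These are analytic Nemytskii maps on continuous functions and vanish to second order at $f=0$.

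\textbf{Transversality.} The derivative is
\[
\partial_{\Gamma_0}C(0,\alpha_\ast,\Gamma_{0\ast},0)=-\partial_{\Gamma_0}N_{\text{BC}}(\alpha_\ast,\Gamma_{0\ast};0),
\]
which is nonzero by \eqref{e:write-up133} provided $\Gamma_{0\ast}\neq0$. On the curve $b=0$ with $0<\alpha^2<3/7$ we have $\Gamma_{0\ast}^2=\frac{(3-5\alpha^2)^2(3-7\alpha^2)}{9(25\alpha^2+3)}>0$, so this holds automatically. Only the $B_0$ term contributes here, since $\partial_{\Gamma_0}B_0=cb'$ with $\partial_{\Gamma_0}b=2\Gamma_0$ and the other terms vanish at $f=0$. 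The main care point is checking that the bracket $\eta_1/D_\eta(\eta_1)+\eta_2/D_\eta(\eta_2)$ does not vanish along $b^{-1}(0)$; I take this from \eqref{e:write-up133}.

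\textbf{Conclusion.} The implicit function theorem in Banach spaces then gives neighbourhoods on which the zero set of $C$ is a graph $\Gamma_0=\Gamma_0(\gamma,\alpha;f)$, analytic in its arguments, for $|\gamma|$ and $\|f\|$ small and $\alpha$ near $\alpha_\ast$. On this graph \eqref{e:0074}, and hence \eqref{e:C-48}, holds. Smallness of $|\gamma|$ and $\|f\|$ forces $\Gamma_0$ close to $\Gamma_{0\ast}$, so $|b|$ stays small and Assumption \ref{Assumption4.13} remains valid; in particular \eqref{e:4.20} holds and $\Gamma_0$ is real.

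\textbf{Uniformity in $\alpha$.} To cover all $\alpha$ with $0<\alpha^2<3/7$, I work on compact subintervals of $\alpha$ and patch the local graphs together using uniqueness in the implicit function theorem. The smallness thresholds on $\gamma$ and $f$ may then depend on the chosen compact $\alpha$-range, which degenerates near $\alpha^2\to 3/7$ where $\Gamma_{0\ast}\to0$.
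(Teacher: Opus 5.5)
Your proposal is correct and follows the paper's own argument: you apply the implicit function theorem to $C(\gamma,\alpha,\Gamma_0,f)=\gamma D_{\text{BC}}-N_{\text{BC}}$ at points of $\{\gamma=0\}\times b^{-1}(0)\times\{f=0\}$, with transversality supplied by \eqref{e:write-up133}. That nonvanishing reduces to $\Gamma_0 R_1(0)\neq 0$, where $\Gamma_0\neq 0$ is automatic for $0<\alpha^2<3/7$, and, like the paper, you take the nonvanishing of $\eta_1/D_\eta(\eta_1)+\eta_2/D_\eta(\eta_2)$ along $b^{-1}(0)$ as given. Your added checks are sound refinements of points the paper leaves implicit: vanishing of $C$ at the base point via $B_0\propto b$, regularity of $C$ on $\mathbb{R}^3\times\mathcal{S}$, and the observation that the smallness thresholds are only uniform on compact $\alpha$-ranges.
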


Finally, let us show that we may also choose to fix $\gamma$ and $\Gamma_0$,
which will then determine the strain $\alpha$. We proceed exactly as before.
That is, computations show that for $(\alpha, \Gamma_0) \in b^{- 1} (0)$, we
have
\begin{equation}
  \partial_{\alpha} N_{\text{BC}} (\alpha, \Gamma_0 ; 0) = \frac{\alpha
  (\alpha^2 + 3) (875 \alpha^4 - 30 \alpha^2 - 189)}{9 (5 \alpha^2 - 3) (25
  \alpha^2 + 3)} R_1 (0) \label{e:write-up135}
\end{equation}
which is non-vanishing for $0 < \alpha^2 \leq \frac{3}{7}$. This gives
the following result.

\begin{lemma} \label{lem:0027}
  For $\| f \|$ and $| \gamma | $ both
  sufficiently small, and for $0 \leq | \Gamma_0 | < 1$, the constraint
  \eqref{e:C-48} is satisfied whenever $\alpha = \alpha (\gamma, \Gamma_0; f)$
  is implicitly determined by \eqref{e:0074}. 
\end{lemma}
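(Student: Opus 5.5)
The plan mirrors the argument behind Lemma \ref{lem:0026}, with $\alpha$ now the variable solved for. Write $C(\gamma,\alpha,\Gamma_0,f) = \gamma D_{\text{BC}}(\alpha,\Gamma_0;f) - N_{\text{BC}}(\alpha,\Gamma_0;f)$. The constraint \eqref{e:C-48} is equivalent to $C=0$ by \eqref{e:0074}. I will apply the implicit function theorem to $C$ at base points $(0,\alpha_*,\Gamma_0,0)$, where $(\alpha_*,\Gamma_0)\in b^{-1}(0)$.

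\textbf{Step 1: the base curve.} For $0\le|\Gamma_0|<1$ I must find a unique $\alpha_*=\alpha_*(\Gamma_0)$ with $0<\alpha_*^2<\frac37$ and $b(\alpha_*,\Gamma_0)=0$. With $t=\alpha^2$, the function
\[
h(t)=\frac{(3-5t)^2(3-7t)}{9(25t+3)}
\]
satisfies $h(0)=1$ and $h(3/7)=0$. One checks that $h$ is strictly decreasing on $[0,3/7]$: each factor $(3-5t)^2$, $(3-7t)$ and $1/(25t+3)$ is positive and decreasing there. So for $\Gamma_0^2\in[0,1)$ the equation $h(t)=\Gamma_0^2$ has a unique root $t_*\in(0,3/7]$. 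The endpoint $t_*=3/7$ occurs only for $\Gamma_0=0$, so there I will need continuity or a separate check of nondegeneracy up to $\alpha^2\le 3/7$. This is what \eqref{e:write-up135} allows, since it is stated for $0<\alpha^2\le\frac37$. At the base point we have $b=0$, hence $B_0=0$. Also $f=0$ and $\gamma=0$, so $N_{\text{BC}}(\alpha_*,\Gamma_0;0)=0$ and $C=0$.

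\textbf{Step 2: nondegeneracy.} At the base point, $\partial_\alpha C = -\partial_\alpha N_{\text{BC}}(\alpha_*,\Gamma_0;0)$, which is given by \eqref{e:write-up135}. I must show this is nonzero. The factor $875t^2-30t-189$ has roots $t=(30\pm\sqrt{900+4\cdot875\cdot189})/1750$. The positive root is about $(30+814)/1750\approx0.482>3/7$, so this factor is negative on $(0,3/7]$. The factors $\alpha$, $\alpha^2+3$, $5\alpha^2-3$ and $25\alpha^2+3$ are nonzero there. It remains to show $R_1(0)\neq0$. From the formula for $R_1$, we get $R_1(0)=\sum_j \eta_j/D_\eta(\eta_j)$, using that $D$ has a double pole at $0$. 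I expect this to be the main obstacle. I would evaluate $R_1(0)$ explicitly on $b^{-1}(0)$, using the explicit $\eta_j$ from Lemma \ref{lem:write-up35} with $\tau_{1}>2$ and $\tau_2<-2$, and reduce it to a rational function of $t$ whose sign can be checked on $(0,3/7]$. The fact that $D_{\text{BC}}>0$ at the base point does not help here, since it only controls the $\gamma$-derivative.

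\textbf{Step 3: implicit function theorem and uniformity.} The map $C$ is real-analytic, in particular $C^1$, jointly in $(\gamma,\alpha,\Gamma_0)$ and in $f\in\mathcal{S}$. This holds because $N_{\text{BC}}$ and $D_{\text{BC}}$ are built from point evaluations $f(\eta_j)$, from $\mathcal{P}[\cdot](0)$, which is the mean value on $\partial\mathbb{D}$, and from the analytic Nemytskii maps $\boldsymbol{N}_{1,2}$, with coefficients analytic in $(\alpha,\Gamma_0)$ away from $\alpha^2=3/5$. The Banach-space implicit function theorem then gives, near each base point, a unique $\alpha=\alpha(\gamma,\Gamma_0;f)$ with $C=0$, valid for $|\gamma|$ and $\|f\|$ small. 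To make the smallness thresholds uniform in $\Gamma_0$, I use that the family of base points is compact. The exception is the endpoint $\Gamma_0\to\pm1$, where $\alpha_*\to0$ and the factor $\alpha$ in \eqref{e:write-up135} vanishes. This is why the statement uses the strict inequality $|\Gamma_0|<1$. Any uniformity claim is therefore understood on compact subsets $|\Gamma_0|\le 1-\epsilon$. Finally, the resulting $(\alpha,\Gamma_0)$ has $|b|$ small by continuity, so Assumption \ref{Assumption4.13} and \eqref{e:4.20} remain valid. Then Lemma \ref{lem:024}'s framework applies, and \eqref{e:C-48} holds.
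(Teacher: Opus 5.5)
Your proposal is correct and is the paper's argument: fix $\Gamma_0$, take the base point $(0,\alpha_*,\Gamma_0,0)$ with $(\alpha_*,\Gamma_0)\in b^{-1}(0)$, and apply the implicit function theorem to $\gamma D_{\text{BC}}-N_{\text{BC}}$, using that \eqref{e:write-up135} is nonzero for $0<\alpha^2\le\frac37$ (the paper only asserts this). Your monotonicity argument for $\alpha_*(\Gamma_0)$ and your sign check of $875t^2-30t-189$ therefore add detail rather than a different route. The item you leave open, $R_1(0)\neq0$, closes without a rational reduction: with $D=\eta^{-2}P$, $P(\eta)=\eta^2Q(\eta-\eta^{-1})$ and $Q(w)=A_4(w^2+2)+iA_2w+A_0$, one has $P'(\eta_j)=(1+\eta_j^2)\,Q'(i\tau_j)$ with $Q'(i\tau_{1,2})=\pm i\sqrt{A_2^2+4A_4(2A_4+A_0)}$, and since $\eta_1=is_1$, $\eta_2=is_2$ with $0<s_1<1$ and $-1<s_2<0$, the two residue terms making up $R_1(0)$ are real and of the same sign.
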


Finally, since $N_{\text{BC}}$ and $D_{\text{BC}}$  in equations \eqref{e:4.59} and \eqref{e:4.60}, respectively, depend real-analytically on
$(\gamma, \alpha, \Gamma_0; f),$ we obtain the following result.

\begin{proposition}
  For any given $f \in \mathcal{S}^{\omega}$, the surface
  $\mathcal{C}_f \subset \mathbb{R}^3$ defined by
  \begin{equation}
    \mathcal{C}_f :=\{ (\gamma, \alpha, \Gamma_0): \gamma D_{\text{BC}} (\alpha, \Gamma_0 ; f)
    - N_{\text{BC}} (\alpha, \Gamma_0 ; f) = 0, ~ | b (\alpha, \Gamma_0) | \text{ sufficiently small, } 0 \leq | \Gamma_0 | < 1 \}
    \label{e:write-up140}
  \end{equation}
  is real analytic. For $|\gamma|$ sufficiently small and $0 \leq | \Gamma_0 | < 1$, this surface may be parametrised as a graph over
  $(\gamma, \Gamma_0)$, or, whenever $\Gamma_0 \neq 0$, as a graph over $(\gamma,
  \alpha)$. Since we consider only real parameters $\gamma, \alpha$, and $\Gamma_0$,  whenever $(\gamma, \alpha, \Gamma_0) \in \mathcal{C}_f \subset \mathbb{R}^3$, the inequality \eqref{e:4.20} necessarily holds.
\end{proposition}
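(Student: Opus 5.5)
The plan is to view $C(\gamma,\alpha,\Gamma_0;f)=\gamma D_{\text{BC}}(\alpha,\Gamma_0;f)-N_{\text{BC}}(\alpha,\Gamma_0;f)$, for fixed $f\in\mathcal{S}^{\omega}$, as a real-valued function of $(\gamma,\alpha,\Gamma_0)$. We show that it is real-analytic with non-vanishing gradient on the region in question, and then apply the real-analytic implicit function theorem. The statement has three parts, handled in order: analyticity of $C$, the two graph parametrisations, and the automatic validity of \eqref{e:4.20}.

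\textbf{Analyticity of $C$.} The ingredients in \eqref{e:4.59}--\eqref{e:4.60} are $B_0$, $R_1(0)$, $R_2(0)=A_0$, $\eta_j$, $\nu_j=r_j/\eta_j$, the values $f(\eta_j)$, and the numbers $\mathcal{P}[D\Re f](0)$, $\mathcal{P}[\boldsymbol{N}_{\!1}[f]](0)$, $\mathcal{P}[\boldsymbol{N}_{\!2}[f]](0)$. We check each in turn.
\begin{enumerate}
\item $B_0$ and $A_0,A_2,A_4$ are rational in $(\alpha,\Gamma_0)$ with poles only at $\alpha^2=3/5$, which is excluded.
\item By Lemma \ref{lem:C-04}, the quantities $\tau_{1,2}$ avoid the branch points of the square roots in \eqref{e:write-up92}, because $\tau_1>2$, $\tau_2<-2$ and $\Gamma_0^2+(5\alpha^2-3)^2/27>0$. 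Hence $\eta_j$, $r_j$, $\nu_j$ and $D_\eta(\eta_j)$ are real-analytic, and $R_1(0)$ is real-analytic wherever $D_\eta(\eta_j)\neq0$, i.e.\ the simple-pole setting of Lemma \ref{lem:write-up35}.
\item Since $\eta_j$ stays inside $\mathbb{D}$ and $f$ is holomorphic there, $f(\eta_j)$ is analytic in the parameters.
\item $\mathcal{P}[h](0)=\frac{1}{2\pi}\int_0^{2\pi}h\,d\theta$ is a mean over $\partial\mathbb{D}$, and the integrands $D\Re f$, $U(e^{-2\Re f}-1+2\Re f)$, $D(e^{2\Re f}-1-2\Re f)$ depend analytically on the parameters, uniformly on $\partial\mathbb{D}$. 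This uses that $U=-|W_\zeta|^2/(2D)$, and $D$ is bounded away from zero on $\partial\mathbb{D}$ because its zeros $\eta_j$ are bounded away from $\partial\mathbb{D}$ (Lemma \ref{lem:C-04}). Differentiation under the integral then gives real-analyticity.
\end{enumerate}
Since $C$ is affine in $\gamma$, it is real-analytic in all three variables.

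\textbf{Graph over $(\gamma,\Gamma_0)$.} The zero set is a real-analytic surface wherever $\nabla C\neq 0$, and the graph statements follow from the implicit function theorem once the relevant partial derivative is nonzero. For the graph over $(\gamma,\Gamma_0)$ we need $\partial_\alpha C\neq0$. At $\gamma=0$, $f=0$, $b=0$ this is $-\partial_\alpha N_{\text{BC}}$, which is nonzero by \eqref{e:write-up135}. Here one must also check $R_1(0)\neq0$ and that $875\alpha^4-30\alpha^2-189$ has no zero for $\alpha^2\le 3/7$; at $\alpha^2=3/7$ its value is $160.7-12.9-189<0$, and it is monotone on that range. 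The range $0\le|\Gamma_0|<1$ corresponds, along $b^{-1}(0)$, to $\alpha^2\in(0,3/7]$, with the point $\Gamma_0=0$ giving $\alpha^2=3/7$ (or $3/5$, excluded).

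By continuity in $(\gamma,f,b)$ and compactness of a parameter range bounded away from its endpoints, the derivative stays nonzero for small $|\gamma|$, $\|f\|$, $|b|$. Lemma \ref{lem:0027} then provides $\alpha=\alpha(\gamma,\Gamma_0;f)$ real-analytic.

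\textbf{Graph over $(\gamma,\alpha)$.} When $\Gamma_0\neq0$, \eqref{e:write-up133} gives $\partial_{\Gamma_0}C\neq0$, and Lemma \ref{lem:0026} gives $\Gamma_0=\Gamma_0(\gamma,\alpha;f)$. The $\Gamma_0$ factor in \eqref{e:write-up133} is exactly why $\Gamma_0=0$ must be excluded there.

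\textbf{Validity of \eqref{e:4.20}.} This last claim is immediate. For real $(\alpha,\Gamma_0)$, \eqref{e:write-up93} gives
\[
b+\frac{(3-5\alpha^2)^2(3-7\alpha^2)}{9(25\alpha^2+3)}=\Gamma_0^2\geq0,
\]
so \eqref{e:4.20} holds identically on $\mathcal{C}_f\subset\mathbb{R}^3$.

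The main obstacle is the uniformity in the analyticity and non-degeneracy steps. Two facts have to be confirmed rigorously: that $R_1(0)$ and the bracket in \eqref{e:write-up133} are genuinely nonzero along $b^{-1}(0)$ (presumably by the explicit residue computation), and that the smallness thresholds for $\gamma$, $f$, $b$ can be chosen uniformly over the allowed $\Gamma_0$-range.
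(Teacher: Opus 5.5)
Your proposal is correct and follows essentially the paper's own argument: real-analyticity of $D_{\text{BC}}$ and $N_{\text{BC}}$ in the parameters, the implicit function theorem driven by the nonvanishing derivatives \eqref{e:write-up135} and \eqref{e:write-up133} at $\gamma=0$, $f=0$, $b=0$ (which, as in Lemmas \ref{lem:0026} and \ref{lem:0027}, tacitly requires $\|f\|$ small), and \eqref{e:4.20} read off from $\Gamma_0^2\geq 0$. Two small points: first, $875t^2-30t-189$ is not monotone on $(0,3/7]$ (its vertex is at $t=3/175$), but it is convex and negative at both endpoints, so it has no zero there; second, the nondegeneracy you flag reduces in both cases to $R_1(0)=\sum_{j=1}^2\eta_j/D_\eta(\eta_j)\neq 0$, which holds because, writing $\eta_j=is_j$ with $0<|s_j|<1$, each summand equals $s_j^3/\bigl((2A_4\tau_j+A_2)(s_j^2-1)\bigr)<0$, since $2A_4\tau_j+A_2$ has the sign of $\tau_j$ and hence of $s_j$.
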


\begin{definition} \label{rem:60}
	Recall that $(\lambda, \sigma) = \left( \frac{1}{3}
	\alpha^2, 2 i \Gamma_0 \alpha \right)$. We will write $(\gamma, \lambda, \sigma) \in \tilde{\mathcal{C}}_f$ whenever
		\begin{align}
			\left(\gamma, \sqrt{3 \lambda}, \frac{\sigma}{2i \sqrt{3 \lambda}}\right) \in \mathcal{C}_f, 
		\end{align}
 for $\alpha := \sqrt{3 \lambda}$ and $\Gamma_0 :=  \frac{\sigma}{2i \sqrt{3 \lambda}}$.
\end{definition}

We will discuss the constraint \eqref{e:C-48} in more detail in Section
\ref{sec:05}, after we have proved the existence of solutions to Problem (b).

\subsection{Existence of solutions to the bubble problem for $\gamma \neq
0$.}\label{sec:4.6}

Let us show that for $| \gamma | $ sufficiently small, the operator
$\mathscr{Q} : \mathcal{S} \rightarrow \mathcal{S}$ is a
contraction on some small ball $B_{\delta} \subset \mathcal{S}$.

\begin{lemma}
  \label{lem:C-20}Let $g$ be continuous on $\partial \mathbb{D}$. We have $|
  \mathcal{J}_j [\mathcal{P} [g]] | \leq C_j^{\mathcal{J}} \| g \| $ for a
  constant $C_{j}^{\mathcal{J}}$ that is independent of $\alpha$, $\Gamma_0$,
  and $\gamma$. Here, the norm is the sup-norm taken on $\partial \mathbb{D}$.
\end{lemma}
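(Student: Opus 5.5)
The bound would follow from the fact that the loop $C_j$ stays in the interior of $\mathbb{D}$, away from $\partial\mathbb{D}$. On compact subsets of $\mathbb{D}$ the Schwarz kernel is uniformly bounded, and this gives a sup-norm estimate for $\mathcal{P}[g]$ along $C_j$. The integral defining $\mathcal{J}_j$ is then controlled by the length of $C_j$ times the maximum of $|\mu(\chi)/\chi|$ on $C_j$.

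\emph{Step 1.} We fix the curves $C_1$ and $C_2$ once and for all. Each should lie in a closed disc $\{|\chi|\le \rho_0\}$ with some $\rho_0<1$. Each should avoid a fixed neighbourhood of the points $\eta_1$ and $\eta_2$, and each should lie in $\mathbb{D}_c$. This is possible for all admissible parameters. By Lemma \ref{lem:C-04} and Assumption \ref{Assumption4.13}, the points $\eta_1$ and $\eta_2$ are purely imaginary, lie inside $\mathbb{D}$, and stay bounded away from $\partial\mathbb{D}$. As $(\alpha,\Gamma_0)$ ranges over the admissible set they move only within a compact subset of $i\mathbb{R}\cap\mathbb{D}$. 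Near the origin, $C_j$ begins at $0$, where $\mu(\chi)/\chi \sim \chi^{1/2}$ is integrable, so the endpoint at $0$ causes no difficulty.

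\emph{Step 2.} For $|\chi|\le\rho_0$ we have
\[
\left|\frac{\chi'+\chi}{\chi'-\chi}\right|\le \frac{1+\rho_0}{1-\rho_0}
\quad\text{for all } \chi'\in\partial\mathbb{D}.
\]
Consequently $|\mathcal{P}[g](\chi)|\le \frac{1+\rho_0}{1-\rho_0}\,\|g\|$ for every $\chi$ on $C_j$.

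\emph{Step 3.} We bound $\sup_{C_j}|\mu(\chi)/\chi|$ uniformly in the parameters. The factor $\chi^{3/2}/\chi$ is bounded. The remaining factors $\big((\chi-\eta_k)/(1+\chi\eta_k)\big)^{\nu_k}$ depend continuously on $(\eta_k,\nu_k)$. By Corollary \ref{lem:C-11}, the exponents $\nu_k$ are real and close to $-1$ when $|b|$ is small. On $C_j$ the bases $(\chi-\eta_k)/(1+\chi\eta_k)$ stay bounded and bounded away from zero, since $C_j$ avoids a neighbourhood of $\eta_k$ and $1+\chi\eta_k\neq 0$ for $|\chi|,|\eta_k|<1$. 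Their arguments are also controlled by the fixed branch choice in $\mathbb{D}_c$. A compactness argument over the closed admissible parameter set then gives a uniform bound on the modulus.

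Combining the three steps gives
\[
|\mathcal{J}_j[\mathcal{P}[g]]|\le \operatorname{length}(C_j)\,\sup_{C_j}\left|\frac{\mu}{\chi}\right|\,\frac{1+\rho_0}{1-\rho_0}\,\|g\|.
\]
None of these constants depends on $\gamma$, because $\gamma$ enters neither $\mu$ nor $\mathcal{P}$. The main obstacle is Step 3: making precise that the curves can be fixed independently of $(\alpha,\Gamma_0)$, and that the multivalued factor $\mu$ remains uniformly bounded on them. This is where the uniform separation of $\eta_1$ and $\eta_2$ from $\partial\mathbb{D}$ and from each other (Lemma \ref{lem:C-04}) is used, together with the real-valuedness and boundedness of $\nu_k$ (Corollary \ref{lem:C-11}).
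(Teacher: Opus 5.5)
Your argument is correct and is essentially the paper's proof. The paper uses Fubini to move the $C_j$-integral inside and then bounds $\int_{C_j}\frac{\chi+\eta}{\chi-\eta}\frac{\mu(\eta)}{\eta}\,d\eta$ uniformly for $\chi\in\partial\mathbb{D}$, which rests on the same two facts you use (the Schwarz kernel is bounded because $C_j$ stays away from $\partial\mathbb{D}$, and $\mu/\eta$ is bounded and integrable on $C_j$ because $C_j$ stays away from $\eta_1,\eta_2$); you perform the same estimate in the other order. Two small remarks: since $\nu_k\in\mathbb{R}$ you have $|w^{\nu_k}|=|w|^{\nu_k}$ on every branch, so Step 3 needs no control of arguments, and, as in the paper, the uniformity in $(\alpha,\Gamma_0)$ really holds only on compact parameter sets, because $\eta_{1,2}\to 0$ (the branch point where $C_j$ starts) as $\alpha\to 0$.
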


\begin{proof}
  We follow the proof of Lemma 31 in the supplementary material of \cite{CrowdyTanveer2025}. By
  Fubini's theorem,
  \begin{equation}
    \mathcal{J}_j [\mathcal{P} [g]] = \int_{| \chi | = 1} \left( \int_{C_j}
    \left( \frac{\chi + \eta}{\chi - \eta} \right) \frac{\mu (\eta)}{\eta} d
    \eta \right) \frac{g}{\chi} d \chi . \label{e:C-53}
  \end{equation}
  Now, choose the path $C_j$ of integration such that it is bounded away from
  $\eta = \eta_j$ and $\partial \mathbb{D}$. This guarantees that the function
  \begin{equation}
    g_j (\chi) = \int_{C_j} \left( \frac{\chi + \eta}{\chi - \eta} \right)
    \frac{\mu (\eta)}{\eta} d \eta
  \end{equation}
  is bounded and analytic in a neighbourhood of $\partial \mathbb{D}$. The
  statement of the Lemma now follows by inspection of \eqref{e:C-53}.
\end{proof}

Moreover, by \eqref{e:076}, for $g$ be continuous on $\partial \mathbb{D}$,
we have
\begin{equation}
  | \mathcal{I} [\mathcal{P} [g]] (\eta) | \leq C_j^{\mathcal{I}} \|
  \mathcal{} g \|, \quad \text{for } \quad \eta \in \mathbb{T}_c .
\end{equation}
For brevity, we will henceforth use the notation $a \lesssim b$ to indicate
that $a \leq C b$ for a constant $C$ independent of $(\gamma, \alpha,
\Gamma_0)$. Note that by  \eqref{e:C-38}, we have
\begin{equation}
  | \mathcal{I} [p ] (\eta) | \lesssim \| p  \| , \quad \text{for } \eta \in
  \mathbb{T}_c \text{, \quad and\quad} | \mathcal{J}_j [p ] | \lesssim \| p 
  \| .
\end{equation}
Now, let us assume that the parameters $\alpha$ and $\Gamma_0$ are bounded away
from $\mathfrak{a}_j$ and $\pm \mathfrak{b}_j$ (defined according to \eqref{e:119}), such that
the Jacobian $J$ introduced in \eqref{e:write-up117} is non-vanishing. This
ensures that the operator $\mathscr{Q}$ is well-defined.

By \eqref{e:C-44}, and using the formula \eqref{e:C-82} for $\mathcal{P} [G]$,
we have
\begin{equation}
  \| F_1 (f^1) - F_1 (f^2) \| \lesssim \delta \| D \Re (f^1) - D \Re (f^2) \|
  + \delta \| \boldsymbol{N}_{\!\!2} [f^1] - \boldsymbol{N}_{\!\!2} [f^2] \| + \|
  \boldsymbol{N}_{\!\!1} [f^1] - \boldsymbol{N}_{\!\!1} [f^2] \| 
\end{equation}
but $\| \boldsymbol{N}_j [f^1] - \boldsymbol{N}_j [f^2] \| \lesssim \| f^1 + f^2
\| \| f^1 - f^2 \|  \leq 2 \delta \| f^1 - f^2 \|$, for $\| f^j \| < \delta$
sufficiently small, so
\begin{equation}
  \| F_1 (f^1) - F_1 (f^2) \| \lesssim \delta \| f^1 - f^2 \|, \quad
  \text{for} \quad f^1, f^2 \in B_{\delta} .
\end{equation}
Therefore
\begin{equation}
  \| \mathscr{Q} [f^1] - \mathscr{Q} [f^1] \| \lesssim
  \sum_{j = 1}^2 \| F_j (f^1) - F_j (f^2) \| + \delta \| f^1 - f^2 \| \lesssim
  \delta \| f^1 - f^2 \|, \quad \text{for} \quad f^1, f^2 \in
  B_{\delta} . \label{e:C-59}
\end{equation}

Next, note %that for $(\gamma,\alpha,\Gamma_0)\in \mathcal{C}_0$, 
\begin{equation}
  \| \mathscr{Q} [0] \| \lesssim | B_0 | + | \gamma |,
\end{equation}
and recall that for $b (\alpha, \Gamma_0) = 0$, we have $B_0 = 0$. So, we let
$(\alpha, \Gamma)$ be such that $| b (\alpha, \Gamma_0) |$ is sufficiently small, choose $\gamma
\neq 0$ to be sufficiently small, and let $\delta = 2 \| \mathscr{Q}
[0] \|$, so that $\| \mathscr{Q} [f] \| \leq \delta$ and
$\mathscr{Q}$ maps $B_{\delta}$ onto
$B_{\delta}$. Since we may choose both, $| B_0 |$ and $| \gamma |$
sufficiently small, we are guaranteed that $\delta$ is small enough for all
estimates formulated in \eqref{e:C-53}--\eqref{e:C-59} to be true. So,
$\mathscr{Q} : B_{\delta} \rightarrow
B_{\delta}$ is a contraction. This enables us to prove the following
Theorem concerning the existence of solutions to Problem (b).

\begin{theorem} \label{thm:63}
  Recall the Definition \eqref{e:4.61} of
  $\mathcal{C}$, and let $0 \leq | \Gamma_0 | < 1$. The Equation \eqref{e:write-up75} has a unique solution \
  $y = \frac{1}{\zeta} \exp (F_0 + f)$ with $f \in B_{\delta} \cap
  \mathcal{S}^{\omega}$ if, and only if, $(\gamma, \alpha, \Gamma_0; f) \in
  \mathcal{C}$.
\end{theorem}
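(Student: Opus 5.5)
The plan is to reduce the statement to the chain of equivalences already assembled in Section \ref{sec:write-up04}: equation \eqref{e:write-up75} $\Leftrightarrow$ $\boldsymbol{M}[F_0+f]=0$ $\Leftrightarrow$ \eqref{e:C-20} $\Leftrightarrow$ $f=\mathcal{Q}[f]$ $\Leftrightarrow$ ($f=\mathscr{Q}[f]$ and \eqref{e:C-48}). Existence and local uniqueness then come from the contraction property of $\mathscr{Q}$ on $B_\delta$, and analyticity comes from Lemma \ref{lem:write-up43}.

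\textbf{Forward direction.} Suppose $y=\zeta^{-1}\exp(F_0+f)$ solves \eqref{e:write-up75} with $f\in B_\delta\cap\mathcal{S}^\omega$.

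1. Lemma \ref{lem:write-up33} shows that $f$ satisfies \eqref{e:C-20}, and hence \eqref{e:C-25}.
2. The function $g$ in \eqref{e:C-24} lies in $A(\mathbb{D})$. Applying $\mathcal{P}$ and using Lemma \ref{lem:write-up42}(3) gives \eqref{e:C-27}.
3. Evaluating \eqref{e:C-27} at $\eta=0$ yields \eqref{e:C-31}, i.e.\ \eqref{e:C-48}. This constraint is equivalent to \eqref{e:0074}, which is exactly the statement $(\gamma,\alpha,\Gamma_0;f)\in\mathcal{C}$.

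\textbf{Converse direction.} Fix $(\gamma,\alpha,\Gamma_0;f)\in\mathcal{C}$ with $|b|$ and $|\gamma|$ small. Assume $(\alpha,\Gamma_0)$ is away from the points $(\mathfrak{a}_j,\pm\mathfrak{b}_j)$, so that $J\neq0$ by Corollary \ref{cor:write-up57}.

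1. By the estimates \eqref{e:C-59} and $\|\mathscr{Q}[0]\|\lesssim|B_0|+|\gamma|$, the operator $\mathscr{Q}$ is a contraction of $B_\delta$ into itself. Its unique fixed point in $B_\delta$ lies in $\mathcal{S}$ by Lemma \ref{lem:C-17}.
2. Since the parameters lie on $\mathcal{C}$, the constraint \eqref{e:C-48} holds, and Proposition \ref{prop:C-18} gives $f=\mathcal{Q}[f]$.
3. Differentiating $\mathcal{Q}[f]$ recovers \eqref{e:C-27}. Taking real parts on $\partial\mathbb{D}$ via Lemma \ref{lem:write-up42}(2) gives \eqref{e:C-20}, hence $\boldsymbol{M}[F]=0$, i.e.\ \eqref{e:write-up75}.
4. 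Lemma \ref{lem:write-up43} upgrades $f$ to $\mathcal{S}^\omega$.
5. Uniqueness in $B_\delta$ follows because any other solution in $B_\delta\cap\mathcal{S}^\omega$ is, by the forward direction and Proposition \ref{prop:C-18}, a fixed point of the contraction $\mathscr{Q}$.

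The membership condition is implicit in $f$. To handle it, I would note that by Lemmas \ref{lem:024}--\ref{lem:0027} the surface can be solved for one parameter in terms of the remaining data and $f$. One could then also run a joint contraction in $(f,\Gamma_0)$, or $(f,\alpha)$, using the implicit function theorem with \eqref{e:write-up133} or \eqref{e:write-up135}. This shows that points of $\mathcal{C}$ with small $f$ are actually attained.

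\textbf{Main obstacle.} The hardest step is the exceptional parameters $(\alpha,\Gamma_0)=(\mathfrak{a}_j,\pm\mathfrak{b}_j)$, where $J=0$ and $\mathscr{Q}$ is undefined. Here I would use the analogy with Problem (a).

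1. Off the exceptional points, the bubble solution $y$ integrates to a solution $Z$ of Problem (a) with $(\lambda,\sigma)=(\alpha^2/3,\,2i\Gamma_0\alpha)$.
2. By Lemma \ref{lem:3.13} and Corollary \ref{cor:write-up20}, this $Z$ coincides with the locally unique Problem (a) solution. That solution depends real-analytically on $(\gamma,\lambda,\sigma)$ and exists without any constraint, including at $\lambda=t_j/3$; the numerical values \eqref{e:4.47} confirm $t_j/3\neq1/5$.
3. Both the defining function of $\mathcal{C}$ and \eqref{e:write-up75} are real-analytic identities in the parameters. So the bubble solution on $\mathcal{C}\setminus\{\text{exceptional points}\}$ continues analytically across those points, and \eqref{e:write-up75} persists by continuity.
4. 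Uniqueness at the exceptional points is inherited from the local uniqueness of the Problem (a) fixed point.

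Controlling that $f$ stays in $B_\delta$ uniformly near these points is the delicate part. I would secure it through the $\ell^1_\rho$ bounds of Lemma \ref{lem:3.13}, which are uniform in $(\lambda,\sigma)$ away from $\lambda=1/5$.
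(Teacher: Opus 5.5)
Your proposal is correct and follows essentially the paper's proof. Necessity comes from evaluating \eqref{e:C-27} at $\eta=0$ to obtain \eqref{e:C-48}. Sufficiency for $J\neq0$ uses the contraction $\mathscr{Q}$ together with Proposition \ref{prop:C-18} and Lemma \ref{lem:write-up43}. The points $(\mathfrak{a}_j,\pm\mathfrak{b}_j)$ are handled by real-analytic continuation through the locally unique Problem (a) solution (Lemma \ref{lem:3.13}, Corollary \ref{cor:write-up20}).

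The differences are minor. The paper continues the conformal-mapping identity $A=B$ and then recovers \eqref{e:write-up75} by differentiating \eqref{e:write-up16}, whereas you pass \eqref{e:write-up75} to the limit directly. The paper also uses the values \eqref{e:4.47} to check \eqref{e:write-up62}, and hence univalence via Proposition \ref{prop:23}, at $(\lambda_j,\pm\sigma_j)$. You use the numerics only to confirm $\lambda\neq\frac15$, which is automatic since $\alpha^2<\frac35$.
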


\begin{proof}
  We have already shown that for $y (\zeta)$ to solve \eqref{e:write-up75},
  the parameters $(\gamma, \alpha, \Gamma_0) $ necessarily need to be constrained
  according to \eqref{e:C-48}. Thus, $(\gamma, \alpha, \Gamma_0; f) \in
  \mathcal{C}$ is a necessary condition for Equation \eqref{e:write-up75} to
  admit a solution. Let us now show that $(\gamma, \alpha, \Gamma_0; f) \in
  \mathcal{C}$ is also sufficient. To that end, recall the definition of the
  Jacobian $J$ in \eqref{e:write-up117} and recall that its zeros depend on
  $(\gamma, \alpha, \Gamma_0)$ and $f$. We need to distinguish two cases.
  First, we consider the case  $(\gamma, \alpha, \Gamma_0; f) \in \mathcal{C}$
  such that $J \neq 0$ and show that solution $y (\zeta)$ exists. Second, we
  will discuss how to analytically continue this solution $y (\zeta)$ to the
  case when $J = 0$.
  
  \textbf{Case 1} ($J \neq 0$). Recall the definition of $\mathcal{C}_f$ in \eqref{e:write-up140}.  By the previous discussion,
  $\mathscr{Q}$ admits a unique fixed-point $f \in
  B_{\delta}$ when $| \gamma |$ is sufficiently small. Thus,
  for $(\gamma, \alpha, \Gamma_0) \in \mathcal{C}_f$, Proposition
  \ref{prop:C-18} ensures that $f = \mathcal{Q} [f]$ is a fixed-point of
  $\mathcal{Q}$. By Lemma \ref{lem:write-up43}, we find $f \in
  \mathcal{S}^{\omega}$ and it may be checked directly that $y =
  \frac{1}{\zeta} \exp (F_0 + f)$ solves Equation \eqref{e:write-up75}.
 
  \textbf{Case 2} ($J = 0$). This case will be dealt with via analytical
  continuation from Case 1. Consider function $y (\zeta)$ solving Equation
  \eqref{e:write-up75} in Case 1. Particularly, upon integrating Equation
  \eqref{e:write-up75}, we find that the corresponding function $Z (\zeta)$ solves the equation \eqref{e:write-up16}  with 
  	\begin{align}
  		\bar{K}(\zeta) = \frac{\bar{y}(1/\zeta)}{\zeta y(\zeta)} - \gamma \bar{Z}(1/\zeta)    \label{e:seq014}
  	\end{align}
  analytic on $\partial \mathbb{D}$.
	On the other hand, since the Equation \eqref{e:write-up16} is obtained from
  integration of \eqref{e:write-up75}, we find that necessarily,
  \begin{equation}
    \frac{d}{d \zeta} \bar{K} (\zeta) = - \left( \frac{d W}{d \zeta} \right)^2
    \frac{1}{Z_{\zeta}} = - \left( \alpha \left( \zeta -
    \frac{1}{\zeta^3} \right) + \frac{i \Gamma_0}{ \zeta} \right)^2
    \frac{1}{Z_{\zeta} (\zeta)}. \label{e:145}
  \end{equation}
  Combining \eqref{e:seq014} and \eqref{e:145}, we thus have
  \begin{equation}
   \frac{d}{d \zeta} \left( \frac{\bar{y}(1/\zeta)}{\zeta y(\zeta)} - \gamma \bar{Z}\left(\frac{1}{\zeta}\right)\right)= - \left( \alpha \left( \zeta -
    \frac{1}{\zeta^3} \right) + \frac{i \Gamma_0}{ \zeta} \right)^2
    \frac{1}{Z_{\zeta} (\zeta)} \label{e:write-up053} .
  \end{equation}
	Now, since both, the left and the right-hand-side in \eqref{e:write-up053} extend holomorphically past $\partial \mathbb{D}$, the equation \eqref{e:write-up053} is valid within a neighbourhood of $\partial \mathbb{D}$ and extends, in fact, to $\mathbb{D} \setminus \{0\}$ as we shall discuss within the proof of Theorem \ref{seq:thmConformal}.  
	In order to emphasize the dependence on the parameters $(\gamma, \alpha, \Gamma_0;f)$, denote
  \begin{equation*}
	A (\gamma, \alpha, \Gamma_0; f) =   \frac{d}{d \zeta} \left( \frac{\bar{y}(1/\zeta)}{\zeta y(\zeta)} - \gamma \bar{Z}\left(\frac{1}{\zeta}\right)\right) ~\text{ and }~    B (\gamma, \alpha, \Gamma_0; f) = - \left( \alpha \left( \zeta -
	\frac{1}{\zeta^3} \right) + \frac{i \Gamma_0}{ \zeta} \right)^2
	\frac{1}{Z_{\zeta} (\zeta)}.
	\end{equation*}
	
We let $j=1$ or $j=2$, and consider the point $(\alpha,\Gamma_0) = (\mathfrak{a}_j, \pm \mathfrak{b}_j) $ at which the operator $\mathscr{Q}$ is ill-defined, as explained at the beginning of Section \ref{sec:modQ}. Using the numerical values provided within \eqref{e:4.47}, computations show that the inequality \eqref{e:write-up62} holds at the points $(\lambda_j,  \pm \sigma_j) =  \left( \frac{1}{3} \mathfrak{a}_j^2, \pm 2 i \mathfrak{b}_j \mathfrak{a}_j
\right)$. Thus, according to  Proposition \ref{prop:23}, for sufficiently small $\gamma$ and for $(\alpha, \Gamma_0)$ sufficiently close to $(\mathfrak{a}_j, \pm \mathfrak{b}_j)$, the function $Z(\zeta)$ is univalent and consequently $Z(\zeta)$ is the locally unique solution to Problem (a). Finally, whenever $|\gamma|$ is sufficiently small and the parameters $(\lambda, \sigma)$ are such that Problem (a) admits a solution let us set
  \begin{equation}
    q = \log (\zeta y (\zeta)) - F_0 \in \mathcal{S}^{\omega}, ~ y =
    \sqrt{Z_{\zeta}} ~ \text{ with $Z(\zeta)$ locally unique solution to Problem  (a).}
    \label{e:write-up152}
  \end{equation}
  When $J \neq 0$, local uniqueness of solutions to Problem (a) gives $q = f
  = \mathscr{Q} [f]$ but \textit{the Definition
  \eqref{e:write-up152} makes sense even when}  $(\lambda,\sigma) = (\lambda_j,\sigma_j)$, i.e., when $(\alpha,\Gamma_0) = (\mathfrak{a}_j, \pm \mathfrak{b}_j)$.  
  
We observe that, for  $(\alpha, \Gamma_0)$ sufficiently close to $(\mathfrak{a}_j, \pm \mathfrak{b}_j)$, the following statements are true.
  \begin{enumerate}
    \item The function $q$ depends real-analytically on $(\alpha, \Gamma_0,
    \gamma)$ (cf. Corollary \ref{cor:write-up20} and the relation $(\lambda,
    \sigma) = \left( \frac{1}{3} \alpha^2, 2 i \Gamma_0 \alpha
    \right)$ linking $(\lambda, \sigma)$ to $(\alpha, \Gamma_0)$).
    
    \item $\mathscr{Q}$ depends real-analytically on $(\gamma,
    \alpha, \Gamma_0; q)$. 
    
    \quad Since the constraint \eqref{e:C-48} is also
    real-analytic, $y$, $Z$, $A$, and $B$ are all real-analytic in
    $(\gamma, \alpha, \Gamma_0)$. So, for $\zeta$ in a neighbourhood of $\partial \mathbb{D}$, $A = B$ is a real-analytic identity in the variables $(\gamma,\alpha, \Gamma_0)$, and it is a-priori valid
	on open, non-empty subsets of the real-analytic surface
    $\mathcal{C}_q$. From these open subsets, we have, for the moment, excluded the
    points $(\mathfrak{a}_j, \pm \mathfrak{b}_j)$ whereon $J $ vanishes.
    
    \item According to Theorem \ref{thm:seq09}, for $|\gamma|$ sufficiently small, a solution to Problem (a)
    exists whenever $(\gamma, \alpha, \Gamma_0)$, corresponding to $(\lambda,
    \sigma, \gamma)$, lie within $\tilde{\mathcal{C}}_q$, \textit{irrespective of whether $(\alpha, \Gamma_0) =
    (\mathfrak{a}_j, \pm \mathfrak{b}_j)$}.
  \end{enumerate}
 From these observations, we find, by real-analytic continuation, that the conformal mapping identity $A
  = B$ is valid for the locally unique solution $Z (\zeta)$ to Problem (a)
  whenever $(\gamma, \alpha, \Gamma_0) \in \mathcal{C}_q$ and $(\alpha, \Gamma_0)$ are close \textit{or equal to} $(\mathfrak{a}_j, \pm \mathfrak{b}_j)$. Finally, using the fact that $Z (\zeta)$ solves
  Problem (a), differentiation of Equation \eqref{e:write-up16} with respect to $\zeta$ \textit{in conjunction with the conformal mapping identity $A=B$} shows that
  $y(\zeta)$ solves Equation \eqref{e:write-up75} whenever $(\alpha,\Gamma_0)$ are close \textit{or equal to} $(\mathfrak{a}_j, \pm \mathfrak{b}_j)$.
\end{proof}

As a corollary, we obtain the following result.

\begin{theorem}
  \label{theorem:bubble}
  Let the parameters $(\gamma, \alpha, \Gamma_0)$ be in a range where the solution $Z(\zeta)$ to Problem (a) is univalent  (as discussed in Section \ref{sec:write-up3.4}), and let $0 \leq | \Gamma_0 | < 1$. Let $q$ be defined according to \eqref{e:write-up152}, and assume that the constraint $(\gamma, \alpha, \Gamma_0) \in \mathcal{C}_q$ is satisfied. Then, the Bubble
  Problem b) admits a locally unique solution. For this solution, according to Theorem \ref{thm:seq09}, the
  coefficients $\{ c_j \}_{j = 2}^{\infty}$ of the conformal map
  \begin{equation}
    Z (\zeta) = - \zeta^{- 1} + \sum_{j = 2}^{\infty} c_j \zeta^{j - 1} \label{e:4.77}
  \end{equation}
  satisfy the following symmetry properties. $c_{2 j + 1} \equiv 0$ for all
  $j$, $c_{4 j} \in \mathbb{R}$ and $c_{4 j - 2} \in i \mathbb{R}$. Moreover,
  the conformal mapping identity \eqref{e:write-up053} holds for $Z (\zeta)$,
  the boundary curve $\partial \Omega$ is real-analytic, and $Z (\zeta)$
  extends holomorphically past $\partial \mathbb{D}$.
\end{theorem}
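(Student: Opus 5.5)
The plan is to assemble Theorem \ref{theorem:bubble} from Theorem \ref{thm:63}, Theorem \ref{thm:seq09} and the univalence discussion of Section \ref{sec:write-up3.4}. Put $(\lambda,\sigma)=(\frac{\alpha^2}{3},2i\Gamma_0\alpha)$. Since $\sigma\in i\mathbb{R}$ and $\lambda>0$, with $\lambda\neq\frac15$ because $\alpha^2<\frac35$ by Assumption \ref{Assumption4.13}, Lemma \ref{lem:3.13} gives, for $|\gamma|$ small, a locally unique fixed point of $\mathcal{R}$. By hypothesis $Z(\zeta)$ is univalent, so Proposition \ref{prop:write-up18} shows that $Z$ solves Problem (a) and that $q$ in \eqref{e:write-up152} is well defined and lies in $\mathcal{S}^{\omega}$.

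Next I would use the constraint $(\gamma,\alpha,\Gamma_0)\in\mathcal{C}_q$. Theorem \ref{thm:63} then says that \eqref{e:write-up75} has a unique solution $y=\frac1\zeta\exp(F_0+f)$ with $f\in B_\delta\cap\mathcal{S}^\omega$. Two cases arise.
\begin{enumerate}
\item If $J\neq0$, then $f$ is the fixed point of $\mathscr{Q}$. Its $Z$ solves Problem (a) by integration of \eqref{e:write-up75}, so local uniqueness in Theorem \ref{thm:seq09} forces $f=q$.
\item If $J=0$, then the analytic-continuation argument in Case 2 of the proof of Theorem \ref{thm:63} applies directly to $q$.
\end{enumerate}
In either case, $y=\sqrt{Z_\zeta}$ built from $q$ solves Problem (b). Because $Z$ is univalent, $W$ from \eqref{e:write-up11} gives the streamfunction, and the bubble problem has a solution. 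Local uniqueness is inherited from uniqueness of the fixed point in $B_\delta$. The conformal mapping identity \eqref{e:write-up053} is exactly the identity $A=B$ established in that proof.

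The remaining properties are read off from earlier results.
\begin{enumerate}
\item The symmetries $c_{2j+1}=0$, $c_{4j}\in\mathbb{R}$ and $c_{4j-2}\in i\mathbb{R}$ follow from Theorem \ref{thm:seq09}, using Corollary \ref{cor:write-up18} and Lemma \ref{lem:seq20} with $\sigma\in i\mathbb{R}$.
\item Holomorphic extension of $Z$ past $\partial\mathbb{D}$ comes from $\boldsymbol{c}=\mathcal{T}[\boldsymbol{a}]\in\ell^1_\rho$ for some $\rho<1$ (Lemma \ref{lem:3.13}). This makes the Laurent series converge on an annulus $\rho<|\zeta|\le1$, and Lemma \ref{lem:write-up43} gives the same conclusion.
\item Real-analyticity of $\partial\Omega=Z(\partial\mathbb{D})$ follows because $Z$ is analytic across the circle and $Z_\zeta\neq0$ there (Proposition \ref{prop:23}), so $\theta\mapsto Z(e^{i\theta})$ is a regular real-analytic parametrisation.
\end{enumerate}

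The main obstacle is the identification $f=q$. It requires that the $Z$ obtained from the bubble fixed point is the same locally unique solution of Problem (a) as the one from $\mathcal{R}$, which means checking that both lie in the respective small balls. I would handle this through the smallness of $\|f\|$ and $|\gamma|$: both constructions reduce to $F_0$, respectively $\boldsymbol{a}^0$, as $\gamma\to0$ and $b\to0$. Since the $\ell^1_\rho$ norm controls the sup norm, $q$ then lies in $B_\delta$ for $|\gamma|$ and $|b|$ small, and uniqueness in Theorem \ref{thm:63} finishes the argument.
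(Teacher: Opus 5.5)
Your assembly of ingredients (Theorem \ref{thm:63}, Theorem \ref{thm:seq09}, univalence from Section \ref{sec:write-up3.4}, and the identity $A=B$ from Case~2) matches the paper, which simply states this result as a corollary of Theorem \ref{thm:63}. The gap is in how you propose to identify $f=q$.

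Showing $q\in B_\delta$ and then invoking the uniqueness in Theorem \ref{thm:63} is circular. That uniqueness is among solutions of \eqref{e:write-up75}, i.e.\ fixed points of $\mathscr{Q}$ obeying the constraint. $q$ is not known to be one; that is exactly the statement that the Problem (a) solution solves Problem (b). Uniqueness has to be used in the other direction, as in your Case~1 and in the paper: $y_f$ solves \eqref{e:write-up75}, hence $Z_f$ solves Problem (a), hence $Z_f=Z$ by Lemma \ref{lem:3.13}.

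For this you need $\iota[\boldsymbol{a}_f]$ to lie in the ball $\mathcal{B}_r(\boldsymbol{0})\subset\ell^1$ where the fixed point of $\mathcal{R}$ is unique. Smallness of $\|f\|$ in the sup norm does not give that, because the sup norm does not control $\ell^1$ coefficient norms. ``$\ell^1_\rho$ controls the sup norm'' is the irrelevant direction. The missing estimate can be obtained as in Lemma \ref{lem:write-up43}. First, \eqref{e:write-up128} bounds $\|\eta f'\|_{L^2(\partial\mathbb{D})}$ by $O(|B_0|+|\gamma|)$. Cauchy--Schwarz then turns this into an $\ell^1$ bound on the Taylor coefficients of $f$, and hence on those of $y_f-y_0$.

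A second gap concerns the constraint itself. Theorem \ref{thm:63} imposes the constraint at its own fixed point $f$, i.e.\ $(\gamma,\alpha,\Gamma_0)\in\mathcal{C}_f$, whereas the hypothesis here is $(\gamma,\alpha,\Gamma_0)\in\mathcal{C}_q$. Your step ``use $\mathcal{C}_q$; Theorem \ref{thm:63} then applies'' presupposes $f=q$. You only obtain $f=q$ afterwards, from $f$ solving \eqref{e:write-up75}, which in turn rests on the constraint on $f$. The corrected argument above yields only $\mathcal{C}_f\subset\mathcal{C}_q$. For the converse you need an extra step, for example showing that both sets are local graphs over $(\gamma,\Gamma_0)$, so that they coincide. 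That would be an implicit-function computation as in Lemma \ref{lem:0027}, but with the parameter dependence of $f$, resp.\ $q$, included. The paper passes over this identification as well, but your argument relies on it explicitly.

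Two smaller slips:
\begin{itemize}
\item $\boldsymbol{c}\in\ell^1_\rho$ with $\rho<1$ gives convergence for $|\zeta|<1/\rho$, i.e.\ beyond the unit circle, not on $\rho<|\zeta|\le 1$.
\item At $J=0$ the operator $\mathscr{Q}$ is undefined, so local uniqueness there should be deduced from uniqueness for Problem (a) rather than from the fixed point of $\mathscr{Q}$. This works because every univalent solution of Problem (b) solves Problem (a).
\end{itemize}
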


The previous proof introduced concepts that are of interest in their own
right, like the conformal mapping identity \eqref{e:write-up053}. As it turns
out, the identity \eqref{e:write-up053} transcends the bubble problem and links the two mathematical problems (a) and (b).
The conformal mapping identity and its importance is discussed in more detail
in the following section.

\section{Conformal mapping identity and scalar constraint }\label{sec:05}

Due to its independent interest, let us formulate the conformal mapping
identity \eqref{e:write-up053} as a theorem. To do so, recall the notation
$(\gamma, \lambda, \sigma) \in \tilde{\mathcal{C}}_q$ introduced in Definition
\ref{rem:60}. Particularly, in writing $(\gamma, \lambda, \sigma) \in
\tilde{\mathcal{C}}_q,$ it is implied that $\sigma \in i \mathbb{R}$.

\begin{theorem} \label{seq:thmConformal} 
  Let $(\gamma, \lambda, \sigma)$ be in a range where Problem (a) has a locally unique solution  $Z(\zeta)$.  Assume, moreover, that $(\gamma, \lambda, \sigma)
  \in \tilde{\mathcal{C}}_q$, with $q$ defined according to \eqref{e:write-up152}. 
 The following conformal mapping identity is valid for the function $Z(\zeta)$.
  \begin{align}
 \frac{d}{d \zeta} \left( \frac{\bar{y}(1/\zeta)}{\zeta y(\zeta)} - \gamma \bar{Z}(1/\zeta) \right) = - \left( \sqrt{3 \lambda} \left( \zeta -
 \frac{1}{\zeta^3} \right) + \frac{\sigma}{2 \sqrt{3 \lambda} \zeta}
 \right)^2 \frac{1}{Z_{\zeta} (\zeta)}, ~ \text{ for } ~ \zeta \in \mathbb{D}\setminus \{0\}. \label{e:write-up53} 
  \end{align}
  The identity  \eqref{e:write-up53} extends, moreover, to an open neighbourhood of $\partial \mathbb{D} \subset \mathbb{C}$.
\end{theorem}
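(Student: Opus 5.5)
Since $(\gamma,\lambda,\sigma)\in\tilde{\mathcal{C}}_q$, set $\alpha=\sqrt{3\lambda}$ and $\Gamma_0=\sigma/(2i\sqrt{3\lambda})\in\mathbb{R}$. The right-hand side of \eqref{e:write-up53} is then exactly $B(\gamma,\alpha,\Gamma_0;q)$ from the proof of Theorem \ref{thm:63}, and the left-hand side is $A(\gamma,\alpha,\Gamma_0;q)$. So the task splits into two parts: first establish $A=B$ on a neighbourhood of $\partial\mathbb{D}$, and then extend the identity into $\mathbb{D}\setminus\{0\}$.

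\textbf{Step 1: the identity on an annulus around $\partial\mathbb{D}$.} This is essentially the content of the proof of Theorem \ref{thm:63}.
\begin{enumerate}[(i)]
\item If $J\neq0$, local uniqueness of solutions to Problem (a) gives $q=f=\mathscr{Q}[f]$. Proposition \ref{prop:C-18} together with the constraint \eqref{e:C-48} then shows that $f=\mathcal{Q}[f]$, so $y=\zeta^{-1}\exp(F_0+q)$ solves \eqref{e:write-up75}.
\item On $|\zeta|=1$, differentiating \eqref{e:write-up16} and comparing with \eqref{e:write-up75} shows that \eqref{e:write-up75} is precisely the statement that $\frac{d}{d\zeta}\bar K=-W_\zeta^2/Z_\zeta$ there. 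Here $\bar K$ is given by \eqref{e:seq014}.
\item Lemma \ref{lem:write-up43} and Theorem \ref{thm:seq09} show that $y$ and $Z$ extend holomorphically past $\partial\mathbb{D}$. Hence $A$ and $B$ are holomorphic on an annulus $\rho<|\zeta|<\rho^{-1}$. Since they agree on the circle, the identity theorem gives $A=B$ on the whole annulus.
\item If $J=0$, i.e.\ $(\alpha,\Gamma_0)=(\mathfrak{a}_j,\pm\mathfrak{b}_j)$, I would invoke the real-analytic continuation in the parameters from the proof of Theorem \ref{thm:63}. This uses Corollary \ref{cor:write-up20} and the analyticity of $\mathcal{C}_q$.
\end{enumerate}

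\textbf{Step 2: extension to $\mathbb{D}\setminus\{0\}$.} Here I would argue that both sides are meromorphic in $\mathbb{D}\setminus\{0\}$, and then use the identity theorem on the connected set $\mathbb{D}\setminus\{0\}$.
\begin{enumerate}[(i)]
\item The right-hand side $B$ is holomorphic in $\mathbb{D}\setminus\{0\}$. This is because $Z$ is univalent, so $Z_\zeta\neq0$ there, and $W_\zeta$ is a Laurent polynomial.
\item The left-hand side equals $\frac{d}{d\zeta}\bar K(\zeta)$. By Proposition \ref{prop:write-up18}, $K(\zeta)$ is analytic and single-valued in $\mathbb{D}\setminus\{0\}$ with Laurent series \eqref{e:write-up17}. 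The corresponding $\bar K(\zeta)=\overline{K(1/\bar\zeta)}$ is therefore given by the series \eqref{e:seq08}, $\bar k_{-2}\zeta^{-3}+k_0\zeta^{-1}+\sum_{j\ge1}\bar k_j\zeta^{j-1}$.
\end{enumerate}

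\textbf{The main obstacle} is that convergence of the series in (ii) inside $\mathbb{D}$ is not automatic, because $K$ was built on the exterior variable $\zeta^{-1}$. My first attempt would be to turn the argument around. The right-hand side $B$ is holomorphic in $\mathbb{D}\setminus\{0\}$, and $B$ has a single-valued primitive there, because its residue at $0$ vanishes; this is checked on the annulus, where $B=A$ is an exact derivative. Integrating $B$ from a point of the annulus therefore produces a holomorphic function on $\mathbb{D}\setminus\{0\}$. This function coincides with $\bar y(1/\zeta)/(\zeta y)-\gamma\bar Z(1/\zeta)$ up to a constant on the annulus, and hence provides its continuation. Consequently $A$ itself continues holomorphically to $\mathbb{D}\setminus\{0\}$, and $A=B$ there by uniqueness of analytic continuation. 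The remaining care is that the continuation of the left-hand side along paths inside $\mathbb{D}\setminus\{0\}$ must be single-valued, which the vanishing residue ensures.

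Finally, the statement that the identity extends to an open neighbourhood of $\partial\mathbb{D}$ is exactly the annulus result of Step 1.
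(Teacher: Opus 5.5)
Your argument is correct. Step 1 is the paper's argument: the paper simply quotes \eqref{e:write-up053} from the proof of Theorem \ref{thm:63}. The equivalence you assert in Step 1(ii) does hold, because $\overline{W_\zeta}=-\zeta^2W_\zeta$ on $|\zeta|=1$ gives $|W_\zeta|^2=-\zeta^2W_\zeta^2$.

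You diverge from the paper in Step 2, and there your ``main obstacle'' comes from misreading the reflection. In the paper's notation $\bar K(\zeta)=\overline{K(\bar\zeta)}$, not $\overline{K(1/\bar\zeta)}$ as you wrote. So for $0<|\zeta|<1$ the function $K$ is evaluated at the interior point $\bar\zeta$, where Proposition \ref{prop:write-up18} already gives analyticity. Equivalently, the Laurent series of $\frac{\bar y(1/\zeta)}{\zeta y(\zeta)}-\gamma\bar Z(1/\zeta)$ on your annulus is \eqref{e:seq08}. Its only negative powers are $\zeta^{-3}$ and $\zeta^{-1}$, so its regular part converges on the whole disc. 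That is the paper's entire proof: the left-hand side continues to $\mathbb{D}\setminus\{0\}$ as $\bar K$, and the identity theorem carries $A=B$ from the annulus inward.

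Your alternative is also valid: you continue the left-hand side by integrating $B$, after checking $\oint B\,d\zeta=0$ on a circle inside the annulus. It costs more, though. It needs $Z_\zeta\neq 0$ on all of $\mathbb{D}\setminus\{0\}$, i.e.\ univalence; otherwise you would have to show that $B$ has zero residue at every zero of $Z_\zeta$. Also, in your version the identity inside the disc holds by construction of the continuation, so its only content is that a single-valued continuation exists. The paper's route needs only that $B$ is meromorphic. It also identifies the continued left-hand side with $\bar K=-i\,\xi\circ Z$, which is what gives the identity its meaning.
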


\begin{proof}
	The identity in a neighbourhood of $\partial \mathbb{D}$ is the equation \eqref{e:write-up053} re-written in terms of $\sigma$ and $\lambda$, so it remains only to discuss the case when $0<|\zeta|<1$. From \eqref{e:write-up17}, we find that the function $\bar{K} (\zeta) - (\bar{k}_{- 2}
	\zeta^{- 3} + \bar{k}_0 \zeta^{- 1})$ is holomorphic within $\mathbb{D}$ and consequently the expression $\frac{\bar{y}(1/\zeta)}{\zeta y(\zeta)} - \gamma \bar{Z}(1/\zeta)$ extends analytically onto $\mathbb{D}\setminus \{0\}$.
\end{proof}

Importantly, the validity of \eqref{e:write-up53} solely depends on the fact
that $(\gamma, \lambda, \sigma) \in \tilde{\mathcal{C}}_q$, and on the Problem (a)
formulation. The identity \eqref{e:write-up53} is thus also satisfied by
solutions to the Vortex Problem (i) whenever $(\gamma, \lambda, \sigma) \in
\tilde{\mathcal{C}}_q$ for $(\lambda, \sigma) = (- \hat{\alpha}, - \bar{\beta} +
\gamma)$. Moreover, as we have seen in the proof of Theorem \ref{thm:63},
solutions to Problem (a) that satisfy the identity \eqref{e:write-up53} yield
solutions to Problem (b). More precisely, the following lemma is true.

\begin{lemma}
  \label{seq:lemConformal}Let $\sigma \in i \mathbb{R}$, $\lambda >0$ and $\gamma$ be in a range were Problem (a) has a locally unique solution $Z : \mathbb{D} \rightarrow \Omega$. Then, $Z (\zeta)$ is also the
  locally unique solution to Problem (b) if, and only if, the conformal
  mapping identity \eqref{e:write-up53} holds. I.e., if, and only if,
  $(\gamma, \lambda, \sigma) \in \tilde{\mathcal{C}}_q$, with $q$ defined according to
  \eqref{e:write-up152}.
\end{lemma}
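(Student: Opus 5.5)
The plan is to prove the two implications separately, working with the pair of identities $A=B$ from the proof of Theorem \ref{thm:63}. Here $A$ is the derivative of the left-hand side of \eqref{e:write-up53} and $B$ is its right-hand side. Throughout, $(\alpha,\Gamma_0)=(\sqrt{3\lambda},\sigma/(2i\sqrt{3\lambda}))$, which is real because $\sigma\in i\mathbb{R}$.

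\textbf{($\Leftarrow$)} Suppose the identity \eqref{e:write-up53} holds for the locally unique solution $Z$ of Problem (a). Since $Z$ solves Problem (a), $\bar K(\zeta)=\bar y(1/\zeta)/(\zeta y(\zeta))-\gamma\bar Z(1/\zeta)$. The identity therefore states $d\bar K/d\zeta=-(W_\zeta)^2/Z_\zeta$, with $W$ given by \eqref{e:write-up11}.

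On $|\zeta|=1$ I then differentiate \eqref{e:write-up16} with respect to the boundary parameter, exactly as in the final step of Case 2 of Theorem \ref{thm:63}. I substitute this expression for $\bar K'$ and divide by $|y|^2$, which is allowed because $y\neq0$ on $\partial\mathbb D$ by Assumption \ref{A-priori} and Proposition \ref{prop:23}. Taking real parts should reproduce \eqref{e:write-up75}.

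This is a reversal of the formal derivation of Problem (a) from Problem (b) in Section \ref{sec:write-up2.2}. Concretely, writing $\zeta=e^{i\theta}$, the term $|W_\zeta|^2$ arises from $\bar K'(\zeta^{-1})\bar y(\zeta^{-1})^2$, and the curvature term arises from differentiating $\zeta y\,\overline{\zeta y}$. The symmetry properties of Theorem \ref{thm:seq09} ($c_{2j+1}=0$, and the reality pattern of Lemma \ref{lem:seq20}) are what make $q=\log(\zeta y)-F_0$ a function of $\eta=\zeta^2$ lying in $\mathcal S^\omega$, so $Z$ is a solution of Problem (b). Local uniqueness for Problem (b) then follows from the uniqueness part of Theorem \ref{thm:63}: any other nearby solution of Problem (b) integrates to a nearby solution of Problem (a), and local uniqueness there forces equality.

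\textbf{($\Rightarrow$)} Suppose $Z$ also solves Problem (b). Integrating \eqref{e:write-up75}, as done in Case 2 of Theorem \ref{thm:63}, gives \eqref{e:145}, which is $A=B$ near $\partial\mathbb D$. Theorem \ref{seq:thmConformal}'s argument then extends this to $\mathbb D\setminus\{0\}$, because $\bar K$ minus its principal part is holomorphic in $\mathbb D$.

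For the equivalence with $(\gamma,\lambda,\sigma)\in\tilde{\mathcal C}_q$, I plan to argue as follows. If $Z$ solves Problem (b), then $q=f$ satisfies $f=\mathcal Q[f]$, so by Proposition \ref{prop:C-18} the constraint \eqref{e:C-48} holds with $f=q$; this is exactly membership in $\mathcal C_q$. Conversely, membership in $\tilde{\mathcal C}_q$ yields the identity by Theorem \ref{seq:thmConformal}.

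The main obstacle is the $(\Leftarrow)$ computation. One must check carefully that the derivative of \eqref{e:write-up16}, after inserting $\bar K'$, gives precisely the real equation \eqref{e:write-up75} and not just its derivative or some multiple of it. In particular, an integration constant has to be fixed, and I expect it to be fixed by the $\zeta^{1}$ coefficient relation \eqref{e:seq09} together with \eqref{e:write-up12}, i.e.\ by consistency of $k_2$. I would first verify this step at $\gamma=0$ against the explicit $y_0$ and \eqref{e:29}, and then carry the general computation through using \eqref{e:seq10}--\eqref{e:seq12}.
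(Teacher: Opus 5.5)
Your proposal follows the paper's own route (the lemma is read off from Case 2 of the proof of Theorem \ref{thm:63} together with Theorem \ref{seq:thmConformal}), and it is sound, with one correction: at the exceptional points where $J=0$, take \eqref{e:C-48} directly from \eqref{e:C-27} at $\eta=0$ (Remark \ref{rem:C-07}) rather than from Proposition \ref{prop:C-18}, which assumes $J\neq0$. The obstacle you anticipate in ($\Leftarrow$) does not arise: on $|\zeta|=1$ one has $\bar y(1/\zeta)/(\zeta y(\zeta))=\zeta\,\overline{Y(\zeta)}/Y(\zeta)$, which is unimodular (so the curvature term comes from differentiating this phase, not $|Y|^2$), and $(\zeta W_\zeta)^2=-|W_\zeta|^2$; hence multiplying \eqref{e:write-up53} on the circle by $Y(\zeta)/\overline{Y(\zeta)}$ turns it pointwise into \eqref{e:write-up75} divided by $|y|^2$, with no integration constant and no appeal to the $k_2$ relation (that relation is the separate scalar constraint of Lemma \ref{lem:67}).
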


Particularly, for $(\gamma, \lambda, \sigma) \in \tilde{\mathcal{C}}_q$, with $q$
defined according to \eqref{e:write-up152}, such that solution $Z (\zeta)$ to
Problem (a) is univalent, the Vortex Problem (i) and Bubble Problem (ii) admit
the same boundaries $\partial \Omega$ and become mathematically equivalent in
the sense, that Problems (a) and (b) become equivalent. Numerical evidence
supplementing these theoretical findings is provided in \cite{CKT}.
Note, however, that despite this mathematical analogy the observed streamlines
on which $\psi \equiv \text{const} \neq 0$ generally differ for Vortex Problem
(i) and Bubble Problem (ii), even if \eqref{e:write-up53} holds, since the two streamfunctions are different (cf. discussion of Section
\ref{sec:write-up2.2} and Figure \ref{Fig.01}).

Finally, let us note that, according to the discussion of Section
\ref{sec:write-up2.2}, for any solution to Problem (b), the two a-priori
different expressions for $k_2$ obtained from \eqref{e:write-up12} and
\eqref{e:seq09} agree. This results in the constraint \eqref{e:152}, which
appears weaker compared to the constraint $(\gamma, \lambda, \sigma) \in
\tilde{\mathcal{C}}_q$ since it only relies on the $k_2$-coefficient whereas,
according to \eqref{e:16}, the identity \eqref{e:write-up53} takes into
account all of the $k_j$ coefficients. However, as we will now prove, these
two seemingly different constraints are, in fact, locally equivalent.

\begin{lemma}
  \label{lem:67}Let $\boldsymbol{a}$ and $\boldsymbol{c} = \mathcal{T}
  [\boldsymbol{a}]$ be the series corresponding to the locally unique solution
  $Z (\zeta)$ to Problem (a). Let $\tilde{\alpha}$ and $\tilde{\Gamma}_0$  be defined according to $\tilde{\alpha} := \sqrt{3 \lambda}$ and $\tilde{\Gamma}_0 := \frac{\sigma}{2 i \sqrt{3 \lambda}}$.\footnote{We use the $\tilde{\phantom{\Gamma}}$ notation to emphasize the fact that this Lemma is valid for solutions to Problem (a). It automatically holds for solutions to Problem (b).} Assume that $ | \tilde{\Gamma}_0 | < 1$,  $\tilde{\Gamma}_0 \in \mathbb{R}$, and that $(\gamma, \tilde{\alpha}, \tilde{\Gamma}_0)$ be constrained according to
  \begin{equation}
    2 \tilde{\alpha}^2 +\tilde{\Gamma}_0^2 - 4 i \tilde{\Gamma}_0- (3 a_2^2
    - 2 a_4) \tilde{\alpha}^2 - 1 - \gamma + \gamma \sum_{m = 2}^{\infty} \bar{c}_m
    a_m + \frac{1}{3} a_4 \tilde{\alpha}^2 - 2 \bar{a}_2 \tilde{\alpha}^2 +2  i \tilde{\Gamma}_0 \tilde{\alpha} = 0. \label{e:152}
  \end{equation}
  Then, for $| \gamma |$ small and $(\lambda, \sigma)$ such that $| b (\tilde{\alpha}, \tilde{\Gamma}_0) |$ is also sufficiently small, we have $(\gamma, \lambda, \sigma) \in \tilde{\mathcal{C}}_q$ with
  $q$ defined according to \eqref{e:write-up152}.
\end{lemma}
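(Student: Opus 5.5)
We show that the scalar constraint \eqref{e:152} and the constraint \eqref{e:0074} defining $\tilde{\mathcal{C}}_q$ cut out the \emph{same} real-analytic surface near the $\gamma=0$, $b=0$ configuration. First, both are necessary conditions for the bubble problem. Second, \eqref{e:152} is itself a codimension-one real-analytic condition with the same local graph structure. Two such surfaces that share an open piece coincide locally.

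\textbf{Step 1: the set $\tilde{\mathcal{C}}_q$ satisfies \eqref{e:152}.} Let $(\gamma,\lambda,\sigma)\in\tilde{\mathcal{C}}_q$. By Theorem \ref{thm:63} and Lemma \ref{seq:lemConformal}, the locally unique solution $Z$ of Problem (a) also solves Problem (b), so it satisfies the conformal mapping identity \eqref{e:write-up53}. Expand both sides of \eqref{e:16} in Laurent series and compare the coefficient of $\zeta^{0}$. This yields the bubble expression \eqref{e:write-up12} for $\bar k_2$ in terms of $a_2,a_4,\tilde\alpha,\tilde\Gamma_0$. Equate it with the Problem (a) expression for $k_2$ from \eqref{e:seq09}, using $k_{-2}=\lambda=\tilde\alpha^2/3$ and $k_0=-6\lambda\bar a_2+\sigma$ from \eqref{e:seq015}. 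The result is exactly \eqref{e:152}. Hence $\tilde{\mathcal{C}}_q\subset\{\eqref{e:152}\}$ locally.

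\textbf{Step 2: \eqref{e:152} defines a real-analytic graph.} Define
\[
E(\gamma,\tilde\alpha,\tilde\Gamma_0):=\text{left-hand side of }\eqref{e:152},
\]
with $\boldsymbol a$, $\boldsymbol c$ the fixed point from Lemma \ref{lem:3.13}. By Corollary \ref{cor:write-up20}, $E$ is real-analytic in $(\gamma,\lambda,\sigma)$.

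By Lemma \ref{lem:seq20}, with $\sigma\in i\mathbb R$ we have $a_2\in i\mathbb R$ and $a_4\in\mathbb R$. Using this, check that $E$ is real-valued, so \eqref{e:152} is a single real equation.

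At $\gamma=0$, insert the explicit coefficients $a_2^0$ and $a_4^0=\lambda$ from \eqref{e:18}. Then $E(0,\tilde\alpha,\tilde\Gamma_0)$ should reduce to a nonzero multiple of $b(\tilde\alpha,\tilde\Gamma_0)$; this is the same computation that produced \eqref{e:C-15}. Consequently $\partial_{\tilde\Gamma_0}E(0,\cdot)$ is a nonzero multiple of $\partial_{\Gamma_0}b=2\Gamma_0$ on $b^{-1}(0)$. By the implicit function theorem, $\{E=0\}$ is locally a real-analytic graph $\tilde\Gamma_0=h(\gamma,\tilde\alpha)$ for $|\gamma|$ small and $\Gamma_0\neq0$. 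Where $\Gamma_0=0$, parametrise instead over $(\gamma,\tilde\Gamma_0)$ using $\partial_{\tilde\alpha}b\neq0$. This mirrors the graph parametrisations of $\mathcal C_f$ in Lemmas \ref{lem:0026} and \ref{lem:0027}.

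\textbf{Step 3: the two graphs coincide.} By Lemmas \ref{lem:0026} and \ref{lem:0027}, $\tilde{\mathcal{C}}_q$ is also a graph $\tilde\Gamma_0=g(\gamma,\tilde\alpha)$ over the same $(\gamma,\tilde\alpha)$ neighbourhood. This uses that $q$ depends analytically on the parameters, which makes \eqref{e:0074} with $f=q$ an analytic scalar equation. Step 1 shows that the graph of $g$ lies in $\{E=0\}$. Uniqueness in the implicit function theorem then forces $g=h$ locally. Therefore every $(\gamma,\lambda,\sigma)$ satisfying \eqref{e:152}, with $|\gamma|$ and $|b|$ small, lies in $\tilde{\mathcal{C}}_q$.

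\textbf{Main obstacle.} The hardest step is Step 2: showing that the linearisation of $E$ in the transverse direction is nonzero, and that it is nonzero in the same direction as the graph description of $\tilde{\mathcal C}_q$. The plan is to carry out the $\gamma=0$ computation with the exact solution and confirm that $E|_{\gamma=0}\propto b$ with a nonvanishing factor, which is expected to resemble the prefactor of $B_0$ in \eqref{e:C-11}. Perturbation in $\gamma$ then preserves the nondegeneracy. The points where this factor vanishes, analogous to $\alpha^2=3/5$, must be excluded, and Assumption \ref{Assumption4.13} already does so.
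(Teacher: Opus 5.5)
Your proposal is correct and follows the paper's proof: the inclusion $\tilde{\mathcal{C}}_q\subset\{\eqref{e:152}\}$ (solutions of Problem (b) satisfy the $k_2$-consistency), an implicit-function argument for \eqref{e:152} at $\gamma=0$, $b=0$, and coincidence of the two graphs by uniqueness in the implicit function theorem. The one difference is that the paper solves for $\tilde\alpha$ throughout, since the $\tilde\alpha$-derivative is nonzero on all of $b^{-1}(0)$ and this matches the graph over $(\gamma,\Gamma_0)$ in Lemma \ref{lem:0027}; your case split at $\tilde\Gamma_0=0$ is therefore harmless but unnecessary.

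Your Step 2 expectation also checks out, once three terms of the printed \eqref{e:152} are corrected. The terms $-4i\tilde\Gamma_0$, $-2\bar a_2\tilde\alpha^2$ and $+2i\tilde\Gamma_0\tilde\alpha$ should read $-4i\tilde\Gamma_0\tilde\alpha a_2$, $-2a_2^2\tilde\alpha^2$ and $-2i\tilde\Gamma_0\tilde\alpha a_2$; this is what your Step 1 derivation actually produces, and it is what makes $E$ real-valued via Lemma \ref{lem:seq20}. With this expression one gets exactly $E(0,\tilde\alpha,\tilde\Gamma_0)=B_0=\frac{(25\tilde\alpha^2+3)(\tilde\alpha^2+3)}{(5\tilde\alpha^2-3)^2}\,b(\tilde\alpha,\tilde\Gamma_0)$, which makes the paper's ``routine computation'' explicit.
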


\begin{remark}
  Since \eqref{e:152} is obtained by imposing a consistency relation on $k_2$,
  one may think of it as \textit{scalar constraint}. The conformal mapping
  identity \eqref{e:write-up53}, on the other hand, is obtained by imposing
  consistency relations amongst all the $k_j$, so we shall think of it as    \textit{functional constraint.}
\end{remark}

\begin{proof}[Proof of Lemma \ref{lem:67}]
  The proof is similar to the discussion of Section \ref{sec:4.5}, so we will
  be somewhat brief. Let us denote the left-hand-side of \eqref{e:152} by $C_2
  (\gamma, \tilde{\alpha}, \tilde{\Gamma}_0; \boldsymbol{a}),$ so the constraint \eqref{e:152}
  becomes rewritten as $C_2 (\gamma, \tilde{\alpha}, \tilde{\Gamma}_0; \boldsymbol{a}) = 0$. Let
  $\boldsymbol{a}^0$ denote the series corresponding to the $\gamma = 0$ exact
  solution (cf. Equation \eqref{e:write-up50}). Then, a routine computation
  shows that
  \begin{equation}
    \partial_{\tilde{\alpha}} C_2 (0,\tilde{\alpha},\tilde{\Gamma}_0; \boldsymbol{a}^0) \neq 0
  \end{equation}
  whenever $b (\tilde{\alpha}, \tilde{\Gamma}_0) = 0$. Since for $b (\tilde{\alpha}, \tilde{\Gamma}_0) = 0$ we
  have $C_2 (0, \tilde{\alpha}, \tilde{\Gamma}_0; \boldsymbol{a}^0) = 0$, the implicit function
  theorem grants the existence of a solution $\tilde{\alpha}  =\tilde{\alpha} (\gamma,
  \tilde{\Gamma}_0; \boldsymbol{a})$ to \eqref{e:152}. Let
  \begin{equation}
    \mathcal{C}_{\boldsymbol{a}} = \left\{ (\gamma,\tilde{\alpha}, \tilde{\Gamma}_0): C_2 (\gamma, \tilde{\alpha},\tilde{\Gamma}_0;
    \boldsymbol{a} ) = 0, ~ | b (\tilde{\alpha},\tilde{\Gamma}_0) | ~ \text{ sufficiently small}, ~ | \tilde{\Gamma}_0 | < 1 \right\} .
  \end{equation}
  Now, for $|\gamma|$ sufficiently small, we have $\mathcal{C}_q \subset \mathcal{C}_{\boldsymbol{a}}$ since any solution
  to Problem (b) necessarily satisfies the constraint \eqref{e:152}. So, for
  $| \gamma |$ small and $| \tilde{\Gamma}_0 | < 1$, we find
  that $\tilde{\alpha} (\gamma,\tilde{\Gamma}_0; \boldsymbol{a}) = \alpha (\gamma, \tilde{\Gamma}_0; q)$ as
  functions of $(\gamma, \tilde{\Gamma}_0)$ where $\alpha(\gamma, \tilde{\Gamma}_0; q)$ denotes
  the function defined within the formulation of Lemma \ref{lem:0027} for $\tilde{\Gamma}_0 = \Gamma_0$. Thus,
  locally, in a neighbourhood of $\{\gamma =0\} \times
  b^{- 1} (0),$ $\mathcal{C}_q = \mathcal{C}_{\boldsymbol{a}}$.\footnote{We do not claim that $\mathcal{C}_q = \mathcal{C}_{\boldsymbol{a}}$ \textit{globally}. }
\end{proof}

Of course, we may write the equation \eqref{e:152} directly in terms of $\lambda$ and $\sigma$ which yields the constraint 
		\begin{align}
		4 \lambda  - \frac{\sigma^2}{12 \lambda} - 2 \frac{\sigma}{\sqrt{3 \lambda}}- (3 a_2^2
		- 2 a_4) 3 \lambda - 1 - \gamma + \gamma \sum_{m = 2}^{\infty} \bar{c}_m
		a_m + a_4 \lambda - 6 \bar{a}_2\lambda + \sigma = 0 \label{e:5.5}
	\end{align}
on the parameters $(\gamma,\lambda,\sigma)$.

\begin{corollary}\label{cor:5.5}
  Let $| \gamma|$ be small and let $(\lambda, \sigma)$ be such that, for the parameters $\tilde{\alpha}$ and $\tilde{\Gamma}_0$ defined according to $\tilde{\alpha} := \sqrt{3 \lambda}\in (0,\infty)$ and $\tilde{\Gamma}_0 := \frac{\sigma}{2i \sqrt{3 \lambda}} \in \mathbb{R}$,  $| b
  (\tilde{\alpha}, \tilde{\Gamma}_0) |$ is sufficiently small and $ | \tilde{\Gamma}_0|  < 1$. Let $(\gamma,\lambda,\sigma)$ be in a range where Problem (a) has a locally unique solution $Z (\zeta)$, and let the scalar constraint
  \eqref{e:5.5} hold for $(\lambda,\sigma)$. Then 
  $Z (\zeta)$ satisfies the conformal mapping identity \eqref{e:write-up53}.
\end{corollary}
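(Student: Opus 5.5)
The plan is to chain Lemma \ref{lem:67} with Theorem \ref{seq:thmConformal}; the corollary is essentially their composition. First, I will check that the hypotheses of Lemma \ref{lem:67} hold. By assumption, $\boldsymbol{a}$ and $\boldsymbol{c}=\mathcal{T}[\boldsymbol{a}]$ are the series of the locally unique solution $Z(\zeta)$ to Problem (a). The quantities $\tilde{\alpha}=\sqrt{3\lambda}$ and $\tilde{\Gamma}_0=\sigma/(2i\sqrt{3\lambda})$ are real with $|\tilde{\Gamma}_0|<1$, and $|b(\tilde{\alpha},\tilde{\Gamma}_0)|$ and $|\gamma|$ are small.

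Next I will verify that the scalar constraint \eqref{e:5.5} is literally \eqref{e:152} rewritten via $\tilde{\alpha}^2=3\lambda$ and $2i\tilde{\Gamma}_0\tilde{\alpha}=\sigma$. This uses $\tilde{\Gamma}_0^2=-\sigma^2/(12\lambda)$, $\tfrac13 a_4\tilde{\alpha}^2=a_4\lambda$ and $2\bar a_2\tilde{\alpha}^2=6\bar a_2\lambda$. It is a routine substitution, to be done term by term. I expect the only delicate term to be the $-4i\tilde{\Gamma}_0$ versus $-2\sigma/\sqrt{3\lambda}$ match, which I will check carefully and not belabour. Lemma \ref{lem:67} then yields $(\gamma,\lambda,\sigma)\in\tilde{\mathcal{C}}_q$, with $q$ as in \eqref{e:write-up152}.

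Finally, Theorem \ref{seq:thmConformal} applies directly. Its hypotheses are that Problem (a) has a locally unique solution, which is assumed, and that $(\gamma,\lambda,\sigma)\in\tilde{\mathcal{C}}_q$, which was just shown. It follows that $Z$ satisfies \eqref{e:write-up53} on $\mathbb{D}\setminus\{0\}$ and in a neighbourhood of $\partial\mathbb{D}$.

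The main obstacle is a consistency issue rather than a computation. Lemma \ref{lem:67} establishes $\mathcal{C}_q=\mathcal{C}_{\boldsymbol{a}}$ only locally near $\{\gamma=0\}\times b^{-1}(0)$. I must therefore ensure that the smallness of $|\gamma|$ and $|b|$ in the corollary is taken within that neighbourhood. I would do this by shrinking the smallness thresholds to those furnished by the implicit function argument in Lemma \ref{lem:67}.

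I must also ensure that the $q$ used in Theorem \ref{seq:thmConformal} is the one built from this same $Z$. This holds by local uniqueness of Problem (a) and the definition \eqref{e:write-up152}.
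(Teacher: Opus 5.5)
Your proposal is correct and matches the paper's intended argument: Corollary \ref{cor:5.5} is stated without a separate proof, as the immediate composition of Lemma \ref{lem:67} (scalar constraint $\Rightarrow$ $(\gamma,\lambda,\sigma)\in\tilde{\mathcal{C}}_q$) with Theorem \ref{seq:thmConformal}, where \eqref{e:5.5} is meant to be \eqref{e:152} rewritten via $\tilde{\alpha}^2=3\lambda$ and $2i\tilde{\Gamma}_0\tilde{\alpha}=\sigma$. One bookkeeping correction: the $-4i\tilde{\Gamma}_0\mapsto -2\sigma/\sqrt{3\lambda}$ term matches immediately, but $2\tilde{\alpha}^2$ becomes $6\lambda$, not the $4\lambda$ printed in \eqref{e:5.5}, so your term-by-term check will not close literally; $6\lambda$ is also what equating the two expressions for $\bar{k}_2$ gives, so you should read \eqref{e:5.5} as \eqref{e:152} in the new variables, with a typo in the paper, rather than claim literal identity.
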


%\section{Conclusion}
 %\VK{Not really sure what to write here...}
 
 \begin{itemize}
 	\item[] \textbf{Funding}  S. T. acknowledges funding from the Simons Foundation international through the travel grant SFI-MPS-TSM00013918 
 	\item [] \textbf{Data availability } Data sharing is not applicable to this article as no new data were created or analyzed in this work.
 	\item[]\textbf{Declarations} 
 	\item[]\textbf{Conflict of interest} The authors certify that they have no conflicting interests of any kind related to the content of this manuscript
 \end{itemize}

%%%%%%%%%%%%%%%%%%%%%%%%%%%%
\bibliography{bibref}

\begin{thebibliography}{10}

\bibitem{Aref1983}
Hassan Aref.
\newblock Integrable, chaotic, and turbulent vortex motion in two-dimensional
  flows.
\newblock {\em Annual Review of Fluid Mechanics}, 15:345--389, 1983.

\bibitem{BakerSaffmanSheffield1976}
G.~R. Baker, P.~G. Saffman, and J.~S. Sheffield.
\newblock Structure of a linear array of hollow vortices of finite
  cross-section.
\newblock {\em Journal of Fluid Mechanics}, 74(3):469--476, 1976.

\bibitem{ChenWalshWheeler2026}
Robin~Ming Chen, Samuel Walsh, and Miles~H. Wheeler.
\newblock Desingularization and global continuation for hollow vortices.
\newblock {\em Annals of PDE}, 12(1):7, 2026.

\bibitem{CliftGraceWeber1978}
Roland Clift, John~R. Grace, and Martin~E. Weber.
\newblock {\em Bubbles, Drops, and Particles}.
\newblock Academic Press, New York, 1978.

\bibitem{Crapper1957}
G.~D. Crapper.
\newblock An exact solution for progressive capillary waves of arbitrary
  amplitude.
\newblock {\em Journal of Fluid Mechanics}, 2(6):532--540, 1957.

\bibitem{Crowdy1999}
Darren~G. Crowdy.
\newblock Circulation-induced shape deformations of drops and bubbles: Exact
  two-dimensional models.
\newblock {\em Physics of Fluids}, 11(10):2836--2845, 1999.

\bibitem{CrowdyNelsonKrishnamurthy2021}
Darren~G. Crowdy, R.~B. Nelson, and V.~S. Krishnamurthy.
\newblock {H}-states: exact solutions for a rotating hollow vortex.
\newblock {\em Journal of Fluid Mechanics}, 913:R5, 2021.

\bibitem{CrowdyRoenby2014}
Darren~G. Crowdy and Johan Roenby.
\newblock Hollow vortices, capillary water waves and double quadrature domains.
\newblock {\em Fluid Dynamics Research}, 46(3):031424, 2014.

\bibitem{CrowdyTanveer2025}
Darren~G. Crowdy and Saleh Tanveer.
\newblock A mathematical analogy between constant vorticity flows without
  surface tension and irrotational flows with surface tension.
\newblock {\em Proceedings of the Royal Society A: Mathematical, Physical and
  Engineering Sciences}, 481(2320):20240951, 2025.

\bibitem{Donnelly1991}
Russell~J. Donnelly.
\newblock {\em Quantized Vortices in Helium {II}}, volume~3 of {\em Cambridge
  Studies in Low Temperature Physics}.
\newblock Cambridge University Press, Cambridge, 1991.

\bibitem{Duren1970TheoryHpSpaces}
Peter~L. Duren.
\newblock {\em Theory of {$H^p$} Spaces}, volume~38 of {\em Pure and Applied
  Mathematics}.
\newblock Academic Press, New York and London, 1970.

\bibitem{DyachenkoHur2019b}
Sergey~A. Dyachenko and Vera~Mikyoung Hur.
\newblock {Stokes} waves with constant vorticity: folds, gaps and fluid
  bubbles.
\newblock {\em Journal of Fluid Mechanics}, 878:502--521, 2019.

\bibitem{DyachenkoHur2019a}
Sergey~A. Dyachenko and Vera~Mikyoung Hur.
\newblock {Stokes} waves with constant vorticity: {I}. numerical computation.
\newblock {\em Studies in Applied Mathematics}, 142(2):162--189, 2019.

\bibitem{Flajolet1990Singularity}
Philippe Flajolet and Andrew~M. Odlyzko.
\newblock Singularity analysis of generating functions.
\newblock {\em SIAM Journal on Discrete Mathematics}, 3(2):216--240, 1990.

\bibitem{HurVandenBroeck2020}
Vera~Mikyoung Hur and Jean-Marc Vanden-Broeck.
\newblock A new application of {Crapper}'s exact solution to waves in constant
  vorticity flows.
\newblock {\em European Journal of Mechanics - B/Fluids}, 83:190--194, 2020.

\bibitem{HurWheeler2020}
Vera~Mikyoung Hur and Miles~H. Wheeler.
\newblock Exact free surfaces in constant vorticity flows.
\newblock {\em Journal of Fluid Mechanics}, 896:R1, 2020.

\bibitem{CKT}
Valentin~D. Kunz, Saleh Tanveer, and Darren~G. Crowdy.
\newblock Extension of a mathematical analogy between vortical flows without
  surface tension and irrotational capillary flows.
\newblock {\em Submitted to Proc. Roy. Soc. A}, 2026.

\bibitem{LegendreZenit2025}
Dominique Legendre and Roberto Zenit.
\newblock Gas bubble dynamics.
\newblock {\em Reviews of Modern Physics}, 97(2):025001, 2025.

\bibitem{LehmannSchmied2003}
Kevin~K. Lehmann and Roman Schmied.
\newblock Energetics and possible formation and decay mechanisms of vortices in
  helium nanodroplets.
\newblock {\em Physical Review B}, 68(22):224520, 2003.

\bibitem{LlewellynSmithCrowdy2012}
Stefan~G. Llewellyn~Smith and Darren~G. Crowdy.
\newblock Structure and stability of hollow vortex equilibria.
\newblock {\em Journal of Fluid Mechanics}, 691:178--200, 2012.

\bibitem{McLeod1955}
E.~B. McLeod, Jr.
\newblock The explicit solution of a free boundary problem involving surface
  tension.
\newblock {\em Journal of Rational Mechanics and Analysis}, 4:557--567, 1955.

\bibitem{Newton2001}
Paul~K. Newton.
\newblock {\em The {N}-Vortex Problem: Analytical Techniques}, volume 145 of
  {\em Applied Mathematical Sciences}.
\newblock Springer, New York, 2001.

\bibitem{Tanveer1996}
Saleh Tanveer.
\newblock Some analytical properties of solutions to a two-dimensional steadily
  translating inviscid bubble.
\newblock {\em Proceedings of the Royal Society of London. Series A:
  Mathematical, Physical and Engineering Sciences}, 452(1949):1397--1410, 1996.

\bibitem{VandenBroeckKeller1980}
Jean-Marc Vanden-Broeck and Joseph~B. Keller.
\newblock Bubble or drop distortion in a straining flow in two dimensions.
\newblock {\em The Physics of Fluids}, 23(8):1491--1495, 1980.

\bibitem{WegmannCrowdy2000}
Rudolf Wegmann and Darren Crowdy.
\newblock Shapes of two-dimensional bubbles deformed by circulation.
\newblock {\em Nonlinearity}, 13(6):2131--2141, 2000.

\bibitem{YanEtAl2023}
Sheng-lin Yan, Xu-qing Wang, Li-tao Zhu, Xi-bao Zhang, and Zheng-hong Luo.
\newblock Mechanisms and modeling of bubble dynamic behaviors and mass transfer
  under gravity: A review.
\newblock {\em Chemical Engineering Science}, 277:118854, 2023.

\bibitem{ZannettiFerlautoLlewellynSmith2016}
Luca Zannetti, Michele Ferlauto, and Stefan~G. Llewellyn~Smith.
\newblock Hollow vortices in shear.
\newblock {\em Journal of Fluid Mechanics}, 809:705--715, 2016.

\end{thebibliography}
\bibliographystyle{plain}

%%%%%%%%%%%%%%%%%%%%%%%%%%%%
\appendix

\section{On the numerical verification of univalency}\label{App:Univ}

In this appendix, we shall discuss how univalency of solutions to Problem (a)
can be verified numerically as was mentioned in Remark \ref{rem.25}. So, let
us consider a conformal map $\tilde{Z} (\zeta)$ which lies close to the
solution $Z (\zeta)$ to Problem (a). For instance, $\tilde{Z}  (\zeta)$ could
be any of the approximative numerical solutions computed via the algorithm
detailed in \cite{CKT}. 

Consider the boundary curve $\partial
\tilde{\Omega} = \tilde{Z} (\partial \mathbb{D})$. Let $\zeta = e^{i \nu}$ for $| \nu | \leq \pi$ and consider a finite number
of ``anchor points'' $\nu_j \in (- \pi, \pi],~  j = 1, \ldots, n$. Without loss of generality, let us assume that $- \pi \leq \nu_1 < \nu_2 <
\cdots < \nu_n < \pi$ and let
\begin{equation}
	\Delta = \max \{ \nu_{j + 1} - \nu_j, (\nu_1 + 2 \pi) - \nu_n \}>0.
\end{equation}

 We assume that
the following a-priori hypotheses are true 
\begin{equation}
  \min_j | \tilde{Z}_{\nu} (e^{i \nu_j}) | \geq m \quad \text{and} \quad
  \min_{j, k} \left| \frac{\tilde{Z} (e^{i \nu_j}) - \tilde{Z} (e^{i
  \nu_k})}{\nu_j - \nu_k} \right| \geq m \label{e:155}
\end{equation}
with $m > 0$ for a set of anchor points sufficiently dense, and uniform, such that 	
	\begin{align}
		\min\left\{m - \frac{1}{2} \left(2\Delta + \Delta^{\frac{1}{2}}\right)\sup_{\xi} | \tilde{Z}_{\nu \nu} (e^{i \xi})|, ~ m - 3 \frac{\Delta \sup_{\xi } | \tilde{Z}_{\nu}(e^{i \xi})
			|}{\Delta^{\frac{1}{2}} - \Delta}\right\} >0. \label{e:A.3}
	\end{align}
 Importantly, given any numerically computed solution
$\tilde{Z}(\zeta)$ and finite set of anchor points, the conditions \eqref{e:155} and \eqref{e:A.3} can be verified explicitly. Indeed, if $\tilde{Z}(\zeta)$ were univalent, there exists a minimal $m^{\star}$ such that \eqref{e:155} holds irrespective of the anchor point distribution. If, on the other hand, $\tilde{Z}(\zeta)$ is \textit{not} univalent, then necessarily $m \to 0$ as $\Delta \to 0$. 

Our goal
is to show that the conditions \eqref{e:155} and \eqref{e:A.3} imply the inequality \eqref{e:66}, i.e.,
we will now prove that it is sufficient to numerically verify the condition
\eqref{e:66} at a finite number of anchor points for some ``sufficiently dense and uniform'' distribution of anchor-points.

Measuring distances modulo $2 \pi$, note that
\begin{equation}
  \min \{| \nu - \nu_j |, | \nu - \nu_{j + 1} \} \leq
  \frac{\Delta}{2}, \quad \text{for all} \quad \nu
\end{equation}
i.e. for any $\nu$, the corresponding nearest anchor point satisfies
\begin{equation}
  | \nu - \nu_j | \leq \frac{\Delta}{2} . \label{e:158}
\end{equation}
We will distinguish two cases. First, $|\nu - \nu'| < \Delta + \Delta^{\frac{1}{2}}$, and second, $|\nu - \nu'| \geq \Delta + \Delta^{\frac{1}{2}}$.

\textbf{Case 1} ($|\nu - \nu'| < \Delta + \Delta^{\frac{1}{2}}$)\textbf{.} Using Taylor's formula with integral remainder, we find that for any pair
$\nu, \nu'$, we have
\begin{equation}
  \left| \frac{\tilde{Z} (e^{i \nu}) - \tilde{Z} \left( {e^{i \nu}}'
  \right)}{\nu - \nu'} - \tilde{Z}_{\nu} (e^{i \nu}) \right| \leq \frac{| \nu
  - \nu' |}{2} \sup_{\xi} | \tilde{Z}_{\nu \nu} (e^{i \xi}) | . \label{e:159}
\end{equation}
From
mean-value inequality,
\begin{equation}
  | \tilde{Z}_{\nu} (e^{i \nu}) - \tilde{Z}_{\nu} (e^{i \nu_j}) | \leq | \nu -
  \nu_j | \sup_{\xi} | \tilde{Z}_{\nu \nu} (e^{i \xi}) | 
\end{equation}
and thus, using \eqref{e:155}, we find
\begin{equation}
  | \tilde{Z}_{\nu} (e^{i \nu}) | \geq m - | \nu - \nu_j |\sup_{\xi} | \tilde{Z}_{\nu \nu} (e^{i \xi}) | .
  \label{e:161}
\end{equation}
Combining \eqref{e:158}, \eqref{e:159}, \eqref{e:161}, recalling that $|\nu-\nu'| < \Delta + \Delta^{\frac{1}{2}}$, and using \eqref{e:A.3}, we thus obtain
\begin{equation}
  \left| \frac{\tilde{Z} (e^{i \nu}) - \tilde{Z} \left( {e^{i \nu}}'
  \right)}{\nu - \nu'} \right|  \geq m - \frac{1}{2} (\Delta + | \nu - \nu' |)
  \sup_{\xi} | \tilde{Z}_{\nu \nu} (e^{i \xi}) | >0
\end{equation}
which yields the inequality \eqref{e:66}. 

\textbf{Case 2} ($| \nu - \nu' | > \Delta + \Delta^{\frac{1}{2}}$)\textbf{.}  Choose two nearest anchor-points $\nu_m$ and $\nu_n$ of $\nu$ and $\nu'$, respectively, and let
\begin{equation}
  \nu - \nu_m = s, \quad \nu' - \nu_n = t
\end{equation}
so that
\begin{equation}
  | s - t | \leq \Delta \quad \text{and} \quad | \nu_m - \nu_n | >
  \Delta^{\frac{1}{2}} . \label{e:164}
\end{equation}
Computations show that
\begin{align*}
  \left| \frac{\tilde{Z} (e^{i \nu}) - \tilde{Z} (e^{i \nu'})}{\nu - \nu'} -
  \frac{\tilde{Z} (e^{i \nu_m}) - \tilde{Z} (e^{i \nu_n})}{\nu_m - \nu_n}
  \right| \leq & \left| \frac{s - t}{\nu_m - \nu_n + s - t} \right| \left|
  \frac{\tilde{Z} (e^{i \nu_m}) - \tilde{Z} (e^{i \nu_n})}{\nu_m - \nu_n}
  \right|\\
  + & \left| \frac{\tilde{Z} (e^{i \nu}) - \tilde{Z} (e^{i \nu_m})}{\nu_m -
  \nu_n + s - t} \right| + \left| \frac{\tilde{Z} (e^{i \nu'}) - \tilde{Z}
  (e^{i \nu_n})}{\nu_m - \nu_n + s - t} \right|\\
  \leq & 3 \frac{\Delta \sup | \tilde{Z}_{\nu} |}{\Delta^{\frac{1}{2}} -
  \Delta} \numberthis
\end{align*}
and therefore, we find
\begin{equation}
  \left| \frac{\tilde{Z} (e^{i \nu}) - \tilde{Z} (e^{i \nu'})}{\nu - \nu'}
  \right| \geq m - 3 \frac{\Delta \sup_{\xi} | \tilde{Z}_{\nu}(e^{i \xi})
  |}{\Delta^{\frac{1}{2}} - \Delta}
\end{equation}
from which, due to \eqref{e:A.3}, the inequality \eqref{e:66}
follows as well. \hfill  $\Box$

\section{Analyticity Properties of $\mathcal{I}$}

\subsection{Proof of Lemma \ref{lem:C-14}}\label{App.A}

In order to prove Lemma \ref{lem:C-14}, we first note that the condition that
$p$ has at most a simple pole at the origin guarantees that $\mathcal{I} [p]
(\eta)$ is well-defined, and it is clearly analytic in $\mathbb{D}_c$, since
the only possibly singular points of the integrand are located at $\eta_1,
\eta_2$ or the origin. Below, we will closely follow the proof of Lemma 20 within the  supplementary material of \cite{CrowdyTanveer2025}.

Let us now prove statement 1 of the Lemma. So, let $(\eta - \eta_j) p$
be analytic within $ \mathbb{D} \setminus \{
\eta_{j - 1} \}$, and let $\nu_j < 0$. Consider the functional
$\mathcal{I} [p]$ and introduce the M\"obius transformation
\begin{equation}
  \xi = \frac{\eta - \eta_j}{1 + \eta \eta_j}~ \text{ such that } ~ \eta =
  \frac{\xi + \eta_j}{1 - \xi \eta_j}.
\end{equation}
So,
\begin{equation}
  \mathcal{I} [p] = \left( \frac{\xi + \eta_j}{1 - \eta_j \xi} \right)^{-
  \frac{3}{2}} \frac{\xi^{- \nu_j}}{g_{j - 1} (\xi)} \int^{\xi}_{- \eta_j}
  \left( \frac{\xi' + \eta_j}{1 - \xi' \eta_j} \right)^{\frac{1}{2}} g_{j - 1}
  (\xi') (\xi')^{\nu_j} p(\eta (\xi')) \left( \frac{d}{d \xi'} \eta
  (\xi') \right) d \xi'  \label{e:0066}
\end{equation}
with
\[ g_{j - 1} (\xi) = \left( \frac{\eta (\xi) - \eta_{j - 1}}{1 + \eta (\xi)
   \eta_{j - 1}} \right)^{\nu_{j - 1}} . \]
The function $g_{j - 1}$ is holomorphic in a disc of radius $| \eta_j |$
centred at $\eta (\xi) = \eta_j$, i.e., it is holomorphic near $\xi = 0$, and
also near $\xi = - \eta_j .$ We re-write the integral in \eqref{e:0066} as
\begin{equation}
  \mathcal{I} [p] = \left( \frac{\xi + \eta_j}{1 - \eta_j \xi} \right)^{-
  \frac{3}{2}} \frac{\xi^{- \nu_j}}{g_{j - 1} (\xi)} \int^{\xi}_{- \eta_j}
  h_j (\xi') (\xi')^{\nu_j - 1} d \xi'  \label{e:034}
\end{equation}
where we have introduced the function
\begin{equation}
  h_j (\xi') = \left( \frac{\xi' + \eta_j}{1 - \xi' \eta_j}
  \right)^{\frac{1}{2}} g_{j - 1} (\xi') q_j (\xi') \left( \frac{d}{d
  \xi'} \eta (\xi') \right), \quad \text{with} \quad q_j(\xi') = \xi' p(\eta(\xi')).
\end{equation}
Since, by assumption, $q_j$ is holomorphic at $\xi'$=0, the
function $h_j (\xi')$ is analytic within a disc of radius $| \eta_j |$
centred at $\xi'$=0, with the nearest singularity located at $\xi' = - \eta_j$,
see the Figure \ref{Fig:04}. As $\xi' \rightarrow - \eta_j$,
we find the following asymptotic behaviour
\begin{equation}
  h_j (\xi') \sim (1 + \eta^2_j)^{- \frac{1}{2}} \left( \frac{d}{d \xi'} \eta
  (0) \right) g_{j - 1} (- \eta_j) q_j (0) (\xi' + \eta_j)^{\frac{1}{2}} .
  \label{e:0029}
\end{equation}
Therefore, upon writing
\begin{equation}
  h_j (\xi') = \sum_{k = 0}^{\infty} h_{ k, j}   (\xi')^k
\end{equation}
we obtain
\begin{equation}
  h_{ k, j}  \sim (- 1)^{k + 1} \frac{1}{2} (1 + \eta_j^2)^{- \frac{1}{2}}
  \eta' (0) q_j (0) g_{j - 1} (0) \eta_j^{\frac{1}{2} - k} \sqrt{\pi} k^{-
  \frac{3}{2}} \quad \text{as} \quad k \rightarrow \infty . \label{e:031}
\end{equation}
The passage from \eqref{e:0029} to \eqref{e:031} is an application of
 \cite{Flajolet1990Singularity}, Corollary 2.

To finish the proof of statement 1 of Lemma \ref{lem:C-14}, let us consider
the cases where $\nu_j \in \mathbb{Z}$ and $\nu_j \notin \mathbb{Z}$
separately. Let $\nu_j \notin \mathbb{Z}$. Then, we can integrate
\eqref{e:034} directly and find
\begin{equation}
  \mathcal{I} [p] = \left( \frac{\xi + \eta_j}{1 - \eta_j \xi} \right)^{-
  \frac{3}{2}} \frac{\xi^{- \nu_j}}{g_{j - 1} (\xi)} \left( \sum_{k =
  0}^{\infty} \frac{h_{ k}^j \xi^{\nu_j + k}}{k + \nu_j} - (-
  \eta_j)^{\nu_j} \sum_{k = 0}^{\infty} \frac{h_{ k}^j (- \eta_j)^k}{k +
  \nu_j} \right) \label{e:C-73}
\end{equation}
where the estimate \eqref{e:031} guarantees convergence of the series in
\eqref{e:C-73}. Thus, the integral $\mathcal{I} [p]$ is analytic at $\xi =
0$ (i.e., at $\eta = \eta_j$), if, and only if,
\begin{equation}
  \sum_{k = 0}^{\infty} \frac{h_{ k, j}   (- \eta_j)^k}{k + \nu_j} = 0.
\end{equation}
In turn, since $\nu_j \notin \mathbb{Z}$, this is equivalent to
$\mathcal{J}_j [p] = 0$.

Let us now consider $\nu_j \in \mathbb{Z}, ~ \nu_j < 0$. Then, we obtain
 \textit{at most} one singular term by computing the integral in
\eqref{e:034}. If such term occurs, it is a term involving $\xi^{- \nu_j} h
_{- \nu_j, j} (\log (\xi) - \log (- \eta_j))$, which is obtained from
integrating $\xi^{- \nu_j}  \int^{\xi}_{- \eta_j} h_{ - \nu_j, j} 
(\xi')^{- 1} d \xi'$. This term branches at $\xi = 0$, so holomorphicity at
$\xi = 0$ is equivalent to imposing $h _{- \nu_j, j} = 0$. This is again
equivalent to $\mathcal{J}_j [p] = 0$. Thus, we have established the
analyticity properties that were claimed in the first statement of Lemma
\ref{lem:C-14}.

\begin{figure}[h!]
	\centering 
		\includegraphics[width=.75\textwidth]{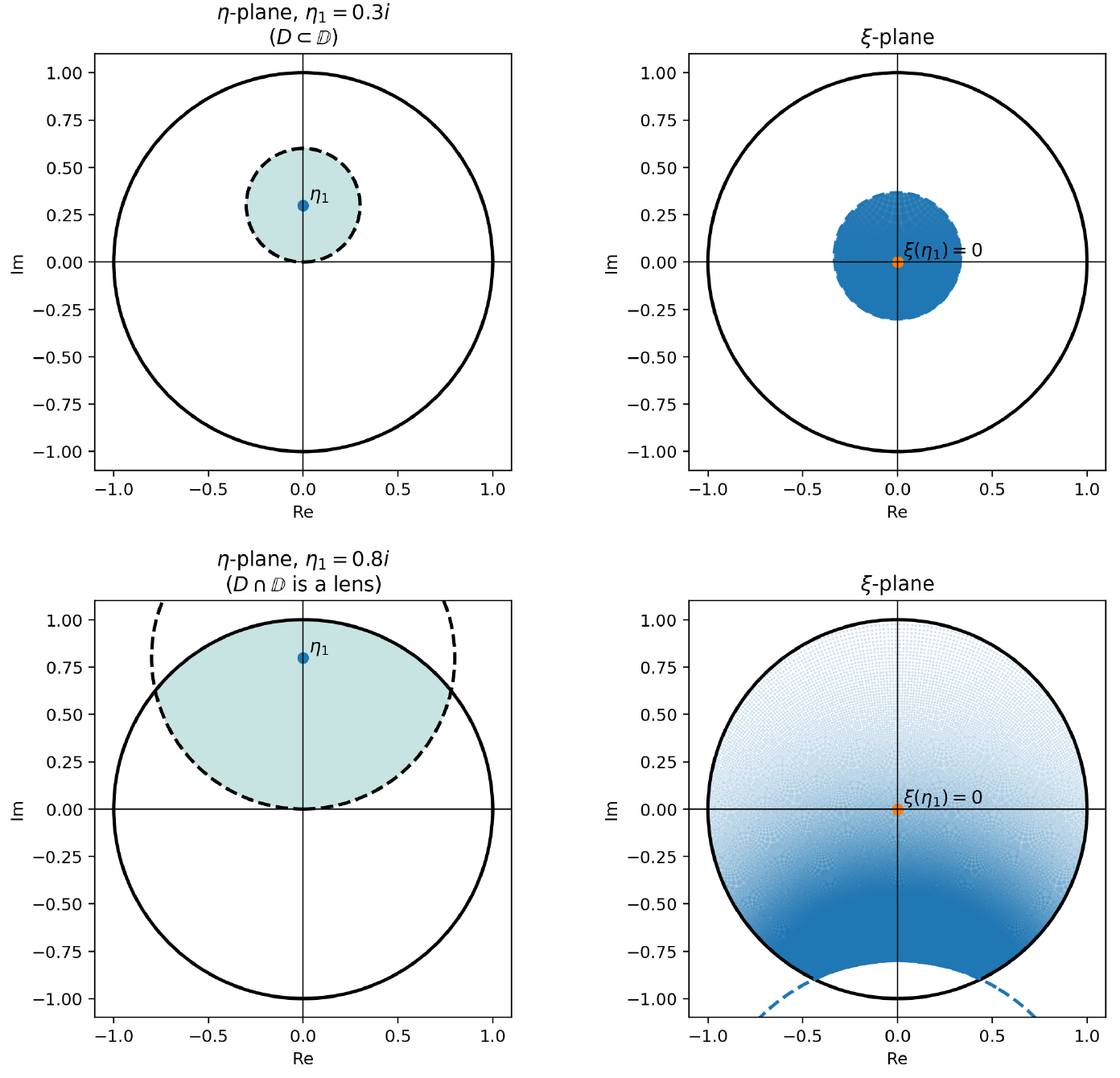}
		\caption{\small Intersection of the disc $D = \{ | \eta - \eta_1 | < |
			\eta_1 | \}$ with $\mathbb{D}$ (left), and $\xi (D \cap \mathbb{D})$
			(right).}
			\label{Fig:04}
\end{figure}

The second statement of Lemma \ref{lem:C-14} can be proved similarly, so we will be brief. Since $p$ is analytic at $\eta =
0$, the function $g (\eta) = \mu (\eta) \eta^{- \frac{3}{2}} p (\eta)$ is
analytic at $\eta = 0$, with nearest singularity located at either $\eta_1$,
$\eta_2$, or any of $p$'s singularities. We denote the latter, collectively,
by $\eta_{\text{sing}}$. Write, near $\eta = 0$,
\begin{equation}
  g (\eta) = \sum_{k = 0}^{\infty} g_k \eta^k
\end{equation}
so, for $| \eta |$ bounded away from $\min \{ | \eta_1 |, | \eta_2 |, | \eta_{\text{sing}} |
\}$, we have
\begin{equation}
  \mathcal{I} [p] (\eta) = \frac{1}{\mu (\eta) } \sum_{k = 0}^{\infty}
  \int_0^{\eta} g_k \left( \chi  \right)^{\frac{1}{2} + k} d \chi =
  \frac{1}{\mu (\eta)} \sum_{k = 0}^{\infty} \frac{g_k}{\frac{3}{2} + k}
  \eta^{\frac{3}{2} + k} = \frac{\eta^{\frac{3}{2}}}{\mu (\eta)} \sum_{k =
  0}^{\infty} \frac{g_k}{\frac{3}{2} + k} \eta^k \label{e:C-76}
\end{equation}
and the right-hand-side of \eqref{e:C-76} is analytic at $\eta = 0$. Moreover,
it vanishes at $\eta = 0$ if, and only if, $g_0 = 0$, so if, and only if, $g
(0) = 0$.

Finally, continuity of $\mathcal{I} [p ]$ on the boundary if $p $ is
continuous up to the boundary follows by inspection of $\mathcal{I} [p]$ since
all singularities of $\mu$ are bounded away from $\mathbb{T}_c$. This
completes the proof of Lemma \ref{lem:C-14}. \hfill $\Box$

\subsection{The integral operator $\mathcal{I} \circ
\mathcal{P}$}\label{App.A.2}

Let $g$ be continuous on $\partial \mathbb{D}$, so $\mathcal{P} [g]$ is
analytic in $\mathbb{D}$. Consider the function $\mathcal{I} [\mathcal{P}
[g]]$. Then, exactly as in the proof of Lemma 31 within the supplementary material of \cite{CrowdyTanveer2025}, 
we have
\begin{equation}
  \mathcal{I} [\mathcal{P} [g]] (\eta) = \frac{1}{2 \pi i} \oint_{| \chi | =
  1} K (\chi, \eta) g (\chi) d \chi . \label{e:074}
\end{equation}
The kernel
\begin{equation}
  K (\chi, \eta) = \frac{1}{\mu (\eta) \chi} \int_0^{\eta} \frac{\mu
  (\eta')}{\eta'} \frac{\chi + \eta'}{\chi - \eta'} d \eta'
\end{equation}
is analytic in $\eta \in \mathbb{D}_c $ and continuous for $\eta \in
\mathbb{T}_c \setminus \{ \chi \}$. At $\eta = \chi$, $K$ has a $\log (\chi -
\eta)$ singularity, which is integrable. So, \eqref{e:074} guarantees that
$\mathcal{I} [\mathcal{P} [g]] (\eta)$ is continuous on $\mathbb{T}_c$.
Moreover, on $\mathbb{T}_c$, we find that
\begin{equation}
  | \mathcal{I} [\mathcal{P}] [g] (\eta) | \leq c_{\mathcal{I}}^{\mathcal{P}}
  \sup_{\eta \in \partial \mathbb{D}} | g | \label{e:076}
\end{equation}
for a constant $c_{\mathcal{I}}^{\mathcal{P}}$ independent of the locations of
$\eta_1$ and $\eta_2$. All these statements can be proved by verbatim
adaptation of the proof of Lemma 31 within the supplementary material of \cite{CrowdyTanveer2025}, and we omit these details here for brevity.

\end{document}